\newcolumntype{C}[1]{>{\centering\arraybackslash}p{#1}}
\declaretheorem[numberwithin=section]{theorem}
\declaretheorem[sibling=theorem, style=definition]{definition}
\declaretheorem[sibling=theorem]{lemma}
\declaretheorem[sibling=theorem]{observation}
\declaretheorem[sibling=theorem]{corollary}
\declaretheorem[sibling=theorem, style=definition]{remark}
\declaretheorem[sibling=theorem]{proposition}
\declaretheorem[sibling=theorem, style=definition]{example}
\patchcmd{\ALG@step}{\addtocounter{ALG@line}{1}}{\refstepcounter{ALG@line}}{}{}
\newcommand{\ALG@lineautorefname}{Line}
\newcommand{\poly}{\operatorname{poly}}
\newcommand\Ps@textstyle[2]{\mathbb{P}_{#1}\left[{#2}\right]}
\newcommand\Es@textstyle[2]{\mathbb{E}_{#1}\left[{#2}\right]}
\newcommand\Ps[2]{%
  \mathchoice 
  {\underset{{#1}}{\mathbb{P}}\left[{#2}\right]}
  {\Ps@textstyle{#1}{#2}}
  {\Ps@textstyle{#1}{#2}}
  {\Ps@textstyle{#1}{#2}}
}
\newcommand\Es[2]{%
  \mathchoice 
  {\underset{{#1}}{\mathbb{E}}\left[{#2}\right]}
  {\Es@textstyle{#1}{#2}}{\Es@textstyle{#1}{#2}}{\Es@textstyle{#1}{#2}}
}
\newcommand{\M}{\mathcal{M}}
\newcommand{\Menu}{\mathsf{Menu}}
\newcommand{\T}{\mathcal{T}}
\newcommand{\StabB}{\mathsf{StabB}}
\newcommand{\Appls}{\mathcal{A}}
\newcommand{\Insts}{\mathcal{I}}
\newcommand{\APDA}{\mathsf{APDA}}
\newcommand{\DA}{\mathsf{DA}}
\newcommand{\IPDA}{\mathsf{IPDA}}
\newcommand{\TTC}{{{\mathsf{TTC}}}}
\newcommand{\SD}{{{\mathsf{SD}}}}
\newcommand{\SDrot}{{{\mathsf{SD}^{\mathsf{rot}}}}}
\newcommand{\match}{{\mathsf{match}}}
\renewcommand{\check}{{\mathsf{check}}}
\newcommand{\Certs}{{\mathcal{C}}}
\newcommand{\chain}{\mathsf{chain}}
\newcommand{\stab}{\mathsf{stab}}
\newcommand{\stabmatch}{\mathsf{unrej}}
\newcommand{\prop}{\mathsf{nodes}}
\newcommand{\UnrejGr}{\mathsf{UnrejGr}}
\newcommand{\ImprGrA}{\mathsf{ImprGr}^{\Appls}}
\newcommand{\ImprGrI}{\mathsf{ImprGr}^{\Insts}}
\newcommand{\hunrej}{h^{\mathsf{unrej}}}
\newcommand{\hmatch}{h^{\mathsf{match}}}
\newcommand{\defeq}{\stackrel{\mathrm{def}}{=}}
\begin{document}

\title{
Structural Complexities of Matching Mechanisms\thanks{We thank Nick Arnosti, Uma Girish, Aram Grigoryan, Andreas Haupt, Ori Heffetz, Nicole Immorlica, Jacob Leshno, Irene Lo, Kunal Mittal, Madhu Sudan, S.\ Matthew Weinberg, and seminar participants at Princeton and Harvard for illuminating discussions and helpful feedback.}
}
\author{
  Yannai A. Gonczarowski\thanks{Department of Economics and Department of Computer Science, Harvard University | \emph{E-mail}: \href{mailto:yannai@gonch.name}{yannai@gonch.name}.}
  \and
  Clayton Thomas\thanks{Microsoft Research | \emph{E-mail}: \href{mailto:clathomas@microsoft.com}{clathomas@microsoft.com}.}
}
\date{March 30, 2024}

\newcommand{\questionTextTypeToMatching}{How can one applicant affect the outcome matching?}
\newcommand{\questionTextTypeToMenu}{How can one applicant affect another applicant's set of options?}
\newcommand{\questionTextRepresentation}{How can the outcome matching be represented / communicated?}
\newcommand{\questionTextVerification}{How can the outcome matching be verified?}

\begin{titlepage}
\maketitle
\begin{abstract}
  
We study various novel complexity measures for two-sided matching mechanisms, applied to the two canonical strategyproof matching mechanisms, Deferred Acceptance ($\DA$) and Top Trading Cycles ($\TTC$). 
Our metrics are designed to capture the complexity of various structural (rather than computational) concerns, in particular ones of recent interest within economics.
We consider a unified, flexible approach to formalizing our questions: Define a protocol or data structure performing some task, and bound the number of bits that it requires.
Our main results apply this approach to four questions of general interest; for mechanisms matching applicants to institutions, our questions are:
\begin{enumerate}[label=(\arabic*),ref=(\arabic*)]
  \item {\questionTextTypeToMatching}
  \item {\questionTextTypeToMenu}
  \item {\questionTextRepresentation}
  \item {\questionTextVerification}
\end{enumerate}

Holistically, our results show that $\TTC$ is more complex than $\DA$, formalizing previous intuitions that $\DA$ has a simpler structure than $\TTC$.
For question~(2), our result gives a new combinatorial characterization of which institutions are removed from each applicant's set of options when a new applicant is added in $\DA$; this characterization may be of independent interest.
For question~(3), our result gives new tight lower bounds proving that 
the relationship between the matching and the priorities is more complex in $\TTC$ than in $\DA$.
We nonetheless showcase that this higher complexity of $\TTC$ is nuanced: By constructing new tight lower-bound instances and new verification protocols, we prove that $\DA$ and $\TTC$ are comparable in complexity under questions~(1) and~(4). 
This more precisely delineates the ways in which $\TTC$ is more complex than $\DA$, and emphasizes that diverse considerations must factor into gauging the complexity of matching mechanisms.

\end{abstract}
\thispagestyle{empty}
\end{titlepage}

\maketitle

\tableofcontents
\thispagestyle{empty}

\clearpage
\pagenumbering{arabic}

\section{Introduction}
\label{sec:intro}

School districts in many cities employ school-choice mechanisms, i.e., algorithms that produce a matching of students to schools on the basis of students' reported preferences and schools' priorities.\footnote{
  As is customary in many papers, we use the word \emph{priorities} to refer to schools' ``preferences'' over the students (or more generally, the institutions' preferences over the applicants). 
  This reflects the fact that we assume the priorities are constant and fixed ahead of time (e.g., based on factors such as the distance from the school to a student's residence, or whether the student already has a sibling attending the school).
}
Two such mechanisms are widely studied and deployed: Top Trading Cycles \cite{ShapleyS74} (henceforth $\TTC$, the canonical mechanism finding an optimal matching for students), and Deferred Acceptance \cite{GaleS62} (henceforth $\DA$, the canonical mechanism finding a fair and stable matching).
In contrast to most of the algorithms people encounter throughout daily life---from search engines to ride hailing platforms---school districts give detailed explanations of how these algorithms work, and 
expect students and parents to understand them, participate correctly, and trust the outcomes. 
In reality, however, many participants find these mechanisms confusing, unpredictable, and hard to trust \cite{Kasman19opportunities,RobertsonNS21}.

Given the above, it is clear that \emph{non-computational} desiderata such as simplicity, transparency, and explainability are first-order concerns in matching mechanisms \cite{Li17, AkbarpourL20,GonczarowskiHT22}. 
The economics literature has recently asked a number of questions on this theme, for instance: 
In what ways are these mechanism unpredictable 
\cite{Arnosti20BlogPredictable}? 
How can a student's priorities at different schools be used to directly explain their match \cite{AzevedoL16, LeshnoL21}?
To what extent can players detect whether school districts deviate from the promised mechanisms \cite{Moller22, MollerG23, HakimovR23}?
In fact, such concerns are seemingly instrumental in real-world policy decisions; for instance, when Boston Public School compared $\DA$ and $\TTC$, they wrote \cite{BPS05}:
\begin{displayquote}
  ``The Deferred Acceptance Algorithm will \emph{best serve Boston families} as a centralized procedure\ldots\ 
  {[}In $\DA$,] students will receive their highest choice among their school choices for which they have \emph{high enough priority} to be assigned.''

  ``{[}In $\TTC$,] the behind the scenes mechanized trading makes the student assignment process less transparent [than in $\DA$].''
\end{displayquote}
In the end, this flagship district chose $\DA$ in favor of $\TTC$, seemingly largely for reasons of transparency and explainability, and many other school districts have followed suit.


Comparisons and questions such as these call out for precise, quantitative measures of the various complexities of these mechanisms.
Nonetheless, suitable complexity results capturing the relevant distinction between $\TTC$ and $\DA$ have remained elusive. 
In fact, known formal results have often produced ways in which $\TTC$ might be considered \emph{less complex} than $\DA$, missing the concern raised by \cite{BPS05} quoted above.\footnote{
    For instance, when preferences and priorities must be communicated between the students and schools, the communication complexity of running $\TTC$ is less than that of running $\DA$ \cite{GonczarowskiNOR19}.
    Additionally, \cite{GonczarowskiHT22} provide a complexity-theoretic sense in which $\TTC$'s strategyproofness may be more apparent than $\DA$'s.
    Moreover, in $\TTC$ (but not $\DA$), an applicant can only change the matching if her own match changes (a property called \emph{nonbossiness}).
}

In this paper, we consider several \emph{structural} complexity measures addressing the above concerns.
We draw upon the techniques of theoretical computer science, traditionally studying topics such as verification (i.e., nondeterministic computation) or cryptographic security, and extend these tools to address questions of predictability, explainability, and transparency.
We consider a unified, flexible framework for formalizing and answering the questions at hand: 
First, we define a protocol or data structure that captures the task at hand; second, we bound the number of bits required
by this protocol or data structure.
We consider mechanisms matching applicants to institutions, and apply our framework to four main questions:
\begin{enumerate}[label=(\arabic*),ref=(\arabic*)]
  \itemsep0em
  \item \nameref{ppg:type-to-match} Question: \questionTextTypeToMatching
  \item \nameref{ppg:type-to-menu} Question: \questionTextTypeToMenu 
  \item \nameref{ppg:representation} Question: \questionTextRepresentation 
  \item \nameref{ppg:verification} Question: \questionTextVerification 
\end{enumerate}

The answers to these questions both help to illuminate the mathematical structure of matching mechanisms, and are of concrete economic interest.
First, our \nameref{ppg:type-to-match} and \nameref{ppg:type-to-menu} Questions provide a stylized theoretical lens into the predictability question of \cite{Arnosti20BlogPredictable}, since they address how ``small'' changes to the inputs affect the mechanism.\footnote{
    Our \nameref{ppg:type-to-match} and \nameref{ppg:type-to-menu} Questions also provide quantitative complexity results that complement the traditional economics literature, which studies qualitative results showing that different matching mechanisms satisfy or do not satisfy various binary properties (such as strategyproofness, nonbossiness, and population monotonicity; see \autoref{sec:related}).}
Second, our \nameref{ppg:representation} and \nameref{ppg:verification} Questions provide
practically-relevant---but theoretically principled---insights towards questions of how matches can be explained in terms of priorities \cite{LeshnoL21, AzevedoL16}, and how the outcome of the mechanism can be trusted and made transparent \cite{Moller22, MollerG23, HakimovR23}.

Taken as a whole, our results show that $\TTC$ is more complex than $\DA$, corroborating some of the concerns of \cite{BPS05} and making them precise.
In particular, under our \nameref{ppg:type-to-menu} and \nameref{ppg:representation} Questions, we give as-large-as-possible separations between the complexity of $\TTC$ and $\DA$.
Along the way, we provide several new tight lower-bound constructions that extend or complement prior results of \cite{GonczarowskiHT22,AzevedoL16,LeshnoL21}, and provide novel new characterizations of how one applicant can affect others in certain mechanisms.
The higher complexity of $\TTC$ is, however, nuanced: Under our \nameref{ppg:type-to-match} and \nameref{ppg:verification} Questions, we find that the mechanisms are comparably complex. Overall, we use our framework to precisely delineate all of these complexities of the two mechanisms in a cohesive model.

\paragraph{Outcome-Effect}\label{ppg:type-to-match}\textbf{\hspace{-0.8em} Question\ \ (\questionTextTypeToMatching{} \big/ \autoref{sec:gtc})\ } 
To give exposition into our framework, consider the different ways one could formalize our \nameref{ppg:type-to-match} Question.
A na{\"i}ve approach might be to bound the ``magnitude'' of one applicant's effect on the matching. However, as observed in \cite{Arnosti20BlogPredictable}, it is not hard to see that even in very simple mechanisms, changing one applicant's preference report can change the match of \emph{every} applicant.
Indeed, consider serial dictatorship (henceforth, $\SD$), a special case of both $\TTC$ and $\DA$ in which all institutions have the same priority list.
In this mechanism, each of the applicants $d_1,d_2,d_3,d_4$ (say, in this order) is permanently matched to her top-ranked not-yet-matched institution. Suppose their reported preference lists are, respectively:
\[P_1 : h_1\succ h_2 \succ \emptyset
\qquad P_2 : h_2 \succ h_3 \succ \emptyset
\qquad P_3 : h_3 \succ h_4 \succ \emptyset
\qquad P_4 : h_4 \succ h_1 \succ \emptyset.
\]
$\SD$ matches $d_i$ to $h_i$ for every $i$.\footnote{
To avoid using the common generic variables $a$ and $i$ for applicants and institutions, we denote applicants with the letter $d$ (mneumonic: doctor) and institutions with the letter $h$ (mneumonic: hospital).
}
Now, observe that if $d_1$ changes her preferences from $P_1$ to $P_1' : h_2 \succ h_1$, then every applicant's match changes: the new matching matches $d_i$ to $h_{i+1}$ (with indices modulo $4$) for each $i=1,\ldots,4$. So a seemingly small change to one applicant's report can have a large effect on the outcome, even for the quite simple mechanism $\SD$.

We therefore look not at the magnitude, but rather at the \emph{structure}, of one applicant's effect on the matching. 
As a warm-up, in \autoref{thrm:gtc-SD-UB} we characterize the function $\SD(\cdot,P_{-d})$ from one applicant $d$'s report to the outcome matching of $\SD$.\footnote{As is customary, $P_{-d}$ denotes a profile of preferences for all applicants except $d$.}
In particular, we show that for all $P_{-d}$, there exists a data structure representing this function in only $\widetilde O(n)$ bits; note that since this matches the number of bits needed to record a single matching, this bit-complexity is as low as possible. 
This shows that, while changing one applicant's report can change many other applicants' matches, it can only do so in a structured way.

In contrast, for each $f \in \{ \TTC, \DA \}$, we prove that any representation of the function $f(\cdot, P_{-d})$ \emph{requires $\Omega(n^2)$ bits}, matching (up to logarithmic factors) the trivial solution of storing the entire preference profile $P_{-d}$, and therefore as high as possible.
This gives one precise, natural, and generic sense in which the complexity of how one applicant can affect the matching is high:
the functions $\{ f(\cdot, P_{-d}) \}_{P_{-d}}$ representing this affect cannot (up to lower-order terms) be described in any more-compact fashion than ``the function that results from preference profile $P_{-d}$.'' 

\begin{theorem}[Informal; Combination of \autoref{thrm:gtc-TTC-LB}, \autoref{thrm:gtc-DA-LB}, and \autoref{thrm:gtc-IPDA}]
  For both $\TTC$ and $\DA$, the complexity of how one applicant can affect the outcome matching is (nearly) as high as possible.\footnote{
    $\DA$ comes in two variants: applicant-proposing $\DA$ ($\APDA$), and institution-proposing $\DA$ ($\IPDA$). 
    This result holds for both, however establishing this requires a very different proof for each (\autoref{thrm:gtc-DA-LB}, which follows directly from \cite{GonczarowskiHT22}, for $\APDA$; and \autoref{thrm:gtc-IPDA}, which requires an involved new construction, for $\IPDA$).
  }
\end{theorem}

These results are quite novel among complexity results for matching mechanisms; we are not aware of any similar formal complexity bounds.
We prove these results directly by constructing a large set of preference profiles for all applicants other than $d$ (possibly using intermediate mechanisms, e.g. in \autoref{thrm:gtc-SDrot-LB}), and showing that each distinct preference profile defines a distinct function. Each such construction must carefully exploit the properties of the relevant mechanism. 

Our results establish a new complexity gap between the simple mechanism $\SD$ and the more complex $\TTC$ and $\DA$.
This gap is (nearly) as high as possible, since (as discussed above) this complexity measure is always at least $\widetilde\Omega(n)$ and at most $\widetilde O(n^2)$.
This is representative of the strong separations we achieve in our framework: we think of  $\widetilde O(n)$ as low complexity / structured, and $\Omega(n^2)$ as high complexity / non-structured.
Note additionally that ``on average'', i.e. \emph{per-applicant}, the gap between these two complexities is exponential ($O(\log(n))$ vs.\ $\Omega(n)$). Moreover, this is an economically meaningful gap in real-world markets: explaining something to students using ``just a few bits per school'' may be far more tractable than an explanation requiring $\Omega(n)$ bits per school.

\paragraph{Options-Effect}\label{ppg:type-to-menu}\textbf{\hspace{-0.4em}Question\ \ (\questionTextTypeToMenu{} \big/\ \autoref{sec:type-to-menu})\ } 
While our results in \autoref{sec:gtc} separate the simple mechanism $\SD$ from $\TTC$ and $\DA$, they do not distinguish between the complexity of $\TTC$ and $\DA$ themselves.
To get such a separation, and to provide a deeper look into how applicants affect each other in these more involved mechanisms, we investigate how one applicant's report can affect another's set of obtainable options.
The natural definition of an applicant $d$'s set of obtainable options is \textbf{$\boldsymbol{d}$'s menu given $\boldsymbol{P_{-d}}$}, which is defined as the set of institutions that $d$ could be matched to under some report, when holding the reports of all the other applicants fixed at $P_{-d}$. Formally, for a mechanism $f$, we define this set as $\Menu^f_d(P_{-d}) = \bigl\{ f_d(P_d', P_{-d}) ~\big|~ P_d' \in \T_d\bigr\}$.\footnote{
We let $f_d(\cdot)$ denote the match of $d$ in the mechanism (i.e., if $\mu = f(P)$, then $f_d(P) = \mu(d)$), and we let $\T_d$ denote the set of all possible preference lists of $d$.
}
We consider the function $\Menu^{f}_{d_\dagger}\bigl(\cdot, P_{-\{d_*, d_\dagger\}}\bigr)$, which maps from the preference list of ${d_*}$ to the menu of $d_\dagger$ (while the preferences $P_{-\{d_*, d_\dagger\}}$ of all other applicants are held fixed).
We measure the complexity of this function as before: we ask whether any data structure can represent this function in a more compact way than storing all of $P_{-\{d_*, d_\dagger\}}$.
Our results under this complexity measure give a separation between $\DA$ and $\TTC$:

\begin{theorem}[Informal; Combination of \autoref{thrm:type-to-menu-TTC} and \autoref{thrm:type-to-menu-DA}]
  For $\TTC$, the complexity of how one applicant can affect another's menu is (nearly) as \textbf{high} as possible.
  For $\DA$, this complexity is (nearly) as \textbf{low} as possible.
\end{theorem}

For $\TTC$, this result requires another novel lower bound construction.
To prove this result for $\DA$, we uncover a novel combinatorial characterization of the set of institutions that are removed from applicant $d_\dagger$'s menu when applicant $d_*$ is added to the market.\footnote{
  The fact that such a set of institutions exists (without any characterization of the set) follows from the classical economic property of \emph{population monotonicity}, which $\DA$ satisfies, but $\TTC$ does not; see \autoref{sec:related}.
}
This characterization may be of independent interest;
we also exploit it to give an additional new characterization of the ``pairwise menu'' of applicants $d_*$ and $d_\dagger$ (formally, the set of pairs $(h_*, h_\dagger)$ such that there exists some $(P_{d_*}, P_{d_\dagger})$ such that $d_*$ and $d_\dagger$ respectively match to $h_*$ and $h_\dagger$, holding $P_{-\{d_*, d_\dagger\}}$ fixed).

\paragraph{Representation}\label{ppg:representation}\textbf{\hspace{-0.4em}Question\ \ (\questionTextRepresentation{} \big/\ \autoref{sec:compression-complexity})\ } 
Next, we switch gears and investigate a complexity question concerning the outcome matching (instead of the complexity of one applicant's effect on the mechanism).
We start by asking how the outcome can be communicated, assuming all preferences are already known by the mechanism designer.
Under a standard model with $n$ applicants and $n$ institutions, this question is trivial: $\widetilde O(n)$ bits are both necessary and sufficient to communicate the matching.

Given the above, to capture the insights we seek, we make two natural assumptions.
First, we consider sets of applicants~$\Appls$ and institutions $\Insts$ with $|\Appls|\gg|\Insts|$ (since, e.g., in student-school matchings many students will attend the same school), say with $|\Appls|=\poly(|\Insts|)$ for concreteness. 
Second, we assume that all applicants start off knowing---in addition to their own preferences---all their priorities at all institutions (quite plausible in school matching where priorities are often determined by polices).\footnote{
  All our results are the same regardless of whether we assume each applicant knows only her own priority at each institution (say, because it is communicated privately to the applicant ahead of time) or knows all other applicants' priorities at all institutions (say, because the priorities are determined by set and publicly known policies).
}

Under these assumptions, it turns out that for some mechanisms, one can simultaneously communicate to each applicant her own match using a single global message containing \emph{less than one bit per applicant}.
Indeed, for $\DA$, a representation of the matching of the above form follows readily from well-known properties of stable matchings. 
Specifically, as observed by \cite{AzevedoL16}, for any stable matching $\mu$ that could possibly result from $\DA$, there exist \emph{cutoffs} $c_h \in \Appls \cup \{\emptyset\}$, one for each institution $h$, such that each applicant is matched in $\mu$ to her highest-ranked institution $h$ whose cutoff is below her priority score; formally, $\mu(d) = \max_{P_d} \bigl\{ h ~\big|~ d \succ_h c_h \bigr\}$.\footnote{
   Here, $\max_{P_d} H$ for a set of institutions $H$ denotes the highest-ranked $h\in H$ according to preference list $P_d$. This representation of the matching is a simple reformulation of the definition of a stable matching; see \autoref{thrm:representing-DA-cutoff-lemma}.
}
Assuming each applicant knows the priorities, this profile of cutoffs $(c_h )_{h \in \Insts}$ represents to each applicant her own match using altogether $\widetilde O\bigl(|\Insts|\bigr)$ bits (and it is not hard to show that this is tight).

\cite{LeshnoL21} have previously studied a variant of this question for $\TTC$, and showed that there exists a cutoff-like representation of the matching, but one requiring a cutoff for each \emph{pair} of institutions. This gives an $\widetilde O\bigl(|\Insts|^2\bigr)$-bit upper bound for $\TTC$. Our contribution in \autoref{sec:compression-complexity} is a nearly matching lower bound: when $|\Appls|$ is sufficiently large relative to $|\Insts|$ (namely, whenever $|\Appls| > |\Insts|^2$), representing the matching \emph{requires} $\Omega\bigl(|\Insts|^2\bigr)$ bits. Note, however, that if $|\Appls| < |\Insts|^2$, then it is always more communication-efficient to simply write down the match of each applicant separately; this gets our final tight bound of $\widetilde\Theta\bigl(\min\{|\Appls|,|\Insts|^2\}\bigr)$ on the complexity of representing the matching in $\TTC$. 
To prove this result, we use another carefully crafted lower-bound construction quite akin to our constructions for our \nameref{ppg:type-to-match} and \nameref{ppg:type-to-menu} Questions.
Qualitatively, our results in \autoref{sec:compression-complexity} are:

\begin{theorem}[Informal; Combination of \autoref{thrm:representing-DA} and \autoref{thrm:representing-TTC}]
  Consider a market with many more applicants than institutions, where all applicants initially know their priorities.  For $\DA$, representing to each applicant her own match with a blackboard protocol requires only a few bits per institution. For $\TTC$, the same complexity requires at least one bit for each \textbf{pair} of institutions.
\end{theorem}

This result gives perhaps the most application-oriented distinction between $\DA$ and $\TTC$ in our paper. With a single complexity measure, it gives a precise sense in which priorities relate to the outcome matching in a more complex manner in $\TTC$ than in $\DA$, corroborating and clarifying past intuitions from both practitioners \cite{BPS05} and theorists \cite{LeshnoL21}.

\paragraph{Verification}\label{ppg:verification}\textbf{\hspace{-0.4em}Question\ \ (\questionTextVerification{} \big/\ \autoref{sec:verif-complexity})\ } 
Our final main question builds upon \autoref{sec:compression-complexity}, and asks: how many bits are required to additionally \emph{verify} that the matching was calculated correctly? Here, we mean verification in the traditional theoretical computer science sense of nondeterministic computation / communication: no incorrect matching can be described without some applicant being able to detect the fact that the matching is incorrect.
To begin, we show in \autoref{thrm:verif-lb-both-mechs} that for both mechanisms, a verification protocol requires $\Omega\bigl(|\Appls|\bigr)$ bits, so (unlike in the \nameref{ppg:representation} Question) we cannot hope to verify the matching using a blackboard certificate containing less than one bit per applicant, showing that there is a real cost to verification.

Despite the fact that $\TTC$ is harder to represent than $\DA$, we prove that these two mechanisms are equally difficult to verify, each requiring $\widetilde O\bigl(|\Appls|\bigr)$ bits. Possibly surprisingly, this upper bound turns out to be perversely harder to show for the easier-to-represent mechanism $\DA$. 
Indeed, while $\TTC$ has a simple \emph{deterministic} $\widetilde O\bigl(|\Appls|\bigr)$-bit communication protocol, we achieve the same bound with a \emph{nondeterministic} protocol for $\DA$ that delicately exploits classical properties of the extremal elements of the set of stable matchings \cite{GusfieldStableStructureAlgs89}. We prove:

\begin{theorem}[Informal; Combination of \autoref{thrm:verif-complexity-ttc} and \autoref{thrm:verif-complexity-da}]
  For both $\TTC$ and $\DA$, the outcome matching can be verified using a blackboard certificate containing a few bits per applicant, and this is tight.
\end{theorem}

Our protocol for $\DA$ crucially uses the fact that the priorities are prior knowledge.
Indeed, \cite{GonczarowskiNOR19} show that (with $n$ applicants and $n$ institutions) if the priorities must be communicated, then any protocol verifying $\DA$ requires $\Omega(n^2)$ bits; in contrast, our protocol uses $\widetilde O(n)$ bits in the known-priorities model.
This gives a strong separation for $\DA$ between these two natural models.
It also shows that the positive result for $\DA$ under the \nameref{ppg:representation} Question (where $\DA$ required $\widetilde O\bigl(|\Insts|\bigr)$ bits, but $\TTC$ required $\Omega\bigl(|\Insts|^2\bigr)$ bits) is somewhat limited. 
Namely, it shows this approach towards representation fundamentally does not suffice to perform verification (under which $\DA$ and $\TTC$ become equally complex, requiring $\Omega(|\Appls|)$ bits).

\paragraph{Additional Questions.}
To explore additional applications of our high-level framework, in \autoref{sec:additional-results} we address several supplementary complexity measures that arise as follow-ups to our main questions. 
Perhaps most interestingly, we observe that while the protocols for representing (the outcomes of) $\TTC$ and $\DA$ in \autoref{sec:compression-complexity} communicate each applicant $d$'s match in terms of her top pick from some set, this set is \emph{not} $d$'s menu.
Thus, we consider a harder version of the \nameref{ppg:representation} Question, namely, the complexity of simultaneously representing all applicant's menus (\autoref{sec:bit-complexity-menus}), and show that it is $\Omega(n^2)$ (as high as possible), even for the simple-to-represent $\DA$.
We also study an easier version of the \nameref{ppg:type-to-match} Question: we ask how one applicant can affect a single other applicant's match (\autoref{sec:type-to-anothers-match}).
We find both $\TTC$ and $\DA$ have low complexity according to this measure.
Finally, we study a topic in the intersection of our \nameref{ppg:type-to-match} and \nameref{ppg:representation} Questions.
We consider the complexity of representing, for each applicant simultaneously, the effect that unilaterally changing her type can have on a single (fixed) applicant's match (\autoref{sec:all-type-to-one-match}).
We show that this complexity is high, even for $\SD$. To sum up, none of these supplementary complexity measures separates $\TTC$ and $\DA$.

\paragraph{Discussion.}
Holistically, our results formalize ways in which $\TTC$ is more complex than $\DA$: one agent can have a more complex effect on another's menu, and the relationship between the priorities and the outcome matching are harder to describe.
However, we also find that such distinctions are somewhat limited: in both mechanisms, one applicant can have an equally complex effect on the matching overall; the two mechanisms are equally complex to verify; and it is equally hard to describe all applicants' menus together. 
Our results could be seen as complementing and contrasting recent results under different models \cite{GonczarowskiHT22, GonczarowskiNOR19}, which point out ways in which $\DA$ is more complex than $\TTC$, emphasizing that many different concerns factor into the complexity of matching mechanisms.
Nevertheless, we show that the complexity concerns of \cite[etc]{BPS05} can be made precise, and that $\DA$ has a formal advantage in simplicity in multiple regards.
See \autoref{tab:all-main-results} for a summary of our results comparing $\TTC$ and $\DA$. 

\newcommand{\ResultEntry}[2]{\vspace{0in} \makecell{ \makecell{#1} \\ {\footnotesize \makecell{#2} }}}
\newcommand{\ResultDivider}{\ $\Big|$\ \:\!}

\begin{table}[thb]
  \caption[Results]{Summary of all our results comparing $\TTC$ and $\DA$.}
  \vspace{-0.3in}
  \label{tab:all-main-results}
  \begin{center}
    \begin{tabular}{C{15.3em}cC{12.3em}C{10.8em}}
  \toprule
    && $\TTC$ & $\DA$ \\
    \cmidrule{3-4}
    \\[-1.5em]
    \vspace{-0.3em}
    \makecell{Describing one's effect\\ on the full outcome matching\\[-0.2em] {\footnotesize
    (\nameref{ppg:type-to-match} Question / \autoref{sec:gtc})}}
    && \ResultEntry{ $\widetilde\Theta\bigl(n^2\bigr)$ 
      }{ By \autoref{thrm:gtc-TTC-LB}. }
    & \ResultEntry{ $\widetilde\Theta\bigl(n^2\bigr)$ 
    }{ For $\APDA$, by \cite{GonczarowskiHT22}. \\
      For $\IPDA$, by \autoref{thrm:gtc-IPDA}. }
    \\ \\
    \vspace{-0.3em}
    \makecell{Describing one's effect\\ on another's menu (set of options)\\[-0.2em] {\footnotesize (\nameref{ppg:type-to-menu} Question / \autoref{sec:type-to-menu})}}
    && \ResultEntry{ $\widetilde\Theta\bigl(n^2\bigr)$
      }{ By \autoref{thrm:type-to-menu-TTC}. }
    & 
    \ResultEntry{${\widetilde\Theta\bigl(n\bigr)}$}{By \autoref{thrm:type-to-menu-DA}.}
    \\  \\ 
    \vspace{-0.3em}
    \makecell{Concurrently representing to\\ every applicant her own match
    \\[-0.2em] {\hspace{-0.5em}\footnotesize (\nameref{ppg:representation} Question / \autoref{sec:compression-complexity})} }
    &&  \ResultEntry{ $\widetilde\Theta\bigl(\min\{|\Appls|, |\Insts|^2\}\bigr)${\ResultDivider$\widetilde\Theta(n)$ }
      }{ By \cite{LeshnoL21} and \autoref{thrm:representing-TTC}.  }
    &  \ResultEntry{ $\widetilde\Theta\bigl(|\Insts|\bigr)$\ResultDivider$\widetilde\Theta(n)$
      }{For any stable matching,\\by \cite{AzevedoL16} / \autoref{thrm:representing-DA}. }
    \\ \\
    \vspace{-0.3em}
    \makecell{\nameref{ppg:representation} as above, as well \\as jointly verifying the matching\\[-0.2em] {\footnotesize (\nameref{ppg:verification} Question / \autoref{sec:verif-complexity})} }
    && \ResultEntry{ {$\widetilde\Theta\bigl(|\Appls|\bigr)$}\ResultDivider{$\widetilde\Theta(n)$}
    }{ By deterministic complexity, \\ see \autoref{thrm:verif-complexity-ttc}. }
    &  \ResultEntry{ {$\widetilde\Theta\bigl(|\Appls|\bigr)$}\ResultDivider{$\widetilde\Theta(n)$}
    }{ For both $\APDA$ and $\IPDA$, \\ by \autoref{thrm:verif-complexity-da}. }
    \\[-0.5em]
    \vspace{-0.3em}
    \\[-0.5em] \bottomrule
  \end{tabular}
  \end{center}

  {\footnotesize \textbf{Notes:}
  Each result bounds the number of bits required to represent the function from one (fixed) applicant's report onto some piece of data (\nameref{ppg:type-to-match} and \nameref{ppg:type-to-menu} Questions), or the number of bits required to simultaneously perform a task with all applicants simultaneously via a blackboard protocol (\nameref{ppg:representation} and \nameref{ppg:verification} Questions).
  We consider markets matching $n$ applicants and $n$ institutions; 
  for our \nameref{ppg:representation} and \nameref{ppg:verification} Questions, we also consider markets with general sets of applicants $\Appls$ and institutions $\Insts$ with $|\Appls|\gg|\Insts|$ and $|\Appls|=\poly(|\Insts|)$; see \autoref{sec:protocol-models} for a discussion of the $|\Appls|\gg|\Insts|$ model, and see \autoref{remark:unbalanced-bounds} for a discussion of the Effect Questions when $|\Appls|\ne|\Insts|$.
  \par } 
\end{table}

\paragraph{Paper Organization.}
After discussing related work and providing preliminaries, we begin our paper with our more purely theoretical results.
These are Sections~\ref{sec:gtc} and~\ref{sec:type-to-menu}, discussing outcome-effect and options-effect complexity.
Next, our more policy-relevant results on representation complexity are in \autoref{sec:compression-complexity}, and our results on verification complexity are in \autoref{sec:verif-complexity}.
\autoref{sec:additional-results} gives additional complexity bounds that extend our main results; see \autoref{tab:appendix-results} on Page~\pageref{tab:appendix-results} for a summary. \autoref{sec:additional-prelims} gives additional preliminaries. \autoref{sec:related-simplicity} gives a technical comparison between our results and certain related prior works.

\subsection{Related work}
\label{sec:related}

The traditional economic approach to studying the structure of matching mechanisms often considers qualitative, binary properties that are satisfied by some mechanisms but not by others.
These properties include strategyproofness, nonbossiness, and population monotonicity.
Strategyproofness says that applicant~$d$'s report $P_d$ can only possibly affect $d$'s own match in a very controlled way: by matching $d$ to her highest-ranked institution on her menu.
Similarly, nonbossiness and population montonicity restrict when and how one applicant can affect other applicants' matches.
Nonbossiness says that an applicant can only change the matching if her own match changes.
Population monotonicity says that adding a new applicant to the market can only make other applicants worse off. Both $\TTC$ and $\APDA$ (the applicant-proposing variant of $\DA$) are strategyproof; $\TTC$ and $\IPDA$ (the institution-proposing variant) are nonbossy; $\APDA$ and $\IPDA$ are population monotonic.

Each of these properties has an interpretation similar to our \nameref{ppg:type-to-match} and \nameref{ppg:type-to-menu} Questions: they ask ``how can one applicant's report affect the mechanism?''.
Our complexity results complement these classical, qualitative binary properties in a quantitative way.
For instance, the fact that the outcome-effect complexity of $\TTC$ (a nonbossy mechanism) is high (\autoref{thrm:gtc-TTC-LB}) shows that an applicant's report can affect the outcome matching in a complex way (despite the fact that an applicant's report can only affect the matching if it affects her own match).
Additionally, our characterization showing that the options-effect complexity of $\DA$ is low (\autoref{thrm:type-to-menu-DA}) makes $\DA$'s population monotonicity more precise: it characterizes exactly which institutions will be removed from some applicant's menu when a new applicant is added.

Many of our results in Sections~\ref{sec:gtc} and~\ref{sec:type-to-menu} are most directly inspired by the recent mechanism design paper \cite{GonczarowskiHT22}. Briefly and informally, \cite{GonczarowskiHT22} are interested in describing mechanisms to participants in terms of their menus, as a way to better convey strategyproofness. 
In this direction, our \autoref{sec:gtc} provides a new lens into the relationship between one applicant's menu and the full matching (see \autoref{sec:gtc-apda} for details), and our \autoref{sec:type-to-menu} provides a new lens into the relationship between different applicants' menus (see \autoref{sec:application-SEDA-1-to-1} for details).

Our results in Sections~\ref{sec:gtc} and~\ref{sec:type-to-menu} are also inspired by the broader literature within algorithmic mechanism studying menus of selling mechanisms.
Traditionally, this literature studies single-buyer mechanisms \cite{HartN19,DaskalakisDT17,BabaioffGN22,SaxenaSW18,Gonczarowski18}. More recently, the scope of mechanisms under consideration has expanded to multi-buyer ones \cite{Dobzinski16b, DobzinskiR21}. Compared to these works, the study of the complexity of menus for matching mechanisms (initiated by \cite{GonczarowskiHT22}) requires asking different questions. To illustrate why, we recall that \cite{Dobzinski16b} formulates the \emph{taxation complexity} of strategyproof mechanisms, which measures the number of distinct menus a bidder can have. 
The taxation complexity is trivial to bound for strategyproof matching rules, since the set of possible menus in these mechanisms is (typically) simply the family of all subsets of institutions, requiring $\Theta(n)$ bits to represent. Our outcome-effect complexity (\autoref{sec:gtc}), as well as certain complexity measures in \autoref{sec:additional-results},
can be seen as significant generalizations of taxation complexity.
Investigating our questions for auction-like mechanisms remains an intriguing future direction.

Our \autoref{sec:compression-complexity} is inspired by the literature on the cutoff structures of $\DA$ \cite{AzevedoL16} and $\TTC$ \cite{LeshnoL21}. To our knowledge, we are the first paper to formalize lower bounds on the complexity of this structure, and prove that the upper bounds of \cite{AzevedoL16,LeshnoL21} are tight.\footnote{
  \cite{LeshnoL21} give a brief argument as to how $\TTC$ is formally more complex than $\DA$, but (unlike our paper) they do not prove any type of $\Omega(n^2)$ lower bounds. 
  We provide a full discussion and comparison to their observation in \autoref{sec:related-cutoff-TTC}.
} However, in \autoref{sec:verif-complexity} we also show that verification of these mechanisms is more delicate than may have been previously believed.
We give an extensive discussion of how our definitions and results compare to previous ones in \autoref{sec:related-simplicity}.

There has also been prior work on verification of $\DA$ in more traditional models where the priorities of the institutions also need to communicated. For example, \cite{Segal07, GonczarowskiNOR19} prove lower bounds of $\Omega(n^2)$ for computing or even verifying a stable matching in this model. Since our protocol in \autoref{sec:verif-complexity} for $\DA$ constructs an $\widetilde O(n)$ upper bound for verification, our models are provably quite different.

In various contexts, different works have studied the prospect of changing one applicant's preferences in stable matching markets. \cite{MaiV18, GangamMRV18} study matchings that are stable both before and after one applicant changes her list; our upper bound characterization in \autoref{sec:type-to-menu-DA} may have some conceptual relation to these works (though little technical resemblance). \cite{Kupfer20} studies the effect of one applicant changing her reported preference to her strategically optimal one under~$\IPDA$.

Much more work has gone into understanding matching mechanisms under more traditional computer-science questions (such as computational complexity) or economic questions (such as incentives).
From a computer science perspective, \cite{IrvingCountingStable86, SabanS15} show that several problems related to stable matching are $\#\mathsf{P}$-hard, and \cite{KarlinGW18, PalmerP21} give upper bounds on the maximum number of stable matchings corresponding to a given set of preferences and priorities.
\cite{Subramanian94, CookFL14} show that stable matchings are connected to comparator circuits and certain novel complexity classes between $\mathsf{NL}$ and $\mathsf{P}$.
\cite{PittelAverageStable89, ImmorlicaM05, AshlagiKL17, KanoriaMQ21, CaiT22} study the running time of $\DA$ under random preferences.
\cite{BadeG16,AshlagiG18, Troyan19, Thomas21, mandal2020obviously,PyciaT21,GolowichL21} study the strategic simplicity of matching mechanisms through the lens of obvious strategyproofness \cite{Li17}.
Our work also relates to a recent push in the algorithmic mechanism design literature to understand incentive-constrained computation in more detailed and refined ways. 
This push includes papers studying restricted solutions concepts like dominant-strategy implementations \cite{RubinsteinSTWZ21, DobzinskiRV22} or the power of different types of simple mechanisms \cite[and many others]{BulowK96,HartlineR09,HartN17,EdenFFTW17b,EdenFFTW17a,AkbarpourL20,BabaioffILW20,BabaioffGG20}.

\section{Preliminaries}
\label{sec:prelims}

This paper studies rules for matching a set of \emph{applicants} $\Appls$ and a set of \emph{institutions} $\Insts$. A matching rule is a function $f : \T_1\times\ldots\times\T_{|\Appls|} \to \M$, where each set $\T_d$ (mnemonic: the possible \emph{types} of applicant $d$) is the set of all rank-order preference lists over $\Insts$, and $\M$ is the set of matchings $\mu : \Appls \to \Insts \cup \{\emptyset\}$ (where we write $\mu(d)=\emptyset$ if $d\in\Appls$ is unmatched). Preference lists may be partial: when an applicant does not rank an institution, this indicates that the applicant finds the institution unacceptable. We typically write $P_d \in \T_d$, and let $\succ_d^{P_d}$ denote the relation over $\Insts$ such that $h \succ_d^{P_d} h'$ if and only if $h$ is ranked above $h'$ according to $P_d$.
We also write $h \succ_d h'$ where no confusion can arise.
For each $d\in\Appls$, we let $f_d : \T_1\times\ldots\times\T_n \to \Insts$ denote the function giving the match of applicant $d$ according to $f$.%
\footnote{
We also follow standard notations such as writing $h \succeq_d^{P_d} h'$ when $h \succ_d^{P_d} h'$ or $h = h'$, writing $P\in\T$ to denote $(P_1,\ldots,P_n)\in\T_1\times\ldots\times\T_n$, writing $P_{-i} \in \T_{-i} = \T_1\times\ldots\times\T_{i-1}\times\T_{i+1}\times\ldots\times\T_n$ to denote $(P_1,\ldots,P_{i-1},P_{i+1},\ldots,P_n)$,
and writing $(P'_i, P_{-i})$ to denote $(P_1,\ldots,P_{i-1},P_i',P_{i+1},\ldots,P_n)$. 
For a set of applicants $S\subseteq\Appls$, we write $P_{-S}$ for a profile of preferences of applicants not in $S$.
For a matching $\mu$ and institution $h$, we abuse notation and let $\mu(h)\in\Appls\cup\{\emptyset\}$ denote the applicant $d$ (if there is any) with $\mu(d)=h$.
To avoid common variables $a$ and $i$, we typically use $d$ (mnemonic: doctor) to denote an element of $\Appls$ and $h$ (mnemonic: hospital) to denote an element of~$\Insts$.
}
The ``preferences'' of the institutions are known priorities, which we denote by $Q = (Q_h)_{h\in\Insts}$. Similarly to the applicants, we write $d \succ_h^{Q_h} d'$ or simply $d \succ_h d'$ when $d$ has higher priority at $h$ according to $Q$.

We study three canonical priority-based matching mechanisms. 
The first is $\TTC$.

\begin{definition}
  \label{def:TTC}
  The Top Trading Cycles ($\TTC = \TTC_Q(\cdot)$) mechanism is defined with respect to a profile of priority orders $Q = \{Q_h\}_h$.
  The matching is produced by repeating the following until every applicant is matched (or has exhausted her preference list): each remaining (i.e., not-yet-matched) applicant points to her favorite remaining institution, and each remaining institution points to its highest-priority remaining applicant. There must be some cycle in this directed graph (as the graph is finite). Pick any such cycle and permanently match each applicant in this cycle to the institution to which she is pointing. These applicants and institutions do not participate in future iterations.
\end{definition}

$\TTC$ produces a Pareto-optimal matching under the applicants' preferences, i.e. a matching $\mu$ such that no $\mu'\ne\mu$ exists such that $\mu'(d)\succeq_d \mu(d)$ for each $d\in\Appls$.
Despite the fact that the $\TTC$ procedure does not specify the order in which cycles are matched, the matching produced by $\TTC$ is unique (\autoref{thrm:TTC-order-independent}).%
\footnote{
$\TTC$ is also commonly studied under a model where each applicant starts out ``owning'' a single institution (a so-called housing market, which is equivalent to each institution having a distinct top-priority applicant in our model). Our lower bounds hold for this alternative model as well (outside of the questions of \autoref{sec:compression-complexity}, which require having $|\Appls|\gg|\Insts|$).
}
In certain common market structures, $\TTC$ is the unique Pareto-optimal mechanism that is strategyproof (see below) and satisfied individual rationality (i.e., no applicant is matched below any institution where she has top priority) \cite{Ma94,Sethuraman16}.
Thus, $\TTC$ is the nearly-unique strategyproof (and individually rational) matching mechanism that produces applicant-optimal matchings.

The second and third mechanisms that we study are the two canonical variants of $\DA$, namely, $\APDA$ and $\IPDA$:

\begin{definition}
  \label{def:DA}
  Applicant-Proposing Deferred Acceptance ($\APDA = \APDA_Q(\cdot)$) is defined with respect to a profile of priority orders $Q = \{Q_h\}_h$, one for each institution $h$, over applicants. 
  The matching is produced by repeating the following until every applicant is matched (or has exhausted her preference list): A currently unmatched applicant is chosen to \emph{propose} to her favorite institution that has not yet \emph{rejected} her. The institution then rejects every proposal except for the \emph{top-priority applicant} who has proposed to it thus far.
  Rejected applicants become (currently) unmatched, while that top-priority applicant is tentatively matched to the institution. This process continues until no more proposals can be made, at which time the tentative allocations become final.

  The mechanism Institution-Proposing Deferred Acceptance $\IPDA = \IPDA_Q(\cdot)$ is defined in one-to-one markets identically to $\APDA$, except interchanging the roles of the applicants and the institutions. In other words, the matching of $\IPDA_Q(P)$ coincides with $\APDA_P(Q)$, treating the preferences $P$ as priorities and priorities $Q$ as preferences.
\end{definition}

We use $\DA$ to denote either $\APDA$ or $\IPDA$ when the distinction is not important (i.e., when the result holds for both mechanisms by the same proof).
Both $\APDA$ and $\IPDA$ produce stable matchings. A matching $\mu$ is stable if there is no unmatched pair $d,h$ such that $d \succ_h \mu(h)$ and $h \succ_d \mu(d)$; this property is often taken as the canonical notion of fairness in matching mechanisms~\cite{AbdulkadirougluS03}.
Despite the fact that these procedures do not specify the order in which proposals are made, the matching produced by each of them is unique (\autoref{thrm:da-indep-execution}). In fact, $\APDA$ is the unique stable mechanism that is strategyproof for the applicants \cite{GaleS85,Roth82-DA}.
Thus, $\APDA$ is the unique strategyproof (for applicants) matching mechanism that produces stable matchings (and naturally, the same holds for institutions in $\IPDA$).

We also consider the very simple mechanism of \emph{serial dictatorship}, which corresponds to both $\TTC$ and to $\DA$ in the case where all institutions' priority lists are $\succ$. 
\begin{definition}
  Serial Dictatorship ($\SD = \SD_\succ(\cdot)$) is defined with respect to a single priority ordering $\succ$ over applicants, say with $d_1\succ\ldots\succ d_n$. 
  The matching is produced by matching applicants, in order $d_1,d_2,\ldots,d_n$, to their favorite remaining institution according to their submitted list $P_{d_1},\ldots,P_{d_n}$.
  We also denote this as $\SD_{d_1,\ldots,d_n}(\cdot)$.
\end{definition}

\paragraph{Strategyproofness and menus}
The matching rules of $\TTC$ and $\APDA$ (though not $\IPDA$) are \emph{strategyproof} for the applicants, i.e., for all $d\in\Appls$, all $P_d, P_d' \in \T_d$ and all $P_{-d}\in\T_{-d}$, we have $f_d(P_d, P_{-d}) \succeq_d^{P_d} f_d(P_d', P_{-d})$.
This property is tightly connected to the classical notion of the \emph{menu} of a player in a mechanisms, which is a common notion throughout our paper. The menu is the natural definition of the set of obtainable options that an applicant has in the mechanism:
\begin{definition}
  \label{def:menu}
  For any matching rule $f$ and applicant $d$, the \emph{menu $\Menu_d^f(P_{-d})$ of $d$ given $P_{-d}\in\T_{-d}$} is the subset of all institutions $h \in \Insts$ such that there exists some $P_d \in \T_d$ such that $f_d(P_d, P_{-d}) = h$.  That is,
  \[ \Menu_d(P_{-d}) = \Menu^f_d(P_{-d}) = 
    \left\{\ f_d(P_d, P_{-d}) \ \big|\ P_d \in \T_d\ \right\} 
    \subseteq \Insts. \]
\end{definition}

The menu is a lens through which one can view or understand strategyproofness, as captured by the following equivalence:
\begin{theorem}[\cite{Hammond79}]
  \label{thrm:TaxationPrinciple}
  A matching rule $f$ is strategyproof if and only if each applicant $d$ is always matched to her favorite institution from her menu (that is, for each $P_{-d}\in\T_{-d}$ and $P_d \in \T_d$, we have $f_d(P_d, P_{-d}) \succeq_d^{P_d} h$ for any $h \in \Menu_d^f(P_{-d})$).
\end{theorem}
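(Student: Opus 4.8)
The plan is to prove both implications by directly unwinding the definitions of strategyproofness and of the menu (\autoref{def:menu}); no combinatorial work is required, since this is essentially the ``taxation principle'' for matching rules.

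For the ``only if'' direction, I would fix an applicant $d$, a profile $P_{-d} \in \T_{-d}$, and a report $P_d \in \T_d$, and set $h^\star = f_d(P_d, P_{-d})$. Given any $h \in \Menu_d^f(P_{-d})$, \autoref{def:menu} supplies a witness report $P_d' \in \T_d$ with $f_d(P_d', P_{-d}) = h$. Applying strategyproofness with (true) preference $P_d$ and deviation $P_d'$ then gives $h^\star = f_d(P_d, P_{-d}) \succeq_d^{P_d} f_d(P_d', P_{-d}) = h$, which is exactly the claim that $d$'s match $h^\star$ is $\succeq_d^{P_d}$-maximal over her menu. Note that $h^\star$ itself lies in $\Menu_d^f(P_{-d})$ (take the witness report to be $P_d$), or $h^\star = \emptyset$, so ``$d$ is matched to her favorite institution from her menu'' is well posed.

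For the ``if'' direction, I would again fix $d$, $P_{-d}$, and two reports $P_d, P_d' \in \T_d$, and observe directly from \autoref{def:menu} (with witness $P_d'$) that $f_d(P_d', P_{-d}) \in \Menu_d^f(P_{-d})$. The hypothesis --- that under true preference $P_d$, applicant $d$'s match $f_d(P_d, P_{-d})$ is $\succeq_d^{P_d}$-preferred to every element of her menu --- then yields $f_d(P_d, P_{-d}) \succeq_d^{P_d} f_d(P_d', P_{-d})$, which is precisely the definition of strategyproofness. The argument is definitional throughout, so there is no real obstacle; the only point requiring minor care is the outside option $\emptyset$ (an applicant may be left unmatched), which we fold into $\succeq_d^{P_d}$ in the usual way --- ranked institutions above $\emptyset$, unranked ones below --- so that all comparisons above, including those with $h^\star = \emptyset$, make sense.
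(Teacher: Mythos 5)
Your proof is correct: both directions follow by directly unwinding \autoref{def:menu} and the definition of strategyproofness (with the usual convention folding $\emptyset$ into $\succeq_d^{P_d}$), which is exactly the standard ``taxation principle'' argument the paper relies on by citing \cite{Hammond79} without giving its own proof. Nothing is missing.
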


\section{Outcome-Effect Complexity}
\label{sec:gtc}

Recall our \nameref{ppg:type-to-match} Question: \questionTextTypeToMatching{} As discussed in \autoref{sec:intro}, one canonical way to measure the complexity of how one applicant can affect the matching is through the number of bits it takes to represent the function $f(\cdot, P_{-d})$. Formally:

\begin{definition}
  \label{def:type-to-matching}
  The \emph{outcome-effect complexity} of a matching mechanism $f$ is
  \[ \log_2\ \max_{d\in\Appls} \left|\left\{ 
      f(\cdot, P_{-d})\ \big|\ P_{-d}\in\T_{-d}
    \right\}\right|, \]
  where $f(\cdot, P_{-d}) : \T_d \to \M$ is the function mapping each $P_d\in\T_d$ to the matching $f(P_d, P_{-d})$.
\end{definition}

This provides a generic formal measure into the complexity of how one applicant can affect the mechanism, which we explore throughout this section.

\subsection{Warmup: \texorpdfstring{$\SD$}{SD}}

To begin the discussion of outcome-effect complexity, we consider $\SD$. 
Despite the extreme simplicity of this mechanism, and despite the fact that it is strategyproof and nonbossy, it is not immediately clear whether there is any way to represent the function $\SD(\cdot, P_{-1}) : \T_1 \to \M$ efficiently. One could write down a separate matching $\bigl(\SD(\{h_1\}, P_{-1}),\ldots,\SD(\{h_n\}, P_{-1})\bigr)$ for each possible institution that applicant $1$ might pick, but this representation takes $\widetilde\Omega(n^2)$ bits (matching the $\widetilde\Omega(n^2)$ solution that simply records the entirety of $P_{-1}$). 
Nevertheless, we now show that this complexity is $\widetilde O(n)$ for $\SD$, using a novel data structure representing all possible matchings as a function of $P_1$. 

\begin{proposition}[restate=restateThrmGtcSdUb, name=] 
  \label{thrm:gtc-SD-UB}
  The outcome-effect complexity of $\SD$ is $\widetilde\Theta(n)$.
\end{proposition}
\begin{proof}
  Suppose $\Appls = \{1,2,\ldots,n\}$ and $\succ$ ranks applicants in order $1 \succ 2 \succ \ldots n$. Our goal is to bound $\max_{d_*} \log_2 \bigl|\bigl\{ \SD(\cdot, P_{-{d_*}}) ~\big|~ P_{-d_*}\in\T_{-d_*} \bigr\}\bigr|$; observe that it is without loss of generality to take $d_* = 1$.
  Observe also that an $\widetilde\Omega(n)$ lower bound follows from the need to write down the matching that results if applicant $1$ selects no items.

  Now, fix preferences $P_{-1}$ of applicants other than $d_*=1$.
  Our proof will proceed by finding another profile of preferences $P_{\mathsf{small}}$, such that (1) each applicant's preference list in $P_{\mathsf{small}}$ has length at most two, and (2) for all $P_1 \in \T_1$, we have $\SD(P_1,P_{\mathsf{small}}) = \SD(P_1,P_{-1})$.

  To begin, we define a set of ``filtered'' preference profiles $P_{\mathsf{filt}}^1,\ldots,P_{\mathsf{filt}}^n$. Each $P_{\mathsf{filt}}^i$ contains a preference list for all applicants except applicant $1$. Define $P_{\mathsf{filt}}^1 = P_{-1} \in \T_{-1}$. Now, for each $i=2,3,\ldots,n-1$ in order, define $P_{\mathsf{filt}}^i$ by modifying $P_{\mathsf{filt}}^{i-1}$ as follows:
  if $h \in \Insts$ is the top-ranked institution on $i$'s list in $P_{\mathsf{filt}}^{i-1}$, then remove $h$ from the preference list of each applicant $d$ with $d > i$.
  Define $P_{\mathsf{filt}}$ as $P_{\mathsf{filt}}^{n-1}$.

  First, we show that $P_{\mathsf{filt}}$ always produces the same matching as $P_{-1}$: 
  \begin{lemma}
    \label{thrm:gtc-SD-UB-key-lemma-necessary}
    For any $P_1\in\T_1$, we have $\SD(P_1, P_{-1}) = \SD(P_1, P_{\mathsf{filt}})$.
  \end{lemma}

  To prove this, consider each step of the above recursive process, where some applicant $i$ ranked $h$ first on $P^{i-1}_{\mathsf{filt}}$, and we constructed $P^i_{\mathsf{filt}}$ by removing $h$ from the list of all $d>i$. 
  Now, observe that for any possible $P_1$, when $d$ picks an institution in $\SD(P_1,P_{\mathsf{filt}}^i)$, institution $h$ must already be matched (either to applicant $i$, or possibly an earlier applicant). Thus, removing $h$ from the list of $d$ cannot make a difference under any $P_1$, and for each $P_1 \in \T_1$, we have $\SD(P_1, P_{-1}) = \SD(P_1, P_{\mathsf{filt}}^i)$, by induction. 
  This proves \autoref{thrm:gtc-SD-UB-key-lemma-necessary}.

  Next, we show that only the first two institutions on each preference list in $P_{\mathsf{filt}}$ can matter:
  \begin{lemma}
    \label{thrm:gtc-SD-UB-lemma-complete}
    For any $P_1 \in \T_1$, 
    each applicant will be matched to either her first or second institution in her preference list in $P_{\mathsf{filt}}$.
  \end{lemma}
  To prove this, suppose player $1$'s top choice according to $P_1$ is institution $h_1\in\Insts$. 
  The mechanism will run initially with each applicant $i > 1$ taking her first choice (all of these are distinct in $P_{\mathsf{filt}}$), until some applicant $i_2$ that ranks $h_1$ as her first choice.
  This $i_2$ will take her second choice $h_2$ (since $h_2$ cannot yet be matched, because each prior applicant took her first choice). 
  But, applying this same argument for applicants $d > i_2$, we see that each such applicant will be matched to her first choice until some $i_3 > i_2$ that ranks $h_2$ first. This $i_3$ will get her second choice $h_3$ (which cannot be $h_1$, since $h_1$ was ranked first by $i_2$).
  This will continue again until some $i_4$ whose first choice is $h_3$ and whose second choice is $h_4 \notin \{ h_1, h_2, h_3 \}$; this applicant $i_4$ will match to $h_4$.
  This argument applies recursively until the mechanism is finished, proving \autoref{thrm:gtc-SD-UB-lemma-complete}.

  Now, consider preferences $P_{\mathsf{small}}$, which contain only the first two institutions on the list of each applicant in $P_{\mathsf{filt}}$.
  Then, by both lemmas above, for each $P_1\in\T_1$ we have $\SD(P_1,P_{-1}) = \SD(P_1,P_{\mathsf{small}})$.
  Because each list in $P_{\mathsf{small}}$ is of size at most $2$, it takes $\widetilde O(n)$ bits to write down $P_{\mathsf{small}}$, and thus the outcome-effect complexity of serial dictatorship is $\widetilde O(n)$.
\end{proof}

In particular, the proof of this result gives a concrete and concise representation of the function $\SD(\cdot,P_{-1})$. Namely, there is some ``default matching'' $\mu_0$ of applicants $i > 1$, which corresponds to the matching when applicant $d_1$ submits an empty preference list and matches to no institutions, and there is a DAG of ``displaced matches'' that might occur based on the choice of $d_1$. 
To construct this DAG, create a vertex $(d_i,\mu_0(d_i))$ for each applicant $d_i$ (and a vertex $(\emptyset, h)$ for each $h$ unmatched in $\mu_0$), and for each preference list in $P_{\mathsf{small}}$, say with $d_i : h_j \succ h_k$, create an edge from $(d_i,h_j)$ to the vertex $(d',h_k)$, where $d'$ satisfies $\mu_0(d')=h_k$.
The result of $\SD(P_1, P_{-1})$ can be calculated by matching $d_1$ to their top choice $h_*$, then considering the unique maximal path $(d_2',h_*), (d_3',h_3') \ldots, (d_K',h_K')$ in the DAG, and matching each $d_i'$ with $h_{i+1}'$ for $i\in \{2,\ldots,K-1\}$.
See \autoref{fig:type-to-outcome-sd} for an example and illustration.
\begin{figure}[tbp]
  \begin{minipage}[c]{0.3\textwidth}
    \centering
    \includegraphics[width=0.9\textwidth]{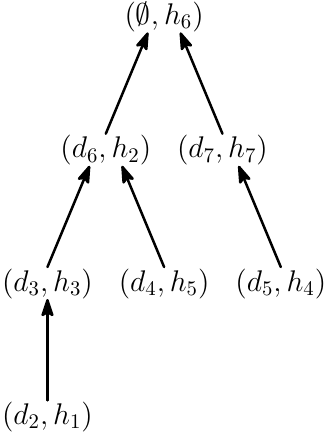}
  \end{minipage}
  \qquad
  \begin{minipage}[c]{0.65\textwidth}
  \caption[Type-to-menu DA Example]{
  Illustration of $\SD(\cdot, P_{-1})$, as given by \autoref{thrm:gtc-SD-UB}, on the preferences profile below.
  }
  \label{fig:type-to-outcome-sd}
    \begin{minipage}[c]{0.35\textwidth}
      Preferences:
      \begin{align*}
           & d_1 : \text{(Unknown)}
        \\ & d_2 : h_1 \succ h_3 \succ h_4 \succ h_5
        \\ & d_3 : h_3 \succ h_2 \succ h_1
        \\ & d_4 : h_3 \succ h_5 \succ h_1 \succ h_2
        \\ & d_5 : h_4 \succ h_3 \succ h_7
        \\ & d_6 : h_2 \succ h_4 \succ h_5 \succ h_6
        \\ & d_7 : h_2 \succ h_4 \succ h_7 \succ h_6
      \end{align*}
    \end{minipage}
    \qquad
    \begin{minipage}[c]{0.28\textwidth}
      $P_{\mathsf{small}}$:
      \begin{align*}
          & d_1 : \text{(Unknown)}
        \\ & d_2 : h_1 \succ h_3
        \\ & d_3 : h_3 \succ h_2
        \\ & d_4 : h_5 \succ h_2
        \\ & d_5 : h_4 \succ h_7
        \\ & d_6 : h_2 \succ h_6
        \\ & d_7 : h_7 \succ h_6
      \end{align*}
    \end{minipage}
    \qquad
    \begin{minipage}[c]{0.15\textwidth}
      $\mu_{0}$:
      \begin{align*}
           & d_1 : \emptyset
        \\ & d_2 : h_1
        \\ & d_3 : h_3
        \\ & d_4 : h_5
        \\ & d_5 : h_4
        \\ & d_6 : h_2
        \\ & d_7 : h_7
      \end{align*}
    \end{minipage}
  \end{minipage}
\end{figure}

For our main mechanisms of interest, the outcome-effect complexity will turn out to be high, as we see next.

\subsection{\texorpdfstring{$\TTC$}{TTC}}
\label{sec:gtc-ttc}

We now bound the outcome-effect complexity of $\TTC$. 

\paragraph{Building up to our construction.}
To begin, we note that it is not initially clear which ``$\TTC$-pointing graphs'' (i.e., which data structures used by the $\TTC$ algorithm in \autoref{def:TTC}) could possibly suffice to lower bound the outcome-effect complexity. 
Fix some applicant $d_*$, and consider delaying matching the cycle containing $d_*$ as long as possible.\footnote{This is inspired by the first step of a description of $\TTC$ given in \cite{GonczarowskiHT22}.}
One can show that if this initial $\TTC$-pointing graph consists of \emph{only} a long path, then the outcome-effect complexity is low (since for each applicant, only their favorite choice among institutions further up the path can matter for the remainder of the run of $\TTC$). Moreover, if this initial graph consists only of isolated vertices, then matching the cycle that $d_*$ completes can only affect the pointing graph in a small way. Since these two extremes fail to provide a high-complexity construction, it is initially unclear how high this complexity might be.

\paragraph{Intermediate mechanism: $\SDrot$.}
Informally, we are able to show this complexity measure is high because an applicant in $\TTC$ can dramatically affect the \emph{order} in which future cycles will be matched in $\TTC$, and thus dramatically affect the entirety of the matching.
To present our formal proof, we first define an intermediate mechanism that we call $\SDrot$, a variant of $\SD$ where the first applicant can affect the order in which other applicants choose institutions. We show that, unlike $\SD$, the mechanism $\SDrot$ has high outcome-effect complexity. Then, we show that $\TTC$ can ``simulate'' $\SDrot$, and thus $\TTC$ has outcome-effect complexity at least as high as $\SDrot$.

\begin{definition}
  \label{def:SDrot}
  Consider a matching market with $n+1$ applicants $\{d_*,d_1,\ldots,d_n\}$ and $2n$ institutions $\{h_1,\ldots,h_n,h_1^{\mathsf{rot}},\ldots,h_n^{\mathsf{rot}}\}$.
  Define a mechanism $\SDrot$ as follows: first, $d_*$ is permanently matched to her top-ranked institution $h_j^{\mathsf{rot}}$ from $\{h_1^{\mathsf{rot}},\ldots,h_n^{\mathsf{rot}}\}$. 
  Then, in order, each of the applicants { $d_j, d_{j+1}, \allowbreak\ldots,\allowbreak d_{n-1}, d_n$} is permanently matched to her top-ranked remaining institution from $\{h_1,\ldots,h_n\}$ (and all other applicants go unmatched). In other words, applicants are allocated to $\{h_1,\ldots,h_n\}$ according to $\SD_{d_j,d_{j+1},\ldots,d_n}(\cdot)$.
\end{definition}

Informally, this mechanism has high outcome-effect complexity because (under different reports of applicant $d_*$) each applicant $d_i$ with $i>0$ may be matched \emph{anywhere} in the ordering of the remaining agents, and thus \emph{any} part of $d_i$'s preference list might matter for determining $d_i$'s matching. 

\begin{lemma}
  \label{thrm:gtc-SDrot-LB}
  The outcome-effect complexity of $\SDrot$ is $\Omega(n^2)$.
\end{lemma}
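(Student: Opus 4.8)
The plan is to establish the lower bound by a counting argument: exhibit a family of $2^{\Omega(n^2)}$ profiles $P_{-d_*}$, each inducing a distinct function $\SDrot(\cdot, P_{-d_*}) : \T_{d_*} \to \M$. Since $d_*$'s report $P_{d_*}$ only affects the mechanism through which institution $h_j^{\mathsf{rot}}$ she takes (equivalently, through the index $j \in \{1,\ldots,n\}$ at which the serial dictatorship over $\{h_1,\ldots,h_n\}$ begins), the function $\SDrot(\cdot, P_{-d_*})$ is determined by, and determines, the tuple of $n$ matchings $\bigl(\mu^{(1)},\ldots,\mu^{(n)}\bigr)$ where $\mu^{(j)} = \SD_{d_j,\ldots,d_n}(P_{d_j},\ldots,P_{d_n})$. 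So it suffices to choose the preference lists $P_{d_1},\ldots,P_{d_n}$ so that the resulting tuple of $n$ serial-dictatorship outcomes encodes $\Omega(n^2)$ bits, and distinct choices of these lists give distinct tuples.

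The key idea for the construction: I would use roughly $n/2$ ``encoding'' applicants and $n/2$ ``probe'' applicants, interleaved in the order $d_1,\ldots,d_n$, together with a pool of institutions among $\{h_1,\ldots,h_n\}$ partitioned so that each encoding applicant controls a private choice. Concretely, fix the lists $P_{d_1},\ldots,P_{d_n}$ so that when the serial dictatorship starts at an early index, all encoding applicants grab their ``preferred'' institutions and the downstream probe applicants' matches reveal a whole block of the encoded data; whereas when it starts at a later index, fewer encoding applicants participate, and comparing the outcome $\mu^{(j)}$ across the $n$ different starting points $j$ lets one read off, for each of the $\sim n/2$ encoding applicants, a string of $\sim n/2$ bits — namely which of two institutions that applicant took at the first round in which she is active. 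The point is that $\mu^{(j)}$ and $\mu^{(j+1)}$ differ only in the effect of deleting applicant $d_j$ (and possibly $d_{j+1}$) from the front, so the sequence $\mu^{(1)},\ldots,\mu^{(n)}$ is a ``filmstrip'' exposing each applicant's decision in turn, and we can design the lists so these decisions are independent free bits.

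After fixing the combinatorial gadget, the remaining steps are routine: (i) verify that each of the $2^{\Omega(n^2)}$ choices of free bits yields a valid profile $P_{-d_*}$ (all lists are legitimate rank-order lists over $\{h_1,\ldots,h_n\}$); (ii) verify injectivity — given the function $\SDrot(\cdot,P_{-d_*})$, i.e.\ the tuple $(\mu^{(1)},\ldots,\mu^{(n)})$, one can uniquely recover the free bits, by reading each encoding applicant's choice off the appropriate $\mu^{(j)}$; and (iii) conclude that $\log_2$ of the number of distinct such functions is $\Omega(n^2)$, which is the type-to-matching complexity by \autoref{def:type-to-matching} (taking the applicant in the $\max$ to be $d_*$). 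Note the market here has $n+1$ applicants and $2n$ institutions, so the bound is $\Omega(n^2)$ in the parameter; this is consistent with the $\widetilde\Theta(n^2)$ claim since the number of bits to record all of $P_{-d_*}$ is $\widetilde O(n^2)$.

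I expect the main obstacle to be step (ii), designing the lists so that the decisions of the encoding applicants are genuinely free and independently recoverable: one must ensure that when serial dictatorship starts at index $j$, the earlier probe/encoding applicants have already consumed exactly the institutions needed so that the applicant whose bit we want to read at stage $\mu^{(j)}$ faces a clean binary choice between two still-available institutions, regardless of all other free bits. Getting the institution-pool bookkeeping right — so that availability at the decisive round does not depend on the other bits — is the delicate part; once the gadget is laid out carefully (e.g.\ giving each encoding applicant two ``dedicated'' institutions no one else ranks highly, plus enough filler institutions to keep the pool sizes aligned), the counting is immediate.
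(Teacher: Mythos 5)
Your high-level strategy matches the paper's: a counting argument in which $d_*$'s report only selects the starting index $j$ of the serial dictatorship, so the function $\SDrot(\cdot,P_{-d_*})$ is equivalent to the tuple of outcomes $\bigl(\SD_{d_j,\ldots,d_n}(\cdot)\bigr)_{j}$, and one must design $P_{-d_*}$ so that this tuple encodes $\Omega(n^2)$ recoverable bits. However, the entire content of the lemma is the explicit gadget achieving this, and your proposal does not supply one; you explicitly defer ``the institution-pool bookkeeping'' as the delicate part, which is precisely the step that needs to be solved. Worse, the one concrete suggestion you give---handing each encoding applicant ``two dedicated institutions no one else ranks highly''---would not work: a dedicated pair is available no matter when that applicant becomes active, so her choice is the same for every starting index $j$ and encodes only $O(1)$ bits, for a total of $O(n)$ rather than $\Omega(n^2)$. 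To get $\Omega(n)$ bits out of a single applicant, her preference list must contain $\Omega(n)$ independent free bits, each of which becomes pivotal at a \emph{different} starting index, and the institutions consumed before she picks must be the same set regardless of all other free bits.

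The paper's construction resolves exactly this: applicants are grouped into pairs $d^L_i, d^R_i$ and institutions into pairs $\{h^0_m,h^1_m\}$; every applicant ranks the institution pairs in the same global order (pair $1$, then pair $2$, \ldots), but within each pair $m\le i$ the order for $d^L_i$ is a free bit $b_{i,m}$ and the order for $d^R_i$ is its complement. Because the two members of each applicant pair always take both institutions of the current institution pair (in some order), the set of consumed institutions after each round is independent of the bit values, so when the dictatorship starts at $d^L_{i-j+1}$, applicant $d^L_i$ faces precisely the pair $\{h^0_j,h^1_j\}$ and her match reveals $b_{i,j}$. This complementary-pairing trick is the missing idea in your sketch; without it (or an equivalent device guaranteeing bit-independent availability), step (ii) of your plan does not go through.
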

\begin{proof}
  Fix $k$, where we will take $n = \Theta(k)$. For notational convenience, we relabel the applicants $d_1,d_2,\ldots,d_n$ as $d^L_1,d^R_1,d^L_2,d^R_2,\ldots,d^L_k,d^R_k$ in order, and relabel the institutions $h_1,h_2,\ldots,h_n$ as $h^0_1,h^1_1,h^0_2,h^1_2,\ldots,h^0_k,h^1_k$, and relabel the institutions $h_1^{\mathsf{rot}},h_2^{\mathsf{rot}},\ldots,h_n^{\mathsf{rot}}$ as $h^{0,\mathsf{rot}}_1,h^{1,\mathsf{rot}}_1,h^{0,\mathsf{rot}}_2,h^{1,\mathsf{rot}}_2,\ldots,\allowbreak h^{0,\mathsf{rot}}_k,h^{1,\mathsf{rot}}_k$. We define a collection of preference profiles of all applicants other than $d_*$. This collection is defined with respect to a set of $k(k+1)/2 = \Omega(n^2)$ bits $b_{i,j}\in\{0,1\}$, one bit for each $i,j\in[k]$ with $j \le i$. For such a bit vector, consider preferences such that for each $i\in[k]$, we have:
  \begin{align*}
    d^L_i : & \quad h_1^{b_{i,1}} \succ h_1^{1 - b_{i,1}} 
      \succ h_2^{b_{i,2}} \succ h_2^{1 - b_{i,2}} 
      \succ \ldots \succ h_i^{b_{i,i}} \succ h_i^{1 - b_{i,i}}
    \\
    d^R_i: & \quad h_1^{1 - b_{i,1}} \succ h_1^{b_{i,1}} 
      \succ h_2^{1 - b_{i,2}} \succ h_2^{b_{i,2}} 
      \succ \ldots \succ h_i^{1 - b_{i,i}} \succ h_i^{b_{i,i}}
  \end{align*}
  In words, each such list agrees that $h_1^0$ and $h_1^1$ are most preferred, followed by $h_2^0$ and $h_2^1$, etc., and the lists of $d_i^L$ and $d^R_i$ rank all such institutions up to $h_i^0$ and $h_i^1$. But $d^L_i$ and $d^R_i$ may flip their ordering over each $h_j^0$ and $h_j^1$ for $j\le i$, as determined by the bit $b_{i,j}$. The key lemma is the following:
  \begin{lemma}
    \label{thrm:gtc-SDrot-LB-main-lemma}
    Consider any $i,j\in[k]$ with $j \le i$.
    If $d_*$ submits a preference list containing only $\{ h^{0,\mathsf{rot}}_{i-j+1} \}$, then applicant $d^L_{i}$ matches to $h_j^{b_{i,j}}$.
  \end{lemma}
  To prove this lemma, note that $d_*$ will match to $\{ h^{0,\mathsf{rot}}_{i-j+1} \}$, and thus by the definition of $\SDrot$, we know $d^L_{i-j+1}$ will choose her match next (followed by the rest of the applicants, in order).
  Thus, consider the execution of $\SD(d^L_{i-j+1},d^R_{i-j+1},\ldots,d^L_{k},d^R_{k})$.
  Initially, applicants $d^L_{i-j+1},d^R_{i-j+1}$ match to $h_1^0$ and $h_1^1$ (in some order),
  then applicants $d^L_{i-j+2},d^R_{i-j+2}$ match $h_2^0$ and $h_2^1$ (in some order), and so on, until applicants $d^L_i,d^R_i$ pick among $h_j^0$ and $h_j^1$.
  When this happens, $d_i^L$ picks $h_j^{b_{i,j}}$, proving \autoref{thrm:gtc-SDrot-LB-main-lemma}.
  For an illustration, see \autoref{fig:gtc-sdrot-lb}.
\begin{figure}[tbp]
  \begin{center}
  \includegraphics[width=0.7\textwidth]{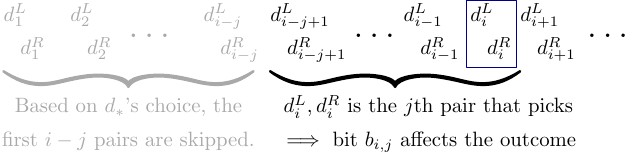}
  \end{center}
  \caption{
  Illustration of the proof of \autoref{thrm:gtc-SDrot-LB}.}
  \label{fig:gtc-sdrot-lb}
\end{figure}

  This shows that for every pair of distinct preference profiles $P$ and $P'$ of the above form, there exists a preference list $P_{d_*}$ of $d_*$ such that $\SDrot(P_{d_*}, P) \ne \SDrot(P_{d_*}, P')$.
  Thus, there are at least $2^{\Omega(k^2)}$ distinct possible functions $\SDrot(\cdot,P_{-d_*})$, and the outcome-effect complexity of $\SDrot$ is at least $\Omega(k^2) = \Omega(n^2)$, as claimed.
\end{proof}

\paragraph{$\TTC$.}
We now proceed to our main mechanisms of interest, beginning with $\TTC$. We show that $\SDrot$ can be ``embedded in'' $\TTC$, and thus show that the outcome-effect complexity of $\TTC$ is high.

\begin{theorem}[restate=restateThrmGtcTtcLb,name=]
  \label{thrm:gtc-TTC-LB}
  The outcome-effect complexity of $\TTC$ is $\Omega(n^2)$.
\end{theorem}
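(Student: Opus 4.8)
The plan is to reduce from $\SDrot$: show that $\TTC$ can ``simulate'' the mechanism $\SDrot$ from \autoref{def:SDrot}, so that the $\Omega(n^2)$ lower bound on the type-to-matching complexity of $\SDrot$ (\autoref{thrm:gtc-SDrot-LB}) transfers to $\TTC$. Concretely, I would fix a priority profile $Q$ and embed the $\SDrot$ market (with its $n+1$ applicants $d_*, d_1, \ldots, d_n$ and $2n$ institutions $h_1, \ldots, h_n, h_1^{\mathsf{rot}}, \ldots, h_n^{\mathsf{rot}}$) into a $\TTC$ instance on roughly $2n$ applicants and $2n$ institutions, so that, after fixing $P_{-d_*}$ appropriately, the map $P_{d_*} \mapsto \TTC_Q(P_{d_*}, P_{-d_*})$ restricted to the relevant coordinates reproduces $\SDrot(\cdot, P_{-d_*})$. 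Since distinct $P_{-d_*}$ in the $\SDrot$ construction induce distinct functions $\SDrot(\cdot, P_{-d_*})$, the corresponding $\TTC$ profiles will induce distinct functions $\TTC_Q(\cdot, \widehat{P}_{-d_*})$, giving $2^{\Omega(n^2)}$ of them.

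The heart of the construction is the ``$\TTC$-pointing graph'' alluded to in the footnote: I would arrange priorities so that before $d_*$'s cycle is resolved, the applicants $d_1, \ldots, d_n$ together with auxiliary ``chain'' applicants/institutions form a long directed path (a chain) in the $\TTC$ pointing graph, while the $h_j^{\mathsf{rot}}$ institutions sit at the ``head'' of this chain. The idea is that $d_*$, by choosing which $h_j^{\mathsf{rot}}$ to point to, completes a cycle that absorbs a prefix of the chain of length depending on $j$; matching this cycle removes exactly the first $j-1$ of the applicants $d_1, \ldots, d_n$ (in the $\SDrot$ relabeling, this is the ``rotation'' by $j$), after which the remaining applicants $d_j, d_{j+1}, \ldots, d_n$ get to pick in order from $h_1, \ldots, h_n$ exactly as a serial dictatorship $\SD_{d_j, \ldots, d_n}$. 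To make ``pick in order from $h_1, \ldots, h_n$'' happen, I need the $h_i$'s to have priority orders consistent with $d_j \succ d_{j+1} \succ \cdots$ and the $h_i$'s to be preferred by each $d_\ell$ over any institution that would break the intended trading structure; this is the standard trick that serial dictatorship is a special case of $\TTC$. I would use each applicant $d_i$'s preference list over $\{h_1, \ldots, h_n\}$ to carry exactly the information that $d_i$ carried in the $\SDrot$ construction (the flip bits $b_{i,j}$), and give the chain/rotation gadget a fixed structure independent of those bits.

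The main obstacle I anticipate is ensuring that the $\TTC$ pointing dynamics behave ``cleanly'': $\TTC$ resolves cycles in an order-independent but somewhat rigid way, so I must make sure that (i) no unintended cycle forms before $d_*$ points (which would match some $d_i$ prematurely or in the wrong order), (ii) the cycle through $d_*$ and $h_j^{\mathsf{rot}}$ really does consume precisely the intended prefix of the chain, and (iii) once the rotation is done, the remaining pointing graph is exactly a serial dictatorship on $d_j, \ldots, d_n$ over $h_1, \ldots, h_n$ with the intended priorities. This requires carefully making the $h_j^{\mathsf{rot}}$ institutions individually rational/top-priority only for the chain applicants, padding with dummy applicants and institutions who trade among themselves harmlessly, and choosing all preferences so that the chain applicants strictly prefer continuing down the chain to any $h_i$ until $d_*$'s cycle fires. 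I would verify correctness by induction on the iterations of $\TTC$, invoking order-independence (\autoref{thrm:TTC-order-independent}) so that I may always choose to resolve $d_*$'s cycle first, then the chain's residual cycles, and finally the serial-dictatorship phase; the bookkeeping is routine once the gadget is set up, and the counting conclusion then follows immediately from \autoref{thrm:gtc-SDrot-LB} together with \autoref{thrm:ttam-less-than-ttm}-style reasoning (a function on the big market restricted to the $\SDrot$ coordinates is still injective in $P_{-d_*}$).
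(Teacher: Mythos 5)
Your plan is essentially the paper's own proof: the paper likewise embeds $\SDrot$ into $\TTC$ by building a long chain of auxiliary applicants in the pointing graph, so that $d_*$'s choice of which chain institution to point to completes a cycle absorbing exactly a prefix of the chain, after which the remaining ``picker'' applicants (whose preference lists carry the bits) choose from a common pool of institutions in serial-dictatorship order, and the bound then follows from \autoref{thrm:gtc-SDrot-LB}. The correctness concerns you flag (no premature cycles, the cycle consuming precisely the intended prefix, and the residual market being an exact serial dictatorship) are precisely what the paper's explicit construction with applicants $d^R_i$, $d^P_i$, $d^A_i$ resolves, using order-independence of $\TTC$ just as you propose.
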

\begin{proof}
  For some $k = \Theta(n)$, we consider applicants $\Appls = \{ d_*, d^R_1,\ldots,d^R_k, d^P_1,\ldots,d^P_k, d^A_1,\ldots,d^A_k \}$. 
  For this construction, we consider markets in which there are exactly $|\Appls|$ institutions, and each institution ranks a distinct, unique applicant with highest priority. In this case, it is easy to see that only the top-priority ranking of each institution can matter for determining the outcome, so this defines the priorities of the institutions. For notational convenience, we identify each institution with the applicant that they rank highest, e.g., we use $d^R_1$ to denote both an applicant and the institution which ranks $d^R_1$ highest. (This is equivalent to considering this construction in a housing market, i.e., where each applicants starts by ``owning'' the institution where they have highest priority.)

  We describe collection of preference profiles which each induce a distinct function $\TTC(\cdot, P_{-d_*}) : \T_{d_*}\to\M$. 
  For applicants outside of $\{d_*, d^P_1,\ldots,d^P_k\}$,
   the preferences are fixed, and defined as follows:
  \begin{align*}
    & d^R_1 : \quad d_* \succ d^P_1
    \\
    & d^R_i : \quad d^R_{i-1} \succ d^P_i
    && \forall i\in\{2,\ldots,k\}
    \\
    & d^A_i : \quad d^R_1 \succ d^R_2 \succ \ldots \succ d^R_k
    && \forall i\in\{1,\ldots,k\}
  \end{align*}
  This construction allows us to embed the workings of $\SDrot$ into $\TTC$ as follows:

  \begin{lemma}
    \label{lem:sdrot-gtc-lb}
  Suppose applicants in $\{d^P_1,\ldots,d^P_k\}$ only rank institutions in $\{d^A_1,\ldots,d^A_k\}$. Furthermore, suppose that applicant $d_*$ submits list $\{ d^R_j \}$ for some $j \in \{0,1,\ldots,k-1\}$, where we denote institution $d_*$ as $d^R_0$.
  Then, applicants in  $\{d^P_1,\ldots,d^P_k\}$ will be matched to institutions in $\{d^A_1,\ldots,d^A_k\}$ according to $\SD_{d^P_{j+1},d^P_{j+2},\ldots,d^P_k}(\cdot)$.
  \end{lemma}
    To prove this lemma,
    consider the run of $TTC$ after $d_*$ points to institution $d^R_j$.  First, $d_*$ and each $d^R_1, \ldots, d^R_{j}$ is matched to their top-ranked institution in one cycle.
    Now, each applicant in $\{d^A_1,\ldots,d^A_k\}$ has the same preference list, and her top-ranked remaining institution is $d^R_{j+1}$. Additionally, $d^R_{j+1}$ now points to $d^P_{j+1}$, so $d^P_{j+1}$ will be matched to her top-ranked institution in $\{d^A_1,\ldots,d^A_k\}$. Next, all of the remaining institutions in $\{d^A_1,\ldots,d^A_k\}$ point (transitively through $d^R_{j+2}$) to $d^P_{j+2}$, who can match to her top-ranked remaining institution. This continues for all additional applicants in $\{d^P_{j+1},\ldots,d^P_{k}\}$.
    This proves \autoref{lem:sdrot-gtc-lb}; see \autoref{fig:gtc-ttc-LB} for an illustration.
  \begin{figure}[tbhp]
    \begin{minipage}[c]{0.55\textwidth}
      \includegraphics[width=\textwidth]{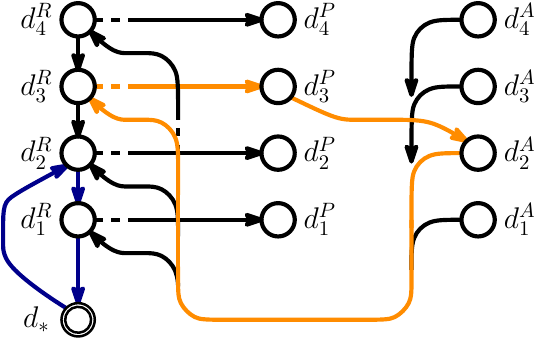}
    \end{minipage}
    \hspace{0.04\textwidth}
    \begin{minipage}[c]{0.4\textwidth}
      \caption{
      Illustration of the set of preferences showing that the outcome-effect complexity of $TTC$ is $\Omega(n^2)$ (\autoref{thrm:gtc-TTC-LB}). 
      }
      {\footnotesize \textbf{Notes:} 
      Two example cycles are highlighted, the first with $d_*$ matching to $h^R_2$, and the second with $d^P_3$ matching to $d_2^A$, her top choice from $\{d^A_1,\ldots,d^A_4\}$.
      \par}
      \label{fig:gtc-ttc-LB}
    \end{minipage}
  \end{figure}

  Thus, applicants in $\{d^P_1,\ldots,d^P_k\}$ are matched to $\{d^A_1,\ldots,d^A_k\}$ according to $\SDrot$, and the outcome-effect complexity of $\TTC$ is at least as high as $\SDrot$, which by \autoref{lem:sdrot-gtc-lb}, is $\Omega(k^2) = \Omega(n^2)$.
\end{proof}

\subsection{\texorpdfstring{$\APDA$}{APDA} and Relation to \texorpdfstring{\cite{GonczarowskiHT22}}{GHT}}
\label{sec:gtc-apda}

We now study the outcome-effect complexity of the stable matching mechanisms $\APDA$ and $\IPDA$. For $\APDA$, this result is an immediate corollary of results from \cite{GonczarowskiHT22}.
We now explain their result and how it implies a bound on the outcome-effect complexity of $\APDA$, as well as how it originally motivated our \autoref{thrm:gtc-TTC-LB} regarding $\TTC$.

\cite{GonczarowskiHT22} look for algorithms for computing matching rules while making the strategyproofness of these matching rules clear. To do this for some strategyproof $f$ and some applicant $d$, they look for algorithms similar to the traditional algorithms used to describe matching rules, but with the following three steps:
\begin{enumerate}[label=(\arabic*)]
  \itemsep0em
  \item Using only $P_{-d}$, the preferences of applicants other than $d$, calculate $d$'s menu $\Menu^f_d(P_{-d})$.
  \item Using $d$'s preferences $P_{d}$, match $d$ to her favorite institution from her menu.
  \item Using $P_{d}$ and $P_{-d}$, calculate the rest of the matching $f(P_d,P_{-d})$.
\end{enumerate}
\cite{GonczarowskiHT22} posits that algorithms written this way are one way to expose strategyproofness. Indeed, to observe strategyproofness, applicant $d$ only has to notice that her report cannot affect her menu, and reporting her true type will always match her to her favorite institution on her menu.

\cite{GonczarowskiHT22} prove that for $\TTC$ and each applicant $d$, there is an algorithm meeting the above three-step outline that is very similar to the traditional algorithm. In fact, this algorithm follows directly from the fact that $\TTC$ is independent of the order in which cycles are matched (\autoref{thrm:TTC-order-independent})---the cycle involving $d$ can simply be matched as late as possible (for details, see \cite{GonczarowskiHT22}).
In contrast, \cite{GonczarowskiHT22} prove that for any algorithm for $\APDA$ meeting the above three-step outline, if this algorithm reads each applicant's preference in favorite-to-least-favorite order (as the traditional algorithm for $\APDA$ does), then the algorithm \emph{requires $\Omega(n^2)$ memory} (much more than the $\widetilde O(n)$ bits of memory used by the traditional algorithm).
This gives one sense in which the traditional $\APDA$ algorithm obscures the menu, and hence strategyproofness;
for additional discussion, see \cite{GonczarowskiHT22}.

To compare our outcome-effect complexity with the results of \cite{GonczarowskiHT22}, observe that the outcome-effect complexity of any mechanism exactly equals the memory requirements of an algorithm with the following \emph{two} steps:
\begin{enumerate}[label=(\arabic*)]
    \itemsep0em
    \item Perform any calculation whatsoever using only $P_{-d}$.
    \item Calculate the entire matching $f(P_d,P_{-d})$ using only $P_d$.
\end{enumerate}
Thus, outcome-effect complexity is a coarsening of the complexity results proven in \cite{GonczarowskiHT22}, and our lower bound
for the outcome-effect complexity of $\TTC$ (\autoref{thrm:gtc-TTC-LB}) shows one way in which the three-step outline for $\TTC$ is tight---the last step~(3) is crucial
(and in particular, the ``pointing graph'' in the $\TTC$ algorithm at the end of Step (1) above does not contain \emph{nearly} enough information to calculate the rest of the matching).
In some sence, \autoref{thrm:gtc-TTC-LB} shows that from a ``global'' perspective, $\TTC$ is just as complex as $\APDA$, and that the three-step outline very precisely captures the sense in which $\TTC$ is ``strategically simpler'' than $\APDA$ under the framework of \cite{GonczarowskiHT22}.

For our direct purposes, we observe the following corollary of the construction in \cite{GonczarowskiHT22}.
\begin{corollary}[Follows from {\cite{GonczarowskiHT22}}]
  \label{thrm:gtc-DA-LB}
  The outcome-effect complexity of $\APDA$ is $\Omega(n^2)$.
\end{corollary}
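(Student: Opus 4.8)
The plan is to read the bound off from the memory lower bound of \cite{GonczarowskiHT22} recalled above, using the equivalence (noted in the surrounding discussion) between the type-to-matching complexity of a rule $f$ and the memory required by a two-phase algorithm that first performs arbitrary computation on $P_{-d}$, retaining only a bounded state, and then computes $f(P_d, P_{-d})$ from $P_d$ and that state.

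First I would unwind \autoref{def:type-to-matching}: to show that the type-to-matching complexity of $\APDA$ is $\Omega(n^2)$, it suffices to fix one applicant $d$ and one (worst-case) priority profile $Q$, and to exhibit a family $\{P_{-d}^{x}\}_{x \in X}$ of preference profiles of the remaining applicants with $|X| = 2^{\Omega(n^2)}$ such that the induced functions $\APDA_Q(\cdot, P_{-d}^{x}) : \T_d \to \M$ are pairwise distinct; equivalently, such that for every $x \neq x'$ there is a report $P_d \in \T_d$ with $\APDA_Q(P_d, P_{-d}^{x}) \neq \APDA_Q(P_d, P_{-d}^{x'})$. Such a family is exactly a fooling set forcing $\Omega(n^2)$ bits of state to be carried across the boundary between reading $P_{-d}$ and reading $P_d$.

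I would then invoke \cite[Section 5.3]{GonczarowskiHT22}: their proof that every description of $\APDA$ obeying the three-step outline while reading each applicant's preferences in favorite-to-least-favorite order requires $\Omega(n^2)$ memory is, at its heart, a fooling-set argument that builds precisely such a family --- a set of $2^{\Omega(n^2)}$ profiles of the applicants other than $d$, together with, for each pair, an explicit completing report $P_d$ under which the two resulting $\APDA$ matchings disagree. Since the streaming and menu-first restrictions of their model only further constrain the algorithm and play no role in the counting argument itself, the same family directly witnesses that the type-to-matching complexity of $\APDA$ is $\Omega(n^2)$, giving the corollary.

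The only point requiring care --- and the main (minor) obstacle --- is confirming that the instances separated in the \cite{GonczarowskiHT22} construction are distinguished at the level of the output matching and not merely at some intermediate artifact internal to their algorithm. This is automatic: their lower bound is proved against algorithms that must ultimately output the entire matching, so any two hard instances must be distinguishable in that matching under some completion of $d$'s report, which is exactly the separation our definition demands. Hence no new construction is needed beyond a careful restatement of theirs.
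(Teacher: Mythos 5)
Your proposal takes essentially the same route as the paper: the paper's proof is simply the observation that the construction in \cite[Section 5.3]{GonczarowskiHT22} directly yields a family of $2^{\Omega(n^2)}$ profiles $P_{-d_*}$ for which the induced functions $\APDA(\cdot,P_{-d_*}):\T_{d_*}\to\M$ are pairwise distinct, which is exactly the fooling-set family you describe. One small caveat: your ``this is automatic'' step is not airtight as a general principle (a memory lower bound in their restricted streaming/three-step model could in principle bind in the middle of reading $P_{-d}$ rather than at the $P_{-d}$/$P_d$ boundary), but since the \cite{GonczarowskiHT22} construction itself separates the instances at the level of the output matching under some completing report $P_{d_*}$, the conclusion holds exactly as the paper asserts.
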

\begin{proof}
  The main impossibility theorem of \cite{GonczarowskiHT22} directly constructs a set of $2^{\Omega(n^2)}$ preferences for applicants other than $d_*$ such that the function $\APDA(\cdot,P_{-d_*}) : \T_{d_*} \to \M$ is distinct for each $P_{-d_*}$ in this class.
\end{proof}

In the next section, we lower bound the outcome-effect complexity for $\IPDA$.\footnote{
While $\IPDA$ is not strategyproof, and hence the above three-step outline does not apply to $\IPDA$, there is a different sense in which a outcome-effect lower bound for $\IPDA$ shows that an algorithm from \cite{GonczarowskiHT22} is tight. 
Namely, \cite[Appendix D.3]{GonczarowskiHT22} constructs a delicate algorithm using the outcome of $\IPDA$ as a building block, and if there were an algorithm $A$ that were able to calculate and store the function $\IPDA(\cdot,P_{-d_*})$ in $\widetilde O(n)$ bits, then the delicate algorithm of \cite{GonczarowskiHT22} could have been easily implemented using calls to $A$.
}

\subsection{\texorpdfstring{$\IPDA$}{IPDA}}
\label{sec:gtc-ipda}

We next turn our attention to the outcome-effect complexity of $\IPDA$.
While $\APDA$ and $\IPDA$ are both stable mechanisms, they operate in a different way and can produce very different matchings.
Additionally, it turns out $\IPDA$ is much harder to reason about from the perspective of outcome-effect complexity than $\APDA$ is, as we discuss next.

\paragraph{Building up to our construction.}
Informally, in $\APDA$, an applicant $d_*$ can ``trigger any subset of effect'' by simply making some subset of proposals.
In contrast, in $\IPDA$, each applicant (including $d_*$) rejects all proposals \emph{except for} (at most) one, so she must ``trigger all effects \emph{except for} one''. 
Thus, the class of effects $d_*$ could trigger may seem far more limited in $\IPDA$ than in $\APDA$.
Nonetheless, our construction overcomes this difficulty by making $d_*$ receive $\Omega(n)$ proposals in $\IPDA$ all sequentially.
Between these proposals to $d_*$, the match of \emph{every} other applicant changes.
Informally, this allows us to use $d_*$'s choice of which proposal to accept to ``determine the rest of the matching''. In particular, $\Omega(n)$ bits will be needed for each possible choice from $d_*$, giving our final outcome-effect complexity bound of $\Omega(n^2)$.
In the end, we show the outcome-effect complexity of $\IPDA$ is high:

\begin{theorem}
  \label{thrm:gtc-IPDA}
  The outcome-effect complexity of $\IPDA$ is $\Omega(n^2)$.
\end{theorem}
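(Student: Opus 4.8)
The plan is to reduce from $\SDrot$ the same way the $\TTC$ proof does, but the "simulation" is much more delicate because in $\IPDA$ the special applicant $d_*$ cannot simply trigger one cascading effect by making a single proposal: every applicant (including $d_*$) rejects all but one proposal, so if $d_*$ is going to be the agent whose report we are varying, the construction must be arranged so that each proposal that $d_*$ receives is responsible for changing the matches of many other applicants. As the footnote preceding the statement indicates, the fix is to route a sequence of proposals to $d_*$ one after another and give each such proposal the power to shift the entire downstream matching. So the first step is to design a fixed priority profile $Q$ and a parametrized family of applicant preference profiles $P_{-d_*}$ — indexed by a vector of $\Theta(n^2)$ bits $b_{i,j}$, exactly as in Lemma~\ref{thrm:gtc-SDrot-LB} — together with a set of $n$ candidate reports $\{P_{d_*}^{(j)}\}_{j}$ for $d_*$, such that running $\IPDA$ with report $P_{d_*}^{(j)}$ makes a designated block of applicants get matched according to the $\SD$ order starting at index $j$ (the "rotated" serial dictatorship). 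Given such a gadget, Lemma~\ref{thrm:gtc-SDrot-LB} (really, its proof: distinct bit vectors are separated by some choice of $j$) immediately yields $2^{\Omega(n^2)}$ distinct functions $\IPDA(\cdot, P_{-d_*})$, and we are done.

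Concretely, I would build $d_*$ into a long "chain" of institutions-proposing-to-applicants, analogous to the $d^R_i$ chain in the $\TTC$ proof, but now realized through $\IPDA$'s proposal dynamics. The idea: have a chain of institutions $h_1 \succ_{\text{something}} h_2 \succ \cdots$ where $h_i$'s most-preferred applicant is $d_*$ (so in $\IPDA$, which is $\APDA$ with roles swapped, these institutions all "want" $d_*$), and arrange priorities so that when $d_*$'s report accepts the proposal coming in from $h_j$ and rejects the earlier ones, the rejected institutions $h_1,\ldots,h_{j-1}$ each cascade down to match a distinct "selector" applicant, who in turn triggers the block of applicants $\{d^P_{j+1},\ldots,d^P_k\}$ to pick their favorite remaining institutions from a common pool $\{d^A_1,\ldots,d^A_k\}$ in that order — i.e., $\SD_{d^P_{j+1},\ldots,d^P_k}$. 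The key sublemma to prove is the analogue of Lemma~\ref{lem:sdrot-gtc-lb}: "if $d_*$'s report is $P_{d_*}^{(j)}$, then applicants in $\{d^P_1,\ldots,d^P_k\}$ are matched to $\{d^A_1,\ldots,d^A_k\}$ according to $\SD_{d^P_{j+1},\ldots,d^P_k}(\cdot)$." This is proved by tracing a valid execution order of $\IPDA$ (legitimate by the order-independence of $\DA$, \autoref{thrm:da-indep-execution}): first resolve the chain given $d_*$'s choice, showing which institutions end up rejected and cascading, then observe that exactly $d^P_{j+1}$ becomes the first "free" applicant in the $\{d^A\}$ pool, gets her favorite, then $d^P_{j+2}$, and so on.

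The main obstacle I anticipate is making the chain behave correctly in $\IPDA$: unlike $\TTC$, where a cycle completed by $d_*$ simply removes a clean set of vertices, here rejections propagate and one has to ensure (i) the chain of institutions really does feed proposals to $d_*$ in the intended sequence, (ii) $d_*$ accepting the $j$-th proposal leaves precisely $d^P_{j+1},\ldots,d^P_k$ (and not $d^P_1,\ldots,d^P_j$) available to run the serial dictatorship on $\{d^A\}$, and (iii) nothing in the $\{d^A\}$/$\{d^P\}$ block proposes back up the chain in a way that disturbs the already-settled part — i.e., the gadget is "one-directional." Controlling this likely requires padding each $d^P_i$'s and $d^A_i$'s preference/priority lists so that they only ever interact within their intended block, plus a careful choice of priorities on the chain institutions so that $d_*$'s report is exactly a choice of "which incoming proposal to keep." Once that bookkeeping is done, the counting argument is identical to the one in Lemma~\ref{thrm:gtc-SDrot-LB}: for any two distinct bit vectors there is a coordinate $(i,j)$ on which they differ, and the report $P_{d_*}^{(i-j)}$ (suitably indexed) makes some $d^L$-type applicant's match differ, so the induced functions $\IPDA(\cdot, P_{-d_*})$ differ; hence there are at least $2^{\Omega(n^2)}$ of them and the type-to-matching complexity is $\Omega(n^2)$.
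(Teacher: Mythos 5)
Your plan correctly identifies the difficulty flagged in the paper's footnote and the counting argument at the end is fine, but the entire weight of the proof rests on a gadget you never construct, and the natural ways to build it fail. In $\IPDA$ each rejection by $d_*$ spawns a single cascade \emph{path} of (proposal, rejection) pairs, which terminates the moment some applicant accepts a proposal without rejecting anything. For your simulation, the path triggered by $d_*$ rejecting the $\ell$-th chain institution must do two things: (a) permanently capture $d^P_\ell$, i.e.\ deliver her a proposal she prefers to the whole pool, and (b) unlock the next chain institution so it proposes to $d_*$. If (b) is routed through the capture, then $d^P_\ell$ must be tentatively holding the ``key'' to the next link and release it upon capture --- but then her preference between that key and the pool is unresolvable: rank the key above the pool and she rejects every pool proposal even when she is supposed to participate in the serial dictatorship (when $j\le\ell$); rank it below the pool and the first pool proposal she receives releases the key, advancing the chain to $d_*$ (and triggering further captures) \emph{regardless} of $d_*$'s report. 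If instead (b) happens earlier on the path than (a), the path passes through $d_*$ again via the next chain institution, and whenever $d_*$ accepts that proposal the path dies before $d^P_\ell$ is ever captured. This is not mere bookkeeping; it is the reason the authors remark in \autoref{sec:serial-dict} that they only know how to embed $\SDrot$ into $\TTC$ and \emph{not} into $\DA$.

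The paper's actual proof of \autoref{thrm:gtc-IPDA} therefore takes a structurally different route: there is no capture gadget and no serial-dictatorship pool. Instead, each rejection by $d_*$ of a chain institution $h^R_{j-1}$ launches a cascade that passes through \emph{every} applicant $d_1,d_1',d_2,\ldots,d_k$, cyclically rotating all of their tentative matches by one position before finally unlocking $h^R_j$ to propose to $d_*$. The information is read off not from a downstream $\SD$ run but from the rotation itself: at rotation depth $j$, applicant $d_i$ receives proposals from exactly the two institutions $h^0_{i-j+1}$ and $h^1_{i-j+1}$, and the bit $b_{i,j}$ encoded in her preference list determines which she keeps, yielding $k^2$ bits overall. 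You would need to either reproduce something like this rotation mechanism or find a genuinely new way around the capture-and-continue obstruction for your reduction to go through.
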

\begin{proof}
  Fix $k$, where we will have $n = |\Insts| = |\Appls| = \Theta(k)$. 
  The institutions are $h^0_i, h^1_i$ for $i=1,\ldots,k$ and $h^R_i$ for $i=0,\ldots,k$, and the applicants are $d_i, d_i'$ for $i=1,\ldots,k$ and $d^R_i$ for $i=1,\ldots,k$, as well as $d_*$. First we define the fixed priorities $Q$ of the institutions (where, for the entirety of this construction, we take indices mod $k$):
  \begingroup
  \allowdisplaybreaks
\begin{align*}
  & h_i^b :
    d_i \succ d_i' \succ d_{i+1} \succ d_{i+1}' \succ \ldots \succ d_{i-1} \succ d_{i-1}'
    \qquad
    && \text{For each $i=1,\ldots,k$ and $b \in \{0,1\}$}
    \\
  & h^R_0 : d_* \succ d_1^R 
  \\
  & h^R_1 : d_* \succ d_1 \succ d_1' \succ d^R_2
  \\
  & h^R_i : d^R_i \succ d_* \succ d_1 \succ d_1' \succ d^R_{i+1}
    && \text{For each $i=2,\ldots,k-1$}
  \\
  & h^R_k : d^R_k \succ d_*
\end{align*}
  \endgroup
The preferences of the applicants $\{d_1',\ldots,d_n',d^R_0,d^R_1,\ldots,d^R_k\}$ are fixed.  The preferences of applicants $\{d_1,\ldots,d_k\}$ depend on bits $(b_{i,j})_{i,j\in\{1,\ldots,k\}}$ where each $b_{i,j}\in\{0,1\}$. Since the run of $\IPDA$ will involve applicants receiving proposals in (loosely) their reverse order of preference, we display the preferences of applicants in worst-to-best order for readability. 
The preferences are as follows:
First, for applicants in $\{d^R_1,\ldots,d^R_k\}$: 
\begin{align*}
  & d^R_1 : h^R_0
  &\qquad\qquad&
  & d^R_i : h^R_{i} \prec h^R_{i-1}
    && \text{For each $i=2,\ldots,k$}
\end{align*}
Next, for $i=2,3,\ldots,k$, we have:
\begin{align*}
  &
  d_i : & h_{i}^{1 - b_{1,1}} & \prec h_{i}^{b_{1,1}} 
    & \prec && h_{i-1}^{1-b_{1,2}} & \prec h_{i-1}^{b_{1,2}} 
    & \prec && h_{i-2}^{1-b_{1,3}} & \prec h_{i-2}^{b_{1,3}}
    & \prec \ldots \prec &&
    & & h_{i+1}^{1-b_{1,k}} & \prec h_{i+1}^{b_{1,k}}
  \\ &
  d_i' : & h_{i}^{0} & \prec h_{i}^{1}
    & \prec && h_{i-1}^{0} & \prec h_{i-1}^{1} 
    & \prec && h_{i-2}^{0} & \prec h_{i-2}^{1}
    & \prec \ldots \prec && 
    & & h_{i+1}^{0} & \prec h_{i+1}^{1}
\end{align*}
In words, applicants of the form $d_i'$ always prefer institutions in the cyclic order, starting with institutions of the form $h_i^b$ as their least-favorites. Applicants of the form $d_i$ also rank institutions like this, but they flip adjacent places in this preference based on the bits $b_{i,j}$. 

Finally, we define the preferences of $d_1$ and $d_1'$, which are like the other $d_i$ and $d_i'$, except that these applicants will also accept proposals from institutions of the form $h^R_j$. Specifically:
\begin{align*}
  &
  d_1 : & h_{1}^{1 - b_{1,1}} & \prec h_{1}^{b_{1,1}} 
    & \prec {h^R_1} \prec
    &&  h_{k}^{1-b_{1,2}} & \prec h_{k}^{b_{1,2}}
    & \prec {h^R_2} \prec
    && h_{k-1}^{1-b_{1,3}} & \prec h_{k-1}^{b_{1,3}}
    & \prec \ldots \prec {h^R_{k-1}}\prec
    & & h_{2}^{1-b_{1,k}} & \prec h_{2}^{b_{1,k}}
  \\ &
  d_1' : & h_{1}^{0} & \prec h_{1}^{1} 
    & \prec {h^R_1} \prec
    && h_{k}^{0} & \prec h_{k}^{1}
    & \prec {h^R_2} \prec
    && h_{k-1}^{0} & \prec h_{k-1}^{1}
    & \prec \ldots \prec {h^R_{k-1}} \prec
    & & h_{2}^{0} & \prec h_{2}^{1}
\end{align*}

These preferences are illustrated in \autoref{fig:gtc-ipda-LB}, along with an informal description of how the preferences operate. Formally, our key claim is the following:
\begin{figure}[tbp]
  \begin{center}
  \includegraphics[width=\textwidth]{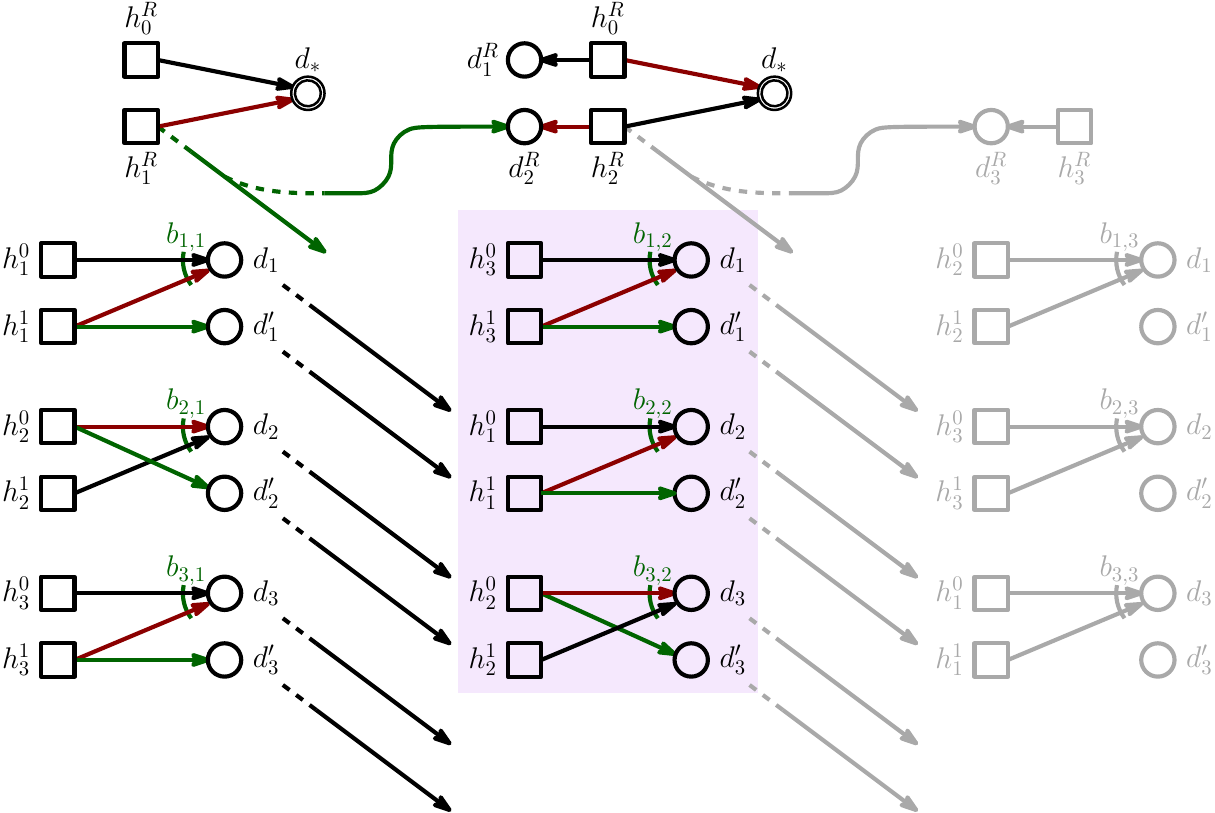}
  \end{center}
  \caption{
  Illustration of the set of preferences showing that the outcome-effect complexity of $\IPDA$ is $\Omega(n^2)$ (\autoref{thrm:gtc-IPDA}).   }
  \vspace{0.1in}
  {\footnotesize \textbf{Notes:}
  When $d_*$ is matched to $h^R_j$, all the institutions in $h^R_1, h^R_2, \ldots, h^R_{j-1}$ propose to $d_1$ and $d_1'$. This successively ``rotates'' all of the institutions and applicants once for each $h^R_{\ell}$ that is rejected by $d_*$, so the institutions and applicants are rotated $j-1$ times. When this happens, the bits $b_{i, j}$ determine the match of each applicant $d_i$, which shows that all of the bits $b_{i,j}$ can affect the matching for some possible report of $d_*$. In the illustration, $d_*$ matches to $h^R_2$, so the applicants are rotated once; the selected matching is in the highlighted box, and proposals made in hypothetical future rotations are shown in grey.
  \par }
  \label{fig:gtc-ipda-LB}
\end{figure}

\begin{lemma}
  \label{thrm:gtc-ipda-main-lemma}
  Consider any $j \in \{1,\ldots,k\}$. If $d_*$ submits a preference list containing only $\{ h^R_j \}$, then for each $i \in \{1,\ldots,k\}$, it holds that $d_i$ is matched to $h_{i - j + 1}^{b_{i, j}}$ in $\IPDA$.\footnote{
  This construction would also work with the full-length list
  $h^R_k \succ \ldots \succ h^R_{j+1} \succ h^R_j \succ h^R_0 \succ h^R_{j-1} \succ \ldots \succ h^R_1 \succ \ldots$.
}
\end{lemma}
To prove this lemma, start by considering $j=1$. 
When $j = 1$, each $h_i^0$ and $h_i^1$ for $i=1,\ldots,k$ simply proposes to $d_i$, and $d_i$ prefers and tentatively accept  $h_i^{b_{i,1}}$.
Moreover, $d_*$ rejects $h^R_0$, who then proposes to $d^R_1$, and no further proposals are made. 

Now consider $j \ge 2$, and suppose we choose an order of proposals such that $d_*$ has already rejected $h^R_{j-2}$, but has not yet rejected $h^R_{j-1}$. 
This is equivalent to tentatively considering the matching where $d_*$ submits only list $h^R_{j-1}$, so by induction, $d_i$ and $d_i'$ are matched to $h_{i-j+2}^0$ and $h_{i-j+2}^1$, respectively, for each $i$. In particular, $d_1$ and $d_1'$ are matched to $h^0_{3-j}$ and $h^1_{3-j}$ (equivalently, $h^0_{k - j + 3}$ and $h^1_{k-j+3}$).
Now consider what happens when $d_*$ rejects $h^R_{j-1}$.
First, $h^R_{j-1}$ proposes to $d_1$, and the proposal is tentatively accepted.
This causes rejections among $d_1$ and $d_1'$ which lead $h^0_{3-j}$ to propose to $d_2$.
This leads analogously to  $h^0_{4-j}$ proposing to $d_3$.
This continues similarly, with each $h^0_{i - j +1}$ proposing to $d_i$ for each $i$, until $h^0_{2-j}$ proposes to $d_1$. 
This proposal is accepted, causing $h^R_{j-1}$ to propose to $d_1'$, and $h^1_{3-j}$ proposes to $d_2$. Regardless of which proposal among $h^0_{3-j}$ and $h^1_{3-j}$ is accepted by $d_2$, we next have $h^1_{4-j}$ proposing to $d_3$.
This continues similarly, with each $h^1_{i - j +1}$ proposing to $d_i$ for each $i$, until $h^1_{2-j}$ proposes to $d_1$.
Regardless of whether $d_1$ favors $h^0_{2-j}$ or $h^1_{2-j}$, this leads one of them to propose to $d_1'$ and thus $h^R_{j-1}$ is rejected by $d_1'$. Finally, $h^R_{j-1}$ proposes to $d^R_j$, and $d_*$ next receives a proposal from $h^R_j$, which she accepted, ending the run of $\IPDA$.

All told, when $h_*$ submits list $\{h^R_j\}$, each $d_{i}$ receives a proposal from $h_{i-j+1}^0$ and $h_{i-j+1}^1$, and picks and is finally matched to whichever of the two she prefers according to $b_{i,j}$.
This proves \autoref{thrm:gtc-ipda-main-lemma}.

Thus, for each possible profile of bits $b = ( b_{i,j} )_{i,j\in \{1,\ldots k\}}$, there is a distinct function $\IPDA_Q(\cdot, P_{-d_*})$.
This proves that the outcome-effect complexity of $\IPDA$ is at least $k^2 = \Omega(n^2)$.
\end{proof}

\section{Options-Effect Complexity}
\label{sec:type-to-menu}
We now turn to the \nameref{ppg:type-to-menu} Question: \questionTextTypeToMenu{}
In this section, we study the function from one applicant's report to another applicant's menu, and quantify the complexity of this function.
Like the \nameref{ppg:type-to-match} Question from \autoref{sec:gtc}, this gives a lens into the complexity of how one applicant can affect the mechanism.\footnote{
There is also a technical connections between the outcome-effect and options-effect complexity: both measure the complexity of some function from one applicant's type, a $\widetilde O(n)$-bit piece of data, onto another $\widetilde O(n)$-bit piece of data (the outcome matching, or another applicant's menu, respectively). In some sense, many of our supplementary results in \autoref{sec:additional-results} show that related mappings from $\widetilde O(n)$ bits to a small number of bits (say, $O(\log n)$) will not suffice to capture the relevant complexities.
} However, the options-effect complexity captures very different phenomena, and unlike for the outcome-effects complexity, we will show that the options-effect complexity separates $\TTC$ and $\DA$.
Our main definition for this section is:

\begin{definition}
  The \emph{options-effect complexity} of a matching mechanism $f$ is
  \[ \log_2 \max_{d_*,d_\dagger\in\Appls} \left|\left\{ 
    \Menu_{d_\dagger}^f(\cdot, P_{-\{d_*,d_\dagger\}})
    \ \middle|\ P_{-\{d_*,d_\dagger\}}\in\T_{-\{d_*,d_\dagger\}}
    \right\}\right|, 
  \]
  where $\Menu_{d_\dagger}^f(\cdot, P_{-\{d_*,d_\dagger\}}) : \T_{d_*} \to 2^{\Insts}$ is the function mapping each $P_{d_*}\in\T_{d_*}$ to the menu $\Menu_{d_\dagger}^f(P_{d_*}, P_{-\{d_*,d_\dagger\}})\subseteq \Insts$ of $d_\dagger$.
\end{definition}

We will show that this complexity is high for $\TTC$, providing another novel way in which $\TTC$ is complex. In contrast, for $\DA$ we give a new structural characterization which shows that this complexity is low in stable matching mechanisms. In \autoref{sec:application-SEDA-1-to-1}, we use this result to give additional characterizations and connections to \cite{GonczarowskiHT22}, illustrating how this result may be of independent interest.

To begin, we note that the same construction used in the proof of \autoref{thrm:gtc-SD-UB} suffices to show that the options-effect complexity of $\SD$ is low:
\begin{corollary}
  \label{thrm:type-to-menu-SD}
  The options-effect complexity of $\SD$ is $\widetilde O(n)$.
\end{corollary}
\begin{proof}
Without loss of generality, assume $d_*$ is first in the priority order, and $d_\dagger$ is last. Then, using the construction from the proof \autoref{thrm:gtc-SD-UB}, one can represent the function from $P_{d_*}$ to the outcome matching before $d_\dagger$ selects their match. Now, once this matching is known, observe that $d_\dagger$'s menu is simply the set of remaining institutions. So the same $\widetilde O(n)$-bit construction as in the proof of \autoref{thrm:gtc-SD-UB} suffices to represent the mapping $\Menu_{d_\dagger}^f(\cdot, P_{-\{d_*,d_\dagger\}})$, completing the proof.
\end{proof}

\subsection{\texorpdfstring{$\TTC$}{TTC}}
\label{sec:type-to-menu-TTC}

We now prove an $\Omega(n^2)$ lower bound on the options-effect complexity of $\TTC$.

\paragraph{The need for new ideas.}
One might hope that the ideas behind \autoref{thrm:gtc-TTC-LB}, which reduces the question of the outcome-effect complexity of $\TTC$ to the complexity of $\SDrot$, might gain traction towards bounding the options-effect complexity of $\TTC$ as well. However, it turns out that $\SDrot$ has \emph{low} options-effect complexity ($\widetilde O(n)$), as we prove in \autoref{thrm:type-to-menu-SDrot} for completeness. Thus, embedding $\SDrot$ into $\TTC$ cannot establish a lower bound on the options-effect complexity, so new ideas are needed.

\paragraph{Building up to our construction.}
To illustrate the key ideas behind our construction, start by considering how the menu of an applicant $d_\dagger$ could be changed based on the reports of other applicants. Suppose that (as discussed in \autoref{sec:gtc-ttc}) we delay matching the cycle involving $d_\dagger$ as long as possible. Suppose that when we do this, some set $S$ of institutions all point to some applicant $d_S$, and a completely different set $T$ point to $d_T$. Then, if $d_S$ points to $d_\dagger$ but $d_T$ does not, then $d_\dagger$'s menu may contain only $S$. Likewise, if only $d_T$ points to $d_\dagger$, then $d_\dagger$'s menu is $T$. The key to our construction is to allow one applicant $d_*$ to ``select'' a single applicant $d_{\mathsf{sel}}$ from within a large gadget. Every applicant in the gadget \emph{except} $d_{\mathsf{sel}}$ will then be matched, and then $d_{\mathsf{sel}}$ will point to $d_\dagger$. In the end, $d_\dagger$'s menu will consist of exactly those institutions that point to $d_{\mathsf{sel}}$.

\begin{theorem}
  \label{thrm:type-to-menu-TTC}
  The options-effect complexity of $\TTC$ is $\Omega(n^2)$.
\end{theorem}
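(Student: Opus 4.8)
The plan is to build, for a single fixed pair of "observer" applicants $d_*$ and $d_\dagger$, a family of $2^{\Omega(n^2)}$ preference profiles $P_{-\{d_*,d_\dagger\}}$, indexed by a bit matrix $(b_{i,j})_{i,j\in[k]}$ with $n=\Theta(k)$, such that every two distinct bit matrices induce distinct functions $P_{d_*}\mapsto \Menu_{d_\dagger}(P_{d_*},P_{-\{d_*,d_\dagger\}})$. As with the earlier $\TTC$ construction (\autoref{thrm:gtc-TTC-LB}), I would work in the housing-market form: each institution ranks a unique applicant first, so only that top priority matters, and I identify each institution with its top-priority applicant. The engine of the construction, suggested by the theorem statement, is a "selector gadget": applicant $d_*$'s report picks out one index $j$, which causes a particular applicant $d^X_j$ to end up pointing to $d_\dagger$; then $d_\dagger$'s menu is forced to equal the set of institutions that (transitively) point into $d^X_j$ at the moment $d_\dagger$ gets to choose, and this set is what the bits $b_{\cdot,j}$ control.

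The key steps, in order. First, design a chain of "router" applicants $d^R_1,\dots,d^R_k$ (exactly as in \autoref{thrm:gtc-TTC-LB}) so that when $d_*$ submits the singleton list $\{d^R_j\}$, the cycle through $d_*$ consumes $d^R_1,\dots,d^R_j$ and "activates" the $j$-th branch of the gadget. Second, rig the activation so that a designated applicant $d^X_j$ comes to point at $d_\dagger$, while a block of "content" institutions $\{h^0_{i},h^1_{i}\}_{i}$ are arranged so that the ones reachable from $d^X_j$ are precisely a $b_{\cdot,j}$-dependent subset. Third, compute $d_\dagger$'s menu directly from the definition: since $\TTC$ is order-independent (\autoref{thrm:TTC-order-independent}), run all cycles not involving $d_\dagger$ first; whatever institutions then point into $d_\dagger$ (and only those) are exactly $d_\dagger$'s menu, because $d_\dagger$ can name any one of them and complete a cycle, and cannot reach any others. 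Fourth, argue the separation: given two bit matrices differing in entry $(i_0,j_0)$, take $P_{d_*}=\{d^R_{j_0}\}$; the menus of $d_\dagger$ then differ in whether $h^{b_{i_0,j_0}}_{i_0}$ (or some designated institution controlled by that bit) lies in the menu, so the induced functions differ. Counting, there are $2^{k(k+1)/2}=2^{\Omega(n^2)}$ such functions, giving the bound.

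The main obstacle — and where the construction genuinely differs from \autoref{thrm:gtc-TTC-LB} — is engineering the gadget so that $d_\dagger$'s \emph{menu}, not merely $d_\dagger$'s realized match, encodes $\Omega(k)$ bits for each activated index $j$. Controlling a single applicant's match per choice of $d_*$ is easy; forcing the whole set of reachable institutions (of size $\Theta(k)$) to be a prescribed bit pattern is the delicate part. I expect to handle this by making $d^X_j$ sit at the head of a small tree whose leaves are the pairs $\{h^0_i,h^1_i\}$, where within each pair exactly one element is made reachable via a "gate" applicant whose (fixed) preference list prefers $h^{b_{i,j}}_i$ and gets matched away in an earlier cycle iff the bit is set appropriately — the standard flip-by-bit trick from \autoref{thrm:gtc-SDrot-LB}, but now wired into the pointing graph rather than into a serial-dictatorship order. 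I would also need to double-check the boundary behavior (the $j=0$ case where $d_*$ points at itself, acceptability of partial lists so that unreachable applicants go unmatched without creating spurious cycles) and confirm that activating branch $j$ does not leak reachability into $d_\dagger$ from branches $j'\ne j$, which is what makes distinct columns of the bit matrix independently recoverable.
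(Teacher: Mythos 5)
Your overall architecture is the same as the paper's: work in the housing-market form, let $d_*$'s singleton report drive a selection gadget that isolates a single applicant $d^X_j$ pointing at $d_\dagger$, use order-independence (\autoref{thrm:TTC-order-independent}) to read off $d_\dagger$'s menu as exactly the institutions left over (transitively pointing into $d_\dagger$) after all cycles avoiding $d_\dagger$ are cleared, and then count $2^{\Omega(k^2)}$ distinct menu-functions. The separation and counting steps are fine. But the part you defer as ``the delicate part'' is the entire content of the theorem, and the two mechanisms you gesture at would need to be replaced, not just checked. First, reusing the router chain of \autoref{thrm:gtc-TTC-LB} ``exactly'' does not work: in that construction the unselected routers $d^R_{j'}$, $j'>j$, are mopped up later by the $d^A/d^P$ serial-dictatorship phase, which is absent here, so they would survive the non-$d_\dagger$ cycle elimination and transitively point somewhere, polluting $d_\dagger$'s menu and creating exactly the cross-branch leakage you flag. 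The paper builds a different selector for this reason: an alternating chain $d^X_0,d^Y_0,d^X_1,d^Y_1,\ldots$ with a cleanup cycle routed through $d^Y_k$ (whose list is $d^Y_1\succ\cdots\succ d^Y_{k-1}\succ d^T_k$), engineered so that when $d_*$ submits $\{d^Y_{j-1}\}$ \emph{every} gadget applicant except $d^X_j$ gets matched inside the gadget (\autoref{thrm:ttm-ttc-key-lemma}).

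Second, the bit-encoding via paired institutions $\{h^0_i,h^1_i\}$ and an $\SDrot$-style order flip is the wrong tool here. The $\SDrot$ trick encodes bits in the \emph{order} of a list and reveals them through which prefix gets consumed; that is exactly the kind of information a menu cannot carry (indeed the paper proves $\SDrot$ itself has \emph{low} type-to-menu complexity, \autoref{thrm:type-to-menu-SDrot}, which is why it warns that new ideas are needed beyond \autoref{thrm:gtc-TTC-LB}). Moreover, a ``gate applicant matched away in an earlier cycle iff the bit is set'' has no clear realization: the earlier cycles are confined to the selector gadget and cannot interact with each gate in a way depending on both $i$ and $j$. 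What actually works—and is much simpler than your pair/tree scheme—is to encode bits by \emph{membership}: for each row $i$ a single applicant $d^T_i$ whose list is an arbitrary subset $T_i\subseteq\{d^X_1,\ldots,d^X_k\}$ followed by her own institution. When branch $j$ is activated, all $d^X_{j'}$ with $j'\ne j$ are already matched, so $d^T_i$ either points to $d^X_j$ (and her institution enters $d_\dagger$'s menu) or falls through to her own institution, self-cycles, and vanishes; thus the menu at branch $j$ records the $j$-th bit of every $T_i$, giving $2^{k^2}$ distinct functions. So your plan's skeleton is right, but the gadget you would need to supply is genuinely different from both ingredients you propose to import.
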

\begin{proof}
  As in the proof of \autoref{thrm:gtc-TTC-LB}, we prove this bound using a construction with $n = |\Appls| = |\Insts|$, and each institution ranking a distinct, unique applicant with highest priority (equivalent to the case of a housing market), and for convenience, we identify each institution with the applicant that it prioritizes highest, e.g., we use $d^X_1$ to denote both an applicant and the institution that prioritizes $d^X_1$ highest.

  For some $k = \Theta(n)$, we consider applicants $\Appls = \{ d_*, d_\dagger, d^X_0, d^Y_0, d^X_1,d^Y_1,\ldots,d^X_k, d^Y_k, \allowbreak d^T_1, d^T_2,\allowbreak\ldots, d^T_k \}$. 
  The key to our construction is a ``selection gadget'' created from applicants of the form $d^X_j$ and $d^Y_j$. Their preferences are fixed, as follows:
  \begin{align*}
    d^X_0 
    & : d^Y_0 \succ d^X_0
    \\
    d^X_j 
    & : d^Y_j \succ d^X_0 \succ d_\dagger \succ d^X_j
      && \text{For each $j \in \{1,\ldots,k\}$}
    \\
     d^Y_j 
     & : d_* \succ d^X_{j+1}
       && \text{For each $j \in \{0, 1,\ldots,k-1\}$}
    \\
     d^Y_k 
     & : d^Y_1 \succ d^Y_2 \succ \ldots \succ d^Y_{k-1} \succ d^T_k
  \end{align*}
  
  Our key claim shows that when $d_*$ points to $d^Y_{j-1}$, the selection gadget causes only $d^X_j$ to remain unmatched.
  \begin{lemma}
    \label{thrm:ttm-ttc-key-lemma}
    Suppose $d_*$ submits preference list $\{ d^Y_{j-1} \}$, for $j \in \{1,\ldots,k\}$. Then every applicant in the selection gadget except for $d^X_j$ is matched to other agents in the selection gadget. 
  \end{lemma}
  To prove this lemma, observe that when $d_*$ points to $d^Y_{j-1}$, she matches to $d^Y_{j-1}$. Next, cycle $d^X_0, d^Y_0, d^X_1, \ldots, d^X_{j-1}$ matches. Finally, cycle $d^Y_j, d^Y_{j+1},\ldots,d^Y_k$ matches.
  This proves \autoref{thrm:ttm-ttc-key-lemma}; see \autoref{fig:TTM-TTC} for an illustration.

\begin{figure}[tbp]
 \includegraphics[width=\textwidth]{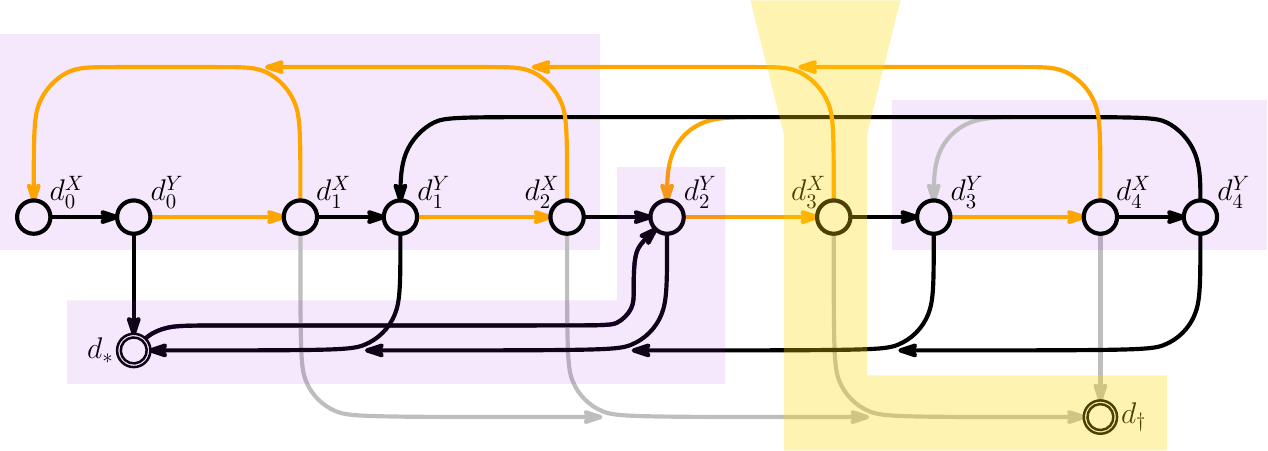}
 \caption[Type-to-menu TTC]{
 Illustration of the selection gadget used to bound the options-effect complexity of $\TTC$ (\autoref{thrm:type-to-menu-TTC}). 
 }
 \vspace{0.1in}
 {\footnotesize \textbf{Notes:}
   Applicants' first choices are denoted by black colored arrows, and their second and third choices are colored orange and gray respectively.
   The illustration shows an example where $d_*$ matches to $h^Y_2$.
   After this, $\{d_0^X, d_0^Y, d_1^X, d_1^Y, d_2^X\}$ complete a cycle, then $\{d^Y_3, d^X_4, d^Y_4\}$ complete a cycle. 
  Finally, $d^X_3$ points to $d_\dagger$, and $d_\dagger$'s menu is determined by precisely which applicants in $\{d^T_1,d^T_2,d^T_3,d^T_4\}$ point to $d^X_3$.
  \par }
  \label{fig:TTM-TTC}
\end{figure}

  Now, for each $i = 1,\ldots,k$, we consider applicant $d^T_i$ whose preferences are defined by an arbitrary subset $T_i \subseteq \{ d^X_1, d^X_2, \ldots, d^X_k \}$. Namely,
  \[ d^T_i : T_i \succ d^T_i, \]
  where the element of $T_i$ may be placed on this list in an arbitrary fixed order.

  Finally, observe that $d_\dagger$'s menu consists precisely of those institutions that remain unmatched after all possible cycles not involving $d_\dagger$ are eliminated.\footnote{
     This follows by the fact that $\TTC$ is independent of the order in which cycles are eliminated (\autoref{thrm:TTC-order-independent}), and the fact that after all cycles not including $d_\dagger$ have been matched, $d_\dagger$ must complete a cycle regardless of which institution they point to \cite{GonczarowskiHT22}.
   } 
  Consider this process, with $d_*$ submitting list $\{ d^Y_{j-1} \}$.
  \autoref{thrm:ttm-ttc-key-lemma} shows that exactly those $d^T_i$ such that $j \in T_i$ transitively point to $d_\dagger$ through applicant $d^X_j$, and for all other $i$, $d^T_i$ will be matched.
  Thus, $d_\dagger$'s menu is exactly $\{ d^X_j \} \cup \{ d^T_i\ |\ j \in T_i \}$.
  Thus, there is a distinct function $\Menu_{d_\dagger}^{\TTC}(\cdot, P_{-\{d_*, d_\dagger\}})$ for each distinct profile of subsets $T_1,\ldots,T_n$. There are $2^{k^2}$ such sets, showing that the options-effect complexity of $\TTC$ is $\Omega(k^2) = \Omega(n^2)$.
\end{proof}

\subsection{DA}
\label{sec:type-to-menu-DA}

Thus far, our study of quantifying the complexities of matching mechanisms has largely yielded negative results and as-high-as-possible lower bounds. 
However, perhaps surprisingly, we will next present a positive result for the options-effect complexity of $\APDA$ and $\IPDA$, showing that one applicant can only affect another applicant's menu in a combinatorially simple way. 
This result will take the form of a novel ``effects DAG'' which shows how one applicant's preference inputs can affect another. 
This is somewhat reminiscent of how \autoref{thrm:gtc-SD-UB} and \autoref{fig:type-to-outcome-sd} bound the outcome-effect complexity of $\SD$, except tailored to representing the effect on menus instead of matchings; this reminiscence may be surprising, since by the negative results of \autoref{sec:gtc} for $\DA$, the outcome-effect complexity of $\DA$ is far higher than that of $\SD$.

To begin, we remark that by \autoref{thrm:same-menu-all-stable}, the menu in $\APDA$ is identical to the menu in $\IPDA$, so proving this bound for the two mechanisms is identical. For the remainder of this section, we thus simply refer to the menu in $\DA$.

\begin{theorem}
  \label{thrm:type-to-menu-DA}
  The options-effect complexity of $\DA$ is $\widetilde \Theta(n)$.
\end{theorem}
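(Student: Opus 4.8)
The plan is to establish the $\widetilde O(n)$ upper bound (the lower bound $\widetilde\Omega(n)$ is immediate, since already the type-to-one's-own-match complexity is $\widetilde\Omega(n)$ in a market where $d_\dagger$'s menu can be made to depend on $n/2$ bits of $P_{-\{d_*,d_\dagger\}}$ while being independent of $d_*$, or simply because $d_\dagger$'s menu itself is an $\widetilde\Theta(n)$-bit object that can be made to vary). For the upper bound, fix $d_*$ and $d_\dagger$ and fix $P_{-\{d_*,d_\dagger\}}$. I want to show that the function $P_{d_*}\mapsto \Menu_{d_\dagger}^{\DA}(P_{d_*},P_{-\{d_*,d_\dagger\}})$ can be described by $\widetilde O(n)$ bits, i.e.\ that as $P_{d_*}$ ranges over $\T_{d_*}$, there are only $\poly(n)$ distinct menus $\Menu_{d_\dagger}$, and moreover which one occurs is governed by a simple rule. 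Since by \autoref{thrm:TaxationPrinciple} $d_*$ is matched to her favorite institution on \emph{her own} menu, and by \autoref{thrm:same-menu-all-stable} menus agree in $\APDA$ and $\IPDA$, it suffices to understand how $\Menu_{d_\dagger}$ changes as $d_*$'s match $\mu(d_*)$ ranges over $\Menu_{d_*}\cup\{\emptyset\}$: indeed $\Menu_{d_\dagger}(P_{d_*},P_{-\{d_*,d_\dagger\}})$ depends on $P_{d_*}$ only through the behavior of $d_*$, which (holding $P_{-\{d_*,d_\dagger\}}$ fixed) is pinned down by the analysis of how $d_*$'s report interacts with the stable-matching structure.

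The key structural step, which I expect to be the heart of the argument, is to show that fixing $d_*$'s match to some $h\in\Menu_{d_*}$ (equivalently, having $d_*$ report the singleton list $\{h\}$, which is legitimate by strategyproofness of $\APDA$ and by \autoref{thrm:same-menu-all-stable}) affects $d_\dagger$'s menu in a monotone / nested fashion as $h$ varies. Concretely, I would try to prove something like: as we move $d_*$'s designated partner $h$ through $\Menu_{d_*}$, the resulting menus $\Menu_{d_\dagger}$ form a chain (totally ordered by inclusion), or at least each is obtained from a ``base'' menu by adding or removing a controlled set of institutions indexed by $h$. This would follow from the comparative-statics theory of stable matchings: restricting $d_*$'s list to $\{h\}$ is a truncation of $d_*$'s preferences, and truncations have well-understood monotone effects on the applicant-optimal and institution-optimal stable matchings (removing options for one applicant can only help the other applicants in the applicant-proposing DA, in the vector-dominance / lattice sense). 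I would leverage the lattice structure of stable matchings and results on how one applicant's preference change moves the entire set of stable matchings (the ``one applicant changes her list'' literature cited in \autoref{sec:related}), together with the fact that $\Menu_{d_\dagger}$ equals the set of institutions $h'$ such that there is a stable matching of the sub-market (with $d_*$'s list truncated appropriately) pairing $d_\dagger$ with $h'$ — or more precisely, such that $h'$ lies between $d_\dagger$'s institution-optimal and applicant-optimal stable partners when $d_\dagger$'s own list is unrestricted. Making this precise and showing the relevant interval/chain has only $\poly(n)$ distinct values as $h$ ranges over $\Menu_{d_*}$ is the main obstacle.

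Assuming the chain/nesting structure, the encoding is then straightforward: $\Menu_{d_*}$ itself is $\widetilde O(n)$ bits; for each $h$ in $\Menu_{d_*}$ (there are at most $n$), the marginal change to $\Menu_{d_\dagger}$ relative to a base case can be recorded — but a naive bound here is $\widetilde O(n)$ per $h$, giving $\widetilde O(n^2)$, so the chain structure is essential: if the menus $\Menu_{d_\dagger}(h)$ for $h\in\Menu_{d_*}$ are linearly ordered by inclusion, I can sort the institutions of $\Insts$ once according to the order in which they enter $\Menu_{d_\dagger}$, and then record for each $h\in\Menu_{d_*}$ just a single threshold index into that sorted list — total $\widetilde O(n)$. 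Combined with the taxation principle (to recover $\mu(d_*)$, and hence which threshold applies, from $P_{d_*}$ and $\Menu_{d_*}$) and the $\emptyset$ case handled separately, this yields the $\widetilde O(n)$ description of the whole function $P_{d_*}\mapsto\Menu_{d_\dagger}$, proving $\widetilde\Theta(n)$.

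The step I anticipate being genuinely delicate is proving the nesting/chain property: a priori, different truncation lengths of $d_*$'s list could move the stable lattice in incomparable directions for $d_\dagger$ specifically, and ``menu'' is a set-valued (union over all of $d_\dagger$'s reports) object rather than a single match, so I will likely need a careful argument showing that the extremal stable partners of $d_\dagger$ (over the sub-market excluding $d_\dagger$) are themselves monotone in $h$, and that the interval of obtainable institutions for $d_\dagger$ is ``sandwiched'' accordingly. If the strict chain property fails, a fallback is to show the weaker statement that the symmetric differences $\Menu_{d_\dagger}(h)\triangle\Menu_{d_\dagger}(h')$ are all contained in a common $\widetilde O(n)$-size ground set with a bounded-complexity indexing — but I expect the clean chain statement to hold and to be the natural structural result the authors highlight as being of independent interest.
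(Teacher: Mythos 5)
There is a genuine gap, in fact two. First, your reduction ``it suffices to understand how $\Menu_{d_\dagger}$ changes as $d_*$'s match ranges over $\Menu_{d_*}\cup\{\emptyset\}$'' is not valid: the function $P_{d_*}\mapsto\Menu_{d_\dagger}(P_{d_*},P_{-\{d_*,d_\dagger\}})$ does not factor through $d_*$'s ($\APDA$) match, because the part of $d_*$'s list \emph{below} her match matters. Concretely, take $h_1: e \succ d_* \succ d_\dagger$, $h_2: d_* \succ d_\dagger \succ e$, $e: h_2 \succ h_1$. Under both $P_{d_*}= h_1\succ h_2$ and $P_{d_*}=\{h_1\}$, applicant $d_*$'s match (and her favorite menu element) is $h_1$, yet $\Menu_{d_\dagger}=\emptyset$ in the first case and $\Menu_{d_\dagger}=\{h_2\}$ in the second. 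What \emph{is} true (and is the content of the paper's \autoref{lem:stabmatch-characterizes-ipda}, via \autoref{thrm:ght-main-positive-DA}) is that the menu of $d_\dagger$ is governed by $d_*$'s outcome in $\IPDA$ run with $d_\dagger$ rejecting everything; but since $\IPDA$ is not strategyproof, that outcome is \emph{not} ``favorite institution on $d_*$'s menu,'' so the taxation principle cannot be used to read it off from $P_{d_*}$ --- characterizing and encoding that map is precisely the nontrivial work (the paper does it with the combinatorial object $\stabmatch_{d_*}(\cdot)$ defined on its un-rejection graph).

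Second, the chain/nesting property you lean on for the $\widetilde O(n)$ encoding is false. Take two institutions with priorities $h_a: d_*\succ d_\dagger$ and $h_b: d_*\succ d_\dagger$ (no other applicants): if $d_*$ reports $\{h_a\}$ then $\Menu_{d_\dagger}=\{h_b\}$, and if she reports $\{h_b\}$ then $\Menu_{d_\dagger}=\{h_a\}$ --- two incomparable menus, so the menus do not form a chain and a single threshold index per $h\in\Menu_{d_*}$ cannot encode them. (Note also that the various reports of $d_*$ are not truncations of one another, so truncation comparative statics would not order them even where it applies.) The correct structure is tree-like rather than linear: the paper builds a directed forest $\UnrejGr$ on (a subset of) $\{d_*,d_\dagger\}\times\Insts$ with out-degree at most one, hence $\widetilde O(n)$ bits, shows every menu arising from some $P_{d_*}$ is the set of $d_\dagger$-nodes outside the ``chain'' above the node $\stabmatch_{d_*}(P_{d_*})$, and shows that node is determined from $\UnrejGr$ and $P_{d_*}$ alone. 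Incomparable nodes of the forest yield incomparable menus, which is exactly the phenomenon your proposed encoding cannot handle; your vaguer fallback (a common ground set with ``bounded-complexity indexing'') is essentially a restatement of what needs to be proved rather than a proof.
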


The remainder of this subsection is dedicated to proving this theorem.
Consider any pair of applicant $S = \{ d_*, d_\dagger \}$, priorities $Q$ and preferences $P_{-\{d_*,d_\dagger\}}$.
Our goal is to represent the function 
$\Menu_{d_\dagger}^{\APDA_Q}(\cdot, P_{-S}) : \T_{d_*} \to 2^{\Insts}$ using an $\widetilde O(n)$-bit data structure.
The starting point of our representation will be the following fact from \cite{GonczarowskiHT22}:
\begin{lemma}[Follows from {\cite[Description 1]{GonczarowskiHT22}}]
  \label{thrm:ght-main-positive-DA}
  The menu of $d_\dagger$ in $\DA$ under priorities $Q$ and preferences $P_{-d_\dagger}$ is exactly the set of proposals $d_\dagger$ receives in $\IPDA_Q(d_\dagger:\emptyset, P_{-d_\dagger})$, i.e., the set of proposals $d_\dagger$ receives if $\IPDA$ is run with $d_\dagger$ rejecting all proposals.
\end{lemma}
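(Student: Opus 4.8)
The plan is to reduce the statement to a coupling of two runs of $\IPDA$ in closely related markets. Write $M_\emptyset$ for the market in which $d_\dagger$ reports the empty list (so she rejects every proposal) and every other applicant reports $P_{-d_\dagger}$; this is the market in which the claimed proposals are generated. For an institution $h$, write $M_h$ for the market identical to $M_\emptyset$ except that $d_\dagger$ reports the singleton list $\{h\}$. Since the menu of $d_\dagger$ in $\APDA$ equals that in $\IPDA$ by \autoref{thrm:same-menu-all-stable}, it suffices to characterize the $\IPDA$-menu, and I will show
\begin{align*}
h \in \Menu_{d_\dagger}^{\IPDA}(P_{-d_\dagger})
  &\iff \IPDA_{d_\dagger}(M_h) = h \\
  &\iff h \text{ proposes to } d_\dagger \text{ during } \IPDA(M_\emptyset).
\end{align*}

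The first equivalence is classical. ``$\Leftarrow$'' is immediate, since $\IPDA_{d_\dagger}(M_h)=h$ exhibits the report $\{h\}$ as a witness. For ``$\Rightarrow$'', if some report $P_{d_\dagger}^\ast$ gives $d_\dagger$ the match $h$ under $\IPDA$, then the resulting (stable) matching $\mu$ remains stable in $M_h$: truncating $d_\dagger$'s list to $\{h\}$ destroys no individually-rational match of hers and creates no blocking pair involving her, as nothing is ranked above $h$ on the truncated list. Thus $d_\dagger$ is matched in some stable matching of $M_h$, hence --- by the rural hospitals theorem (the set of matched applicants is constant across stable matchings of a fixed market) --- in every stable matching of $M_h$, in particular in $\IPDA(M_h)$; and since $h$ is her only acceptable institution there, $\IPDA_{d_\dagger}(M_h)=h$.

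The second equivalence is the crux, and I would prove it by coupling $\IPDA(M_\emptyset)$ and $\IPDA(M_h)$ step for step under a common proposal order. The two runs have identical state until the first time (if ever) that $h$ proposes to $d_\dagger$: on any other proposal $d_\dagger$ behaves identically in the two markets (she rejects a proposal from any $h'\neq h$, since $h'$ is unacceptable in $M_h$ as well), so an easy induction on steps keeps the states equal. If $h$ proposes to $d_\dagger$ during $\IPDA(M_\emptyset)$, then the same proposal $h\to d_\dagger$ occurs in the coupled $\IPDA(M_h)$ run, where $d_\dagger$ tentatively accepts $h$ and, having no other acceptable institution, never releases it; completing the execution in any order (legitimate by \autoref{thrm:da-indep-execution}) yields $\IPDA_{d_\dagger}(M_h)=h$. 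Conversely, if $h$ never proposes to $d_\dagger$ during $\IPDA(M_\emptyset)$, the induction shows the two runs never diverge at all, so $\IPDA(M_h)=\IPDA(M_\emptyset)$, in which $d_\dagger$ is unmatched; hence $\IPDA_{d_\dagger}(M_h)\neq h$.

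I expect the main obstacle to be rigor rather than ideas: one must make the coupling precise (fixing a common order of proposals for two runs that eventually diverge, and pinning the first divergence to the first $h\to d_\dagger$ proposal) and must use the classical fact that the execution order of $\DA$ does not affect which proposals occur, so that the set of proposals received by $d_\dagger$ in $\IPDA(M_\emptyset)$ is well-defined (each institution proposes to $d_\dagger$ at most once, as her rejection is permanent). A minor point is to state the acceptability conventions for priorities up front, though the truncation and rural-hospitals steps are insensitive to these. One could alternatively obtain the ``only if'' direction from the comparative statics that removing an applicant weakly worsens every institution in the institution-optimal stable matching, together with the observation that $h$ proposes to $d_\dagger$ in $\IPDA(M_\emptyset)$ exactly when $d_\dagger$ outranks $h$'s final partner there; I would note this but prefer the self-contained coupling.
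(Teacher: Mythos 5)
Your proof is correct. Note that the paper itself gives no proof of this lemma: it is imported wholesale by citation to \cite[Section 3]{GonczarowskiHT22}, so there is no internal argument to compare against line by line. What you supply is a valid self-contained derivation, and it is very much in the spirit of the tools the paper does use elsewhere: your first equivalence (truncate $d_\dagger$'s list to $\{h\}$, observe stability is preserved, then invoke the lone-wolf/rural-hospitals theorem, \autoref{thrm:rural-hospitals}) is essentially the same argument the paper uses to prove \autoref{thrm:same-menu-all-stable}, and your coupling of $\IPDA(M_\emptyset)$ with $\IPDA(M_h)$ via order-independence (\autoref{thrm:da-indep-execution}) mirrors the reasoning in the proofs of \autoref{thrm:nonbossy-ipda} and \autoref{lem:stabmatch-characterizes-ipda}. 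Two small points you correctly flag and that do need to be said explicitly in a polished write-up: (1) well-definedness of ``the set of proposals $d_\dagger$ receives'' under $(d_\dagger:\emptyset)$ — this in fact falls out of your own coupling, since whether $h$ ever proposes to $d_\dagger$ is pinned down by the order-independent quantity $\IPDA_{d_\dagger}(M_h)$; and (2) at the divergence point, $d_\dagger$ holds $h$ forever in $M_h$ because no other institution is acceptable to her, so completing the run in any order gives $\IPDA_{d_\dagger}(M_h)=h$. With those spelled out, your argument stands on its own and could replace the citation.
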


Note that, while this lemma characterizes the menu in both $\APDA$ and $\IPDA$, the mechanism $\IPDA$ specifically must be used to achieve this characterization (see \cite{GonczarowskiHT22} for a discussion).
In every run of $\IPDA$ for the remainder of this subsection, the priorities $Q$ and preferences $P_{-S}$ will be fixed. Thus, we suppress this part of the notation, and write $\IPDA(d_S : P_S)$ in place of $\IPDA_Q(d_S : P_S, d_{-S} : P_{-S})$.

Remarkably, beyond \autoref{thrm:ght-main-positive-DA}, the \emph{only} property of $\DA$ that we will use in this proof (beyond the definition of how $\DA$ is calculated) is the fact that $\IPDA$ is independent of the order in which proposals are chosen (\autoref{thrm:da-indep-execution}).
We start by defining a graph representing collective data about multiple different runs of $\IPDA_Q$ under related preference lists in a cohesive way. This data structure is parametrized by a general set $S$ in order to avoid placing unnecessary assumptions and in hope that it will be of independent interest, but we will always instantiate it with $S = \{d_*, d_\dagger\}$.

\begin{definition}
  Fix a set $S \subseteq \Appls$ of applicants, and a profile of priorities $Q$ and preferences $P_{-S}$ of all applicants other than those in $S$. For a $d \in S$ and $h \in \Insts$, define an ordered list of pairs in $S\times\Insts$ called $\chain(d,h)$ as follows: First, calculate $\mu = \IPDA(d : \{ h \}, d_{S\setminus\{d\}} : \emptyset)$. If $h$ never proposes to $d$, set $\chain(d,h)=\emptyset$. Otherwise, starting from tentative matching $\mu$, let $d$ reject $h$ and have $h$ continue proposing, following the rest of the execution of $\IPDA$ with $d$ rejecting all proposals. Note that this constitutes a valid run of $\IPDA(d_S : \emptyset)$, and that during this ``continuation'' there is a unique proposal order because only a single element of $\Insts$ is proposing at any point in time. Now, define $\chain(d,h)$ as the ordered list of pairs
  \[ (d = d_0, h = h_0) \longrightarrow (d_1, h_1) \longrightarrow (d_2, h_2) \longrightarrow \ldots \longrightarrow (d_k, h_k),
  \]
  where $d_i \in S$ is each applicant in $S$ receiving a proposal from $h_i \in \Insts$ during the continued run of $\IPDA$ after $d=d_0$ rejects $h=h_0$, written in the order in which the proposals occur.

  Now, define the \emph{un-rejection graph} $\UnrejGr = \UnrejGr(Q,P_{-S})$ as the union of all possible consecutive pairs in $\chain(d,h)$ for all $d \in S$ and $h \in \Insts$. In other words, $\UnrejGr$ is a directed graph defined on the subset of pairs $S \times \Insts$ which occur in some $\chain(d,h)$, where the edges are all pairs $(d_i, h_i) \longrightarrow (d_{i+1}, h_{i+1})$ which are consecutive elements of some $\chain(d,h)$.
\end{definition}

Note that $\UnrejGr \setminus \chain(d,h)$ is exactly the set of rejections which applicants in $S$ make in $\IPDA(d : \{ h\}, d_{S\setminus{d}} : \emptyset)$, i.e., the set of pairs $(d',h')\in S\times\Insts$ for which $d$ rejects $h$ in this run of $\IPDA$.
We start by establishing general properties of $\UnrejGr$.

\begin{lemma}
  \label{lem:d-h-chain-tail}
  Consider any $\chain(d,h) \ne \emptyset$, and node $(d', h')$ contained in $\chain(d,h)$. It must be the case that $\chain(d',h')$ equals the tail of $\chain(d,h)$, including and after $(d',h')$.
\end{lemma}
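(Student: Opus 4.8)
The plan is to prove Lemma~\ref{lem:d-h-chain-tail} by exploiting the order-independence of $\IPDA$ (\autoref{thrm:da-indep-execution}) together with the deterministic nature of the ``continuation'' runs, whose proposal order is forced because only one institution proposes at a time. Fix $\chain(d,h)\ne\emptyset$ written as $(d=d_0,h=h_0)\to(d_1,h_1)\to\ldots\to(d_k,h_k)$, and suppose $(d',h')=(d_m,h_m)$ for some $m$. I want to show $\chain(d_m,h_m)$ is exactly the suffix $(d_m,h_m)\to(d_{m+1},h_{m+1})\to\ldots\to(d_k,h_k)$.

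First I would set up both relevant runs of $\IPDA$ and argue they can be made to coincide up to a certain point. In the run computing $\chain(d,h)$, we begin from $\mu=\IPDA(d:\{h\},d_{S\setminus\{d\}}:\emptyset)$, then have $d$ reject $h$ and continue with all of $S$ rejecting everything; by the time the proposal $(d_m,h_m)$ is issued, the tentative matching is some state $\nu$, and from $\nu$ onward the (forced) continuation issues exactly $(d_m,h_m),(d_{m+1},h_{m+1}),\ldots,(d_k,h_k)$ and then terminates. For $\chain(d_m,h_m)$ we instead begin from $\mu'=\IPDA(d_m:\{h_m\},d_{S\setminus\{d_m\}}:\emptyset)$; the key claim is that $\mu'$ equals $\nu$ (at least on the relevant part: on the set of institutions/applicants that will still move), so that once $d_m$ rejects $h_m$ the forced continuation from $\mu'$ reproduces precisely the same sequence of proposals $(d_{m+1},h_{m+1}),\ldots,(d_k,h_k)$ and then halts. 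The main work is identifying $\mu'$ with the intermediate state $\nu$ arising in the $\chain(d,h)$ computation.

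To establish $\mu'=\nu$, I would use order-independence to choose a convenient proposal order for $\IPDA(d_m:\{h_m\},d_{S\setminus\{d_m\}}:\emptyset)$: run it by first reconstructing the entire execution of $\IPDA(d:\{h\},d_{S\setminus\{d\}}:\emptyset)$ (legal here, since in that run every element of $S$ other than $d$ rejects all proposals, and $d$ only tentatively holds $h$ — but $d\ne d_m$ generically, so we need care about the case $d=d_m$, i.e.\ $m=0$, which is immediate), then have $d$ reject $h$ and continue exactly as in the $\chain$-continuation until the proposal $(d_m,h_m)$ is about to be made; at that instant the tentative matching is $\nu$, and $d_m$ has not yet received (and hence not yet rejected, in the $d_m:\{h_m\}$ run) anything except possibly $h_m$ which is arriving now. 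I need to check that this prefix is a legal partial execution of $\IPDA(d_m:\{h_m\},d_{S\setminus\{d_m\}}:\emptyset)$: all members of $S\setminus\{d_m\}$ reject everything (consistent with their instruction $\emptyset$), $d_m$ has so far rejected nothing, and when $h_m$ proposes to $d_m$ it is tentatively accepted (consistent with $d_m$'s list $\{h_m\}$). Since $\IPDA$ terminates from any legal partial execution to the same matching, and since from state $\nu$ with $h_m$ just accepted by $d_m$ there are no further proposals possible (the only institutions that were still ``active'' were exactly those driving the $\chain$-continuation, all of which have now been absorbed), we conclude $\mu'=\nu$.

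The main obstacle I anticipate is handling the bookkeeping of \emph{which} institutions are still ``live'' (i.e.\ not holding a tentative match) at the intermediate state $\nu$, and ruling out that switching from the instruction profile $d:\{h\}$ to $d_m:\{h_m\}$ frees up $d$ to receive and reject proposals that would spawn new chains not present in the original continuation. The resolution is that in the $\chain(d,h)$-continuation $d$ is already rejecting everything, so $d$ being ``unconstrained'' versus ``holding nothing'' makes no difference to the reachable proposals; symmetrically, giving $d_m$ the list $\{h_m\}$ rather than $\emptyset$ only matters at the single moment $h_m$ proposes to $d_m$, which is exactly the transition from ``end of $\chain(d,h)$-prefix'' to ``$\mu'$''. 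Once $\mu'=\nu$ is pinned down, the statement that $\chain(d_m,h_m)$ is the stated suffix follows directly from the definition of $\chain$ and the uniqueness of the forced continuation order, and hence $\UnrejGr$'s edge set is consistently described by these suffixes, as the lemma asserts (note the lemma's phrase ``the tail of the $V$'' refers to this suffix of $\chain(d,h)$, viewing the chain as the path $V$).
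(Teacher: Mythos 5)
Your proposal is correct and follows essentially the same route as the paper's proof: both splice the $\chain(d,h)$ computation (the initial run plus the forced continuation up through the proposal $(d',h')$, now tentatively accepted) into a valid execution of $\IPDA(d':\{h'\}, d_{S\setminus\{d'\}}:\emptyset)$ via order-independence of $\IPDA$, and then observe that letting $d'$ reject $h'$ from that state reproduces exactly the tail of $\chain(d,h)$. The only nitpick is your claim that $d_m$ ``has so far rejected nothing'' --- it may well have rejected earlier proposals, which is harmless because $h_m$ proposes to $d_m$ at most once and hence cannot be among them; the paper sidesteps this by noting directly that $h'$ never proposes to $d'$ during the initial run $\IPDA(d:\{h\}, d_{S\setminus\{d\}}:\emptyset)$.
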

\begin{proof}
  Consider any such $(d,h)$ and $(d', h')$. To start, note that by the definition of $\chain(d,h)$, we cannot have $h'$ propose to $d'$ during $\IPDA(d: \{h\}, d_{S\setminus \{d\}}:\emptyset)$. Thus, this run of $\IPDA$ will produce exactly the same matching as $\IPDA(d: \{h\}, d' : \{h'\}, d_{S\setminus \{d,d'\}} : \emptyset)$, so letting $d$ reject $h$ on top of this must produce a valid run of $\IPDA(d' : \{h'\}, d_{S\setminus \{d'\}})$ by the fact that $\DA$ is independent of execution order (\autoref{thrm:da-indep-execution}). After this run, we can now let $d'$ reject $h'$ in order to calculate the $\chain(d',h')$. But this will also correspond exactly to the part of the $\chain(d,h)$ after $h'$ proposes to $d'$, i.e., the tail of the $\chain(d,h)$ including and after $(d',h')$. This finishes the proof.
\end{proof}

\begin{lemma}
  $\UnrejGr$ is a DAG with out-degree at most $1$ at each node.
\end{lemma}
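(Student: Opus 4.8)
The plan is to read off both properties from \autoref{lem:d-h-chain-tail}, which already pins down the structure of $\chain(d',h')$ for every node $(d',h')$ occurring anywhere in $\UnrejGr$. The only external fact I will need beyond that lemma is that a single execution of $\IPDA$ terminates, so each $\chain(d,h)$ is a finite list (and, since an institution never proposes to the same applicant twice in one run of $\IPDA$, no pair recurs within a single chain).

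For the out-degree bound, I would argue as follows. If $(d',h')$ is a node of $\UnrejGr$, then by definition it occurs in some $\chain(d,h)$, so \autoref{lem:d-h-chain-tail} gives that $\chain(d',h')$ is the tail of that list including and after $(d',h')$; in particular $\chain(d',h') \neq \emptyset$ and $(d',h')$ is its first element. Now any out-edge of $(d',h')$ in $\UnrejGr$ comes from $(d',h')$ being immediately followed by some pair in some $\chain(\hat d,\hat h)$ that contains it; applying \autoref{lem:d-h-chain-tail} once more, the portion of $\chain(\hat d,\hat h)$ from $(d',h')$ onward is exactly $\chain(d',h')$, so the successor of $(d',h')$ in that list must be the second element of $\chain(d',h')$ (and if $\chain(d',h')$ has length one, there is no successor at all). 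Hence every node has at most one out-neighbour, namely the second element of its own chain.

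For acyclicity, I would show that starting from any node $v$ and repeatedly following the unique out-edge exactly traces $\chain(v) = (v = v_0, v_1, \ldots, v_k)$: each consecutive pair $v_i \to v_{i+1}$ is an edge of $\UnrejGr$, and by the out-degree bound $v_{i+1}$ is the unique out-neighbour of $v_i$, so the out-edge walk from $v$ is precisely $v_0 \to v_1 \to \cdots \to v_k$. Moreover $v_k$ is a sink: by \autoref{lem:d-h-chain-tail}, $\chain(v_k)$ equals the tail of $\chain(v)$ after $v_k$, which is the one-element list $(v_k)$, so $v_k$ has no successor in any chain and thus no out-edge. Since $\chain(v)$ is finite, the out-edge walk from every node reaches a sink in finitely many steps. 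A node lying on a directed cycle would, by the out-degree bound, have an out-edge walk that remains on the cycle forever and never reaches a sink, a contradiction. Therefore $\UnrejGr$ contains no directed cycle, i.e., it is a DAG.

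The only place requiring care — and the main obstacle — is ensuring that the out-edge walk from an arbitrary node genuinely coincides with that node's chain, i.e., that chains cannot branch and that a node can be the head of only one chain structure. This is exactly what \autoref{lem:d-h-chain-tail} delivers (combined with finiteness of one $\IPDA$ run); once that lemma is invoked, the rest is bookkeeping.
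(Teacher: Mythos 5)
Your proposal is correct and follows essentially the same route as the paper: both derive the out-degree bound by noting that any out-edge of a node must coincide with the successor in that node's own chain (via \autoref{lem:d-h-chain-tail}), and both obtain acyclicity because the out-edge walk from any node is exactly that node's chain, which is finite and repetition-free within a single run of $\IPDA$. The only cosmetic difference is that you phrase acyclicity via reaching a sink while the paper says every maximal path equals some $\chain(d,h)$, which is acyclic by definition; these are the same argument.
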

\begin{proof}
  Consider any vertex $(d,h)$ in $\UnrejGr$, and consider $\chain(d,h)$. There is at most one edge outgoing from $(d,h)$ in the $\chain(d,h)$ by definition. By \autoref{lem:d-h-chain-tail}, if $(d,h)$ appears in any other possible $\chain(d',h')$, then the edge outgoing from $(d,h)$ in this chain must be the same as in $\chain(d,h)$. Thus, the vertices in $\UnrejGr$ have out-degree at most one.

  Since $\UnrejGr$ has out-degree at most $1$, every possible max-length path in $\UnrejGr$ must equal $\chain(d,h)$ for some value of $(d,h)$ (namely, the first $(d,h)$ along the path).
  Now, this implies that $\UnrejGr$ must be acyclic, because each possible $\chain(d,h)$ is acyclic by definition.
\end{proof}

\begin{figure}[tbp]
  \begin{minipage}[c]{0.35\textwidth}
    \centering
    \includegraphics[width=\textwidth]{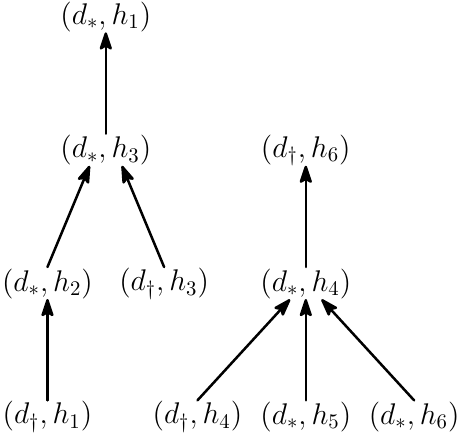}
  \end{minipage}
  \qquad
  \begin{minipage}[c]{0.6\textwidth}
  \caption[Type-to-menu DA Example]{
  Illustration of $\UnrejGr$ with the following priorities and preferences:
  }
  \label{fig:TTM-DA-primer}
  \vspace{-0.2in}
  \begin{align*}
       & h_1 : d_\dagger \succ d_2 \succ d_*
          &\qquad&
    \\ & h_2 : d_2 \succ d_* \succ d_3
          && d_2 : h_3 \succ h_1 \succ h_2
    \\ & h_3 : d_\dagger \succ d_3 \succ d_* \succ d_2
          && d_3 : h_2 \succ h_3
    \\[-0.15in]
    \\ & h_4 : d_4 \succ d_\dagger \succ d_* \succ d_5
          && d_4 : h_5 \succ h_6 \succ h_4
    \\ & h_5 : d_* \succ d_4 \succ d_5
          && d_5 : h_4 \succ h_6
    \\ & h_6 : d_* \succ d_4 \succ d_5 \succ d_\dagger
          &&
  \end{align*}
  \end{minipage}
\end{figure}

Thus, $\UnrejGr$ is a forest. 
See \autoref{fig:TTM-DA-primer} for an example and illustration.

The following notation will be highly convenient: 
\begin{definition}
  For two nodes $v, w$ in $\UnrejGr$, we write $v\trianglelefteq w$ if there exists a path in $\UnrejGr$ from $v$ to $w$.
  For a subset $T$ of vertices in $\UnrejGr$, we write $\prop_d(T)\subseteq\Insts$ to denote the set of all $h\in\Insts$ such that $(d,h) \in T$; we also refer to nodes of the form $(d,h)$ as $d$-nodes.
\end{definition}
Note that $\trianglelefteq$ this defines a partial order on the nodes of $\UnrejGr$, and that $\chain(d,h)$ equals the set of all $(d',h') \in \UnrejGr$ such that $(d,h) \trianglelefteq (d',h')$.

Since $\UnrejGr$ is defined in terms of runs of $\IPDA$ when some $d \in S$ submits a list of the form $\{h\}$, it is not clear how this relates to what will happen when $d$ submits an arbitrary list $P_d$. The following definition will end up providing the connection we need between $\UnrejGr$ and $\IPDA(d : P_d, P_{S\setminus\{d\}} : \emptyset)$ via a combinatorial characterization of the match of $d$ under list $P_d$.

\begin{definition}
  Fix $S$, priorities $Q$, and preferences $P_{-S}$. For any $d\in S$ and $P_d \in \T_d$, define $\stab_d(P_d)$ as the set of $d$-nodes $v = (d,h) \in \UnrejGr$ such that $h \succ_d^{P_d} h'$ for all $h' \in \prop_d(\UnrejGr \setminus \chain(v))$. In words, $\stab_d(P_d)$ is the set of all $d$-vertices $v$ in $\UnrejGr$ such that $P_d$ prefers $v$ to all vertices $v'$ which do not come after $v$ according to $\trianglelefteq$ (in particular, $P_d$ must rank each $h$ in $\prop_d(\stab_d(P_d))$ as acceptable).\footnote{
    One can show that $\stab_d(P_d)$ is exactly the set of stable partners of $d$ under priorities $Q$ and preferences $(d: P_d, d_{S\setminus\{d\}}: \emptyset, d_{-S}: P_{-S})$ (for instance, using the techniques of \cite{CaiT19}); this fact is not needed for our arguments, but it is what inspired the name $\stab_d(P_d)$.
  }

  Define $\stabmatch_d(P_d)$ as the $\trianglelefteq$-minimal element of $\stab_d(P_d)$ (or, if $\stab_d(P_d) = \emptyset$, then set $\stabmatch_d(P_d) = \emptyset$). 
\end{definition}

Note that $\stab_d(P_d)$ is defined solely in terms of $\UnrejGr$ and $P_d$, and does not depend in any other way on the input priorities $Q$ or preferences $P_{-S}$. Note also that $\stab_d(P_d)$ must be contained in some path in $\UnrejGr$, as otherwise we would have some $(d,h), (d,h') \in \stab_d(P_d)$ which are incomparable under $\trianglelefteq$ with both $h\succ_d^{P_d} h'$ and $h'\succ_d^{P_d} h$. This means that $\stabmatch_d(P_d)$ is uniquely defined.

\begin{lemma}
  \label{lem:stabmatch-characterizes-ipda}
  Let $\stabmatch_d(P_d) = (d,h)$ for some $h\in\Insts$.
  For any $d\in S$ and $P_d\in\T_d$, the set of proposals which an applicant $d'\in S\setminus \{d\}$ receives in $\IPDA(d:P_d, d_{S\setminus \{d\}}: \emptyset)$ is exactly 
  $\prop_{d'}\big(\UnrejGr \setminus \chain(\stabmatch_d(P_d))\big)$.
\end{lemma}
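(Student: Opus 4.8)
The plan is to prove \autoref{lem:stabmatch-characterizes-ipda} by combining two ingredients: (i) the observation that the proposals received by applicants in $S \setminus \{d\}$ during $\IPDA(d : P_d, d_{S\setminus\{d\}} : \emptyset)$ are governed entirely by which single ``bottom'' proposal $d$ ends up accepting, and (ii) the identification of that accepted proposal with $\stabmatch_d(P_d)$. Since only one element of $\Insts$ is ever proposing at a time once $d$ and the applicants in $S\setminus\{d\}$ all reject everything, the run with $d$ submitting $\{h\}$ is a deterministic ``continuation'' — exactly $\chain(d,h)$. The first step is therefore to show that running $\IPDA(d : P_d, d_{S\setminus\{d\}} : \emptyset)$ proceeds as follows: process $d$'s list $P_d$ from the bottom up; each time the institution currently under consideration from $P_d$ gets ``knocked out'' back to $d$ it triggers exactly the rejections recorded in the corresponding chain, and $d$ moves up her list; this continues until $d$ reaches the institution $h^*$ with $\stabmatch_d(P_d) = (d,h^*)$, which she keeps. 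I would make this precise by invoking execution-order independence of $\IPDA$ (\autoref{thrm:da-indep-execution}) to fix a convenient order: first have every applicant in $S\setminus\{d\}$ and $d$ reject everything except that $d$ behaves as if her list were a single institution, walking up $P_d$ one institution at a time.

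The technical heart is a claim of the form: \emph{the institution $d$ ultimately retains is exactly $h^*$, where $(d,h^*) = \stabmatch_d(P_d)$, and moreover every $d$-node $(d,h)$ with $(d,h^*)\trianglelefteq(d,h)$ — i.e.\ every node in $\chain(\stabmatch_d(P_d))$ — is one whose proposal to $d$ ``never materializes'' in this run, while every proposal to an applicant $d'\ne d$ that does occur is precisely a node of $\UnrejGr \setminus \chain(\stabmatch_d(P_d))$.} To see that $h^*$ is retained: walk up $P_d$ from the bottom. Whenever $d$ is holding a tentative match to some $h$ and a lower-ranked (per $P_d$) institution $h'$ proposes, $d$ rejects $h'$, and the run continues along $\chain(d,h')$; by \autoref{lem:d-h-chain-tail} its tail structure is consistent, and none of these rejections ever causes a \emph{new} institution to propose to $d$ that $d$ prefers to $h$ — this is exactly where the definition of $\stab_d$ (that $h\succ_d^{P_d} h'$ for all $h'\in\prop_d(\UnrejGr\setminus\chain(v))$) is used. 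Conversely, if $d$ currently holds $h$ and $h$ is \emph{not} in $\stab_d(P_d)$, then by definition there is some $h'$ with $(d,h)\not\trianglelefteq(d,h')$ and $h'\succ_d^{P_d} h$; such an $h'$ will propose (it lies in $\UnrejGr\setminus\chain((d,h))$), so $d$ will be dislodged from $h$. Iterating, $d$ settles on the $\trianglelefteq$-minimal stable node, which is $\stabmatch_d(P_d)$ by definition. I would then argue that the set of proposals to applicants $d'\ne d$ accumulated along the way is, edge by edge, the union of chains hanging off everything $d$ temporarily held and rejected — and that this union is exactly $\UnrejGr$ minus the single chain $\chain(\stabmatch_d(P_d))$, since once $d$ locks in $h^*$ the continuation of $\chain(d,h^*)$ (which would proceed to send proposals to other applicants) is precisely what is \emph{suppressed}.

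A clean way to package the ``accumulation'' part is by induction on the position of the institution $d$ is holding, as $d$ walks up $P_d$: maintain the invariant that after $d$ has been dislodged from institutions $h_{(1)}, \dots, h_{(t)}$ (in the order this happens) and now holds $h_{(t+1)}$, the multiset of proposals seen by $S\setminus\{d\}$ is exactly $\bigcup_{s\le t} \prop_{\cdot}(\chain(d,h_{(s)})) \;\cup\; \big(\text{proposals from the partial chain of } h_{(t+1)}\big)$, and that the union of all the $\chain(d,h_{(s)})$ together with $\chain(d,h^*)$ exhausts every node of $\UnrejGr$ that is comparable (under $\trianglelefteq$) to a $d$-node — here using that $\UnrejGr$ is a forest with out-degree $\le 1$ whose maximal paths are exactly the $\chain(d,h)$'s (proved above), so the chains rooted at distinct $d$-nodes partition the relevant part of $\UnrejGr$ up to shared tails, and \autoref{lem:d-h-chain-tail} reconciles the shared tails. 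Taking the limit $t\to$ final gives that the proposals to $d'$ form exactly $\prop_d\big(\UnrejGr\setminus\chain(\stabmatch_d(P_d))\big)$ — noting the slight notational quirk that $\prop_d(T)$ here should be read as collecting the second coordinates over all $d'$-nodes, matching the statement.

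The main obstacle I anticipate is making the ``nothing that dislodges $d$ from a stable node ever re-proposes to $d$'' argument fully rigorous — i.e.\ verifying that the set of proposals to $d$ itself that actually occur during $\IPDA(d:P_d, d_{S\setminus\{d\}}:\emptyset)$ is exactly $\prop_d$ of the \emph{rejections}, and that this is consistent with $\stab_d(P_d)$ being unchanged as $d$ climbs. This requires carefully tracking that introducing $P_d$ in place of a singleton list $\{h\}$ does not create spurious proposal orders: one must check that at the moment $d$ rejects some $h'$ and moves up to the next acceptable institution $h''$, the institution $h''$ is genuinely ``free'' to be reclaimed (it was dislodged back toward $d$), which again follows from execution-order independence but needs to be spelled out. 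Modulo this bookkeeping, the lemma follows from the structural facts about $\UnrejGr$ already established plus \autoref{lem:d-h-chain-tail} and \autoref{thrm:da-indep-execution}.
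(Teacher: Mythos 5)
Your second half --- identifying the institution $d$ finally retains with $\stabmatch_d(P_d)$ via the two observations that a held node of $\stab_d(P_d)$ is never dislodged (because $P_d$ prefers it to everything outside its chain, and chain nodes only propose after it is rejected) and that a held node outside $\stab_d(P_d)$ must be dislodged --- is essentially the paper's argument for $m_d(P_d)=\stabmatch_d(P_d)$, and is fine in outline. The genuine gap is in the first half, and it is not mere bookkeeping. Your invariant rests on the per-step claim that when $d$, currently holding $h$, rejects a fresh proposal $h'$, ``the run continues along $\chain(d,h')$.'' But $\chain(d,h')$ is defined from the \emph{terminal} state of $\IPDA(d:\{h'\},\, d_{S\setminus\{d\}}:\emptyset)$, in which $d$ has rejected everything except $h'$; at the intermediate moment you consider, $d$ still holds $h$, so the cascade hanging off $(d,h)$ is suppressed and the tentative matches of applicants outside $S$ differ between the two states. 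Consequently the rejection of $h'$ need not trigger the $\UnrejGr$-successor of $(d,h')$ next; \autoref{lem:d-h-chain-tail} only reconciles the canonical chain-defining runs, not arbitrary intermediate states of the $P_d$-run, and \autoref{thrm:da-indep-execution} by itself gives only set-level (whole-run) conclusions, not the per-rejection attribution your accumulation invariant uses. The same issue infects ``such an $h'$ will propose'' and ``chain nodes only propose after $h^*$ is rejected'' when asserted about the $P_d$-run rather than the canonical runs.

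The missing idea --- which makes all cascade-tracking unnecessary --- is the paper's global reduction: in the completed run of $\IPDA(d:P_d,\, d_{S\setminus\{d\}}:\emptyset)$, every proposal to $d$ other than her final match $m$ is eventually rejected, so this very run is a valid execution of $\IPDA(d:\{m\},\, d_{S\setminus\{d\}}:\emptyset)$ (the nonbossiness-style argument). By execution-order independence, the proposals received by members of $S$ therefore coincide with those of the singleton run, which by the definition of $\UnrejGr$ are exactly $\UnrejGr\setminus\chain(d,m)$ (together with $(d,m)$ for $d$ herself); it then only remains to prove $m=\stabmatch_d(P_d)$, which is where your two dislodgement observations belong --- now stated about the singleton run, where they are justified by the definitions of $\chain$ and $\stab_d$. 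As written, your step-by-step simulation would have to re-derive exactly this reduction at every step, so the proposal does not yet constitute a proof.
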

\begin{proof}
  Let $m_d(P_d)$ denote $d$'s match in $\IPDA(d : P_d, d_{S\setminus\{d\}} : \emptyset)$ (and note that this definition depends on $Q$ and $P_{-S}$, not just on $\UnrejGr$). First, observe that this run of $\IPDA$ also constitutes one valid run of $\IPDA(d : \{ m_d(P_d) \}, d_{S\setminus\{d\}} : \emptyset)$, by the fact that $\IPDA$ is independent of the order in which proposals are chosen (\autoref{thrm:da-indep-execution}), and by the fact that every proposal to $d$ except for $m_d(P_d)$ is eventually rejected in $\IPDA(d : P_d, d_{S\setminus\{d\}} : \emptyset)$. (In other words, whatever ordering of proposals you like under $(d : P_d, d_{S\setminus\{d\}} : \emptyset)$ also corresponds to some ordering of proposals under $(d : \emptyset, d_{S\setminus\{d\}} : \emptyset)$, possibly delaying future proposals from certain $h$ when they propose to $d$. The same argument is used to prove $\IPDA$ is nonbossy, \autoref{thrm:nonbossy-ipda}.)
  Thus, it suffices to show that $m_d(P_d) = \stabmatch_d(P_d)$, and going forward we can consider $\IPDA(d : \{ m_d(P_d) \}, d_{S\setminus\{d\}} : \emptyset)$.

  By definition of $\UnrejGr$, the set of institutions which $d$ rejects in $\IPDA(d : \{ m_d(P_d) \}, d_{S\setminus\{d\}} : \emptyset)$ is exactly $\prop_d(\UnrejGr \setminus \chain(d,m_d(P_d)))$. By the definition of $\IPDA$, we have that $P_d$ must prefer $m_d(P_d)$ to every institution that they reject. Thus, by the definition of $\stab_d(P_d)$, we must have $m_d(P_d) \in \prop_d(\stab_d(P_d))$. 

  To finish the proof, it suffices to show that no $v = (d,h') \vartriangleleft (d, m_d(P_d))$ satisfies $v \in \stab_d(P_d)$. Suppose for contradiction that this were the case. Then, by the definition of $\chain(v)$, we have that $m_d(P_d)$ will only ever propose to $d$ after $d$ rejects $h'$ in some run of $\IPDA(d : P', d_{S\setminus\{d\}} : \emptyset)$. But, since $P_d$ prefers $h'$ to every institution in $\prop_d(\UnrejGr \setminus \chain(d,h'))$, and institutions in $\chain(d,h')$ can only propose to $d$ after $d$ rejects $h'$, we have that $d$ will never reject $h'$ in $\IPDA$. Thus, $m_d(P_d)$ cannot possibly propose to $d$ in $\IPDA(d:P_d, d_{S\setminus\{d\}})$, a contradiction. Thus, we have $(d,m_d(P_d)) = \stabmatch_d(P_d)$, as desired.
\end{proof}

We are now ready to prove the theorem.

\begin{proof}[Proof (of \autoref{thrm:type-to-menu-DA})] 
  Take $S = \{d_*,d_\dagger\}$ in the definition of $\UnrejGr$. Since $\UnrejGr$ has out-degree at most one and at most one vertex for every pair in $S\times\Insts$, it takes $\widetilde O(n)$ bits to represent. Moreover, by \autoref{lem:stabmatch-characterizes-ipda} and \autoref{thrm:ght-main-positive-DA}, for all $P_{d_*} \in \T_{d_*}$ we have 
  \[ \Menu_{d_\dagger}^{\APDA_Q}(P_{d_*}, P_{-S}) = \prop_{d_\dagger}( \UnrejGr \setminus \chain(\stabmatch_{d_*}(P_{d_*}))).
  \] 
  Thus, the map from $P_{d_*}$ to the menu of $d_\dagger$ can be represented using only $\UnrejGr$, which takes at most $\widetilde O(n)$ bits, as desired.

  Note also that $\Omega(n)$ bits are certainly required, since any possible subset of $\Insts$ could be in $d_\dagger$'s menu (even without taking into account $d_*$'s list).
\end{proof}

\subsection{Applications and Relation to \texorpdfstring{\cite{GonczarowskiHT22}}{GHT}}
\label{sec:application-SEDA-1-to-1}

We now explore an application of the combinatorial structure we uncovered in \autoref{sec:type-to-menu-DA}, namely, the un-rejection graph $\UnrejGr$. 
We consider the notion of a ``pairwise menu'', the natural extension of the notion of the menu to a pair of applicants. 
Specifically, we characterize the set of institutions $(h_*,h_\dagger)$ to which a pair of applicants $(d_*,d_\dagger)$ might match to, holding $P_{-\{d_*,d_\dagger\}}$ fixed.
A simple corollary of \autoref{thrm:type-to-menu-TTC} shows that this set of pairs requires $\Omega(n^2)$ bits to represent for $\TTC$, but \autoref{thrm:type-to-menu-DA} implies this requires $\widetilde O(n)$ bits for $\DA$. 
This next theorem shows something even stronger: that this set of pairs has a simple and natural characterization in terms of $\UnrejGr$.

\begin{theorem}
  \label{thrm:pairwise-menu}
  Fix any priorities $Q$, applicants $S = \{d_*, d_\dagger\}$, and preferences and $P_{-S} = P_{- \{d_*,d_\dagger\}}$. Let $\UnrejGr$ be defined with respect to $S$ and $P_{-S}$ as in \autoref{sec:type-to-menu}. Then, for any pair $(h_*, h_\dagger)\in\Insts\times\Insts$, the following are equivalent:
  \begin{enumerate}[label=(\roman*)]
    \itemsep0em
    \item There exists $P_*, P_\dagger$ such that $\mu(d_*)=h_*$ and $\mu(d_\dagger)=h_\dagger$, where $\mu = \APDA_Q(P_*, P_\dagger, P_{-S})$.
    \item $(d_*, h_*)$ and  $(d_\dagger,h_\dagger)$ are nodes in $\UnrejGr$ and are not comparable under $\trianglelefteq$.
  \end{enumerate}
\end{theorem}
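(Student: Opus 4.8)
The plan is to argue both implications entirely in terms of menus, using no property of $\APDA$ beyond strategyproofness. By the taxation principle (\autoref{thrm:TaxationPrinciple}), in $\APDA_Q(P_*,P_\dagger,P_{-S})$ each of $d_*$ and $d_\dagger$ ends up matched to the favorite institution on her menu according to her own report, and by \autoref{thrm:type-to-menu-DA} (together with the fact that the construction of $\UnrejGr$ is symmetric in the two elements of $S$) we have, for all $P_*\in\T_{d_*}$ and $P_\dagger\in\T_{d_\dagger}$,
$\Menu_{d_\dagger}^{\APDA_Q}(P_*,P_{-S}) = \prop_{d_\dagger}(\UnrejGr\setminus\chain(\stabmatch_{d_*}(P_*)))$ and $\Menu_{d_*}^{\APDA_Q}(P_\dagger,P_{-S}) = \prop_{d_*}(\UnrejGr\setminus\chain(\stabmatch_{d_\dagger}(P_\dagger)))$. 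After this translation the statement becomes a purely combinatorial claim about the forest $\UnrejGr$.

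For (ii)$\Rightarrow$(i), I would take $P_*=\{h_*\}$ and $P_\dagger=\{h_\dagger\}$. Since $(d_*,h_*)$ is a node of $\UnrejGr$ and $\{h_*\}$ deems every institution but $h_*$ unacceptable, the definition of $\stab_{d_*}$ gives $\stab_{d_*}(\{h_*\})=\{(d_*,h_*)\}$, so $\stabmatch_{d_*}(\{h_*\})=(d_*,h_*)$, and likewise $\stabmatch_{d_\dagger}(\{h_\dagger\})=(d_\dagger,h_\dagger)$. Plugging in, $d_*$'s menu is $\prop_{d_*}(\UnrejGr\setminus\chain(d_\dagger,h_\dagger))$, which contains $h_*$ precisely because $(d_*,h_*)\in\UnrejGr$ and $(d_\dagger,h_\dagger)\not\trianglelefteq(d_*,h_*)$ (part (ii)); as $h_*$ is $d_*$'s only acceptable institution, strategyproofness forces $d_*$ to match $h_*$. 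Symmetrically $d_\dagger$ matches $h_\dagger$, so $\mu=\APDA_Q(\{h_*\},\{h_\dagger\},P_{-S})$ witnesses (i).

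For (i)$\Rightarrow$(ii), let $\mu=\APDA_Q(P_*,P_\dagger,P_{-S})$ with $\mu(d_*)=h_*$ and $\mu(d_\dagger)=h_\dagger$. Strategyproofness gives $h_*\in\Menu_{d_*}(P_\dagger,P_{-S})=\prop_{d_*}(\UnrejGr\setminus\chain(\stabmatch_{d_\dagger}(P_\dagger)))$ and $h_\dagger\in\Menu_{d_\dagger}(P_*,P_{-S})=\prop_{d_\dagger}(\UnrejGr\setminus\chain(\stabmatch_{d_*}(P_*)))$; in particular $(d_*,h_*)$ and $(d_\dagger,h_\dagger)$ are nodes of $\UnrejGr$, and moreover $\stabmatch_{d_\dagger}(P_\dagger)\not\trianglelefteq(d_*,h_*)$ and $\stabmatch_{d_*}(P_*)\not\trianglelefteq(d_\dagger,h_\dagger)$. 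It remains to show the two nodes are $\trianglelefteq$--incomparable, and the key lemma I would establish for this is: for each $d\in S$, the actual match node satisfies $\stabmatch_d(P_d)\trianglelefteq(d,\mu(d))$. Granting it, if $(d_*,h_*)\trianglelefteq(d_\dagger,h_\dagger)$ then $\stabmatch_{d_*}(P_*)\trianglelefteq(d_*,h_*)\trianglelefteq(d_\dagger,h_\dagger)$, contradicting the non-relation above; the case $(d_\dagger,h_\dagger)\trianglelefteq(d_*,h_*)$ is excluded symmetrically, so the nodes are incomparable and (ii) holds.

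The hard part is this lemma, $\stabmatch_d(P_d)\trianglelefteq(d,\mu(d))$ — a structural claim locating $d$'s $\APDA$--match inside the forest $\UnrejGr$. One should not try to prove it by showing $(d,\mu(d))\in\stab_d(P_d)$: when the other applicant $d'$ reports $P_{d'}$ rather than $\emptyset$, $\mu(d)$ can be pushed strictly below $d$'s worst stable partner in the reduced market (the market with $d'$ absent), so $(d,\mu(d))$ need not belong to $\stab_d(P_d)$ at all. Instead I would trace $\mu(d)$ through the definitions: by strategyproofness $\mu(d)\in\Menu_d(P_{d'},P_{-S})=\prop_d(\UnrejGr\setminus\chain(\stabmatch_{d'}(P_{d'})))$, while (by the argument inside the proof of \autoref{lem:stabmatch-characterizes-ipda}) the set $\prop_d(\UnrejGr\setminus\chain(\stabmatch_d(P_d)))$ is exactly the set of institutions $d$ rejects in $\IPDA(d:P_d,\,d_{S\setminus\{d\}}:\emptyset)$; since $\mu(d)$ is acceptable to $P_d$ and is the $P_d$--best element of $d$'s menu in the full game, the task is to rule out $\mu(d)$ landing in that rejection set. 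This comes down to a nesting fact relating the downward paths $\chain(\stabmatch_{d'}(P_{d'}))$ and $\chain(d,\mu(d))$ in $\UnrejGr$, which is where the forest structure — out-degree at most one, and \autoref{lem:d-h-chain-tail} — does the real work; carefully pinning down this nesting is the delicate core of the argument.
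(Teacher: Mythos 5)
Your (ii)$\Rightarrow$(i) direction and your setup for (i)$\Rightarrow$(ii) — reducing everything to the menu characterization $\Menu_{d_\dagger}(P_*,P_{-S})=\prop_{d_\dagger}\bigl(\UnrejGr\setminus\chain(\stabmatch_{d_*}(P_*))\bigr)$ and the two non-relations $\stabmatch_{d_*}(P_*)\not\trianglelefteq(d_\dagger,h_\dagger)$, $\stabmatch_{d_\dagger}(P_\dagger)\not\trianglelefteq(d_*,h_*)$ — coincide with the paper's proof. The gap is the key lemma you hang the rest on: the claim $\stabmatch_d(P_d)\trianglelefteq(d,\mu(d))$ for \emph{both} $d\in S$ is false, and you also leave its proof open (``the delicate core''). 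It is true for applicant $d$ whenever the other applicant's node $\stabmatch_{d'}(P_{d'})$ is not strictly below $\stabmatch_d(P_d)$, but when $\stabmatch_{d'}(P_{d'})\vartriangleleft\stabmatch_d(P_d)$ the node $\stabmatch_d(P_d)$ is itself deleted from $d$'s menu; $d$ then matches to her $P_d$-best $d$-node outside $\chain(\stabmatch_d(P_d))$, which is incomparable to $\stabmatch_d(P_d)$ (not above it by definition, not below it by $\trianglelefteq$-minimality of $\stabmatch$). Concretely: let $h_1: d_*\succ a$, $h_2: a\succ d_\dagger$, $h_3: d_\dagger$, $a: h_1\succ h_2$, and take $P_*=\{h_1\}$, $P_\dagger: h_2\succ h_3$. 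Then $\UnrejGr$ consists of the edge $(d_*,h_1)\to(d_\dagger,h_2)$ plus the isolated node $(d_\dagger,h_3)$, so $\stabmatch_{d_\dagger}(P_\dagger)=(d_\dagger,h_2)$; but $\APDA$ matches $d_\dagger$ to $h_3$, and $(d_\dagger,h_2)\not\trianglelefteq(d_\dagger,h_3)$. The same example defeats your proof plan for the lemma: $h_3$ is precisely an institution that $d_\dagger$ rejects in $\IPDA(d_\dagger:P_\dagger,\,d_*:\emptyset)$, so ``ruling out $\mu(d)$ landing in that rejection set'' is impossible. With the lemma gone, your argument excludes $(d_*,h_*)\trianglelefteq(d_\dagger,h_\dagger)$ only on the side where the lemma does hold, and cannot exclude the reverse comparability in the case $\stabmatch_{d_*}(P_*)\vartriangleleft\stabmatch_{d_\dagger}(P_\dagger)$.

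The paper closes exactly this case by splitting on whether $\stabmatch_{d_*}(P_*)$ and $\stabmatch_{d_\dagger}(P_\dagger)$ are $\trianglelefteq$-comparable. In the incomparable case its argument is essentially your lemma for both applicants. In the comparable case (say $\stabmatch_{d_*}(P_*)\vartriangleleft\stabmatch_{d_\dagger}(P_\dagger)$) it instead records \emph{positional} information: $(d_*,\mu(d_*))$ lies weakly above $\stabmatch_{d_*}(P_*)$ and strictly below $\stabmatch_{d_\dagger}(P_\dagger)$, while $(d_\dagger,\mu(d_\dagger))$ is incomparable to $\stabmatch_{d_\dagger}(P_\dagger)$; since the set of nodes above any fixed node is a single path (out-degree at most one), comparability of the two match nodes would force $(d_\dagger,\mu(d_\dagger))$ to be comparable to $\stabmatch_{d_\dagger}(P_\dagger)$, a contradiction. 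So the missing ingredient is not a nesting fact about $\chain(\stabmatch_{d'}(P_{d'}))$ versus $\chain(d,\mu(d))$, but an analysis of where both matches sit relative to the higher of the two $\stabmatch$ nodes; your write-up would need to add this case to be complete.
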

\begin{proof}
  ($(ii)\implies(i)$) First, suppose that $(d_*, h_*), (d_\dagger,h_\dagger) \in \UnrejGr$, but are not comparable under $\trianglelefteq$. Consider $P_{d_*} = \{ h_* \}$ and $P_{d_\dagger} = \{ h_\dagger \}$. The results of \autoref{sec:type-to-menu-DA} directly show that for each $i \in \{*,\dagger\}$, we have $\stabmatch_{d_i}(P_i) = h_i$, hence $h_i \in \Menu_{d_i}(P_{S - \{d_i\}}, P_{-S})$ by the fact that $(d_*,h_*)$ and $(d_\dagger,h_\dagger)$ are incomparable, hence $\mu(d_i) = h_i$.

  ($(i)\implies(ii)$) For the second and harder direction, we consider an arbitrary pair $P_{d_*}, P_{d_\dagger}$, and show that whatever $d_*$ and $d_\dagger$ match to in $\APDA$, the corresponding vertices in $\UnrejGr$ cannot be comparable under $\trianglelefteq$.
  To this end, consider any $P_{d_*}, P_{d_\dagger}$ and let $\mu = \APDA(d_* : P_{d_*}, d_\dagger : P_{d_\dagger})$.
  We make the following definitions (where the two rightmost equalities will follow from the strategyproofness of $\APDA$, and the results of \autoref{sec:type-to-menu-DA}):
  \begin{align*}
    h^{\mathsf{unrej}}_{*} 
     &\ \defeq\ %
    \stabmatch_{d_*}(P_{d_*})
    &&\quad\quad\qquad&
    \hmatch_*
      &\ \defeq \ \mu(d_*) = \max_{P_{d_*}} \Big(\prop_{d_*}
    \big(\UnrejGr \setminus \chain(d_\dagger, h^{\mathsf{unrej}}_\dagger) \big)\Big)
    \\
    h^{\mathsf{unrej}}_{\dagger}
     &\ \defeq\ %
    \stabmatch_{d_\dagger}(P_{d_\dagger})
    &&&
    \hmatch_\dagger
      &\ \defeq \ \mu(d_\dagger) = \max_{P_{d_\dagger}} \Big(\prop_{d_\dagger}
    \big(\UnrejGr \setminus \chain(d_*, h^{\mathsf{unrej}}_*) \big)\Big)
  \end{align*}
  For nodes $v,w \in \UnrejGr$, we say that $w$ is \emph{above} $v$ if $v \trianglelefteq w$ (and $w$ is \emph{below} $v$ if $w \trianglelefteq v$).
  In words, $\hmatch_*$ is determined by first removing every node from $\UnrejGr$ which is above $(d_\dagger,\hunrej_\dagger)$, then taking the maximum-ranked $d_*$ node according to $P_{d_*}$; vice-versa holds for $\hmatch_\dagger$.  
  The remainder of the proof proceeds in two cases based on $\hunrej_*$ and $\hunrej_\dagger$.

  \textbf{(Case 1: $\hunrej_*$ and $\hunrej_\dagger$ are incomparable.)} Suppose that neither $\hunrej_* \trianglelefteq \hunrej_\dagger$ nor $\hunrej_\dagger \trianglelefteq \hunrej_*$. By the definition of $\stabmatch_d$, we have that $\hunrej_* \succeq_{d_*}^{P_{d_*}} h$ for all $h \in \prop_{d_*}(\UnrejGr)$ where we do not have $(d_*, \hunrej_*)\trianglelefteq (d_*, h)$.
  On the other hand, compared to $\prop_{d_*}(\UnrejGr)$, the set $\Menu_{d_*}(P_\dagger,P_{-S})$ only removes some subset $\prop_{d_*}(S)$, where $S$ is some (possibly empty) subset of nodes strictly above $(d_*, h)$ according to $\trianglelefteq$. So $(d_*, \hmatch_*)$ will be weakly above $(d_*, \hunrej_*)$ and strictly below any common upper bound of $(d_*,\hunrej_*)$ and $(d_\dagger,\hunrej_\dagger)$, if such an upper bound exists. Dually, $(d_\dagger, \hmatch_\dagger)$ will be weakly above $(d_*, \hunrej_*)$ and strictly below any common upper bound of $(d_*,\hunrej_*)$ and $(d_\dagger,\hunrej_\dagger)$. Thus, $(d_*,\hmatch_*)$ and $(d_\dagger,\hmatch_\dagger)$ cannot be comparable in $\UnrejGr$, as desired. This case is illustrated in \autoref{fig:one-to-one-SEDA-incomparable.pdf}.

\begin{figure}[tbph]
    \centering
  \begin{minipage}[c]{\textwidth}
  \begin{minipage}[c]{0.45\textwidth}
    \centering
    \includegraphics[width=\textwidth]{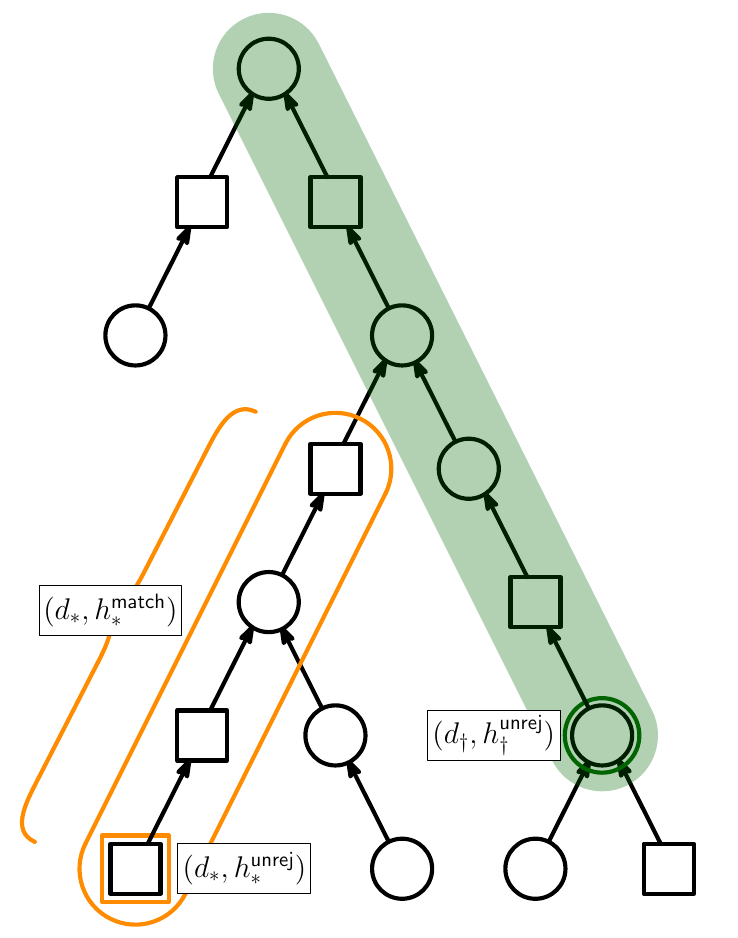}
  \end{minipage}
  \qquad
  \begin{minipage}[c]{0.45\textwidth}
    \centering
    \includegraphics[width=\textwidth]{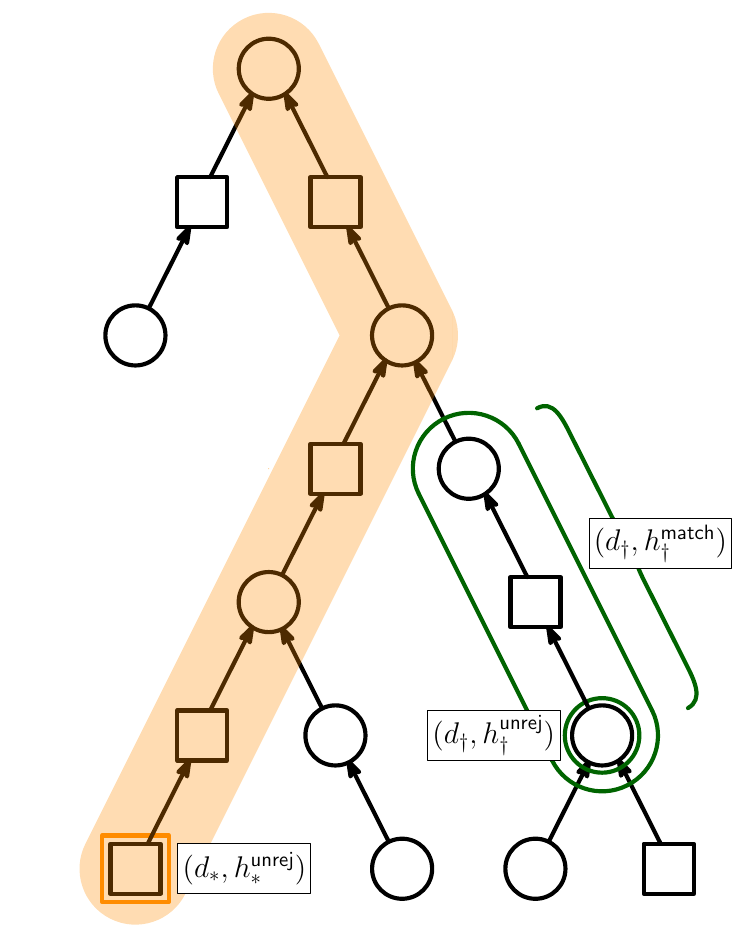}
  \end{minipage}
  \end{minipage}
  \caption{Case 1 of the proof of \autoref{thrm:pairwise-menu}.}
    \label{fig:one-to-one-SEDA-incomparable.pdf}
\end{figure}
\begin{figure}[tbph]
  \begin{minipage}[c]{\textwidth}
  \begin{minipage}[c]{0.45\textwidth}
    \centering
    \includegraphics[width=\textwidth]{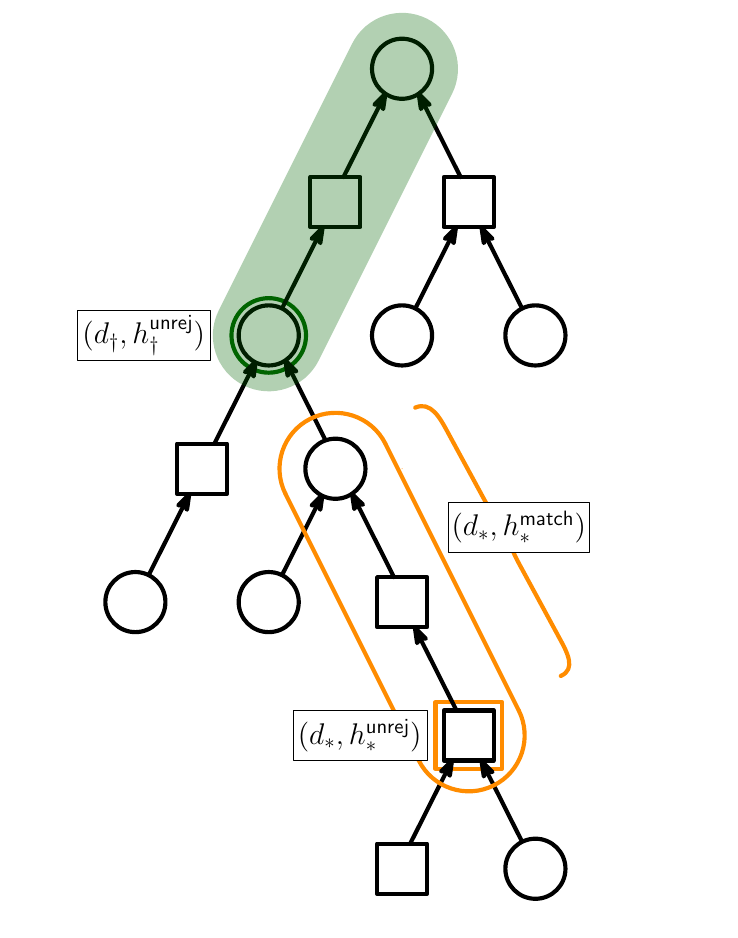}
  \end{minipage}
  \qquad
  \begin{minipage}[c]{0.45\textwidth}
    \centering
    \includegraphics[width=\textwidth]{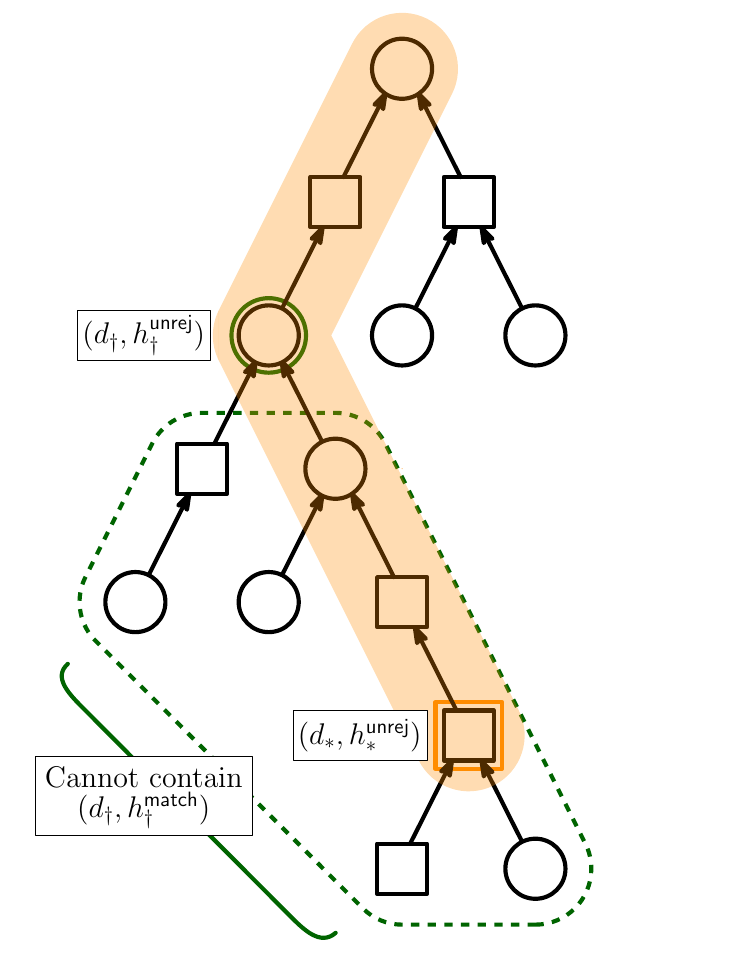}
  \end{minipage}
  \end{minipage}
  \caption{Case 2 of the proof of \autoref{thrm:pairwise-menu}.}
    \label{fig:one-to-one-SEDA-comparable.pdf}
\end{figure}

  \textbf{(Case 2: $\hunrej_*$ and $\hunrej_\dagger$ are comparable.)} Suppose without loss of generality that $\hunrej_* \trianglelefteq \hunrej_\dagger$. By the same logic as in the previous case, we have that $(d_*,\hmatch_*)$ will be weakly above $(d_*,\hunrej_*)$ and strictly below $(d_\dagger,\hunrej_\dagger)$. However, in this case, neither $\hunrej_\dagger$ nor any institution in $\prop_{d_\dagger}(\chain(d_\dagger,\hunrej_\dagger))$ will be in $\Menu_{d_\dagger}(P_*,P_{-S})$.
  Thus, consider the favorite $d_\dagger$ node according to $P_\dagger$ which is outside of $\chain(d_\dagger,\hunrej_\dagger)$, and call the corresponding institution $h^{\mathsf{second}}$. That is:
  \[ h^{\mathsf{second}} = \max_{P_{d_\dagger}}
    \left(\prop_{d_\dagger}
    \big(\UnrejGr \setminus \chain(d_\dagger, \hunrej_\dagger) \big)\right). \]
  Observe that we cannot have $(d_\dagger, h^{\mathsf{second}}) \trianglelefteq (d_\dagger, \hunrej_\dagger)$, as then we would have $\hunrej_\dagger \in \stab_{d_\dagger}(P_\dagger)$, contradicting the definition of $\stabmatch_{d_\dagger}(P_\dagger) = \hunrej_\dagger$. Since every node in $\chain(d_*, \hunrej_*)$ is comparable to $(d_\dagger, \hunrej_\dagger)$ in this case, $(d_\dagger, h^{\mathsf{second}}) \in \UnrejGr \setminus \chain(d_*, \hunrej_*)$, and thus $h^{\mathsf{second}} \in \Menu_{d_\dagger}(P_*, P_{-S})$. Thus, in fact we have $\hmatch_\dagger = h^{\mathsf{second}}$. All told, $(d_*, \hmatch_*)$ will be below $(d_\dagger, \hunrej_\dagger)$, but that $(d_\dagger, \hmatch_\dagger)$ can neither be below nor above $(d_\dagger, \hunrej_\dagger)$. Thus, $(d_*,\hmatch_*)$ and $(d_\dagger,\hmatch_\dagger)$ cannot be comparable in $\UnrejGr$, as desired. This case is illustrated in \autoref{fig:one-to-one-SEDA-comparable.pdf}.
\end{proof}

Now, having proven this characterization, we discuss a connection between the results in  \autoref{sec:type-to-menu-DA} and the work of \cite{GonczarowskiHT22}.
We first briefly recall the relevant results of \cite{GonczarowskiHT22} and their motivation.
First, we rephrase \autoref{thrm:ght-main-positive-DA} in terms of a way to describe (to each applicant separately) their match in $\APDA$.

\begin{definition}[Equivalent to {\cite[Description 1]{GonczarowskiHT22}}]
\label{def:seda}
  For any $d \in \Appls$, define $D_d : \T \to \Insts \cup \{\emptyset\}$ as follows:\footnote{
    \cite{GonczarowskiHT22} phrase their ``Description 1'' in terms of running $\IPDA$ in the market not including $d$ at all. However, it is immediate to see that their definition is equivalent to this one.
  }
  \[ D^Q_d(P) = \max_{P_d} \left\{ h \in \Appls
    \ \big|\ \text{$d$ receives a proposal from $h$ in $\IPDA_Q(d : \emptyset, P_{-d})$} \right\}.
  \]
\end{definition}

Then, by \autoref{thrm:ght-main-positive-DA} and the fact that $\APDA$ is strategyproof, for each $d\in\Appls$, priorities $Q$, and preferences $P$, we have $\APDA_d^Q(P) = D^Q_d(P)$.

\cite{GonczarowskiHT22} are interested in $D^Q_d(\cdot)$ as an alternative way to describe $\APDA$ that might make the strategyproofness of $\APDA$ more clear. 
Indeed, to see that $D^Q_d(\cdot)$ is strategyproof, all $d$ has to observe is that her own report cannot effect the set of proposals she receives in $\IPDA_Q(d : \emptyset, P_{-d})$, and that submitting her true preference ranking $P_d$ always matches her to her highest-ranked obtainable institution.
As discussed in \autoref{sec:gtc-apda}, \cite{GonczarowskiHT22} also prove that (in a perhaps surprising contrast to $\TTC$) traditional descriptions of $\APDA$ cannot suffice to obtain a description whose strategyproofness is easy to see in this way.

However, note that in contrast to traditional descriptions of mechanisms, $D^Q_d$ does not describe the matching of agents other than $d$. In particular, the matching $\IPDA^Q(d:\emptyset, P_{-d})$ does not give the match of agents other than $d$, and the match of another applicant $d'$ is described by $D^Q_{d'}$. This may raise concerns with the description $D^Q_d$: while an agent can easily observe strategyproofness, she cannot easily see that, for example, the matching that results from this description is a feasible (one-to-one) matching, something that is clear under the traditional description.
In fact, \cite{GonczarowskiHT22} go on to formalize a tradeoff between conveying the menu (and hence strategyproofness) or conveying the matching (and hence one-to-one); we defer to \cite[Section 6]{GonczarowskiHT22} for details.

Interestingly, our Theorems~\ref{thrm:type-to-menu-DA} and~\ref{thrm:pairwise-menu} can serve to address this tradeoff to some theoretical degree. Namely, we show next that our theorems provide a direct way for applicants to observe that, if $D^Q_d(\cdot)$ is run separately for each applicant $d$, then no pair of applicants will match to the same institution. In particular, observe that every aspect of the proof of Theorems~\ref{thrm:type-to-menu-DA} and~\ref{thrm:pairwise-menu} could have been directly phrased in terms of $D^Q_{d_*}, D^Q_{d_\dagger}$, since these results exclusively argue about runs of $\IPDA$ of the form given by $D^Q_{d_*}, D^Q_{d_\dagger}$. Thus, the characterization of the menu (and ``pairwise menu'') provided by Theorems~\ref{thrm:type-to-menu-DA} and~\ref{thrm:pairwise-menu} also characterizes the set of institutions appearing in \autoref{def:seda}. This allows us to prove the following:

\begin{corollary}
  For any priorities $Q$ and preferences $P$, and any two applicants $d_*$ and $d_\dagger$, we have $D^Q_{d_*}(P) \ne D^Q_{d_\dagger}(P)$.
\end{corollary}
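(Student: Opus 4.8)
The plan is to route the argument through \autoref{thrm:pairwise-menu} and the combinatorics of $\UnrejGr$, so that the conclusion comes out of the same structure that governs the pairwise menu rather than from feasibility of $\APDA$ directly. First I would fix $S=\{d_*,d_\dagger\}$ and build $\UnrejGr$ from $Q$ together with the restriction $P_{-S}$ of the given profile $P$. Recalling (from the discussion just before the corollary) that $\APDA^Q_d(P)=D^Q_d(P)$ for every $d$, I would note that when the proof of \autoref{thrm:pairwise-menu} is instantiated with $P_{d_*}$ and $P_{d_\dagger}$ equal to the true lists in $P$, the quantities $\hmatch_*$ and $\hmatch_\dagger$ appearing there are literally $D^Q_{d_*}(P)$ and $D^Q_{d_\dagger}(P)$: both are ``the $P_d$-favorite institution in $d$'s menu'', and \autoref{thrm:type-to-menu-DA} describes that menu in terms of $\UnrejGr$ in exactly the form \autoref{def:seda} uses to define $D^Q_d$ (via \autoref{thrm:ght-main-positive-DA}).

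Next I would argue by contradiction: suppose $D^Q_{d_*}(P)=D^Q_{d_\dagger}(P)=h$ for some $h\in\Insts$. Then the pair $(h,h)$ is realized by $\APDA_Q(P_{d_*},P_{d_\dagger},P_{-S})$, so the $(i)\Rightarrow(ii)$ direction of \autoref{thrm:pairwise-menu} gives that $(d_*,h)$ and $(d_\dagger,h)$ are both nodes of $\UnrejGr$ and incomparable under $\trianglelefteq$. To close, I would establish the structural lemma that for each fixed institution $h$ the set of $h$-nodes of $\UnrejGr$ is totally ordered by $\trianglelefteq$, which contradicts that incomparability. (One could alternatively close by invoking the $(ii)\Rightarrow(i)$ direction to produce a profile under which $\APDA$ matches two applicants to $h$, contradicting that $\APDA$ outputs a matching; but the point of this part of the section is to reach the conclusion from the $\UnrejGr$-structure itself, so I would prefer the structural lemma. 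The case $D^Q_{d_*}(P)=D^Q_{d_\dagger}(P)=\emptyset$ is vacuous, as neither applicant is assigned an institution.)

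For the structural lemma I would use that every maximal path of $\UnrejGr$ is some $\chain(d,h)$ and, by the delayed-proposals argument already used in \autoref{lem:d-h-chain-tail} and to prove nonbossiness of $\IPDA$ (\autoref{thrm:nonbossy-ipda}), that tracing any $\chain(d,h)$ realizes a valid run of $\IPDA$ with every applicant of $S$ rejecting all proposals. By order-independence of $\IPDA$ (\autoref{thrm:da-indep-execution}), across all such runs a given institution $h$ proposes to one and the same set of applicants, and in order of its priority list. Hence if $(d,h)$ is a node of $\UnrejGr$ — so $h$ proposes to $d$ — then in the continuation that un-rejects $h$ from $d$, institution $h$ goes on to propose to every lower-priority applicant of $S$ it ever reaches, and these proposals occur later in $\chain(d,h)$; this yields $(d,h)\trianglelefteq(d',h)$ for each $h$-node $(d',h)$ with $d'$ of lower priority, and symmetrically $(d',h)\trianglelefteq(d,h)$ when $d'$ has higher priority. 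So any two $h$-nodes of $\UnrejGr$ are comparable.

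The substantive work is entirely in the last step: the bare equality $D^Q_d=\APDA^Q_d$ together with feasibility of $\APDA$ would prove the corollary in one line, but that does not match the narrative, so the real task is to obtain ``incomparable $\UnrejGr$-nodes carry distinct institutions'' purely combinatorially. The main obstacle there is the bookkeeping needed to treat the union of the chains $\chain(d,h)$ as a single coherent execution of $\IPDA$ with all of $S$ rejecting everything — i.e., to see that the order in which $h$'s proposals to members of $S$ appear across different chains is consistent — which is exactly where order-independence of $\IPDA$ is used, and where \autoref{lem:d-h-chain-tail} and \autoref{lem:stabmatch-characterizes-ipda} do the heavy lifting.
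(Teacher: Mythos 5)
Your proposal is correct and takes essentially the paper's approach: both arguments go through \autoref{thrm:pairwise-menu} together with the key structural fact that two nodes $(d_*,h)$ and $(d_\dagger,h)$ of $\UnrejGr$ sharing the same institution must be comparable under $\trianglelefteq$ (ordered by $h$'s priority), which the paper asserts directly from the definitions of $\trianglelefteq$ and $\chain(\cdot)$ and you justify in somewhat more detail via chains being valid runs of $\IPDA$ and order-independence. The only cosmetic difference is that you frame the argument as a contradiction using the $(i)\Rightarrow(ii)$ direction, whereas the paper applies that same direction contrapositively.
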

\begin{proof}
  Consider $\UnrejGr$ defined with respect to $S = \{d_*,d_\dagger\}$ and $P_{-\{d_*,d_\dagger\}}$.
  Observe that for any fixed $h \in \prop_{d_*}\UnrejGr\ \bigcap\ \prop_{d_\dagger}\UnrejGr$, if we have $d_* \succ_h d_\dagger$, then we must have $(d_*,h)\trianglelefteq (d_\dagger,h)$ by the definition of $\trianglelefteq$ and $\chain(\cdot)$.
  \autoref{thrm:pairwise-menu} then shows that $d_*$ and $d_\dagger$ will not simultaneously match to $h$ under any possible $P_{d_*}$ and $P_{d_\dagger}$.
  Thus, $d_*$ and $d_\dagger$ cannot simultaneously match to the same institution according to $D^Q_{d_*}, D^Q_{d_\dagger}$, as desired.
\end{proof}

While these results are complicated, when phrased in terms of $D^Q_{d_*}$ and $D^Q_{d_\dagger}$, they \emph{only} rely on the fact that $\IPDA$ is independent of the order in which proposals are chosen (\autoref{thrm:da-indep-execution}), since the usage of the strategyproofness of $\APDA$ in the proof of \autoref{thrm:pairwise-menu} is replaced with the use of $\max_{P_d}$ directly in the definition of $D^Q_d$.
Thus, the fact that $\{ D^Q_d \}_{d\in\Appls}$ gives a one-to-one matching can be directly proven from the way the description $D^Q_d$ itself operates.
That being said, we cannot imagine how such a proof could be relayed to laypeople in its current form, so this does not diminish the message of \cite{GonczarowskiHT22} in terms of the tradeoff in explainability to real-world participants.

\section{Concurrent Representation Complexity}
\label{sec:compression-complexity}

\subsection{Model}
\label{sec:protocol-models}

We now switch gears, and investigate the structural complexity of communicating the matching (after all reports are known). 
Recall our \nameref{ppg:representation} Question: \questionTextRepresentation{} 
Investigating this questions requires a novel type of protocol in a novel model, which we now discuss.
Recall that the set of applicants is $\Appls$ and the institutions is $\Insts$.
Our model has two major components, which we phrase as two assumptions:
\begin{enumerate}[label=(Assumption \arabic*),ref=Assumption \arabic*,leftmargin=7.5em]
  \itemsep0em
  \item We assume there are many more applicants than institutions, specifically, $|\Appls| \ge |\Insts|^C$ for some fixed, arbitrary $C > 1$.
    \label{item:assumption-imbalanced}
  \item We assume the priority lists of the institutions are a fixed piece of data that defines the mechanism being used, and are thus known to all applicants.
    \label{item:assumption-known-priorities}
\end{enumerate}
See \autoref{fig:model-sizes-etc} for an illustration.
To begin, we discuss why these assumptions are necessary to capture the insights we seek.

\textbf{(\ref{item:assumption-imbalanced})} 
First, we consider a many-to-one matching market with $|\Appls|\gg |\Insts|$. 
This means that each $h \in \Insts$ can match with multiple distinct applicants, up to some positive integer \emph{capacity} $q_h$. See \autoref{sec:additional-prelims} for the standard extension of the definitions of all the matching mechanisms we consider to many-to-one markets (which simply consider $q_h$ identical copies of $h$ for each $h \in \Insts$).

\begin{figure}[tbp]
  \begin{center}
  \includegraphics[width=\textwidth]{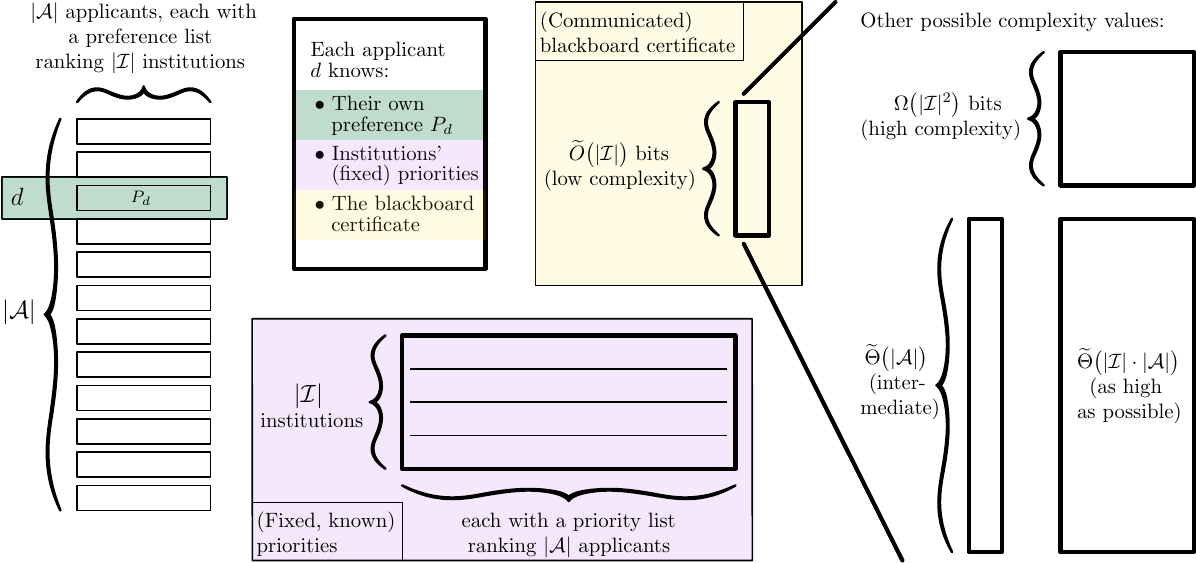}
  \end{center}
  \caption{ Illustration of the different components in our model.   }
  \vspace{0.5em}
    {\footnotesize \textbf{Notes:}
    There is a set of applicants $\Appls$ and a set of institutions $\Insts$, where $|\Appls|\gg |\Insts|$. 
    The complexity of a protocol is the maximum number of bits that need to be written on the blackboard, i.e., the logarithm of the number of certificates the protocol might require.
    We think of an $\widetilde O\bigl(|\Insts|\bigr)$-bit protocol as potentially practical / low complexity, and of an $\Omega\bigl(|\Insts|^2\bigr)$-bit protocol as impractical / high complexity.
    A complexity of $\widetilde \Theta(|\Appls|)$ can be thought of as intermediate: it may be much less or much more than $|\Insts|^2$ depending on the relative values of $|\Appls|$ and $|\Insts|$.
    \par }
  \label{fig:model-sizes-etc}
\end{figure}

While having $|\Appls|\gg|\Insts|$ is very natural, it's somewhat unlikely that real markets like those in school choice will literally have $|\Appls|$ scaling like $|\Insts|^{C}$ asymptotically; for instance, this implicitly assumes that as the number of students increases, more and more students will attend each school.
However, this stylized assumption serves to unambiguously nail down whether a given complexity measure depends on $|\Appls|$ or $|\Insts|$.
For example, if we consider balanced markets with $n = |\Appls| = |\Insts|$, then the questions regarding concurrent representation (\autoref{sec:compression-complexity}) become trivial: an $\widetilde O(n)$-bit representation is achievable simply by writing down the matching, and an $\Omega(n)$ lower bound is not hard to come up with, missing much of the structural ``action.''

\textbf{(\ref{item:assumption-known-priorities})} Second, we assume that the priorities of the institutions are fixed and part of the matching rule $f$ (e.g., $f = \APDA_Q$ with priorities $Q$). This means that each applicant knows ahead of time a $\widetilde\Theta\bigl(|\Appls|\cdot|\Insts|\bigr)$ data structure that contains every institutions' priority rankings over every student.
It is not hard to show that, if the applicants know nothing about the priorities, then the questions of representation (\autoref{sec:compression-complexity}) become trivial, and the optimal protocols for both $\TTC$ and $\DA$ require $\widetilde\Theta\bigl(|\Appls|\bigr)$ bits.\footnote{
For example, consider two institutions, $h^T$ and $h^B$, each with capacity $k$. Suppose there are $2k$ applicants, and every applicant prefers $h^T$ to $h^B$. Then in both mechanisms, the top half of the applicants will be matched to $h^T$. But then, to describe every applicant's match, at least one bit per applicant is required to tell applicants whether they are in the top half or the bottom half of $h^T$'s priority list.
}
Moreover, regarding verification (\autoref{sec:verif-complexity}), \cite{Segal07,GonczarowskiNOR19} prove a $\Omega\bigl(|\Insts|^2\bigr)$ lower bound if the priorities must be communicated.

Assuming that the priorities are common knowledge is a natural stylized version of real markets such as school choice mechanisms, where the priorities are determined by pre-committed policies which are typically publicly available information for each school. Moreover, assuming common knowledge of the priorities only makes our lower bounds stronger. But the number one reason we make this modeling assumption is that it is the most \emph{generic} way to handle the fact that some knowledge of the priorities must be present before the mechanism is run: we do not consider the institutions to be agents actively participating in the protocol or reporting their priorities, so we treat these priorities as fixed.\footnote{
  We also remark that, to keep our model and paper theoretically cohesive, all of the lower bounds in Section~\ref{sec:gtc} and~\ref{sec:type-to-menu} construct a fixed set of priorities where the lower bound holds, so these results also hold for the model where priorities are common knowledge. In contrast, our positive results for the options-effect complexity of $\DA$ in \autoref{sec:type-to-menu-DA} hold even when the priorities are not known in advance by the applicants.
}

Despite the previous paragraph, we note that all of our protocols are also possible under a different modeling assumption. This is the model where priorities are described by ``scores'' $e_h^d \in [0,1]$ and $d \succ_h d'$ if and only if $e_h^d > e_h^{d'}$, as mentioned in \autoref{sec:intro}. All of the protocols we construct can be implemented in the same complexity if each applicant $d$ starts off knowing (and trusting) her scores $\{ e_h^d \}_{h \in \Insts}$ at each institutions.
Thus, each applicant only needs to start off knowing a $\widetilde O(|\Insts|)$ data structure, not an $\Omega(|\Insts|\cdot|\Appls|)$ data structure.
We give the details in \autoref{remark:score-based-model}.

\subsection{Complexity of \texorpdfstring{$\DA$ and $\TTC$}{DA and TTC}}

Having established our general model in \autoref{sec:protocol-models}, we now define our protocols and complexity measures capturing how hard a mechanism is to represent.
In particular, we study the task of posting some blackboard certificate $C$ such that each applicant $d$ can figure out their own match using $C$ (as well as their privately known preferences $P_d$, and for the priority-based mechanisms $\TTC$ and $\DA$, also the publicly known priorities $Q$).

\begin{definition}
  \label{def:representation-protocol}
  A \emph{(concurrent) representation protocol} of a matching mechanism $f$ is a profile of functions $(\match_d)_{d \in \Appls}$, where for each $d$, $\match_d : \T_d \times \Certs \to \Insts$ for some set $\Certs$, with the following property: for every $t=(P_d)_{d\in\Appls}\in\T$, there exists $C \in \Certs$ such that if $\mu = f(t)$, then $\mu(d) = \match_d(P_d, C)$ for each $d\in\Appls$. 
  We call an element $C \in \Certs$ a \emph{certificate}, and if $\mu(d) = \match_d(P_d,C)$ for all $d\in\Appls$, we say that $\mu$ is the matching \emph{induced by} certificate $C$ in the protocol.
 
  The \emph{cost} of a representation protocol is $\log_2|\Certs|$.
  The \emph{(concurrent) representation complexity} of a mechanism $f$ is the minimum of the costs of all concurrent representation protocols for $f$. 
\end{definition}

Observe that there is always a trivial approach to constructing a concurrent representation protocol, by simply unambiguously describing the full matching in the certificate $C \in \Certs$. This amounts to writing the institution each applicant is matched to, using $|\Appls|\log|\Insts| = \widetilde O\bigl(|\Appls|\bigr)$ bits.

For stable matching mechanisms, known results imply that a large savings over this trivial solution is possible in any situation where $|\Appls|\gg|\Insts|$. In fact, the representation used is simply a restatement of the definition of stability, and thus suffices to represent any stable matching (regardless of the mechanism $f$), and uses only $\widetilde O\bigl(|\Insts|\bigr)$ bits. This is somewhat remarkable: the match of each applicant can be simultaneously conveyed to the applicants in less than a single bit per applicant, assuming applicants have complete access to the priorities.
To conveniently state and discuss this result, we state the following definition and lemma:
\begin{definition}
  \label{def:stable-budget-set}
  Given priorities $Q$ and a matching $\mu$, define an applicant $d$'s \emph{stable budget set} as: \begin{align*}
    \StabB^Q_d(\mu) = \big\{ h \in \Insts \ \big|\ 
       & \text{either $|\mu(h)| < q_h$ and $d \succ_h \emptyset$,}
    \\ & \text{or $|\mu(h)|=q_h$ and there is some $d' \in \mu(h)$ such that $d \succ_h d'$} \big\}.
  \end{align*}
\end{definition}
In particular, for one-to-one matching markets, $\StabB^Q_d(\mu)$ is the set of all $h$ such that $d \succ_h \mu(h)$, and \autoref{def:stable-budget-set} is the natural extension of this definition to many-to-one markets.
\begin{observation}
  \label{thrm:representing-DA-cutoff-lemma}
  If $\mu$ is any stable matching, then every applicant $d$ is matched to the institution they rank highest in the set $\StabB^Q_d(\mu)$.%
\footnote{
This observation is not novel. Besides simply restating the definition of stability, this observation, as well as the concurrent representation protocol constructed in \autoref{thrm:representing-DA}, is essentially equivalent to the ``market-clearing cutoffs'' characterization of stable matchings in \cite{AzevedoL16}.
}
\end{observation}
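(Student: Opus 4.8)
The plan is to derive this directly from the definition of stability, treating the statement (as the footnote suggests) as essentially a reformulation of ``no blocking pair.'' Fix a stable matching $\mu$ and an applicant $d$, and write $h^* = \mu(d)$. There are two things to show: (i) $h^*$ itself lies in $\StabB^Q_d(\mu)$, with the understanding that if $h^* = \emptyset$ then no institution that $d$ finds acceptable lies in $\StabB^Q_d(\mu)$; and (ii) $d$ ranks $h^*$ above every other institution of $\StabB^Q_d(\mu)$. Together these say exactly that $d$ is matched to her $\succ_d$-favorite element of $\StabB^Q_d(\mu)$.

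For (i), I would first invoke individual rationality, which follows from stability (an institution never retains an applicant it finds unacceptable), so $d \succ_{h^*} \emptyset$ whenever $h^* \ne \emptyset$. Since $d \in \mu(h^*)$, the seats of $h^*$ occupied by applicants \emph{other} than $d$ number at most $q_{h^*}-1 < q_{h^*}$, so reading the capacity condition of \autoref{def:stable-budget-set} relative to $d$'s competitors (the standard convention; in the one-to-one parenthetical this makes $\mu(d)$ affordable to $d$ automatically), the first clause places $h^*$ in $\StabB^Q_d(\mu)$. If instead $h^* = \emptyset$ and some institution $h$ that $d$ finds acceptable were in $\StabB^Q_d(\mu)$, then $(d,h)$ would form a blocking pair: $d$ prefers $h$ to being unmatched, and membership of $h$ in $\StabB^Q_d(\mu)$ means $h$ would admit $d$ (either filling a vacancy with an acceptable applicant, or displacing some $d' \in \mu(h)$ with the higher-priority $d$), contradicting stability.

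For (ii), suppose toward a contradiction that some $h \in \StabB^Q_d(\mu)$ with $h \ne h^*$ satisfies $h \succ_d h^*$. Then $(d,h)$ is again a blocking pair: $d$ prefers $h$ to $\mu(d)$ (note $h$ is acceptable to $d$, being ranked above $h^*$, which is acceptable by (i), or above $\emptyset$), and by definition of $\StabB^Q_d(\mu)$ the institution $h$ would admit $d$ over its current assignment (vacant seat with $d$ acceptable, or capacity reached but $d \succ_h d'$ for some $d' \in \mu(h)$). This contradicts stability. Hence no institution of $\StabB^Q_d(\mu)$ is ranked above $h^*$, and combined with (i) this completes the argument.

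I do not expect a genuine obstacle here: the content is a direct unpacking of stability. The only care required is bookkeeping in the many-to-one setting --- handling capacities, vacant seats, and unacceptable partners correctly, plus the boundary case of an unmatched applicant --- and fixing the natural reading of the ``at capacity'' clause so that an applicant's own match is automatically in her stable budget set. The one-to-one special case remarked after \autoref{def:stable-budget-set} is then just the $q_h = 1$ instance, where the argument specializes to the familiar ``$\mu$ has no blocking pair'' reasoning.
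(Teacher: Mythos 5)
Your argument is correct and takes essentially the same route as the paper, whose entire proof is the one-line observation that if $d$ strictly preferred some $h \in \StabB^Q_d(\mu)$ to $\mu(d)$, then $(d,h)$ would block $\mu$, contradicting stability. Your extra bookkeeping---verifying that $\mu(d)$ itself belongs to the budget set under the natural reading (comparing against $d$'s competitors / weak priority at her own seat) and treating the unmatched case---is care the paper's proof leaves implicit, but it does not change the substance.
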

\begin{proof}
  This follows directly from the definition of a stable matching: if $d$ preferred some $h \in \StabB^Q_d(\mu)$ to her match in $\mu$, then $(d,h)$ would block $\mu$ and $\mu$ could not possibly be stable. 
\end{proof}

\begin{observation}
  \label{thrm:representing-DA}
  The concurrent representation complexity of any stable matching mechanism (including $\APDA$ and $\IPDA$) is $\widetilde\Theta\bigl(|\Insts|\bigr)$.
\end{observation}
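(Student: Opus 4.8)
The plan is to establish two matching bounds: a representation protocol costing $\widetilde O(|\Insts|)$ bits, and an $\widetilde\Omega(|\Insts|)$ lower bound that already applies to $\APDA$ and $\IPDA$ (indeed to every stable mechanism). For the upper bound I would turn the stable-budget-set characterization (\autoref{thrm:representing-DA-cutoff-lemma}) into a short certificate, essentially the ``market-clearing cutoffs'' of \cite{AzevedoL16}. Given the realized profile $t$, set $\mu = f(t)$, which is stable since $f$ is a stable mechanism. The certificate is a vector of cutoffs $(c_h)_{h\in\Insts}$: if $|\mu(h)| < q_h$, let $c_h$ be the position of $\emptyset$ in the fixed priority list $Q_h$; otherwise let $c_h$ be the position in $Q_h$ of the lowest-priority applicant assigned to $h$ in $\mu$. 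Each $c_h$ is an index in $\{1,\dots,|\Appls|+1\}$, so the certificate uses $|\Insts|\cdot\widetilde O(1) = \widetilde O(|\Insts|)$ bits. To decode, applicant $d$ --- who knows $P_d$ and, by \ref{item:assumption-known-priorities}, all of $Q$ --- forms the set $B_d$ of institutions $h$ at which $d$ is ranked at least as high as $c_h$ in $Q_h$ (so that, for under-subscribed $h$, this says exactly that $h$ finds $d$ acceptable), and outputs her $\succ_d^{P_d}$-favorite element of $B_d$ that she herself finds acceptable, or $\emptyset$ if none. Correctness is the standard stability argument: $\mu(d) \in B_d$ since $d \in \mu(\mu(d))$, and if $h \succ_d \mu(d)$ then stability forces $h$ to be full with every assigned applicant ranked above $d$ (else $(d,h)$ blocks $\mu$), so $h \notin B_d$; hence $d$'s favorite acceptable element of $B_d$ is precisely $\mu(d)$. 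This protocol is oblivious to which stable mechanism produced $\mu$, so it handles all of them.

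For the lower bound I would build a family of $2^{\Omega(|\Insts|)}$ instances that any protocol must keep on distinct certificates, then count. Fix $m$ and take $|\Insts| = 2m$ capacity-$1$ institutions $h_i^0, h_i^1$ for $i \in [m]$, plus $|\Insts|^C$ ``dummy'' applicants with empty preference lists (present only to meet \ref{item:assumption-imbalanced}; they are unmatched throughout). For each $i$, add a ``chooser'' $d_i$ with the \emph{fixed} list $h_i^0 \succ h_i^1$ and a ``blocker'' $e_i$, and fix the priorities so that both $h_i^0$ and $h_i^1$ rank $e_i$ above $d_i$ (and above every dummy). Index the family by $b \in \{0,1\}^m$: in instance $t^{(b)}$, the blocker $e_i$ reports the singleton list $\{h_i^{b_i}\}$; all other reports are as fixed. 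The $m$ gadgets use disjoint applicants and institutions and do not interact, so $\DA$ (be it $\APDA$ or $\IPDA$) decomposes across them, and a one-line check within gadget $i$ shows $d_i$ is matched to $h_i^{1-b_i}$ (if $b_i = 1$, $d_i$ keeps $h_i^0$; if $b_i = 0$, $e_i$ displaces $d_i$ from $h_i^0$ onto $h_i^1$); moreover each gadget has a unique stable matching, so every stable mechanism produces this outcome.

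To finish: if one certificate $C$ were valid for two instances $t^{(b)} \ne t^{(b')}$, pick $i$ with $b_i \ne b_i'$; since $d_i$'s report $h_i^0 \succ h_i^1$ is identical in both, the single value $\match_{d_i}(P_{d_i}, C)$ would have to equal both $h_i^{1-b_i}$ and $h_i^{1-b_i'}$, which is impossible. Hence distinct $b$ require distinct certificates, so $|\Certs| \ge 2^m$ and the representation complexity is $\ge m = \Omega(|\Insts|)$; with the upper bound this yields $\widetilde\Theta(|\Insts|)$. I expect the only subtle points to be bookkeeping: choosing the cutoff convention (weak versus strict rank comparison) so that $\mu(d)$ is always in $B_d$ and the under-subscribed and entirely-unmatched cases are handled, and checking that the lower-bound gadgets really are non-interacting with $\APDA$ and $\IPDA$ agreeing on each --- nothing conceptually deep.
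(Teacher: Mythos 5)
Your proposal is correct, and both halves line up closely with the paper's argument. The upper bound is essentially identical: the paper's certificate records, for each institution $h$, the lowest-priority applicant $d^{\mathsf{min}}_h$ matched to $h$ (or $\emptyset$ if under-subscribed), and each applicant recovers $\StabB^Q_d(\mu)$ from the known priorities and takes her favorite element, exactly your cutoff vector up to recording an applicant's identity versus her rank in $Q_h$. The lower bound follows the same counting template as the paper's---a family of $2^{\Omega(|\Insts|)}$ instances in which some applicants' reports are held fixed while their correct matches encode the hidden bits, forcing distinct certificates---but your gadget differs: the paper uses a single capacity-$n$ institution $h^B$ plus capacity-$1$ institutions $h_i$, with each bit carried by whether $d_i$ ranks $h^B$ above $h_i$ (which determines whether the fixed-preference applicant $d_i'$ gets a seat at $h^B$), whereas you use $m$ disjoint two-institution gadgets where a blocker $e_i$'s singleton report determines the fixed-preference chooser $d_i$'s match. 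Your version checks out (each gadget has a unique stable matching, so the bound applies to every stable mechanism, and the dummy applicants harmlessly restore \ref{item:assumption-imbalanced}, a point the paper does not even bother with); it trades the paper's use of a large-capacity school for more, smaller, non-interacting gadgets, with no difference in strength.
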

\begin{proof}
  First, we prove the upper bound. 
  Consider any matching $\mu$ that is stable in some market with priorities $Q$ and preferences $P$. For each institution $h$ such that $|\mu(h)|=q_h$, let $d^{\mathsf{min}}_h$ denote the applicant matched to $h$ in $\mu$ with lowest priority at $h$. If $|\mu(h)|<q_h$, then define $d^{\mathsf{min}}_h = \emptyset$. 
  Now, note that we have $\StabB_d^Q(\mu) = \{ h | d \succ_h d^{\mathsf{min}}_h \}$.
  Thus, a certificate $C \in \Certs$ can simply record, for each institution $h$, the identity of $d^{\mathsf{min}}_h \in \Appls \cup \{\emptyset\}$. This requires $|\Insts|\log|\Appls| = \widetilde\Theta\bigl(|\Insts|\bigr)$ bits. Since each applicant $d$ knows her priority at each institution, she knows the set $\{ h\ |\ d \succ d^{\mathsf{min}}_h \} = \StabB^Q_d(\mu)$. By \autoref{thrm:representing-DA-cutoff-lemma}, her match in $\mu$ is then her highest-ranked institution in this set, so each $d$ can figure out her match $\mu(d)$.

  A matching lower bound is not hard to construct. Consider a market with $n+1$ institutions $h^B, h_1,\ldots,h_n$ such that the capacity of $h^B$ is $n$, and the capacity of each $h_i$ is $1$. Consider applicants $d_1,\ldots,d_n,d_1',\ldots,d_n'$, and for each $i$, let priorities be such that $d_i \succ_{h_i} d_i'$. Consider preference lists such that each $d_i'$ always prefers $h_i \succ_{d_i'} h^B$, and each $d_i$ may independently prefer $h^B \succ h_i$ or $h_i \succ h^B$.
  In any stable matching, $d_i'$ is matched to $h^B$ if and only if $d_i$ ranks $h_i \succ h^B$. Thus, the certificate $C$ must contain at least $n$ bits of information, so $|\Certs|\ge 2^n$ and the concurrent representation complexity of $\Omega(n) = \Omega\bigl(|\Insts|\bigr)$.
\end{proof}

For $\TTC$, the situation is more nuanced. Prior work \cite{LeshnoL21} has shown that it is possible to represent the matching with $O\bigl(|\Insts|^2\bigr)$ ``cutoffs'', one for each pair of institutions. This can potentially save many bits off of the trivial solution that requires $\widetilde\Theta\bigl(|\Appls|\bigr)$ bits, but only when $|\Appls| \gg |\Insts|^2$. In other words, \cite{LeshnoL21} prove an upper bound of $\widetilde O\bigl(\min(|\Appls|,|\Insts|^2)\bigr)$. However, until our work, no lower bounds were known.\footnote{
\cite{LeshnoL21} mention some impossibility results for $\TTC$, but do not formulate or prove any sort of $\Omega\bigl(|\Insts|^2\bigr)$ lower bound. For details of their arguments, see \autoref{sec:related-cutoff-TTC}.
} 

We now provide a lower bound matching the construction of \cite{LeshnoL21}; this is the main result of this section.
This construction is quite reminiscent of those concerning $\TTC$ in Sections~\ref{sec:gtc} and~\ref{sec:type-to-menu}, and we use certain similar ideas.
Intuitively and informally, our construction proceeds by constructing some applicant $d_{i,j}$ corresponding to each \emph{pair} of institutions $h_i, h_j$. 
The preferences and priorities are constructed such that $d_{i,j}$ will decide, based on their high priority at $h_i$, whether some unrelated applicant is able to gain admission to $h_j$.

\begin{theorem}
  \label{thrm:representing-TTC}
  The concurrent representation complexity of $\TTC$ is $\widetilde\Theta\bigl(\min(|\Appls|, |\Insts|^2)\bigr)$.
\end{theorem}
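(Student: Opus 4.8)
I would prove the two directions separately. For the upper bound $\widetilde O\bigl(\min(|\Appls|,|\Insts|^2)\bigr)$, there are two competing protocols to combine. The first is the trivial one of \autoref{def:representation-protocol}: the certificate lists $\mu(d)$ for every $d\in\Appls$, costing $|\Appls|\log|\Insts| = \widetilde O(|\Appls|)$ bits, and each $\match_d$ just reads off its entry and ignores $P_d$. The second is the cutoff representation of \cite{LeshnoL21}: there exist $O(|\Insts|^2)$ numbers (one ``cutoff'' per ordered pair of institutions), each an element of $\{0,1,\ldots,\max_h q_h\}\subseteq\{0,\ldots,|\Appls|\}$ and hence describable with $\widetilde O(1)$ bits, from which every applicant $d$ -- using the publicly known priorities (\ref{item:assumption-known-priorities}) together with $P_d$ -- can reconstruct $\mu(d)$. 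This certificate costs $\widetilde O(|\Insts|^2)$ bits. Posting whichever of the two certificates is shorter (prefixed by a single bit indicating which) yields a representation protocol of cost $\widetilde O\bigl(\min(|\Appls|,|\Insts|^2)\bigr)$.

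\textbf{Reducing the lower bound to a combinatorial construction.} For $\widetilde\Omega\bigl(\min(|\Appls|,|\Insts|^2)\bigr)$ I would use a fooling-set / certificate-counting argument. Set $m \defeq \Theta\bigl(\min(\sqrt{|\Appls|},|\Insts|)\bigr)$, chosen so that a market with $m$ institutions and $O(m^2)$ applicants fits inside $\Appls\times\Insts$ (any surplus applicants are given empty preference lists and are thus unmatched and irrelevant). The goal is a family $\{P^{(b)} : b\in\{0,1\}^{\Omega(m^2)}\}$ of preference profiles on such a market with the following property: for each coordinate $(i,j)$ there is a designated \emph{detector} applicant $d_{i,j}$ whose preference list is the \emph{very same} two-element list, say $x_i \succ y_j$ for two fixed institutions $x_i,y_j$, in every $P^{(b)}$, and whose match satisfies $\TTC(P^{(b)})(d_{i,j}) = x_i$ when the $(i,j)$-th bit of $b$ is $0$ and $\TTC(P^{(b)})(d_{i,j}) = y_j$ when it is $1$ -- \emph{regardless of all other bits}. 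Given such a family, the theorem follows immediately: in any representation protocol $(\match_d)_d$, if the same certificate $C$ induced two profiles $P^{(b)}\ne P^{(b')}$, pick a coordinate $(i,j)$ on which $b$ and $b'$ differ; then $\match_{d_{i,j}}(x_i\succ y_j,\,C)$ would have to equal both $x_i$ and $y_j$, a contradiction. Hence the map $b\mapsto C^{(b)}$ from the family into $\Certs$ is injective, so $|\Certs|\ge 2^{\Omega(m^2)}$ and the representation complexity is $\Omega(m^2) = \Omega\bigl(\min(|\Appls|,|\Insts|^2)\bigr)$, which matches the upper bound up to log factors and gives the claimed $\widetilde\Theta$.

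\textbf{The construction, and the main obstacle.} The technical heart is building this family -- that is, realizing $\Omega(m^2)$ \emph{independent} binary cutoffs inside a single $\TTC$ instance. The construction will be a two-dimensional array of ``selection gadgets,'' one per coordinate $(i,j)$, each consisting of the detector $d_{i,j}$ together with a constant number of \emph{controller} applicants whose preference lists encode the bit $b_{i,j}$; the institutions $x_i$ and $y_j$ are given capacity $\Theta(m)$ so that they can simultaneously absorb all detectors that are routed to them. As in our earlier constructions (\autoref{thrm:gtc-SDrot-LB}, \autoref{thrm:gtc-TTC-LB}, \autoref{thrm:type-to-menu-TTC}), the analysis will invoke the fact that $\TTC$ is independent of the order in which cycles are resolved (\autoref{thrm:TTC-order-independent}) to fix a convenient resolution order, processing the gadgets in a predictable sequence; the priorities of $x_i$, $y_j$, and the controller-owned institutions must then be arranged so that gadget $(i,j)$ ``fires'' at a well-defined moment and so that resolving its trading cycle leaves the pointing structure and residual capacities seen by every \emph{other} gadget entirely undisturbed. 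Decoupling the gadgets in this way -- as opposed to merely making one bit affect one applicant, which is easy -- is precisely the subtlety, and it is exactly here that the $\Theta(|\Insts|^2)$-rather-than-$\Theta(|\Insts|)$ structure of $\TTC$'s cutoffs gets used. Once the family is in hand, the counting argument above closes the proof.
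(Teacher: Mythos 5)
Your upper bound (post the shorter of the trivial full-matching certificate and the $\widetilde O\bigl(|\Insts|^2\bigr)$ cutoff certificate of \cite{LeshnoL21}), your reduction to a market with $m=\Theta\bigl(\min(\sqrt{|\Appls|},|\Insts|)\bigr)$ institutions and $\Theta(m^2)$ applicants, and your counting argument via ``detector'' applicants whose two-element preference lists are fixed across the family while their match toggles with a bit, are all correct and are exactly the skeleton of the paper's proof. The genuine gap is that the family of preference profiles realizing $\Omega(m^2)$ independent bits---which you yourself call ``the technical heart'' and ``precisely the subtlety''---is never constructed. You list the properties the gadgets must satisfy (controllers encoding $b_{i,j}$, a detector $d_{i,j}$ with fixed list $x_i \succ y_j$ whose match is $x_i$ or $y_j$ depending only on $b_{i,j}$, institutions of capacity $\Theta(m)$) and then assert that priorities ``must be arranged'' so that each gadget fires at a well-defined moment without disturbing the others; but exhibiting such an arrangement and proving the decoupling \emph{is} the lower bound. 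As written, the proposal reduces the theorem to an unproven combinatorial claim.

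For comparison, the paper closes exactly this gap with a concrete instance: institutions $h^T_i, h^L_i, h^R_i$ for $i\in\{1,\ldots,k\}$, each of capacity $k$; priorities $h^{\mathcal{S}}_i : d^{\mathcal{S}}_{i,1}\succ\cdots\succ d^{\mathcal{S}}_{i,k}$ for $\mathcal{S}\in\{T,L,R\}$; detectors $d^{\mathcal{S}}_{i,j} : h^T_j \succ h^{\mathcal{S}}_i$ for $\mathcal{S}\in\{L,R\}$ (fixed lists, playing the role of your $x_i\succ y_j$); and a single controller $d^T_{i,j}$ per coordinate whose singleton list $h^{b_{i,j}}_j$ carries the bit. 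Independence then follows from a short cycle-by-cycle analysis using order-independence of $\TTC$ (\autoref{thrm:TTC-order-independent}): $h^T_i$ points to $d^T_{i,1},d^T_{i,2},\ldots$ in priority order, and each such pointer closes one cycle $h^T_i \to d^T_{i,j} \to h^{b_{i,j}}_j \to d^{b_{i,j}}_{j,i} \to h^T_i$, which consumes exactly one seat of $h^T_i$ and one seat of $h^{b_{i,j}}_j$ together with their current top-priority applicants, so the match of the detector pair $\bigl(d^L_{j,i},d^R_{j,i}\bigr)$ is decided by $b_{i,j}$ alone and no other gadget's pointing structure is affected. Your instincts about capacities of order $m$ and about exploiting cycle-order independence were right; the specific interleaving of capacities with the column-ordered priority lists is what actually delivers the $\Omega(m^2)$ independent bits, and that is the step your proposal leaves open.
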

\begin{proof}
  An upper bound of $\widetilde O\bigl(|\Appls|\bigr)$ is trivially, by allowing the certificate posted by the protocol to uniquely define the entire matching. The $\widetilde O\bigl(|\Insts|^2\bigr)$ upper bound is constructed by \cite[Theorem 1]{LeshnoL21}. Very briefly, \cite{LeshnoL21} use the trading and matching process calculating the $\TTC$ outcome to prove that for any $\mu = \TTC_Q(P)$, there exist applicants $\bigl\{ d_{h_1}^{h_2}\bigr\}_{h_1,h_2\in\Insts}$ such that for all applicants $d$, we have $\mu(d) = \max_{P_d} \bigl\{ h_2\in\Insts ~\big|~ \text{$d \succeq_{h_1} d_{h_1}^{h_2}$ for some $h_1\in\Insts$}\bigr\}$. (Informally, this means that each applicant $d$ can use her priority at institution $h_1$ to gain admission to institution $h_2$ whenever $d \succ_{h_1} d_{h_1}^{h_2}$; see \cite{LeshnoL21} for details, which are not needed for establishing our lower bound.) Since the priorities are common knowledge, a concurrent representation protocol can thus communicate to all applicants their match by only communicating $\bigl\{ d_{h_1}^{h_2} \bigr\}_{h_1,h_2\in\Insts}$, which requires $\widetilde O\bigl(|\Insts|^2\bigr)$ bits.

  To finish this proof, it suffices to consider the case where $|\Insts| = n$ and $|\Appls| = n^2$, and prove a lower bound of $\Omega(n^2)$. To see why this case suffices, observe that if $|\Insts|^2 > |\Appls|$, then one can apply this construction with $\sqrt{|\Appls|}$ of the elements of $\Insts$ to get a lower bound of $\Omega\bigl(|\Appls|\bigr)$, and if $|\Insts|^2 < |\Appls|$, one can ignore the surplus elements of $\Appls$ and apply this construction to get a lower bound of $\Omega\bigl(|\Insts|^2\bigr)$.

  Fix $k$, where we will take $n=\Theta(k)$. Consider a matching market with institutions $h^T_1,\ldots,h^T_k$, $h^L_1,\ldots,h^L_k$, and $h^R_1,\ldots,h^R_k$, and applicants $\bigl\{d^T_{i,j}\bigr\}_{i,j\in\{1,\ldots,k\}}$, $\bigl\{d^L_{i,j}\bigr\}_{i,j\in\{1,\ldots,k\}}$, and $\bigl\{d^R_{i,j}\bigr\}_{i,j\in\{1,\ldots,k\}}$.  Mnemonically, these are the top, left, and right participants. The capacity of each of these institutions is $k$. The preferences and priorities are fixed for every participant except the top applicants $\{d^T_{i,j} \}_{i,j\in\{1,\ldots,k\}}$. These fixed preferences and priorities are:
  \newcommand{\Up}{{\mathcal{S}}}
  \begin{align*}
    h^\Up_i 
    & : d^\Up_{i, 1} \succ d^\Up_{i, 2} \succ \ldots \succ d^\Up_{i, k-1} \succ d^\Up_{i, k}
      && \text{For each $\Up \in \{ T, L, R \}$ and $i \in \{1,\ldots,k\}$}
    \\
     d^\Up_{i,j} 
    & : h^T_{j} \succ h^\Up_i
      && \text{For each $\Up \in \{ L, R \}$ and $i, j \in \{1,\ldots,k\}$}
  \end{align*}
   Now, the preferences of the $k^2$ applicants $d^T_{i,j}$ depend on $k^2$ bits $b_{i,j} \in \{L, R\}$, one for each applicant, as follows:
  \begin{align*}
    d^T_{i, j}
    & : h^{b_{i, j}}_{j}
      && \text{For each $i, j \in \{1,\ldots,k\}$}
  \end{align*}
  This family of matching markets is illustrated in \autoref{fig:representing-TTC}.
\begin{figure}[tbp]
    \includegraphics[width=\textwidth]{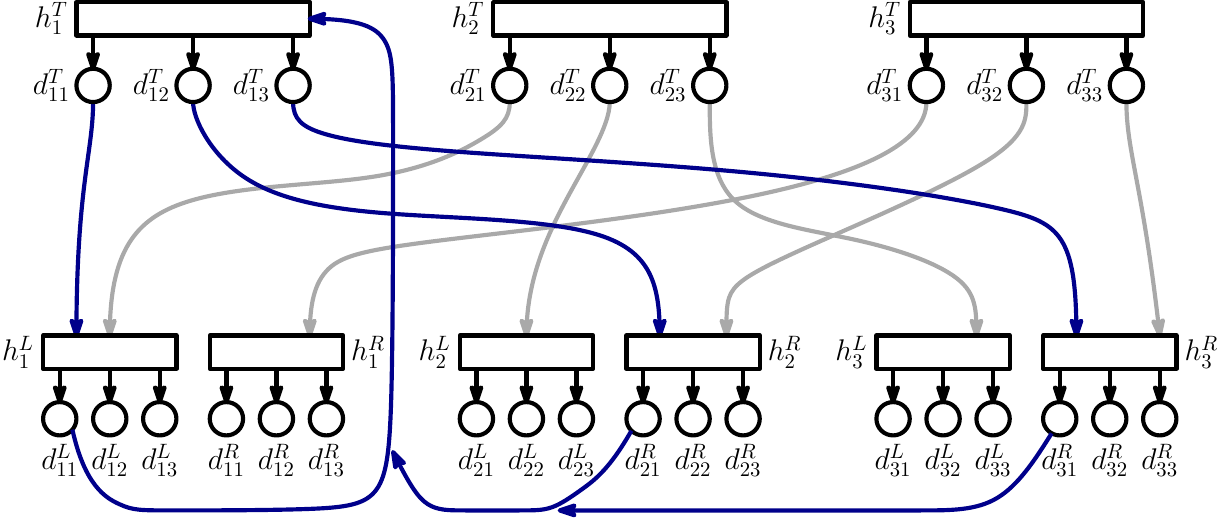}
    \caption[Representing TTC]{
    Illustration of the matching market used in the proof of \autoref{thrm:representing-TTC}. The three cycles matching applicants to $h^T_1$ are colored in dark blue.
    }
    \label{fig:representing-TTC}
\end{figure}

  We will show next that for each $i,j$, the matches of $d^{L}_{j,i}$ and $d^R_{j,i}$ are determined by the bit $b_{i,j}$. Since the preferences of these applicants are fixed (and thus cannot provide any information about $b_{i,j}$ to the function $\mathsf{match}_d$ in the concurrent representation protocol), this means that the certificate $C$ must contain the information about all bits $\vec{b}$.
  \begin{lemma}
    \label{thrm:representing-DA-LB-lemma}
    For each $i, j \in \{1,\ldots,k\}$, if $B = b_{i,j}$, then applicant $d^B_{j, i}$ is matched to $h^T_i$ (and if $\{C\} = \{L, R\} \setminus \{B\}$, then $d^C_{j,i}$ is matched to $h^C_j$).
  \end{lemma}
  First consider each $h^T_1$. First, it will point to $d^T_{1,1}$, who will point to $h^{b_{1,1}}_1$, resulting in a cycle matching $d^{b_{1,1}}_{1,1}$ to $h^T_1$. Next, a cycle with $d^T_{1,2}$ and $d^{b_{1,2}}_{2,1}$ forms, etc., until a cycle with $d^T_{1,k}$ and $d^{b_{1,k}}_{k,1}$ forms; each of these cycles match the involved (non $d^T_{i,j}$) applicant to $h^T_1$. Institution $h^T_1$'s capacity is now filled and each applicant of the form $d^{\Up}_{i, 1}$ for $\Up \in \{L, R\}$ who was not matched in this process then matches to the corresponding $h^{\Up}_i$. 
  Next, $h^T_2$ matches in a similar way to each $d^{b_{2,i}}_{i,2}$ for $i=1,\ldots,k$. 
  This continues for each top hospital until $h^T_k$ in a similar way.
  This proves \autoref{thrm:representing-DA-LB-lemma}.

  Thus, for each distinct profile of bits $\{ b_{i,j} \}_{i,j\in\{1,\ldots,k\}}$, a concurrent representation protocol must present a distinct certificate $C$ (or otherwise, the applicants $d^L_{i,j}$ could not possibly all know their matches). This proves that the concurrent representation complexity of $\TTC$ is $\Omega(n^2)$, concluding the proof.
\end{proof}

\section{Joint Verification Complexity}
\label{sec:verif-complexity}

We now turn to the \nameref{ppg:verification} Question: \questionTextVerification{}

To give exposition, we observe that the concurrent representation protocols in \autoref{sec:compression-complexity} are in some sense reminiscent of nondeterministic protocols: the mechanism designer posts a certificate $C$ that all applicants can see, and we are not concerned with how the mechanism designer found the certificate $C$.
However, the goal of these protocols is different than traditional nondeterministic protocols in computer science: the protocols of \autoref{sec:compression-complexity} only aim to describe a matching to the applicants, not to verify that the matching is correct. 
In this section, we investigate the prospect of additionally verifying the correctness of the matching in the model of \autoref{sec:protocol-models}.

To begin, we illustrate by example why our representations do not suffice to verify the matching. Consider the representation protocol of $\DA$ in \autoref{thrm:representing-DA}. Now, suppose that the mechanism designer deviates and posts cutoffs that are so high that no applicant is matched to any institution. The outcome is certainly not stable. However, no applicant will even be able to tell that some institution is under-subscribed (let alone that the matching is unstable) from only the cutoffs, the priorities, and her own preferences. 

We now enhance \autoref{def:representation-protocol} to investigate the prospect of additionally verifying the matching.

\begin{definition}
  \label{def:verification-protocol}
  A \emph{(joint) verification protocol} of a matching mechanism $f$ is a profile of pairs of functions $\big((\match_d,\check_d)\big)_{d \in \Appls}$, where $\match_d : \T_d \times \Certs \to \Insts$ and $\check_d : \T_d \times \Certs \to \{0,1\}$ for some set $\Certs$, such that:
  \begin{enumerate}
    \item (Representation) $R = (\match_d)_{d\in\Appls}$ is a representation protocol for $f$.
    \item (Completeness) For every $P = (P_d)_{d\in\Appls} \in \T$, there exists $C \in \Certs$ such that $f(P)$ is the matching induced by $C$ over $R$, and $\check_d(P_d,C)=1$ for each $d\in\Appls$.
    \item (Soundness) For every $P = (P_d)_{d\in\Appls} \in \T$ and $\mu \ne f(P)$, if $C \in \Certs$ is such that $\mu$ is the matching induced by $C$ over $R$, then $\check_d(P_d,C)=0$ for at least one $d\in\Appls$.
  \end{enumerate}
      
  The \emph{cost} of a verification protocol is $\log_2|\Certs|$.
  The \emph{verification complexity} of a mechanism $f$ is the minimum of the costs of all verification protocols for $f$. 
\end{definition}

Note that each applicant only needs to know her own match at the end of this protocol, but the applicants collectively verify the certificate in the protocol together. As in standard verification protocols, the trivial solution for a verification problem is to let the certificate $C$ record the full preference lists of all applicants, and let each $\mathsf{check}_d$ verify that the true preference list $P_d$ is correctly recorded in $C$. This trivial solution requires $\widetilde O\bigl(|\Appls|\cdot|\Insts|\bigr)$ bits (in contrast to the concurrent representation protocols of \autoref{sec:compression-complexity}, where the trivial solution requires $\widetilde O\bigl(|\Appls|\bigr)$ bits).

While \autoref{sec:compression-complexity} shows that the matching in $\DA$ (and, when $|\Appls| \gg |\Insts|^2$, in $\TTC$) can be represented with less than one bit per applicant, it turns out this is not possible for verification.
Informally, a nontrivial verification protocol needs to tell applicants information about other applicants, yielding an $\Omega\bigl(|\Appls|\bigr)$ lower bound (in contrast to the $\widetilde\Theta\bigl(|\Insts|\bigr)$- or $\widetilde\Theta\bigl(|\Insts|^2\bigr)$-bit representation protocols of \autoref{sec:compression-complexity},  which only conveyed information about the institutions), since the protocol needs to tell applicants \emph{why} they receive some match, and not just \emph{which} match they receive.
We now formalize this intuition for both $\TTC$ and $\DA$, and even for $\SD$, by reducing the verification problem to a counting problem where each agent holds one bit.

\begin{proposition}
  \label{thrm:verif-lb-both-mechs}
  The verification complexity of $\SD$ is $\Omega\bigl(|\Appls|\bigr)$.
  Thus, the verification complexity of $\TTC$ and of any stable matching mechanism is $\Omega\bigl(|\Appls|\bigr)$.
  These results hold even when $|\Insts|=O(1)$.
\end{proposition}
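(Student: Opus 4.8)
The plan is to reduce from a communication/nondeterministic counting problem in which each of $m$ parties holds a single private bit and the task is to certify the exact value of the sum of the bits; such a certification task provably requires $\Omega(m)$ bits to be written publicly, since a certificate that works for a bit-profile summing to $s$ must be distinguishable from every profile summing to $s' \ne s$, and a counting argument (or an adversary/fooling-set argument over the $2^m$ profiles) forces $\log_2(\text{number of certificates}) = \Omega(m)$. More concretely I would exhibit, for each target mechanism, a family of $2^m$ preference profiles on $m = \Theta(|\Appls|)$ controllable applicants, all over a constant number of institutions, such that (i) whether a fixed ``witness'' applicant is matched depends on the \emph{number} of controllable applicants who rank a popular institution $h^T$ above a fallback $h^B$, and (ii) the controllable applicants' own preferences are the only place this count is recorded, so that a verification certificate must effectively encode the count, and by soundness must distinguish every two profiles with different counts.

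The cleanest instantiation uses two institutions $h^T, h^B$ with capacities $q$ and $q$ where the total number of applicants is $2q$, so $|\Insts| = 2 = O(1)$ and $|\Appls| = 2q = \Theta(q)$; under $\SD$ with the identity priority order, the first $q$ applicants in the order each get their top choice and the remaining $q$ get whatever is left. Partition the applicants into $q$ ``early'' applicants $d_1,\dots,d_q$ (appearing first in the $\SD$ order) whose preferences are the controllable bits — $d_i$ ranks $h^T \succ h^B$ if $b_i = 1$ and $h^B \succ h^T$ if $b_i = 0$ — and $q$ ``late'' applicants $d_1',\dots,d_q'$ (appearing last) each of whom ranks $h^T \succ h^B$. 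If $s = \sum_i b_i$ of the early applicants take a seat at $h^T$, then exactly $q - s$ seats remain at $h^T$ for the late applicants, so precisely $q - s$ of the late applicants are matched to $h^T$ and the rest to $h^B$; thus the full matching (in particular the match of a designated late applicant, once we also fix a tie-breaking convention for which late applicants get the leftover $h^T$ seats, e.g.\ by index) is a nontrivial function of $s$. Since the late applicants' preferences are fixed and carry no information about $\vec b$, and each early applicant learns only her own match (which is $h^T$ or $h^B$ and reveals only her own bit), by soundness of the verification protocol any two bit-profiles $\vec b, \vec b'$ with $\sum b_i \ne \sum b_i'$ must be ``caught'' by different certificates: if the same certificate $C$ were accepted for both, then since $\mu_{\vec b} \ne \mu_{\vec b'}$ (their late applicants' matches differ), $C$ cannot correctly induce-and-be-accepted for both, contradicting completeness+soundness. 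Hence the number of certificates is at least the number of distinct sums that are ``witnessed'' by distinct behaviour; by choosing the gadget so that \emph{every} profile (not just every sum) yields a distinct matching on the late side — e.g.\ by having late applicant $d_i'$ be admissible to $h^T$ only if early applicant $d_i$ took a seat, via $q$ parallel copies of a three-applicant gadget rather than one big pool — one gets $2^m$ distinct matchings, hence $|\Certs| \ge 2^m$ and verification complexity $\Omega(m) = \Omega(|\Appls|)$.

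For the remaining claims I would note that $\SD$ is a special case of $\TTC$, of $\APDA$, and of $\IPDA$ (all institutions sharing the priority order $\succ$), so the identical family of instances yields the same lower bound for $\TTC$ and for any stable matching mechanism, since in these instances there is a unique stable matching (coinciding with the $\SD$ outcome) and $\TTC$ also coincides with it; the constant bound $|\Insts| = O(1)$ is preserved throughout. The main obstacle is not the reduction itself but engineering the gadget so that \emph{every} bit-vector (and not merely every Hamming weight) produces an outcome that some applicant can detect is different — i.e.\ making the ``one bit per applicant'' information genuinely irredundant — so I would use $q = \Theta(|\Appls|)$ independent parallel copies of a tiny fixed-size gadget (one popular institution, one fallback, one controllable applicant, one witness applicant per copy, with the witness admissible to the popular institution precisely when the controllable applicant is) rather than a single shared pool; verifying that this parallel construction keeps $|\Insts| = O(1)$ and that its unique stable / $\TTC$ / $\SD$ matching really does change in the intended coordinate is the one place where care is needed, and it is exactly the formalization the authors flag as ``reducing the verification problem to a counting problem where each agent holds one bit.''
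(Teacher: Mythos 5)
Your overall framing---reduce joint verification to certifying the sum of one private bit held by each of $\Theta(|\Appls|)$ applicants, via a pool of fixed-preference ``late'' applicants whose matches depend only on that sum---is exactly the paper's route (the paper's gadget uses three institutions $h^T,h^C,h^B$ with a capacity-$k$ middle institution, but your two-institution pool plays the same role). The genuine gap is that you never prove the $\Omega(m)$ bound for the counting/certification task, and that bound is the heart of the proof. Your stated reason (``a certificate that works for a profile summing to $s$ must be distinguishable from every profile summing to $s'\ne s$'') only forces one certificate per possible sum, i.e.\ $\Omega(\log|\Appls|)$ bits, as you yourself notice. The missing step is the hybrid/fooling argument (the paper's \autoref{thrm:complexity-verifying-counting}): if two \emph{distinct} profiles $\vec b\ne\vec b'$ with the \emph{same} sum were accepted under the same certificate $C$, then flipping a single coordinate where they differ yields a profile $\vec b''$ with a different sum under which every $\check_d(\cdot,C)$ still accepts (each applicant's bit agrees with her bit in $\vec b$ or in $\vec b'$, and her check sees only her own preference and $C$), while $C$ still induces the matching corresponding to the old sum for the fixed-preference applicants---contradicting soundness. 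This gives $|\Certs|\ge\binom{m}{m/2}$, hence $\Omega(|\Appls|)$ bits, with $|\Insts|=O(1)$.

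Your fallback construction---$q$ parallel copies of a gadget, each with its own popular and fallback institution, so that \emph{every} bit vector changes the match of some fixed-preference witness---does force $|\Certs|\ge 2^q$ (the witnesses' preferences are fixed, so $C$ alone must determine their matches), but it uses $|\Insts|=\Theta(|\Appls|)$ institutions, so it does not establish the proposition's ``even when $|\Insts|=O(1)$'' clause; moreover it cannot be repaired within that clause, because by the cutoff representation (\autoref{thrm:representing-DA}) any stable matching with $O(1)$ institutions can be represented with $O(\log|\Appls|)$ bits, so no construction in which every bit profile induces a distinct matching restricted to fixed-preference applicants can exist there. In the constant-institution regime the lower bound must genuinely exploit the soundness requirement of verification rather than representation---which is precisely what the hybrid argument does---and that regime is the point of the proposition, since it is what separates verification ($\Omega(|\Appls|)$) from representation ($\widetilde\Theta(|\Insts|)$).
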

\begin{proof}
  Fix $k$. We will construct a market with three institutions $\Insts = \bigl\{h^T, h^C, h^B\bigr\}$ and $2k$ applicants $\Appls = \bigl\{ d^T_1,\ldots,d^T_k, d^B_1,\ldots,d^B_k \bigr\}$. The capacity of institution $h^C$ is $k$, and the capacities of each of $h^T$ and $h^B$ are $k+1$. The preferences of the applicants $d^T_1,\ldots,d^T_k$ will be parametrized by bits $\vec b = (b_1,\ldots,b_k) \in \{0,1\}^k$ for some $k$. Later, we will show that the verification complexity is be $\Omega(k) = \Omega\bigl(|\Appls|\bigr)$. The preferences and priorities are:
  \begin{align*}
    & 
    h^{\mathcal{S}} :\ \ d^T_1 \succ\ldots\succ d^T_k \succ d^B_1 \succ\ldots\succ d^B_k 
    && \text{For each $\mathcal{S} \in \{ T, C, B\}$.}
    \\ &
    d^T_j : \begin{cases}
      h^T & \qquad \text{if $b_j = 1$} \\
      h^C & \qquad \text{if $b_j = 0$}
      \end{cases}
    \\ & 
    d^B_j : h^C \succ h^B && \text{For each $j=1,\ldots,k$.}
  \end{align*}

  Since the priorities of each institution are the same, the outcome of $\SD$, $\TTC$, and all stable matching mechanisms are the same under these preferences and priorities. We now characterize this outcome.
  \begin{lemma}
    \label{thrm:verif-lb-both-mechs-outcome-characterization}
    Suppose $f : \{0,1\}^k \to \M$ is defined such that $f(b)$ denotes the result of running $\SD$, $\TTC$, or any stable matching mechanism, with the above preferences and priorities parametrized by bits $b$. 
    Then for any $b$, if we set $S = \sum_{i=1}^k b_i$, then $f(b)$ matches $d^B_1,\ldots,d^B_S$ to $h^C$ and matches $d^B_{S+1},\ldots,d^B_k$ to $h^B$.
  \end{lemma}
  We give the argument for $\SD$. Initially, applicants $d^T_1,\ldots,d^T_k$ match in order, with those $d^T_j$ where $b_j = 0$ matching to $h^C$. Since $h^C$ has capacity exactly $k$, this leaves $S$ seats open at $h^C$, which will go to the $S$ highest-priority applicants among $d^B_1,\ldots,d^B_k$. 
  This proves \autoref{thrm:verif-lb-both-mechs-outcome-characterization}.
  
  Now, consider a verification protocol $P$ for any relevant mechanism $f$ (i.e., $\TTC$, or any stable matching mechanism) with the above preferences and priorities. Since the matches of applicants $d^B_1,\ldots,d^B_k$ are determined by $S = \sum_{i=1}^k b_i$, yet the preference lists of applicants $d^B_1,\ldots,d^B_k$ are fixed and do not depend on $b_1,\ldots,b_n$, the certificate $C \in \Certs$ used by the verification protocol must unambiguously determine $S$. 
  Denote the value of $S$ corresponding to a certificate $C$ by $S(C)$.
  Now, the verification protocol must in particular verify that $S(C)$ is the value $\sum_{i=1}^k b_i$, so the verification protocol in particular suffices to define a $k$-player verification protocol for $\sum_{i=1}^k b_i$, where each player $i$ holds the single bit $b_i$.\footnote{
  Moreover, since the certificate $C$ must convey the value $S(C)$, we can ignore the fact that in \autoref{def:verification-protocol} we only require each applicant to learn their own outcome, and we can assume that in the verification protocols for bit-counting which we consider, all agents must learn the value $\sum_{i=1}^k b_i$.
  }
  Call this the \emph{bit counting problem}.

  Given the above, to prove the theorem, it suffice to show that the verification complexity of the bit counting problem is $\Omega(k)$.
  Informally, this will follow because no protocol can perform this verification without (up to lower-order terms) specifying in $C$ precisely which players $i$ have $b_i = 1$; specifying this takes $\Omega(k)$ bits.

  We now formalize this with a straightforward rectangle argument. 
  For convenience, suppose that $k$ is a multiple of $2$.
  Let $\mathcal{F} \subseteq \{0,1\}^k$ denote the set of inputs $\vec{b} = (b_1,\ldots,b_n)$ such that $\sum_{i=1}^k b_i = \nicefrac{k}{2}$. We will show that in any verification protocol for bit counting, we have $|\Certs| \ge |\mathcal{F}|$.
  To prove this, suppose for contradiction that $|\Certs| < |\mathcal{F}|$. Observe that two distinct $\vec{b}, \vec{b}' \in \mathcal{F}$ must then correspond to the same certificate $C \in \Certs$. Now, let $j \in \{1,\ldots,k\}$ be any index such that ${b}_j\ne{b}'_j$, and define $\vec{b}''$ such that ${b}''_j = {b}'_j$ and ${b}''_i = {b}_i$ for each $i\ne j$. Then, we have $\check_i(b_i'', C) = 1$ for each $i \in \{1,\ldots,k\}$, since each value of $b_i''$ is equal to either $b_i$ or $b_i'$.
  But we also have that $\mathsf{outcome}(C) = \sum_{i=1}^k b_i \ne \sum_{i=1}^k b_i''$, so the verification protocol must violate soundness. This is a contradiction, showing that $|\Certs|\ge|\mathcal{F}|$.

  Thus, the verification complexity of bit-counting is at least $\log|\mathcal{F}| = \log{\binom{k}{k/2}} = \Omega(k)$.
  This proves the joint verification complexity of $\SD$ (and hence $\DA$ and all stable matching mechanisms) is $\Omega(k) = \Omega\bigl(|\Appls|\bigr)$.
\end{proof}

\subsection{\texorpdfstring{$\TTC$}{TTC}}

We now consider upper bounds on verification complexity; unlike the lower bound in \autoref{thrm:verif-lb-both-mechs}, constructing a protocol is a very different task in $\TTC$ versus in $\DA$.
First, we show that \autoref{thrm:verif-lb-both-mechs} is tight for $\TTC$, because $\TTC$ has an easy to construct \emph{deterministic} protocol with matching communication complexity.
\begin{observation}
  \label{thrm:verif-complexity-ttc}
  The verification complexity of $\TTC$ is $\widetilde\Theta\bigl(|\Appls|\bigr)$.
\end{observation}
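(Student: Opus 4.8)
The lower bound $\Omega(|\Appls|)$ is already in hand from \autoref{thrm:verif-lb-both-mechs}, so the plan is to build a verification protocol for $\TTC$ whose certificates have only $\widetilde O(|\Appls|)$ bits. My certificate will be a \emph{transcript} of one execution of $\TTC$: the ordered list $c_1,c_2,\dots$ of cycles that get permanently matched (working, as usual, in the expanded market with $q_h$ copies of each $h\in\Insts$). Because these cycles are vertex-disjoint and each cycle pairs equally many applicants with institution-copies, the transcript names at most $|\Appls|$ (applicant, institution) pairs in total, so it fits in $\widetilde O(|\Appls|)$ bits. The map $\match_d(P_d,C)$ just has $d$ locate the cycle in $C$ containing her and report the institution she is matched to there (or $\emptyset$ if she appears in no cycle).

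The core of the argument is designing $\check_d(P_d,C)$ so that the per-applicant checks \emph{collectively} certify that $C$ is a legal, complete run of $\TTC$. Applicant $d$, using only $P_d$ and the common-knowledge priorities $Q$, would verify: (a) she occurs in at most one cycle, and the number of applicants $C$ assigns to her institution $h$ up through her own cycle does not exceed $q_h$; (b) in her cycle $c_t$, the institution $h$ she points to still has free capacity after $c_1,\dots,c_{t-1}$, while every institution she strictly prefers to $h$ is already full after $c_1,\dots,c_{t-1}$ --- i.e.\ $d$ genuinely points to $h$ at that step; (c) the institution-copy $h'$ that points to $d$ in $c_t$ still has free capacity after $c_1,\dots,c_{t-1}$ and every applicant with higher priority than $d$ at $h'$ is already matched by then --- i.e.\ $h'$ genuinely points to $d$ at that step; and (d) if $d$ appears in no cycle, then every institution acceptable to her is filled to capacity across all of $C$, so she has exhausted her list and is correctly unmatched. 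All four conditions are computable from $(P_d,Q,C)$.

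Completeness is immediate: the honest transcript of $\TTC_Q(P)$ passes every check. For soundness I would argue by induction on $t$ that if all checks pass then each $c_t$ is a genuine cycle of the $\TTC$ pointing graph obtained after permanently matching $c_1,\dots,c_{t-1}$ (by (b)/(c) every applicant in $c_t$ points to her assigned institution and is pointed to by the preceding institution-copy, and (a) keeps capacities respected), and that clause (d) rules out the run having stopped prematurely once all listed cycles are matched. Hence $C$ encodes a valid, complete execution of $\TTC$ on input $(Q,P)$, and since the outcome of $\TTC$ is independent of the order in which cycles are matched (\autoref{thrm:TTC-order-independent}), the induced matching must be exactly $\TTC_Q(P)$. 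Equivalently, any certificate inducing some $\mu\neq\TTC_Q(P)$ is rejected by at least one applicant, which is the soundness condition; combined with the $\Omega(|\Appls|)$ lower bound this gives $\widetilde\Theta(|\Appls|)$.

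I expect the only delicate point to be checking that these local verifications leave no ``gap'' --- in particular, correctly tracking how many copies of each institution have been consumed so that capacities in the many-to-one market are enforced, and handling the termination condition through clause (d) rather than through any global check. None of this seems to require more than careful bookkeeping, which is consistent with the statement being phrased as an observation.
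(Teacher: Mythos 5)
Your proposal is correct, and it reaches the upper bound by a somewhat different route than the paper. The paper treats \autoref{thrm:verif-complexity-ttc} as an immediate consequence of the general fact that deterministic communication upper-bounds verification: it exhibits (in \autoref{sec:deterministic-blackboard-ttc}) a deterministic blackboard protocol for $\TTC$ of cost $\widetilde O\bigl(|\Appls|\bigr)$ --- the chain-following implementation in which institutions and applicants announce their pointing, with the key accounting that an applicant is called to point an extra time at most once per matched cycle --- and the certificate is then just the transcript of that canonical run, whose correctness each applicant can check locally (her own announcements from $P_d$, the institutions' announcements from the public $Q$), so soundness comes essentially for free. You instead build the verification protocol directly: your certificate is an arbitrary ordered list of matched cycles, and you must then prove soundness by induction on the cycle index together with the order-independence of $\TTC$ (\autoref{thrm:TTC-order-independent}), since your certificate does not commit to a canonical execution order. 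Your checks (b)/(c) do exactly establish that each listed cycle is a genuine cycle of the pointing graph given the earlier ones, your check (d) correctly rules out premature termination (any remaining applicant with an acceptable non-full institution would reject), and the size accounting (each applicant in at most one cycle, hence at most $|\Appls|$ pairs) matches the paper's. So the two arguments share the same $\widetilde O\bigl(|\Appls|\bigr)$ bookkeeping; the paper's reduction is more modular and avoids an explicit soundness argument, while your construction is self-contained at the price of carrying the induction and the appeal to order-independence yourself.
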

\begin{proof}
  The lower bound is given in \autoref{thrm:verif-lb-both-mechs}. The upper bound follows from the fact that $\TTC$ has a deterministic blackboard communication protocol. We describe this simple (possibly folklore) protocol for completeness.

  The protocol proceeds in steps, with one applicant or institution acting per step. First, the protocol selects an arbitrary institution $h_1$, which announces its top choice applicant $d_1$ and points to it. Then, $d_1$ announces her top institution $h_2$ and points to it. Then, $h_2$ announces and points, etc. This continues, with the protocol keeping track of some tentative chain $h_1\to d_1\to h_2\to d_2\to \cdots \to h_k \to d_k$, until some cycle is found, say $d_k \to h_j$ for some $j \in \{1,2,\ldots,k\}$. Every applicant in the cycle is permanently matched to the institution to which it points, and the applicants are removed from the market (and the capacities of these institutions are each reduced by $1$). Then, if $j > 1$, this process resumes, starting from $d_{j-1}$ pointing to her top ranked institution with remaining capacity.

  If each applicant had only pointed once during this protocol, our desired communication bound would have followed immediately. While applicants may point more than once, an applicant points a second time \emph{at most once for each cycle that is matched}. Since each cycle permanently matches at least one applicant, there are at most $|\Appls|$ matched cycles overall, and thus there are at most $O\bigl(|\Appls|\bigr)$ occurrences where an applicant is called to point, and thus the protocol uses $\widetilde O\bigl(|\Appls|\bigr)$ bits of communication overall.
\end{proof}

We remark that the deterministic communication protocol in the above proof works even in a model where the priorities are not prior knowledge (and must instead be communicated by the institutions).

\subsection{\texorpdfstring{$\DA$}{DA}}
\label{sec:verif-complexity-DA}

Giving an upper bound on verification complexity is far less straightforward for stable matching mechanisms than for $\TTC$. 
To begin, note that verifying that a matching $\mu$ is the outcome of $\APDA$ (or $\IPDA$) is a different and more complicated task than verifying that $\mu$ is stable.
Indeed, to verify stability using $\widetilde\Theta\bigl(|\Appls|\bigr)$ bits, the protocol simply needs to use the certificate $C$ to write down the matching itself: since the priorities of the institutions are known to all applicants, each applicant $d$ can simply check that there is no institution $h$ such that $h \succ_d \mu(h)$ and such that $h$ prefers $d$ to one of its matches. 
However, this verification protocol does not suffice to verify that the outcome is the result of $\APDA$, because if we consider any market with more than one stable matching, the above protocol will also incorrectly verify any stable matching that is not the result of $\APDA$.

Nevertheless, we prove that with only $\widetilde O\bigl(|\Appls|\bigr)$ bits, a somewhat intricate protocol can not only describe the outcome of $\APDA$ (or $\IPDA$), but also verify that the outcome is correct. Our protocol relies on classically known properties of the set of stable matchings (namely, the rotation poset / lattice properties, see \cite{GusfieldStableStructureAlgs89}) to verify that a proposed matching is the extremal (i.e., either applicant- or institution-optimal) matching within the set of stable matchings, all while using only a few bits per applicant. 

\begin{theorem}[restate=restateThrmVerifComplexityDa, name=]
  \label{thrm:verif-complexity-da}
  \label{thrm:deterministic-cc-ttc}
  The verification complexity of both $\APDA$ and $\IPDA$ is $\widetilde\Theta\big(|\Appls|\big)$.
\end{theorem}

We prove this theorem through the remainder of this section. To begin, note that the lower bound is given in \autoref{thrm:verif-lb-both-mechs}.

For the upper bound, it suffices to consider balanced markets with $n = |\Appls| = |\Insts|$, and prove that the verification complexity in this market is $\widetilde O(n)$. To see why, observe that the outcome of $\APDA$ (or $\IPDA$) in any many-to-one market with $|\Appls|\ge|\Insts|$ corresponds to the outcome in the one-to-one market where every institution $h \in \Insts$ in the original market is replaced by institutions $h^1,h^2,\ldots,h^{q_h}$ (each with priority list identical to $h$'s, and where $h$ is replaced on each applicant's preference list by $h^1\succ h^2\succ\ldots\succ h^{q_h}$). Thus, a verification protocol with cost $\widetilde O\bigl(|\Appls|\bigr)$ for many-to-one markets can be defined by running a verification protocol on the corresponding one-to-one market, and for the rest of this section we focus on one-to-one markets. Thus, for the rest of this section, we assume that $n = |\Appls| = |\Insts|$.

We now define the main concept needed for our verification protocols, the ``improvement graphs.''  This definition is closely related to notions from \cite{GusfieldStableStructureAlgs89}, the standard reference work on the lattice properties of the set of stable matchings, and we will crucially use their results in our proof of correctness below.

\begin{definition}
  \label{def:improvement-graphs}
  Consider any one-to-one matching market with priorities $Q$ and preferences $P$, and consider any matching $\mu$.

  First, define a directed graph $\ImprGrI = \ImprGrI(Q,P,\mu)$. The vertices of $\ImprGrI$ are all applicants $d \in \Appls$. The edges of $\ImprGrI$ are all those pairs $(d, d')$ such that there exists $h\in\Insts$ such that $h\ne\emptyset$ is the highest institution on $d$'s preference list such that $d \succ_h \mu(h)$, and moreover $\mu(h) = d'$. For such a $d, d', h$, we denote this edge by $d \xrightarrow{h} d'$.

  Second, define a directed graph $\ImprGrA = \ImprGrA(Q,P,\mu)$ with precisely the same definition as $\ImprGrI$, except interchanging the roles of applicants and institutions. 
\end{definition}

Observe that the vertices of each of $\ImprGrI$ and $\ImprGrA$ have out-degree at most $1$, and thus these graphs take only $\widetilde O(n)$ bits to describe.

Our key lemma will show next that a matching equals $\IPDA$ if and only if $\ImprGrI$ is acyclic.
To begin to gain intuition for this, observe that if $\mu$ is stable and $d \xrightarrow{h} d'$ is an edge in $\ImprGrI$, then $h$ must be below $\mu(d)$ on $d$'s preference list.
Informally, the lemma will follow because edges in $\ImprGrI$ represent possible ways matches could be exchanged that make the matching better for the applicants and worse for the institutions, and cycles in $\ImprGrI$ characterize such changes to the matching that could be accomplished without violating stability. 
\begin{lemma}
  \label{thrm:improvement-graphs-show-opt}
  Fix any set of preferences $P$ and priorities $Q$. For any matching $\mu$, we have $\mu = \IPDA_Q(P)$ if and only if both $\mu$ is stable with respect to $P$ and $Q$ and the graph $\ImprGrI(Q,P,\mu)$ is acyclic.
  Dually, $\mu = \APDA_Q(P)$ if and only if both $\mu$ is stable and $\ImprGrA(Q,P,\mu)$ is acyclic.
\end{lemma}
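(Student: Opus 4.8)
The plan is to prove the $\IPDA$ statement; the $\APDA$ statement then follows by the symmetric argument (interchanging the roles of applicants and institutions, as $\ImprGrA$ is defined from $\ImprGrI$ by this interchange). So fix $P$, $Q$, and a matching $\mu$. I will prove the two directions separately, and the core tool in both is the classical fact from \cite{GusfieldStableStructureAlgs89} that the set of stable matchings forms a distributive lattice, together with the rotation structure: moving from a stable matching to the next one ``up'' (toward the applicant-optimal $\IPDA$ direction --- recall $\IPDA$ is applicant-\emph{pessimistic} / institution-optimal in this paper's convention, wait: re-examining, $\IPDA$ gives each institution its favorite stable partner and each applicant its worst, so $\IPDA$ is the institution-optimal / applicant-pessimistic stable matching) is accomplished by eliminating a \emph{rotation}, which is exactly a cycle of the type encoded by $\ImprGrI$. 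I should double-check orientation: an edge $d \xrightarrow{h} d'$ means $h$ is $d$'s most-preferred institution with $d \succ_h \mu(h)$ and $\mu(h)=d'$; if $\mu$ is stable this $h$ lies strictly below $\mu(d)$ on $d$'s list, so such an edge represents a potential ``applicant-improving'' reassignment. A cycle in $\ImprGrI$ is therefore a set of applicants who could each move up (and the corresponding institutions move down) while preserving a perfect matching --- i.e., a rotation exposed in $\mu$. So $\mu$ has no exposed rotation toward the applicant-better side iff $\ImprGrI(\mu)$ is acyclic.

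\textbf{Direction 1 ($\mu = \IPDA_Q(P) \Rightarrow$ stable and $\ImprGrI$ acyclic).} Stability is \autoref{claimDaStable}. For acyclicity, suppose toward contradiction that $\ImprGrI(\mu)$ contains a cycle $d_0 \xrightarrow{h_0} d_1 \xrightarrow{h_1} \cdots \xrightarrow{h_{m-1}} d_0$. I will show the matching $\mu'$ obtained from $\mu$ by reassigning $d_i$ to $h_i$ (and leaving everyone else fixed) is stable: it is a perfect matching by construction of the cycle; and one checks no blocking pair exists, using that each $h_i$ is $d_i$'s \emph{best} institution preferring $d_i$ to its current partner, so any institution $d_i$ prefers to $h_i$ does not prefer $d_i$ to its $\mu'$-partner (which is at least as good for that institution as its $\mu$-partner), and symmetrically for institutions. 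But in $\mu'$ each $d_i$ is matched strictly better (to $h_i \succ_{d_i} \mu(d_i)$, which holds because $\mu$ stable forces $h_i$ below $\mu(d_i)$), contradicting that $\mu = \IPDA$ gives every applicant her \emph{worst} stable partner (\autoref{thrm:apda-apps-at-best}); indeed $\mu'$ being stable and applicant-better at $d_0$ already contradicts this.

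\textbf{Direction 2 (stable and $\ImprGrI$ acyclic $\Rightarrow \mu = \IPDA_Q(P)$).} Suppose $\mu$ is stable but $\mu \ne \IPDA_Q(P)$. Since $\IPDA$ is the unique minimal (applicant-pessimistic) element of the stable lattice and $\mu \ne \IPDA$, $\mu$ is strictly applicant-better than $\IPDA$ somewhere, so $\mu$ is not applicant-pessimal and hence (by the rotation structure of \cite{GusfieldStableStructureAlgs89}) there is a rotation $\rho$ exposed in $\mu$ whose elimination moves $\mu$ toward $\IPDA$. I then show this exposed rotation is precisely a cycle in $\ImprGrI(\mu)$: the rotation is a sequence $(d_0,\mu(d_0)),\ldots,(d_{m-1},\mu(d_{m-1}))$ such that each $d_i$'s next choice after $\mu(d_i)$ (among institutions that would accept $d_i$ over their current partners) is $\mu(d_{i+1})$, which is exactly the edge condition $d_i \xrightarrow{\mu(d_{i+1})} d_{i+1}$. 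Hence $\ImprGrI(\mu)$ has a cycle, contradicting acyclicity. The cleanest packaging is to quote the standard ``rotations expose iff not pessimal'' characterization and verify the bijection between exposed-rotation cycles and $\ImprGrI$-cycles.

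\textbf{Expected main obstacle.} The delicate part is making the correspondence between ``$\ImprGrI$-cycle'' and ``exposed rotation'' airtight --- in particular checking that the ``most-preferred institution $h$ with $d \succ_h \mu(h)$'' in the edge definition matches the precise ``next institution that prefers $d$ to its current match'' appearing in the textbook definition of a rotation exposed in a stable matching, and that stability of $\mu$ is exactly what is needed to line these up (so that every such $h$ sits below $\mu(d)$ and the reassignment is genuinely an improvement for applicants). I would handle this by stating the rotation definition explicitly from \cite{GusfieldStableStructureAlgs89}, proving the elementary lemma that in a stable $\mu$, for each $d$ that is not matched to its best feasible institution, the out-edge of $d$ in $\ImprGrI$ points to exactly the partner prescribed by the rotation mechanics, and then invoking the lattice/rotation theorems as black boxes. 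The blocking-pair verification in Direction~1 is routine but must be written carefully.
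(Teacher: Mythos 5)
Your overall route is the same as the paper's (the Gusfield--Irving rotation/lattice machinery: exposed rotations correspond exactly to cycles in $\ImprGrI$, and extremality of $\IPDA$ in the stable lattice closes the argument), and your Direction~2 is essentially the paper's argument verbatim. However, Direction~1 as written contains a genuine orientation error that invalidates the stated contradiction. You correctly observe that for an edge $d \xrightarrow{h} d'$ in a stable $\mu$, the institution $h$ lies strictly \emph{below} $\mu(d)$ on $d$'s list --- but you then conclude that the cycle reassignment is ``applicant-improving'' and that each $d_i$ is matched ``strictly better (to $h_i \succ_{d_i} \mu(d_i)$, which holds because $\mu$ stable forces $h_i$ below $\mu(d_i)$)''; this is self-contradictory, since $h_i$ below $\mu(d_i)$ means $\mu(d_i) \succ_{d_i} h_i$. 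Eliminating the cycle makes every applicant in it strictly \emph{worse} off and every institution in it strictly \emph{better} off. As a result, the contradiction you invoke --- ``$\mu'$ being stable and applicant-better at $d_0$ already contradicts'' applicant-pessimality of $\IPDA$ --- is not a contradiction at all: stable matchings that are better for some applicant than the applicant-pessimal one are exactly what the lattice permits. The correct conclusion is that $\mu'$ is a stable matching in which $d_0$ does strictly worse than in $\mu$ (equivalently, in which the institutions on the cycle do strictly better), which contradicts the fact that $\IPDA$ matches every applicant to her \emph{worst} stable partner (equivalently, is institution-optimal); this is precisely how the paper argues via \cite[Lemma 2.5.2]{GusfieldStableStructureAlgs89} and \autoref{thrm:apda-apps-at-best}.

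With the orientation corrected, the rest of your plan is sound: your direct blocking-pair verification of the stability of $\mu'$ (which in fact already implicitly uses that institutions weakly improve, i.e.\ the correct orientation) is a reasonable substitute for citing \cite[Lemma 2.5.2]{GusfieldStableStructureAlgs89}, and your matching of $\ImprGrI$-edges with the rotation mechanics (using stability to identify ``highest $h$ on $d$'s list with $d \succ_h \mu(h)$'' with ``first such $h$ below $\mu(d)$'') is exactly the alignment the paper performs. So this is a fixable directional slip rather than a wrong approach, but as written the key contradiction in Direction~1 does not go through.
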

\begin{proof}
  We prove the first result, namely, that $\IPDA$ is characterized by $\ImprGrI$. The proof relies heavily on results from \cite[Section 2.5.1]{GusfieldStableStructureAlgs89}. 
  First, observe that in the language of \cite{GusfieldStableStructureAlgs89}, the edges $d\xrightarrow{h} d'$ of $\ImprGrI$ are exactly those $d,h,d'$ such that $h = s_\mu(d)$ and $d' = \mathit{next}_{\mu}(d)$.%
  \footnote{%
    \label{footnote:GI-outdegree-error}%
    We remark that \cite[Section 2.5.1, Page 87]{GusfieldStableStructureAlgs89} mentions the following: ``Note that $s_\mu(d)$ might not exist. For example, if $\mu$ is the [institution]-optimal [stable] matching, then $s_\mu(d)$ does not exist for any [applicant] $d$.'' The second sentence of this quote would seem to eliminate the need for our verification protocol to communicate any graph whatsoever, since if $\mu = \IPDA$, then this graph should be empty. However, the claim made in passing in that second sentence is false. For example, if 
    $h_1 : d_1\succ d_2$, 
    $h_2 : d_2\succ d_1$, 
    $h_a : d_1\succ d_2 \succ d_3$, and
    $d_1 : h_2 \succ h_a \succ h_1$,
    $d_2 : h_1 \succ h_a \succ h_2$,
    $d_3 : h_a$,
    then $\mu = \IPDA = \{ (d_1,h_2), (d_2,h_1), (d_3,h_a) \}$, and $s_\mu(d_1) = s_\mu(d_2) = h_a \ne \emptyset$. (Note that $s_\mu(d_3) = \emptyset$, and thus our graph $\ImprGrI$ has edges $d_1 \to d_3 \leftarrow d_2$, and is indeed acyclic.)
  }
  
  First, we show that if $\mu$ is stable and $\ImprGrI$ has a cycle, then $\mu \ne \IPDA = \IPDA_Q(P)$. Suppose $d_1 \xrightarrow{h_1} d_2 \xrightarrow{h_2} d_3 \xrightarrow{h_3} \ldots  \xrightarrow{h_{k-1}} d_k \xrightarrow{h_k} d_1$ is some cycle in $\ImprGrA$. Then, according to the definitions given in \cite[Page 88]{GusfieldStableStructureAlgs89}, we have that $\rho = [ (d_1,h_{k}), (d_2,h_1), \ldots, (d_k, h_{k-1}) ]$ is a rotation exposed in $\mu$. Define a matching $\mu'$ such that $\mu'(d_i) = h_{i}$ for each $i=1,\ldots,k$ (with indices mod $k$), and $\mu'(d) = \mu(d)$ for all $d$ not contained in the above cycle (in the language of \cite{GusfieldStableStructureAlgs89}, we have $\mu' = \mu / \rho$, the elimination of rotation $\rho$ from $\mu$). \cite[Lemma 2.5.2]{GusfieldStableStructureAlgs89} shows that $\mu' \ne \mu$ is also stable, and is preferred by all institutions to $\mu$. Thus, by the fact that $\IPDA$ is the institution-optimal stable outcome, we have $\mu \ne \IPDA$.

  Second, we show that if $\mu$ is stable and $\mu \ne \IPDA_Q(P)$, then $\ImprGrI$ must have a cycle. By \cite[Lemma 2.5.3]{GusfieldStableStructureAlgs89}, there is some rotation $\rho = [ (d_1,h_{k}), (d_2,h_1), \ldots, (d_k, h_{k-1}) ]$ that is exposed in $\mu$. By the definition of $\ImprGrI$ and the definition of a rotation from \cite{GusfieldStableStructureAlgs89}, this rotation corresponds to a cycle $d_1 \xrightarrow{h_1} d_2 \xrightarrow{h_2} d_3 \xrightarrow{h_3} \ldots  \xrightarrow{h_{k-1}} d_k \xrightarrow{h_k} d_1$ in $\ImprGrI$.%
  \footnote{
    In fact, our $\ImprGrI$ is defined in a way very close to the graph $H(\mu)$ which \cite{GusfieldStableStructureAlgs89} consider in the proof of their Lemma 2.5.3; our graph $\ImprGrI$ simply considers the entire set of applicants, instead of just those who have a different partner in $\mu$ and $\IPDA \ne \mu$, and our graph generalizes theirs to arbitrary matchings $\mu$, possibly including $\IPDA$ (which fixes the minor error mentioned in \autoref{footnote:GI-outdegree-error}).
    }
  So $\ImprGrI$ is cyclic.

  Since $\ImprGrA$ is defined precisely by interchanging the roles of applicants and institutions, we have that $\ImprGrA$ characterizes $\APDA$ in the same way.
\end{proof}

Now, we can define a verification protocol $V^\IPDA$ (resp.\ $V^\APDA$) for $\IPDA$ (resp.\ $\APDA$). 
When priorities are $Q$ (recall that the priorities $Q$ are known to each applicant before the mechanism begins\footnote{
Several steps of this protocol (both checking stability and checking the correctness of the graph $\ImprGrI$ or $\ImprGrA$) would require $\Omega(n^2)$ bits if priorities are not known to the applicants, for instance, when an applicant $d$ is required to check that for some $d_x, d_y$, and $h$, it's not the case that $d_x \succ_h d \succ_h d_y$ (cf.\ \cite{GonczarowskiNOR19}).
}) and preferences are $P$, the certificate $C$ records the matching $\mu = \IPDA_Q(P)$ (resp.\ $= \APDA_Q(P)$) and graph $G = \ImprGrI$ (resp.\ $= \ImprGrA$). By \autoref{thrm:improvement-graphs-show-opt}, it suffices for the applicants to collectively verify that the matching $\mu$ is stable, and that the graph $G$ is correctly constructed.
As already noted in \autoref{sec:verif-complexity},
it is not hard to verify that $\mu$ is stable once $\mu$ is known: each applicant $d$ can identify using $\mu$ and $Q$ which institutions $h$ are such that $d \succ_h \mu(h)$ according to $Q$, and each applicant can simply check that there are no such $h$ where $h \succ_d \mu(d)$. Thus, to complete the proof, it suffices to show the following lemma:

\begin{lemma}
  \label{lem:exist-predicates-checking-graph}
  If priorities $Q$ are fixed ahead of time, there exists a predicate $\check_d(P_d,\mu,G) \in \{0,1\}$ for each $d\in\Appls$, such that for all preferences $P$, matchings $\mu$, and graphs $G$, we have:
  \[ \bigwedge_{d \in \Appls} \check_d(P_d,\mu,G) = 1
  \quad \iff \quad
  G = \ImprGrI(Q,P,\mu).\]
  Moreover, the same claim holds (with a different predicate $\check_d(\cdot)$) replacing $\ImprGrI$ with $\ImprGrA$.
\end{lemma}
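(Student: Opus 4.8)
The plan is to exhibit, for each applicant $d$, a predicate $\check_d(P_d,\mu,G)$ that inspects only $d$'s own preference list $P_d$ (together with the publicly known $Q$, the proposed matching $\mu$, and the proposed graph $G$), and whose conjunction over all $d$ holds exactly when $G$ is the correct improvement graph. The key observation is that $\ImprGrI(Q,P,\mu)$ is \emph{local} in the following sense: the single edge outgoing from vertex $d$ (if any) is determined solely by $P_d$, by $Q$, and by $\mu$. Indeed, by \autoref{def:improvement-graphs}, the outgoing edge from $d$ is $d \xrightarrow{h} d'$ where $h$ is the highest institution on $P_d$ with $d \succ_h \mu(h)$ and $d' = \mu(h)$; every quantity here is visible to $d$ except, trivially, nothing — $d$ knows $P_d$, knows $Q$ (hence all the relations $\succ_h$), and can read $\mu$ off the certificate. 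So define $\check_d(P_d,\mu,G) = 1$ if and only if the outgoing edge from $d$ recorded in $G$ is exactly this edge (and $G$ records no outgoing edge from $d$ when no such $h$ exists).

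First I would verify well-definedness: since each vertex of $\ImprGrI$ has out-degree at most one (as noted right after \autoref{def:improvement-graphs}), ``the outgoing edge from $d$ in $G$'' is a well-formed object to check, and a purported $G$ with two outgoing edges from some $d$ is simply rejected by that $d$'s predicate (we may take the convention that $\check_d$ also rejects any $G$ in which $d$ has out-degree $\ge 2$, or in which some edge incident to $d$ is malformed, e.g.\ labeled by an $h$ with $\mu(h)\neq d'$). Then the forward direction is immediate: if $G = \ImprGrI(Q,P,\mu)$, then by construction $d$'s outgoing edge in $G$ agrees with the edge computed from $P_d,Q,\mu$, so $\check_d = 1$ for every $d$. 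For the converse, suppose $\bigwedge_d \check_d = 1$. Then for every $d$, the set of edges of $G$ leaving $d$ equals the set of edges of $\ImprGrI(Q,P,\mu)$ leaving $d$ (each being either empty or the unique edge described above). Since every edge of a directed graph leaves \emph{some} vertex, taking the union over all $d$ gives $G = \ImprGrI(Q,P,\mu)$, as desired.

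The dual statement for $\ImprGrA$ is handled by the same argument with the roles of applicants and institutions interchanged — but here one must be slightly careful, since in $\ImprGrA$ the vertices are \emph{institutions}, not applicants, and only applicants hold private information and run predicates. The resolution is that the outgoing edge from an institution $h$ in $\ImprGrA$ is $h \xrightarrow{d} h'$ where $d$ is the highest applicant on $h$'s priority list with $h \succ_d \mu(d)$ and $h' = \mu(d)$; crucially, $h$'s priority list is part of the publicly known $Q$, so the \emph{entire} edge from $h$ depends only on public data ($Q$, $\mu$) plus, for the membership test $h \succ_d \mu(d)$, the private list $P_d$ of the single applicant $d = \mu(h)$ — wait, more precisely it depends on the private lists of potentially many applicants $d$ (to find the \emph{highest} one in $h$'s priority order with $h \succ_d \mu(d)$). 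So for $\ImprGrA$ I would instead let $\check_d(P_d,\mu,G)$ assert: for every institution $h$ with $h \succ_d \mu(d)$, if every applicant whom $h$ ranks strictly above $d$ is recorded in $G$ as \emph{not} having $h$ on the relevant side, then $G$ must record the edge $h \xrightarrow{d} \mu(d)$ — and also, $G$ records no edge $h \xrightarrow{d'} \cdot$ for any $d'$ that $h$ ranks below $d$ with $h \succ_{d'} \mu(d')$ contradicted. This is more delicate to phrase, and this is the step I expect to be the main obstacle: in the $\ImprGrA$ case, the identity of $h$'s outgoing edge depends on a ``winner-take-highest'' computation across several applicants' private lists, so no single applicant can verify it alone, and one must distribute the check so that (a) the true $\ImprGrA$ passes and (b) any incorrect $G$ is caught by at least one applicant. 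I would resolve this by having each applicant $d$ verify a small ``local consistency'' condition at each institution $h$ with $h\succ_d\mu(d)$: namely that the edge out of $h$ in $G$, call it $h\xrightarrow{d'}\mu(d')$, satisfies $h\succ_{d'}\mu(d')$ and $h$ ranks $d'$ no lower than $d$; combined with the symmetric checks run by all such applicants, exactly one candidate survives and it is the true one. The bit cost is $\widetilde O(n)$ throughout since $G$ has out-degree $\le 1$ and $\mu$ is one matching, completing the verification protocol via \autoref{thrm:improvement-graphs-show-opt}.
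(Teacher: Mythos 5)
Your treatment of $\ImprGrI$ is correct and is exactly the paper's argument: the outgoing edge of vertex $d$ is a function of $(P_d,Q,\mu)$ alone, so each applicant checks her own out-edge locally, and since every edge of a directed graph leaves some vertex, the conjunction of these local checks is equivalent to $G=\ImprGrI(Q,P,\mu)$.

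For $\ImprGrA$ you correctly identify the obstacle (the label of $h$'s edge is a ``highest applicant preferring $h$'' computation over many private lists), and your idea of having every applicant $d$ with $h\succ_d\mu(d)$ check that the recorded label $d'$ satisfies $d'\succeq_h d$ is the right half of the fix---it certifies that nobody who actually prefers $h$ outranks $d'$, and it also implicitly rejects any preferred institution with out-degree zero. But the check set in your final sentence is not sound as stated: the other half of your condition, $h\succ_{d'}\mu(d')$, is private to $d'$, and in your scheme $d'$ only performs checks at institutions that \emph{she} prefers to $\mu(d')$. So take a certificate in which $h$'s outgoing edge is $h\xrightarrow{d'}\mu(d')$ for some $d'$ whom $h$ ranks above everyone in $\{d : h\succ_d\mu(d)\}$ but with $\mu(d')\succ_{d'}h$ (e.g.\ $d'=\mu(h)$): every applicant who runs a check at $h$ sees $d'\succeq_h d$ and the target $\mu(d')$ and passes, $d'$ never inspects $h$, and no other applicant can evaluate $h\succ_{d'}\mu(d')$, so a wrong graph is accepted even though the true $\ImprGrA$ edge at $h$ is labeled by a lower-priority applicant or is absent. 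The missing ingredient is the paper's first condition: each applicant $d$ must additionally validate \emph{every} edge of $G$ labeled by $d$ herself, i.e.\ check $h_x\succ_d h_y=\mu(d)$ for each recorded $h_x\xrightarrow{d}h_y$, wherever it occurs in $G$. With that self-check added to your two conditions, the argument goes through and essentially coincides with the paper's three-condition predicate (the paper's variant only inspects applicants ranked between $\mu(h_x)$ and the label, a difference that is immaterial here).
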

\begin{proof}
  We first prove the claim for $\ImprGrI$. This case is not too hard to see: each $d$ just needs to verify that the edge outgoing from her node in $G$ exactly corresponds to $d\xrightarrow{h} d'$, where $d'=\mu(h)$ and $h$ is $d$'s highest-ranked institution such that $d \succ_h \mu(h)$ (if it exists). Since $d$ knows the priorities, $d$ also knows the set of all $h$ such that $d \succ_h \mu(h)$, and can thus perform this verification. 

  We now prove the claim for $\ImprGrA$, which requires a somewhat more delicate predicate $\check_d(\mu,G)$. For clarity, note that the full definition of $\ImprGrA$ is as follows:
  There is a vertex for each $h \in \Insts$, and an edge $h \xrightarrow{d} h'$ whenever $h,h',d$ are such that both $\mu(d)=h'$ and $d \ne \emptyset$ is the highest-ranked applicant on $h$'s preference list such that $h \succ_d \mu(d)$. We claim that it suffices for each applicant $d$'s predicate $\check_d(\mu,G)$ to verify all of the following:
  \begin{itemize}
    \itemsep0em
    \item For each edge $h_x \xrightarrow{d} h_y$, we have $h_x \succ_d h_y = \mu(d)$.
    \item For each edge $h_x \xrightarrow{d_y} h_y$ with $d_y \ne d$: if we let $d_x = \mu(h_x)$, then whenever $d_x \succ_{h_x} d \succ_{h_x} d_y$ according to the priorities $Q$, we have $\mu(d) \succ_d h_x$.
    \item For each $h$ with out-degree zero in $G$, we have $\mu(d)\succ_d h$.
  \end{itemize}
  The first condition is necessary for each $d$ by the definition of $\ImprGrA$. 
  If the first and second condition are both true for every $d\in\Appls$, then every edge in $G$ is correctly constructed according to $\ImprGrA$, because if these conditions hold then each $d_y$ in some edge $h_x\xrightarrow{d_y} h_y$ is the highest-ranked applicant below $d_x=\mu(h_x)$ such that $h_x \succ_{d_y} \mu(d_y)$. Finally, the third condition for every $d \in \Appls$ guarantees that an institution has out-degree zero in $G$ if and only if it has out-degree zero according to $\ImprGrA$.
  This shows that $G = \ImprGrA$ if and only if each of the above three predicates is true for each $d \in \Appls$, as desired. Since the above conditions can be verified for each $d$ using only knowledge of the priorities, $\mu$, $G$, and $d$'s own preferences, this proves the lemma.
\end{proof}

We can now finish our proof:
\begin{proof}[Proof of \autoref{thrm:verif-complexity-da}]
  By \autoref{lem:exist-predicates-checking-graph}, the protocols $V^\IPDA$ and $V^\APDA$ correctly verify the outcome of $\IPDA$ and $\APDA$, respectively. Thus, they can communicate a certificate containing the matching $\mu$ and corresponding graph $G$. This certificate requires only $\widetilde O(n) = \widetilde O\bigl(|\Appls|\bigr)$ bits, and applicants check that $\mu$ is stable and that the graph is correctly constructed as detailed above.
\end{proof}

Note that unlike $\TTC$, we do not know the \emph{deterministic} blackboard communication complexity of $\DA$ in our model (where priorities are prior knowledge). While \cite{Segal07, GonczarowskiNOR19} prove a lower bound of $\Omega\bigl(|\Insts|^2\bigr)$ when the priorities must be communicated, this model is provably different than our model, because the lower bounds of \cite{Segal07, GonczarowskiNOR19} hold for the verification problem as well.
We believe that determining the deterministic communication complexity of $\DA$ in this model is an interesting and highly natural problem for future work.

\bibliographystyle{alpha}
\bibliography{MasterBib}{}

\clearpage

\appendix 

\section{Additional Results}
\label{sec:additional-results}
\label{sec:additional-outcome}

In this appendix, we include a number of additional results for completeness.
First, we recall our four main questions from \autoref{sec:intro}:
\begin{enumerate}[label=(\arabic*),ref=(\arabic*)]
  \itemsep0em
  \item \questionTextTypeToMatching{}\\ (\nameref{ppg:type-to-match} Question / \autoref{sec:gtc})
  \item \questionTextTypeToMenu{}\\ (\nameref{ppg:type-to-menu} Question / \autoref{sec:type-to-menu})
  \item \questionTextRepresentation{}\\ (\nameref{ppg:representation} Question / \autoref{sec:compression-complexity})
  \item \questionTextVerification{}\\ (\nameref{ppg:verification} Question / \autoref{sec:verif-complexity})
\end{enumerate}

\autoref{tab:appendix-results} gives an exposition of the first three supplemental questions we address here, and relates them to our four main questions.

\begin{table}[pthb]
  \caption[Results]{Summary of our supplemental results comparing $\TTC$ and $\DA$.}
  \label{tab:appendix-results}
  \begin{center}
    \begin{tabular}{C{16em}cC{10em}C{10em}}
  \toprule
    && $\TTC$ & $\DA$ \\
    \cmidrule{3-4}
    \\[-0.5em]
    \vspace{-0.3em}
    \makecell{Describing simultaneously\\all applicants' menus\\[-0.2em]{\!\!\!\footnotesize (harder version of \nameref{ppg:representation}\!\!\!}\\[-0.2em]{\footnotesize Question / \autoref{sec:bit-complexity-menus})}}
    &&  \ResultEntry{$\widetilde\Theta(n^2)$ }{ By \autoref{thrm:all-menus-TTC}.}
      &  \ResultEntry{{$\widetilde\Theta(n^2)$} }{ By \autoref{thrm:all-menus-DA}. }
    \\ \\
    \vspace{-0.3em}
    \makecell{Describing one's effect\\ on one's/another's match\\[-0.2em]{\footnotesize (easier version of \nameref{ppg:type-to-match}}\\[-0.2em]{\footnotesize Question / \autoref{sec:type-to-anothers-match})}}
    && \ResultEntry{ $\widetilde\Theta\bigl(n\bigr)$ 
      }{ By nonbossiness, \\ see \autoref{thrm:one-match-TTC} }
    & \ResultEntry{ $\widetilde\Theta\bigl(n\bigr)$ 
    }{ Corollary of other results, \\ see \autoref{thrm:one-match-DA}. }
    \\ \\
    \vspace{0in}
    \vspace{-0.3em}
    \multirowcell{2}[0pt][c]{Describing all applicants'\\ effects on one's match \\{\footnotesize (combination of \nameref{ppg:type-to-match}}\\[-0.2em]{\footnotesize \!\!\!\!\!\! and \nameref{ppg:representation} Questions / \autoref{sec:all-type-to-one-match})\!\!\!\!\!\!}}
    &&
    \multicolumn{2}{c}{
      \pdfliteral{ 1 0  0 1 0 -9 cm} 
      {$\widetilde\Theta(n^2)$}
      \pdfliteral{ 1 0  0 1 0 9 cm}
    }
    \\
    &&
    \multicolumn{2}{c}{ 
    {\footnotesize Even for $\SD$, by \autoref{thrm:all-type-to-one-match-sd}. }
    }
    \\
    \\
    \\ \bottomrule
  \end{tabular}
  \end{center}
\end{table}

We include two additional subsections in this appendix. 
In \autoref{sec:serial-dict-rot}, we study the options-effect complexity of $\SDrot$ (which turns out to be $\widetilde O(n)$, showing that $\SDrot$ could not have helped us to lower-bound this complexity for $\TTC$ as in \autoref{sec:type-to-menu-TTC}).
In \autoref{sec:additional-remarks} we make some additional observations concerning our verification protocol for $\DA$ from \autoref{sec:verif-complexity-DA}.

\subsection{All-Menus Complexity}
\label{sec:bit-complexity-menus}

In this section, we build upon \autoref{sec:compression-complexity}, and discuss a different approach to representing the process and results of strategyproof matching mechanisms. 
This approach is inspired by the representation protocols for $\DA$ and $\TTC$ in \autoref{sec:compression-complexity}. 
Both of these protocols (implicitly) communicate some set $B_d$ concurrently to each $d\in\Appls$, and each $d$ determines her match as her top-ranked institution in $B_d$. This may seem to imply that $B_d$ should be $d$'s set of available options, i.e., $d$'s menu. However, this is not the case, as we will see below.
However, it inspires one reasonable approach to representing the matching: communicate all applicants' menus, and tell each applicant she is matched to her highest-ranked institution in her menu.

To capture the number of bits required to communicate all applicants' menus, we make the following definition:

\begin{definition}
  \label{def:all-menus-rep}
  The \emph{all-menus complexity} of a matching mechanism $f$ over a set of applicants $\Appls = \{1,\ldots,n\}$ is:
  \[
  \log_2 \left| \left\{
  \bigl(\ \Menu^{f}_{1}(P_{-1}), \Menu^{f}_{2}(P_{-2}), \ldots, \Menu^{f}_{n}(P_{-n})\ \bigr) \ \middle|\ P \in \T
  \right\}\right|.
  \]
\end{definition}
Note that, in contrast to the models in \autoref{def:representation-protocol} and \autoref{def:verification-protocol}, this definition does not assume that an applicant $d$ uses information about her own report $P_d$ to figure out $\Menu_d(P_{-d})$. However, using such information could not possibly help applicant $d$ in this task, since $\Menu_d(P_{-d})$ is by definition independent of the value of $P_d$. Thus, \autoref{def:all-menus-rep} captures the complexity of representing to each applicant her own menu under the model of \autoref{sec:protocol-models} as well.
Also, recall that in our model, if $f$ is $\TTC_Q$ or $\DA_Q$ for some priorities $Q$, then we consider the priorities fixed and part of the mechanism. So, when we bound the all-menus complexity of $\TTC$ (resp.\ $\DA$), we mean the maximum over all possible $Q$ of the all-menus complexity of $\TTC_Q$ (resp.\ $\DA_Q$). 
For cohesion with \autoref{sec:compression-complexity}, we convey our results in terms of $|\Appls|$ and $|\Insts|$ (though for readability we also consider the case where $n = |\Appls| = |\Insts|$).

We now consider $\APDA$, and discuss in detail how the sets $\StabB_d\bigl(\APDA_Q(P)\bigr)$ (the stable budget sets, which are implicitly communicated by the representation protocol in \autoref{thrm:representing-DA}) and $\Menu_d^{\APDA_Q}(P_{-d})$ differ. For perhaps the most crucial high-level difference, note that $d$'s own preference $P_d$ can influence $\StabB_d\bigl(\APDA_Q(P)\bigr)$.
For a concrete example of how these sets can differ, consider the following example (taken from the related work section of \cite{GonczarowskiHT22}):
\begin{example}[\cite{GonczarowskiHT22}]
  Let institutions $h_1, h_2, h_3,h_4$ all have capacity $1$, and consider applicants $d_1,d_2,d_3,d_4$. Let the priorities and preferences be:
  \begin{align*}
      & h_1: d_1 \succ d_2
        && d_1: h_1 \succ \ldots
   \\ & h_2: d_4 \succ d_3 \succ d_2 \succ d_1
        && d_2: h_1 \succ h_2 \succ h_4 \succ \ldots
   \\ & h_3: d_3
        && d_3: h_3 \succ \ldots
   \\ & h_4: d_2 \succ d_4 
        && d_4: h_4 \succ h_2 \succ \ldots
  \end{align*}
  Then $\APDA_Q(P)$ pairs $h_i$ to $d_i$ for each $i=1,\ldots,4$. Now, for institution $h_2$, consider which applicants $d$ have $h_2 \in \StabB_d\bigl(\APDA_Q(P)\bigr)$, and which have $h_2 \in \Menu_d^{\APDA_Q}(P_{-d})$.
  First, $h_2$ is in the stable budget set of applicants $d_2$, $d_3$, and $d_4$,
    so despite $d_3$ being higher priority than $d_2$ at $h_2$, $h_2$ is \emph{not} on $d_3$'s menu.
  Second, $h_2$ is in the menu of applicants $d_1$, $d_2$, and $d_4$,
    so despite $d_1$ being lower priority than $d_2$ at $h_2$, $h_2$ \emph{is} on $d_1$'s menu.
\end{example}

More generally, for $\APDA$ the menu differs from the stable budget set in two ways. First, if $h$ is an institution who would accept a proposal from $d$, but a rejection cycle would lead to $d$ being kicked back out, then $h \in \StabB_d(\APDA) \setminus \Menu^{\APDA}_d$. Second, if $h$ is an institution such that $h$ received a proposal from $\mu(h)$ only as a result of $d$ proposing to $\mu(d)$ (and moreover, we have $\mu(h) \succ_h d \succ_h d'$, where $d'$ is the match of $h$ if $d$ submits an empty list), then it is possible that $h \in \Menu_d^{\APDA} \setminus \StabB_d(\APDA)$. 
In other words, calculating the menu of an applicant $d$ must take into account both the fact that $d$ might hypothetically propose to some $h \ne \mu(d)$, and the fact that $d$ might no longer propose to $\mu(d)$.
Our main result in this section harnesses this intuition to show that the all-menus complexity of $\DA$ is high:

\begin{theorem}
  \label{thrm:all-menus-DA}
  In a one-to-one market with $n$ applicants and $n$ institutions, the all-menus complexity of any stable matching mechanism is $\widetilde\Theta(n^2)$.
  In a many-to-one market, the all-menus complexity of any stable matching mechanism is $\Omega\bigl(|\Insts|^2\bigr)$.
\end{theorem}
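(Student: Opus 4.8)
The plan is to build a single family of one-to-one markets, parametrized by $\Omega(n^2)$ independent bits, such that each distinct bit-vector produces a distinct profile of menus $\bigl(\Menu_1, \ldots, \Menu_n\bigr)$; this immediately gives the $\widetilde\Omega(n^2)$ lower bound, and the matching upper bound $\widetilde O(n^2)$ is trivial since a single applicant's menu is a subset of $\Insts$ and there are $n$ of them. The many-to-one case will follow by padding: apply the one-to-one construction to $|\Insts|$ of the institutions and treat the extra applicants (and extra seats, if any) as irrelevant dummies. So the entire content is the one-to-one lower bound.

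The construction should follow the intuition spelled out just before the theorem: the menu of an applicant $d$ differs from its stable budget set precisely because $d$'s hypothetical deviations can trigger rejection chains. I would set up $k = \Theta(n)$ ``gadget'' applicants $d_1, \ldots, d_k$ whose menus we will control, and use $\Omega(k^2)$ bits $b_{i,j}$ arranged so that whether institution $h_j$ lies on $d_i$'s menu is governed by $b_{i,j}$. The natural mechanism for this is a rejection chain: arrange a ``spine'' of institutions and applicants so that if $d_i$ deviates to point at some institution $h$, a chain of rejections propagates; place, for each pair $(i,j)$, a small optional link into this chain whose presence is encoded by $b_{i,j}$, so that $h_j$ becomes reachable (hence obtainable, hence on $d_i$'s menu) by $d_i$ if and only if $b_{i,j}=1$. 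One clean way to realize this is to reuse the un-rejection structure: by \autoref{thrm:ght-main-positive-DA}, $h \in \Menu_{d_i}^{\DA}(P_{-d_i})$ iff $d_i$ receives a proposal from $h$ in $\IPDA$ when $d_i$ rejects everything, so it suffices to design, for each $i$, the fixed preferences $P_{-d_i}$ so that the set of proposals received by $d_i$ in that run is exactly $\{h_j : b_{i,j}=1\}$ (together with some fixed ``anchor'' institutions ensuring all the other gadget applicants' menus are simultaneously unaffected). Because the menu of $d_i$ is independent of $P_{d_i}$, the bits $b_{i,j}$ must live in the \emph{other} applicants' reports, but since those other reports are shared across the definitions of all the $d_i$'s menus, I need the gadgets for different $i$ to be ``localized'' enough that setting $b_{i,\cdot}$ only affects $\Menu_{d_i}$ and not $\Menu_{d_{i'}}$ for $i' \ne i$ — e.g., by giving $d_i$ its own private block of $2k$ institutions that only $d_i$ and a few dedicated helper applicants ever rank.

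The main obstacle is exactly this \textbf{simultaneity/localization} requirement: in \autoref{thrm:type-to-menu-TTC} and the $\Omega(n^2)$ results of \autoref{sec:gtc}, one fixed applicant $d_*$ was the single source of variation, but here the $\Omega(n^2)$ information must be read off from $n$ menus at once, so I must ensure the bit block $b_{i,\cdot}$ is recoverable from $\Menu_{d_i}$ alone while the construction for $d_i$ does not disturb the construction for $d_{i'}$. I expect to handle this by making the market a disjoint (or near-disjoint) union of $k$ blocks, one per index $i$, each block being a self-contained stable-matching gadget on $\Theta(k)$ institutions in which exactly $d_i$'s menu encodes $b_{i,\cdot}$; the helper applicants in block $i$ rank only institutions in block $i$, so $\DA$ (which in a disjoint union of markets just runs separately on each block) produces menus that factor across blocks. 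Then counting: there are $2^{\Omega(k^2)}$ choices of the bit matrix, each yielding a distinct menu profile, hence the all-menus complexity is $\Omega(k^2) = \Omega(n^2)$. The remaining routine step is to verify, inside one block, that a rejection chain with $b_{i,j}$-gated links indeed places $h_j$ on $d_i$'s menu iff $b_{i,j}=1$ — this is a direct computation with $\IPDA$ using only order-independence of deferred acceptance (\autoref{thrm:da-indep-execution}) together with \autoref{thrm:ght-main-positive-DA}, and can be stated as a lemma analogous to \autoref{thrm:gtc-ipda-main-lemma}.
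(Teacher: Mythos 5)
Your overall plan (encode $\Omega(n^2)$ bits in the non-gadget applicants' lists so that each bit is readable off some gadget applicant's menu, then count) is the right one, and you correctly identify the simultaneity obstacle. But your proposed resolution of that obstacle --- giving each of the $k=\Theta(n)$ gadget applicants $d_i$ ``its own private block of $2k$ institutions'' and making the market a (near-)disjoint union of $k$ self-contained blocks of $\Theta(k)$ institutions each --- does not fit inside the market you are allowed: $k$ blocks of $\Theta(k)$ institutions require $\Theta(k^2)=\Theta(n^2)$ institutions, whereas the theorem concerns a market with only $n$ institutions (and $n$ applicants). If you instead shrink to $k=\Theta(\sqrt{n})$ blocks so that the blocks fit, you only encode $k^2=\Theta(n)$ bits, i.e.\ a linear lower bound. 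So the block-diagonal localization strategy cannot yield $\Omega(n^2)$; the whole difficulty of the theorem is that the institutions must be \emph{shared} across the gadget applicants' menus, with each applicant's $\Theta(n)$-length preference list encoding $\Theta(n)$ bits that are simultaneously legible from $\Theta(n)$ different menus.

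The paper resolves this with a shared, non-block construction on $3k$ applicants and $3k$ institutions: priorities $h^T_i: d^T_i \succ d^B_1 \succ \cdots \succ d^B_k$, \ $h^B_j: d^R_j \succ d^T_1 \succ \cdots \succ d^T_k \succ d^B_j$, \ $h^R_j: d^B_j \succ d^R_j$, and preferences $d^T_i: h^T_i$, \ $d^B_j: h^B_j \succ B_j \succ h^R_j$ with $B_j\subseteq\{h^T_1,\ldots,h^T_k\}$, \ $d^R_j: h^R_j \succ h^B_j$. If $d^T_i$ deviates to $\{h^B_j\}$ she displaces $d^B_j$, who walks down $B_j$; every $h^T_u\in B_j$ with $u\ne i$ is blocked by $d^T_u$, so $d^B_j$ is absorbed iff $h^T_i\in B_j$, and otherwise she reaches $h^R_j$, ejecting $d^R_j$, who reclaims $h^B_j$ and throws $d^T_i$ back out. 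Hence $h^B_j\in\Menu_{d^T_i}$ iff $h^T_i\in B_j$: the single list of $d^B_j$ encodes one column of the bit matrix, and the single menu of $d^T_i$ reads off one row, giving $2^{k^2}$ distinct menu profiles with only $\Theta(k)$ institutions. This factored ``row/column'' interaction (rather than disjoint blocks) is the missing idea in your sketch; your rejection-chain intuition and your use of the $\IPDA$-with-all-rejections characterization of the menu are compatible with it, but as written your construction cannot be instantiated at the claimed parameters.
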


\newcommand{\dAttempt}{d^{T}}
\newcommand{\hAttempt}{h^{T}}
\newcommand{\dQuestion}{d^{B}}
\newcommand{\hQuestion}{h^{B}}
\newcommand{\dRev}{d^{R}}
\newcommand{\hRev}{h^{R}}

\begin{proof}
  Let $n=3k$.  First we define the priorities $Q$.  For each $i=1,\ldots,k$, there are three institutions $\hAttempt_i, \hQuestion_i, \hRev_i$, and applicants $\dAttempt_i, \dQuestion_i,\dRev_i$. The priorities of the institutions are, for each $i=1,\ldots,k$:
  \begin{align*}
      \hAttempt_i : \dAttempt_i \succ 
        \dQuestion_1 \succ \ldots \succ \dQuestion_k
      &\qquad&
      \hQuestion_i : \dRev_i \succ
        \dAttempt_1 \succ \ldots \succ \dAttempt_k \succ \dQuestion_i
      &\qquad&
      \hRev_i : \dQuestion_i \succ \dRev_i
  \end{align*}
  The preferences of the applicants are parametrized by a family of subsets $B_i \subseteq \{ \hAttempt_1, \ldots, \hAttempt_k \}$, with one subset for each $i=1,\ldots,k$.  We define $P = P(B_1,\ldots,B_k)$ as follows:
  \begin{align*}
      \dAttempt_i : \hAttempt_i 
      &\qquad&
      \dQuestion_i : \hQuestion_i \succ B_i \succ \hRev_i
      &\qquad&
      \dRev_i : \hRev_i \succ \hQuestion_i  
  \end{align*}
  (Where the elements of $B_i$ can appear in $\dQuestion_i$'s list in any order.) These preferences and priorities are illustrated in \autoref{fig:all-menus-DA}.

  The key claim is the following:
  \begin{lemma}
    \label{lem:all-menus-da-characterization}
    Let $P = P(B_1,\ldots,B_k)$. Then $\hQuestion_j \in \Menu_{\dAttempt_i}^{DA_{Q}}\big(P_{-\dAttempt_i}\big)$ if and only if $\hAttempt_i \in B_j$.
  \end{lemma}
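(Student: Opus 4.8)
The plan is to compute $\Menu_{\dAttempt_i}^{DA_Q}(P_{-\dAttempt_i})$ directly via \autoref{thrm:ght-main-positive-DA}, which identifies this menu with the set of institutions that propose to $\dAttempt_i$ during the run of $\IPDA_Q(\dAttempt_i : \emptyset, P_{-\dAttempt_i})$, i.e.\ the run in which $\dAttempt_i$ rejects every proposal. First I would note that the only institutions whose priority lists contain $\dAttempt_i$ at all are $\hAttempt_i$ (where $\dAttempt_i$ is ranked first) and $\hQuestion_1,\ldots,\hQuestion_k$ (where $\dAttempt_i$ sits inside the block $\dAttempt_1\succ\cdots\succ\dAttempt_k$), so these are the only possible proposers to $\dAttempt_i$; moreover $\hAttempt_i$ necessarily proposes to $\dAttempt_i$ at the very start of the run. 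Hence the lemma reduces to the single claim: $\hQuestion_j$ proposes to $\dAttempt_i$ during this run if and only if $\hAttempt_i \in B_j$.

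For the ``if'' direction I would fix a convenient proposal order — legitimate since $\IPDA$ is order-independent (\autoref{thrm:da-indep-execution}) — in which every triple $\ell\neq i$ first reaches the settled configuration $\dAttempt_\ell\to\hAttempt_\ell$, $\dRev_\ell\to\hQuestion_\ell$, $\dQuestion_\ell\to\hRev_\ell$ (with $\hAttempt_i$ left unmatched, having been rejected by $\dAttempt_i$), and then $\hAttempt_i$ walks down its priority list proposing to $\dQuestion_1,\dQuestion_2,\ldots,\dQuestion_k$ in turn. When $\hAttempt_i$ proposes to $\dQuestion_j$, which is holding $\hRev_j$ at that moment: if $\hAttempt_i\notin B_j$, then $\dQuestion_j$ rejects $\hAttempt_i$ outright (it is unacceptable, as $\dQuestion_j$'s list is $\hQuestion_j\succ B_j\succ\hRev_j$) and nothing else changes; if $\hAttempt_i\in B_j$, then $\dQuestion_j$ prefers $\hAttempt_i$ to $\hRev_j$, so it accepts $\hAttempt_i$ and rejects $\hRev_j$, triggering the chain $\hRev_j\to\dRev_j$ (accepted, since $\dRev_j:\hRev_j\succ\hQuestion_j$, displacing $\hQuestion_j$), then $\hQuestion_j$ proposing down its list to $\dAttempt_1,\ldots,\dAttempt_k$ — all of whom reject it, and \emph{in particular $\dAttempt_i$ receives a proposal from $\hQuestion_j$} — and finally to $\dQuestion_j$, which now prefers its top choice $\hQuestion_j$ to $\hAttempt_i$ and so accepts $\hQuestion_j$, releasing $\hAttempt_i$. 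Then $\hAttempt_i$ continues to $\dQuestion_{j+1}$, and triple $j$ (now in configuration $\dAttempt_j\to\hAttempt_j$, $\dQuestion_j\to\hQuestion_j$, $\dRev_j\to\hRev_j$) is never disturbed again. Iterating, $\hAttempt_i$ is eventually rejected by every $\dQuestion_j$ and ends unmatched, so the institutions that propose to $\dAttempt_i$ are exactly $\hAttempt_i$ together with those $\hQuestion_j$ with $\hAttempt_i\in B_j$.

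For the ``only if'' direction I would argue at the level of the first disturbance: if $\hQuestion_j$ proposes to $\dAttempt_i$ then at some point $\dRev_j$ rejected $\hQuestion_j$; take the first such moment. At that moment $\dRev_j$ holds $\hRev_j$, so $\hRev_j$ had earlier been displaced from $\dQuestion_j$ by something $\dQuestion_j$ strictly prefers to $\hRev_j$, namely $\hQuestion_j$ or an element of $B_j$. It cannot be $\hQuestion_j$: for $\hQuestion_j$ to propose to $\dQuestion_j$ it must first have been rejected by $\dRev_j$, contradicting minimality. So $\dQuestion_j$ held some $\hAttempt_\ell\in B_j$; but $\hAttempt_\ell$ proposes past $\dAttempt_\ell$ only if $\dAttempt_\ell$ rejects it, and $\dAttempt_\ell$ — whose only acceptable institution is $\hAttempt_\ell$, at which $\dAttempt_\ell$ has top priority — never rejects $\hAttempt_\ell$ unless $\ell=i$. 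Hence $\hAttempt_i\in B_j$. Combining the two directions with \autoref{thrm:ght-main-positive-DA} gives the lemma.

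The step I expect to require the most care is the ``if'' direction: one must check that the cascades at distinct indices do not interfere (a cascade at index $j$ only bounces harmlessly off the $\dAttempt_\ell$'s, which all reject it, and leaves triple $j$ in a configuration from which it is never again disturbed, while triples $j'\neq j$ are untouched) and that $\hAttempt_i$ is never permanently captured by some $\dQuestion_\ell$ before reaching $\dQuestion_j$ (again because every cascade releases $\hAttempt_i$). Committing to one explicit execution order at the outset, and justifying it by \autoref{thrm:da-indep-execution}, is what keeps this bookkeeping manageable; reasoning about arbitrary orders would be considerably more delicate.
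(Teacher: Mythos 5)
Your proof is correct, but it takes a different route from the paper's. You compute the menu of $d^T_i$ via \autoref{thrm:ght-main-positive-DA}, i.e.\ by simulating the single run $\IPDA_Q(d^T_i:\emptyset, P_{-d^T_i})$ (with a convenient execution order justified by \autoref{thrm:da-indep-execution}) and reading off which institutions propose to $d^T_i$; this yields the membership of every $h^B_j$ in the menu simultaneously, and your first-disturbance argument for the converse is sound. The paper instead argues directly from the definition of the menu, one pair $(i,j)$ at a time: it changes $d^T_i$'s report to the singleton $\{h^B_j\}$ and traces the resulting $\APDA$ rejection chain, in which $d^B_j$ is displaced from $h^B_j$, proposes through $B_j$, and either lands at the vacant $h^T_i$ (when $h^T_i\in B_j$) or falls to $h^R_j$, sending $d^R_j$ to reclaim $h^B_j$ and evict $d^T_i$ (when $h^T_i\notin B_j$); implicitly this uses the standard fact that $h$ is on the menu iff reporting $\{h\}$ yields $h$. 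The trade-off: the paper's argument is shorter and more self-contained per pair but must be repeated for each $j$, while yours leans on the cited $\IPDA$ characterization and requires the bookkeeping you flag (explicit order, non-interference of cascades, the separate only-if argument) but certifies the whole menu from one run. Two trivial touch-ups to your write-up: the initial ``settled configuration'' should also record that $h^B_i$ and $h^R_i$ make their first proposals to $d^R_i$ and $d^B_i$ (this is needed so the cascade description applies verbatim to the case $j=i$), and since $h^B_j$ proposes to $d^R_j$ only once, the phrase ``first such moment'' is really just ``the unique such moment''—the minimality contradiction still goes through as you state it.
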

  To prove this lemma, consider changing $\dAttempt_i$'s list to $\bigr\{ \hQuestion_j \bigr\}$ to get a preference profile $P'$.  This will cause $\dQuestion_j$ to be rejected from $\hQuestion_j$ and start proposing to each institution in $B_j \subseteq \bigl\{\hAttempt_1,\ldots,\hAttempt_k\bigr\}$. Such an institution $\hAttempt_u$ will accept the proposal from $\dQuestion_j$ if and only if they have not received a proposal from $\dAttempt_u$.  But among $\bigl\{\dAttempt_1, \ldots, \dAttempt_k\bigr\}$, only $\dAttempt_i$ does not propose to $\hAttempt_i$ in $P'$.  So $\dQuestion_j$ will propose in the end to $\hRev_j$ if and only if $\hAttempt_i \notin B_j$.  If $\dQuestion_j$ proposes to $\hRev_j$, then $\dRev_j$ will next propose to $\hQuestion_j$, so $\hQuestion_j \notin \Menu_{\dAttempt_i}(P_{-\dAttempt_i})$.  If $\dQuestion_j$ is instead accepted by $\hAttempt_i$, then we have $\hQuestion_j \in \Menu_{\dAttempt_i}(P_{-\dAttempt_i})$, as desired. 
  This proves \autoref{lem:all-menus-da-characterization}.

  Thus, there is a distinct profile of menus $\big( \Menu_{\dAttempt_1}, \ldots, \Menu_{\dAttempt_k} \big)$ for each distinct profile of $(B_1,\ldots,B_k)$, of which there are $2^{k^2}$ , and the all-menus complexity of $\APDA$  is $\Omega(k^2) = {\Omega\bigl(|\Insts|^2\bigr)}$.
  Moreover, since by \autoref{thrm:same-menu-all-stable}, each applicant's menu is the same in all stable mechanisms, the same bound holds for any stable matching mechanism.
\end{proof}

\begin{figure}[tbp]
  \begin{minipage}[c]{0.6\textwidth}
    \includegraphics[width=\textwidth]{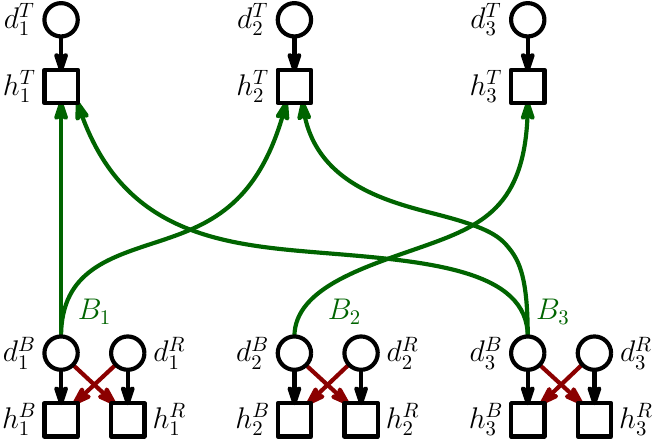}
  \end{minipage}
  \qquad
  \begin{minipage}[c]{0.35\textwidth}
    \caption[All DA Menus Together Lower Bound Construction]{
    Illustration of the set of preferences and priorities used to show that the all-menus complexity of $\DA$ is $\Omega(n^2)$ (\autoref{thrm:all-menus-DA}).
    }
    {\footnotesize \textbf{Notes:} 
    An applicant $d^T_i$ has institution $h^B_j$ on her menu if and only if $h^T_i \in B_j$, i.e. if and only if $d^B_j$ would propose to $h^T_i$ if she is rejected from $h^B_j$. 
    \par}
    \label{fig:all-menus-DA}
  \end{minipage}
\end{figure}

We remark that, if the priorities $Q$ were not known in advance by the applicants, then the above construction could have been much simpler. For example, we could simply let $h^B_i$ accept or not accept a proposal from each $d^T_j$ independently, such that $h^B_i$ is in the menu of $d^T_j$ if and only if $h^B_i$ would accept their proposal.\footnote{
  In contrast, for our lower bounds in Section~\ref{sec:gtc} and~\ref{sec:type-to-menu}, we do not know of any significantly simpler constructions that get the same bounds by allowing the priorities $Q$ to vary. 
}

It turns out that the same complexity result holds for $\TTC$, namely, the all-menu complexity of $\TTC$ is $\Omega\bigl(|\Insts|^2\bigr)$.  This is perhaps less surprising than our result for $\DA$. For instance, consider the case where $|\Appls|\gg|\Insts|^2$, and recall that in this case, our result \autoref{thrm:representing-TTC} shows that the representation complexity of $\TTC$ is $\Omega\bigl(|\Insts|^2\bigr)$. Moreover, for any strategyproof mechanism $f$, the all-menus complexity must be at least as high as the representation complexity (since for a strategyproof mechanism, any representation of the menus of each applicant suffices to describe to each applicant their own match). Thus, when $|\Appls|\gg|\Insts|^2$, we already know that the all-menus complexity of $\TTC$ is $\Omega\bigl(|\Insts|^2\bigr)$. 

Our next result shows that the all-menus complexity of $\TTC$ is $\Omega\bigl(|\Insts|^2\bigr)$ (even in the balanced case where $|\Insts|=|\Appls|$). The construction is similar to that of \autoref{thrm:all-menus-DA}, except the way that $\TTC$ works allows us to make the construction quite a bit simpler.
\begin{theorem}
  \label{thrm:all-menus-TTC}
  In a one-to-one market with $n$ applicants and $n$ institutions, the all-menus complexity of $\TTC$ is $\widetilde\Theta(n^2)$.
  In a many-to-one market, the all-menus complexity of any stable matching mechanism is $\Omega\bigl(|\Insts|^2\bigr)$.
\end{theorem}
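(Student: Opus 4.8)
The plan is to mimic the construction behind \autoref{thrm:all-menus-DA}, but in a simpler form enabled by the combinatorics of $\TTC$. I would work with a housing-market-style instance on $n=2k$ participants: applicants $d^T_1,\dots,d^T_k,d^B_1,\dots,d^B_k$ and institutions $h^T_1,\dots,h^T_k,h^B_1,\dots,h^B_k$, all with capacity $1$. Fix the priorities $Q$ so that $h^T_i$ ranks $d^T_i$ first (and everyone else below, in any fixed order) and $h^B_j$ ranks $d^B_j$ first (likewise). Parametrize a family of preference profiles by subsets $B_1,\dots,B_k\subseteq\{h^T_1,\dots,h^T_k\}$: set $d^T_i : h^T_i$ (fixed, independent of the $B$'s) and $d^B_j : B_j\succ h^B_j$ (the elements of $B_j$ in any fixed order, followed by $h^B_j$). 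There are $2^{k^2}$ such profiles. The $\widetilde O(n^2)$ upper bound is immediate (each of the $n$ menus is a subset of $\Insts$), so the content is the $\Omega(n^2)$ lower bound.

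The key lemma I would establish is: for all $i,j$, $h^B_j\in\Menu^{\TTC_Q}_{d^T_i}(P_{-d^T_i})\iff h^T_i\in B_j$. Since $\TTC$ is strategyproof, $h\in\Menu_d(P_{-d})$ iff $d$ is matched to $h$ when $d$ reports the singleton list $\{h\}$, so I analyze $\TTC_Q$ when $d^T_i$ reports $\{h^B_j\}$ and everyone else reports as in $P(B_1,\dots,B_k)$. Using order-independence of $\TTC$ (\autoref{thrm:TTC-order-independent}), first eliminate the $2$-cycles $\{d^T_u,h^T_u\}$ for all $u\ne i$, which removes every $h^T_u$ with $u\ne i$ from the market. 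If $h^T_i\notin B_j$, then $d^B_j$'s entire $B_j$-prefix is now gone, so $d^B_j$ points to $h^B_j$ and the $2$-cycle $\{d^B_j,h^B_j\}$ forms; this consumes $h^B_j$ before $d^T_i$ can obtain it, so $d^T_i$ goes unmatched and $h^B_j\notin\Menu_{d^T_i}$. If instead $h^T_i\in B_j$, then $h^T_i$ is the unique surviving element of $B_j$, so $d^B_j$ points to $h^T_i$; since $h^T_i$ points to $d^T_i$ (its top priority, still present), $d^T_i$ points to $h^B_j$, and $h^B_j$ points to $d^B_j$, the $4$-cycle $h^T_i\to d^T_i\to h^B_j\to d^B_j\to h^T_i$ forms and matches $d^T_i$ to $h^B_j$, so $h^B_j\in\Menu_{d^T_i}$. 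A short argument---any cycle through $h^T_i$ must route $h^T_i\to d^T_i\to h^B_j\to d^B_j\to h^T_i$, since those are the forced out-edges---shows that when $h^T_i\in B_j$ the institution $h^B_j$ cannot be consumed by any other cycle, so the dichotomy is exact.

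Given the lemma, distinct tuples $(B_1,\dots,B_k)$ yield distinct all-menus vectors: from $\bigl(\Menu_{d^T_1}(P_{-d^T_1}),\dots,\Menu_{d^T_k}(P_{-d^T_k})\bigr)$ one recovers each $B_j=\{\,h^T_i : h^B_j\in\Menu_{d^T_i}(P_{-d^T_i})\,\}$. Hence there are at least $2^{k^2}$ distinct all-menus vectors, and the all-menus complexity of $\TTC$ is $\Omega(k^2)=\Omega(n^2)$. For the many-to-one statement, one applies this same balanced construction to $\Theta(|\Insts|)$ of the institutions (all capacity $1$, with any surplus applicants reporting empty lists), giving $\Omega(|\Insts|^2)$.

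I expect the main obstacle to be the careful bookkeeping in the cycle analysis of the singleton-list experiment---in particular, ruling out that $h^B_j$ is matched ``by accident'' when $h^T_i\in B_j$ (which would wrongly remove it from the menu) and confirming it is always matched when $h^T_i\notin B_j$. Leaning on $\TTC$'s order-independence to normalize the elimination order (always clearing the $\{d^T_u,h^T_u\}$ $2$-cycles for $u\ne i$ first) keeps this tractable, and the fact that every institution here has a single acceptable top-priority applicant makes the $\TTC$ pointing graph essentially a disjoint union of short paths---which is exactly why this construction can be ``quite a bit simpler'' than the $\DA$ one.
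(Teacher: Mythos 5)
Your proposal is correct and takes essentially the same route as the paper: the identical market and priorities, the identical preference profiles parametrized by subsets $B_j\subseteq\{h^T_1,\ldots,h^T_k\}$, and the identical key lemma $h^B_j\in\Menu_{d^T_i}^{\TTC_Q}(P)\iff h^T_i\in B_j$, proved by letting $d^T_i$ report $\{h^B_j\}$, clearing the $\{d^T_u,h^T_u\}$ two-cycles for $u\ne i$, and reading off the resulting cycle, followed by the same $2^{k^2}$ counting step. Your write-up merely makes explicit a few steps the paper leaves terse (the singleton-list characterization of the menu via strategyproofness and the check that $h^B_j$ cannot be consumed by any other cycle), so there is nothing substantively different to compare.
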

\begin{proof}
  Let $n = 2k$.  First we define the priorities $Q$.  For each $i=1,\ldots,k$, there are institutions $\hAttempt_i, \hQuestion_i$, and applicants $\dAttempt_i, \dQuestion_i$. The priorities of the institutions are, for each $i=1,\ldots,k$:
  \begin{align*}
      \hAttempt_i : \dAttempt_i &\qquad& \hQuestion_i : \dQuestion_i
  \end{align*}
  The preferences are parametrized by a family of subsets of $B_i \subseteq \bigl\{ \hAttempt_1, \ldots, \hAttempt_k \bigr\}$ for each $i=1,\ldots,k$.  We define $P(B_1,\ldots,B_k)$ as follows: for each $i=1,\ldots,k$:
  \begin{align*}
      \dAttempt_i : \hAttempt_i &\qquad& \dQuestion_i : B_i \succ \hQuestion_i
  \end{align*}
  (Where the elements of $B_i$ can appear in $\dQuestion_i$'s list in any order.)
  The key claim is the following:
  \begin{lemma}
    \label{lem:all-menus-ttc-characterization}
    Let $P = P(B_1,\ldots,B_k)$. Then
    $\hQuestion_j \in \Menu_{\dAttempt_i}^{TTC_{Q}}\big(P\big)$ 
    if and only if $\hAttempt_i \in B_j$.
  \end{lemma}
  To prove this lemma, consider changing $\dAttempt_i$'s list to $\bigl\{ \hQuestion_j \bigr\}$ to get a preference profile $P'$. Consider now the run of $\TTC$. After all $\dAttempt_j$ for $j\ne i$
    have been matched to the corresponding $\hAttempt_j$, 
    observe that $\dQuestion_j$ points transitively to $\dAttempt_i$ if
    and only if $\hAttempt_i \in B_j$. The former is equivalent to $\hQuestion_j$
    being on $\dAttempt_i$'s menu.
    This proves \autoref{lem:all-menus-ttc-characterization}.

    Thus, there is a distinct profile of menus $\big( \Menu_{\dAttempt_1}, \ldots, \Menu_{\dAttempt_k} \big)$ for each distinct profile of $(B_1,\ldots,B_k)$, of which there are $2^{k^2}$, so the all-menus complexity of $\TTC$ is $\Omega(k^2) = \Omega\bigl(|\Insts|^2\bigr)$.
\end{proof}

\subsection{Type-to-Another's-Match Complexity}
\label{sec:type-to-anothers-match}

As exposited in \autoref{sec:intro}, the outcome-effect complexity gives a way to measure ``how much'' (or more precisely, ``in how complex of a fashion'') one applicant can affect the matching.
We measure this complexity via the number of bits it takes to represent the function from one applicant's preference $P_d$ to some other piece of data.
An easier version of this question could be to consider the complexity of one applicant's effect on a \emph{single} applicant's match, rather than the full outcome matching.
We separately consider the effect of one's report on another's match, and on one's own match, as follows:

\begin{definition}
  \label{def:type-to-one-match}
  The \emph{type-to-another's-match complexity} of a matching mechanism $f$ is
  \[ \log_2\ \max_{d_*, d_\dagger\in\Appls} \left|\left\{ 
    f_{d_\dagger}(\cdot, P_{-d_*})\ \big|\ P_{-d_*}\in\T_{-d_*}
    \right\}\right|, \]
    where $f_{d_\dagger}(\cdot, P_{-d_*}) : \T_{d_*} \to \M$ is the function mapping each $P_{d_*}\in\T_{d_*}$ to the match of applicant $d_\dagger\ne d_*$ in $f(P_{d_*}, P_{-d_*})$.

  The \emph{type-to-one's-own-match} complexity is as in the above definition, except taking $d_\dagger = d_*$.
\end{definition}

This is a special case of our other complexity measures: If one knows the entire matching, one knows any individual match; moreover, if the mechanism is strategyproof and one knows another applicant's menu, then one can derive what they will match to. Thus:
\begin{observation}
  \label{thrm:ttam-less-than-ttm}
  For any matching mechanism, the type-to-one's-own-match complexity and the type-to-another's-match complexity are at most the outcome-effect complexity.
  Additionally, for any strategyproof matching mechanism, the type-to-another's-match complexity is at most the options-effect complexity.
\end{observation}

For simplicity, in this section we consider the case where $n = |\Appls|=|\Insts|$.

\autoref{def:type-to-one-match} can be related to the classical notions of strategyproofness and nonbossiness.
For any strategyproof mechanism, an applicant is always matched to her top-ranked institution on her menu, by \autoref{thrm:TaxationPrinciple}. Thus, writing down $d_*$'s menu suffices to describe $d_*$'s match under any possible report, showing that the type-to-one's-own-match complexity is $\widetilde O(n)$. If the mechanism is additionally nonbossy, there are at most $n$ matchings $\mu$ that can result from $d_*$ submitting any preference list $P_{d_*}$; by writing down both the menu and the value of $\mu(d_\dagger)$ for each resulting matching $\mu$, one additionally knows the function $f_{d_\dagger}(\cdot,P_{d_*})$ for $d_\dagger \ne d_*$. This shows that the type-to-another's-match complexity of a strategyproof and nonbossy mechanism is $\widetilde O(n)$. 
Recalling that $\TTC$ and $\APDA$ are strategyproof, and $\TTC$ is nonbossy, we get:

\begin{observation}
  \label{thrm:one-match-TTC}
  The type-to-one's-own-match complexity of $\TTC$ and $\APDA$ is $\widetilde \Theta(n)$.
  The type-to-another's-match complexity of $\TTC$ is $\widetilde \Theta(n)$.
\end{observation}

In contrast, note that the type-to-another's-match complexity of $\APDA$ is not immediately clear from first principles: since $\APDA$ is bossy, $d_*$'s menu does not completely determine the mapping from $d_*$'s report to $d_\dagger$'s match.
Note also that for the non-strategyproof mechanism $\IPDA$, it is not immediately clear how $d_*$'s report $P_{d_*}$ determines $d_*$'s own match, let alone $d_\dagger$'s.
Nonetheless, we can harness our characterization of the options-effect complexity of $\DA$ from \autoref{sec:type-to-menu-DA} to now bound these complexity measures:

\begin{corollary}
  \label{thrm:one-match-DA}
  The type-to-one's-own-match complexity of $\IPDA$ is $\widetilde\Theta(n)$.
  The type-to-another's-match complexity of both $\APDA$ and $\IPDA$ is $\widetilde\Theta(n)$.
\end{corollary}
\begin{proof}
  First, consider the type-to-one's-own-match complexity of $\IPDA$. \autoref{lem:stabmatch-characterizes-ipda} shows that if we define $\UnrejGr$ with $S = \{ d_* \}$, then under preference $P_{d_*}$, applicant $d_*$ will match to $\stabmatch_{d_*}(P_{d_*})$ in $\IPDA(d_*: P_{d_*})$. Thus, $\UnrejGr$ with $S = \{d_*\}$ suffices to represent $f_{d_*}(\cdot,P_{-d_*})$.

  Now, the type-to-another's-match complexity of $\APDA$ follows directly from \autoref{thrm:type-to-menu-DA} and the fact that $\APDA$ is strategyproof. For $\IPDA$, consider $\UnrejGr$ with $S = \{ d_*, d_\dagger \}$, and observe that the $d_\dagger$-nodes in $\UnrejGr \setminus \chain(\stabmatch_{d_*}(P_{d_*}))$ will also correspond to $\UnrejGr$ defined with $S = \{d_\dagger\}$ (and $d_*$ submitting list $P_{d_*}$), with the same ordering according to $\trianglelefteq$.
  Thus, \autoref{lem:stabmatch-characterizes-ipda} shows that $\stabmatch_{d_\dagger}(P_{d_\dagger})$ in this restricted instance of $\UnrejGr$ suffices to give the match of $d_\dagger$ in $\IPDA(d_* : P_{d_*}, d_\dagger : P_{d_\dagger})$, as desired.
\end{proof}

\subsection{All-Type-To-One-Match Complexity}
\label{sec:all-type-to-one-match}

Our Section~\ref{sec:gtc} and~\ref{sec:type-to-menu} ask how one applicant's report can affect a mechanism, and our
Section~\ref{sec:compression-complexity} and~\ref{sec:verif-complexity} ask how some information (or verification task) can be conveyed to all applicants simultaneously.
For completeness, we next investigate a complexity measure that unites these two ideas.
Specifically, we take the ``easiest / lowest'' type of complexity measure that considers how one applicant can affect the mechanism (namely, the type-to-another's-match complexity from \autoref{sec:type-to-anothers-match}), and investigate it under the agenda where information must be conveyed about all applicants simultaneously. Our definition is:

\begin{definition}
  \label{def:all-type-to-one-match}
  The \emph{all-type-to-one-match} complexity of a matching mechanism $f$ is
  \[
  \log_2\ \max_{d_\dagger} \left| \left\{
  \bigl(\ {f}^1_{d_\dagger}(\cdot, P_{-1}), {f}^2_{d_\dagger}(\cdot, P_{-2}), \ldots, {f}^n_{d_\dagger}(\cdot,P_{-n})\ \bigr) \ \middle|\ P \in \T
  \right\}\right|,
  \]
  where for each $d\in\Appls = \{1,\ldots,n\}$, the function $f_{d_\dagger}^d(\cdot,P_{-d}) : \T_d \to \Insts$ is such that $f_{d_\dagger}^d(P_d',P_{-d})$ is the match of applicant $d_\dagger$ in $f(P_d', P_d)$.
\end{definition}

While we mostly consider this complexity measure for completeness, one interesting aspect of this measure is that it is high even for serial dictatorship, as we show in the following theorem.
In particular, a direct corollary of our next theorem is that the all-type-to-one-match complexity of $\TTC$ and $\DA$ are $\Omega(n^2)$, and furthermore, that more complicated variants of \autoref{def:all-type-to-one-match} (such as the complexity of representing every applicant's map from their type to the matching overall, i.e., combining \autoref{def:all-type-to-one-match} with \autoref{def:type-to-matching}) also yield $\Omega(n^2)$ complexity for any of these mechanisms.

\begin{theorem}
  \label{thrm:all-type-to-one-match-sd}
  In a one-to-one market with $n$ applicants and $n$ institutions, the \emph{all-type-to-one-match complexity} of serial dictatorship is $\widetilde\Theta(n^2)$.
\end{theorem}
\begin{proof}
  Fix $k$, where we will take $n = \Theta(k)$.
  Consider applicants $d^T_1,\ldots,d^T_k,d^B_1,\ldots,d^B_k,d_\dagger$, and let this order over these applicants be the (single) priority order in the serial dictatorship mechanism.
  Consider also institutions $h^T_1,\ldots,h^T_k,h^B_1,\ldots,h^B_k,h_\dagger$.
  Now, for any profiles of sets $B_1,\ldots,B_k \subseteq \bigl\{h^T_1,\ldots,h^T_k\bigr\}$, define a set of preferences $P$ such that:
  \begin{align*}
    & 
    d^T_i : h^T_i && \text{For each $i=1,\ldots,k$}
    \\ &
    d^B_i : h^B_i \succ B_i \succ h_\dagger && \text{For each $i=1,\ldots,k$}
    \\ &
    d_\dagger : h_\dagger
  \end{align*}
  The key claim is the following:
  \begin{lemma}
    \label{thrm:all-type-to-one-match-sd-lemma}
    If the preference list of $d^T_i$ is changed to $\bigl\{ h^B_j \bigr\}$, then applicant $d_\dagger$ will match to $h_\dagger$ if and only if $h^T_i \in B_j$.
  \end{lemma}
  To prove this lemma, observe that if applicant $d^T_i$ switches their preference to only rank $h^B_j$, then these two participants will permanently match. After all applicants in $\bigl\{d^T_1,\ldots,d^T_k\bigr\}$ choose, only $h^T_i$ will still be unmatched. Then, when $d^B_j$ is called to pick an institution, she will pick $h_\dagger$ if and only if $h^T_i \notin B_j$. This proves \autoref{thrm:all-type-to-one-match-sd-lemma}.

  Thus, there is a distinct profile of functions $\big(f^{d^T_1}_{d_\dagger}(\cdot, P_{-d^T_1}),\ldots,f^{d^T_k}_{d_\dagger}(\cdot, P_{-d^T_k})\big)$ for each distinct profile $(B_1,\ldots,B_k)$, of which there are $2^{k^2}$, so the all-type-to-one-match complexity of Serial Dictatorship is $\Omega(k^2) = \Omega(n^2) = \Omega\bigl(|\Insts|^2\bigr)$.
\end{proof}


\subsection{Options-Effect Complexity of \texorpdfstring{$\SDrot$}{SDrot}}
\label{sec:serial-dict-rot}

We now give an additional result concerning $\SDrot$ (as defined in \autoref{def:SDrot}). While \autoref{thrm:gtc-SDrot-LB} shows that this mechanism has high outcome-effect complexity, it turns out to have low options-effect complexity. (In this way, the complexity measures of $\SDrot$ are similar to the complexity measures of $\DA$, though interestingly, we only know how to embed $\SDrot$ into $\TTC$ but not into $\DA$.) This gives some formal sense in which our bounds on the outcome-effect complexity of $\TTC$ and the options-effect complexity of $\TTC$, which are both $\Omega(n^2)$, must hold for different reasons (or more precisely, it shows why the proof approach of \autoref{thrm:gtc-TTC-LB} will not suffice to prove \autoref{thrm:type-to-menu-TTC}).

\begin{theorem}
  \label{thrm:type-to-menu-SDrot}
  The options-effect complexity of $\SDrot$ is $\widetilde O(n)$.
\end{theorem}
\begin{proof}
  For some fixed $d_*, d_\dagger$, we bound the number of bits required to represent the function $g(P_{d_*}) = \Menu_{d_\dagger}^{\SDrot}(P_{d_*},P_{-\{d_*, d_\dagger\}})$. Recall that in $\SDrot$, applicant $d_0$ picks some institution $h^{\mathsf{rot}}_j$, and the other applicants are matched among $\{h_1,\ldots,h_n\}$ according to $\SD_{d_j,d_{j+1},\ldots,d_n}(\cdot)$.
  First, note that if we consider any $d_* \ne d_0$, the question can only be as hard as for $\SD$, and if we consider any $d_\dagger = d_i$ for $i < n$, then the applicants $d_{i'}$ for $i'>i$ cannot possibly affect $d_\dagger$'s menu.
  Thus, it is without loss of generality to take $d_* = d_0$ and $d_\dagger = d_n$.

  Now, the key lemma is the following:
  \begin{lemma}
    \label{lem:type-to-menu-SDrot-main-lemma}
    Fix a set of preferences $P$, and suppose $h$ is in $d_n$'s menu under the mechanism $\SD_{d_j, d_{j+1},\ldots,d_n}(P_j, P_{j+1},\ldots)$ for some $j$.
    Then, $h$ is in $d_n$'s menu under $\SD_{d_{j+1}, d_{j+2},\ldots,d_n}(P_{j+1}, P_{j+2},\ldots)$ as well.
  \end{lemma}
  To prove this lemma, consider the run of $\SD_{d_{j+1}, d_{j+2},\ldots,d_n}$ compared to $\SD_{d_j, d_{j+1},\ldots,d_n}$. Similarly to the proof of     \autoref{thrm:gtc-SD-UB-key-lemma-necessary},
  observe that when each applicant $d_k$ for $j < k \le n$ picks from her menu, she can have only fewer options in $\SD_{d_j, d_{j+1},\ldots,d_n}$ than in $\SD_{d_{j+1}, d_{j+2},\ldots,d_n}$. In particular, this applies to $d_n$, proving
  \autoref{lem:type-to-menu-SDrot-main-lemma}.

  Thus, to represent the function $g(\cdot)$, we claim that it suffices write down an ordered list $S_1, S_2, \ldots, S_n \subseteq \Insts$ defined as follows: For each $i>1$, the set $S_i \subseteq \Insts$ is the subset of institutions that are on $d_n$'s menu in $\SD_{d_{i+1},d_{i+2},\ldots,d_n}$ but not in $\SD_{d_{i},d_{i+1},\ldots,d_n}$. For $i=1$, set $S_1$ is the menu of $d_n$ in $\SD_{d_{1},d_{2},\ldots,d_n}$. Then, \autoref{lem:type-to-menu-SDrot-main-lemma} implies that for any $P_{d_*}$, if $j_*$ is such that applicant~$0$ picks $h^{\mathsf{rot}}_{j_*}$, we have $g(P_{d_*}) = \bigcup_{j=1}^{j_*} S_j$.
  Moreover, \autoref{lem:type-to-menu-SDrot-main-lemma} implies that each institution can appear in at most one set $S_i$, so we can represent this list in $\widetilde O(n)$ bits, as claimed.
\end{proof}

\subsection{Additional Remarks}
\label{sec:additional-remarks}

Here, we make some additional remarks relevant to representation and verification protocols in \autoref{sec:compression-complexity} and \autoref{sec:verif-complexity}.

\begin{remark}
  \label{remark:score-based-model}
  While our model (and our lower bounds) assume that the entire priority lists of each institution are known to all applicants, all of the protocols we mention or construct under the model of \autoref{sec:protocol-models} can actually be implemented with a more mild assumption on the knowledge of the priorities. 
  Assume that there are scores $e_h^d \in [0,1]$  for each $d\in\Appls$ and $h\in\Insts$, and let $d \succ_h d'$ if and only if $e_h^d > e_h^{d'}$.
  Assume that each applicant $d$ starts off knowing $\bigl\{ e_h^d \bigr\}_{h \in \Insts}$, but any other information must be communicated through the certificate $C$.
  First, observe that the protocols adapted from \cite{AzevedoL16} and \cite{LeshnoL21} in \autoref{sec:compression-complexity} work exactly as written.
  
  Now, consider the verification protocol for $\TTC$ from \autoref{thrm:verif-complexity-ttc}. This verification protocol can still post a transcript of the deterministic protocol for $\TTC$ from \autoref{thrm:deterministic-cc-ttc}, since that deterministic protocol works even when the priorities must be communicated. However, note that the institutions are not agents participating in the protocol, so the institutions cannot themselves take part in the verification of the transcript of the protocol. Thus, every time that an institution $h$ points to an applicant $d$ during the transcript, the protocol should also announce the priority score of applicant $d$ at institution $h$; and each applicant who is not yet matched in the transcript should check that they do not have higher priority at $h$. This suffices to verify the transcript of the protocol (though interestingly, it's no longer clear how to make this protocol deterministic).

  For $\DA$, an analogous trick works. When the protocol to verify $\IPDA$ posts certificate $\mu, \ImprGrI$, it should also post the priority scores of $\mu(h)$ at $h$ for each $h \in \Insts$, and applicants will still be able to verify each edge $d \xrightarrow{h} d'$ in $\ImprGrI$, as in \autoref{lem:exist-predicates-checking-graph}.
  When the protocol to verify $\APDA$ posts certificate $\mu, \ImprGrA$, it should also post, for every edge $h\xrightarrow{d}h'$, the priority scores of $d$ at both $h$ and $h'$. Then, again each applicant knows what she needs in order to perform the verification as in \autoref{lem:exist-predicates-checking-graph}.
\end{remark}

\begin{remark}
  While \autoref{thrm:verif-complexity-da} holds for both $\APDA$ and $\IPDA$, it cannot be extended to any matching mechanism. To see why, consider a market with applicants $d^0_1,\ldots,d^0_n, d_1^1,\ldots,d_n^1$ and institutions $h_1^0,\ldots,h_n^0, h_1^1,\ldots,h_n^1$, where each applicant and institution has a full-length preference list. Suppose each institution $h_i^j$ for $i\in\{1,\ldots,n\}$ and $j \in \{0,1\}$ ranks $d_i^j$ first and ranks $d_i^{1-j}$ last, 
  with any fixed order between them. Now consider the set of preferences where each applicant $d_i^j$ for $i\in\{1,\ldots,n\}$ and $j \in \{0,1\}$ ranks $h_i^{1-j}$ first and ranks $d_i^{j}$ last, where any possible ordering over the other institutions appears between first and last. Observe that, for any profile of preferences in this class, the matchings $\mu_0 = \bigl\{ (d_i^j, h_i^j) \bigr\}_{i \in \{1,\ldots,n\}, j \in \{0,1\}}$ and $\mu_1 = \bigl\{ (d_i^j, h_i^{1-j}) \bigr\}_{i \in \{1,\ldots,n\}, j \in \{0,1\}}$ are both stable. Thus, consider a stable matching mechanism $f$ that (on inputs in this class) outputs either $\mu_0$ or $\mu_1$, depending on some arbitrary function $g : \T \to \{0,1\}$ of the preference lists of the $2n$ applicants, and suppose that the function $g$ has verification complexity $\Omega(n^2)$ (such a function can be constructed by standard techniques using a counting argument). Then, the verification complexity of $f$ will be at least the verification complexity of $g$, i.e., $\Omega(n^2)$.
\end{remark}

\begin{remark}
\label{remark:unbalanced-bounds}
We now remark on our complexity measures from \autoref{sec:gtc} and \autoref{sec:type-to-menu} in the case of many-to-one markets with $|\Appls|\gg|\Insts|$ (considered in \autoref{sec:compression-complexity} and \autoref{sec:verif-complexity}).
In this regime, some of our constructions leave a gap between a lower bound of $\Omega\bigl(|\Insts|^2\bigr)$ and the trivial upper bound of $\widetilde O\bigl(|\Insts|\cdot|\Appls|\bigr)$.\footnote{
  None of the bounds we present in \autoref{sec:gtc}, \autoref{sec:bit-complexity-menus}, or \autoref{sec:all-type-to-one-match} are tight in the $|\Appls|\gg|\Insts|$ case.
  For our options-effect lower bounds in \autoref{sec:type-to-menu}, one can show that our results are already tight: $\widetilde O(|\Insts|^2)$ suffices for $\TTC$ by nonbossiness, and $\widetilde O\bigl(|\Insts\bigr|)$ suffices for $\DA$ because the un-rejections graph $\UnrejGr$ in \autoref{sec:type-to-menu-DA} has at most one node for each element of $\{d_*,d_\dagger\}\times \Insts$ (and our proofs in that section hold as written for many-to-one markets). 
}
Still, each such bound of $\Omega\bigl(|\Insts|^2\bigr)$ gives a qualitative negative result, since we think of $\Omega\bigl(|\Insts|^2\bigr)$ as large/complex. 

We also remark that almost none of our results in this paper are tight in terms of the exact $\log$ factors. While these $\log$ factors are small (corresponding to the need to index a single applicant/institution), deriving bounds that are exactly asymptotically tight may require very different constructions that would allow the order of applicants' preference lists to vary much more dramatically than our constructions do. 
While this might possibly lead to interesting technical challenges, we believe that our results carry the main economic and complexity insights for each of the questions that we ask.
\end{remark}

\section{Additional Preliminaries}
\label{sec:additional-prelims}
\paragraph{Many-to-one matching markets.}

When we study many-to-one matching rules (particularly relevant in \autoref{sec:compression-complexity}), each $h \in \Insts$ is equipped with a fixed capacity $q_h \ge 1$, and we require that all matchings $\mu\in \M$ satisfy $\bigl|\mu(h)\bigr| \le q_h$ for each $h\in\Insts$.
To define each of the mechanisms we consider ($\TTC$, $\APDA$, $\IPDA$, and $\SD$) in many-to-one markets, one can use the following standard trick: For each institution $h_i \in \Insts$ with capacity $q_h$, define $q_h$ distinct institutions $h_i^1, h_i^2,\ldots, h_i^{q_h}$, each with capacity~$1$ and a priority list identical to $h_i$'s list in $Q$, and replace each $h_i\in\Insts$ on each preference list of each applicant $d\in\Appls$ with $h_i^1\succ h_i^2\succ\ldots\succ h_i^{q_h}$. Then, each mechanism is defined as the outcome in this corresponding one-to-one market.
In economics nomenclature, such preferences are called responsive preferences. 

\paragraph{ }

We also occasionally consider the classically studied property of nonbossiness:
\begin{definition}
  \label{def:nonbossy}
  A mechanism $f$ is \emph{nonbossy} if, for all $d\in\Appls$, all $P_d, P_d' \in \T_d$, and all $P_{-d}\in\T_{-d}$, we have the following implication:
  \[ f(P_d, P_{-d}) \ne f(P_d', P_{-d})
     \quad\implies\quad
     f_d(P_d, P_{-d}) \ne f_d(P_d', P_{-d}).
  \]
  That is, if changing $d$'s report changes some applicant's match, then it in particular changes $d$'s own match.
\end{definition}
$\TTC$ and $\IPDA$ are nonbossy, but $\APDA$ is bossy.

We now give some well-known properties of $\TTC$ and $\DA$, which we use throughout our paper. 

\paragraph{Properties of top trading cycles.}

\begin{lemma}[Follows from \cite{ShapleyS74}; \cite{RothP77}]\label{thrm:TTC-order-independent}
  The matching output by the $\TTC$ algorithm in \autoref{def:TTC} is independent of the order in which trading cycles are chosen and matched.
\end{lemma}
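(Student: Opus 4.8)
I would prove \autoref{thrm:TTC-order-independent} by strong induction on the number of remaining applicants, exploiting the fact that in each iteration of $\TTC$ the pointing graph is a \emph{functional graph}: every participating applicant and institution has out-degree exactly one, so the graph's directed cycles are pairwise vertex-disjoint, and the pointing graph on a set $U$ of remaining participants is determined by $U$ alone (each vertex points to its favorite / top-priority element of $U$). First, I would isolate a \emph{persistence} property: if $C$ is a directed cycle in the pointing graph on a participant set $U$, and $U' \subseteq U$ still contains every vertex of $C$, then $C$ is a directed cycle in the pointing graph on $U'$ as well. This is immediate — each vertex $v\in C$ points within $U$ to the next vertex $w$ of $C$, which is $v$'s favorite element of $U$; since $w \in U'$, it is still $v$'s favorite element of $U'$, so the edge $v\to w$ survives.

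Next, fix a market, let $G$ be its pointing graph (after discarding as unmatched any applicant whose list is exhausted), and assume some applicant points — otherwise every execution leaves all remaining applicants unmatched and there is nothing to prove. Then $G$, being a nonempty functional graph, has a directed cycle $C$, and I would show that \emph{every} execution $E$ of $\TTC$ matches $C$ exactly as prescribed by its edges. By persistence, $C$ stays a cycle at every step of $E$ at which $V(C)$ still remains in the market; a vertex of $C$ can leave the market only when a cycle through it is matched, but at such a step $C$ itself is still a cycle and cycles are vertex-disjoint, so that cycle must be $C$. Hence the vertices of $C$ leave the market all at once, precisely when $C$ is matched — and, by persistence again, $C$ is matched with its original edge set, i.e.\ as prescribed. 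Since $E$ halts only once no cycle remains, $C$ must get matched at some step.

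Finally, I would argue that deleting from $E$ the single step that matches $C$ yields a valid execution $E'$ of $\TTC$ on the reduced market with participant set $V\setminus V(C)$, producing the restriction of $E$'s matching to $V\setminus V(C)$. Steps of $E$ occurring after $C$ is matched are literally steps of the reduced market, since $V(C)$ is already gone. A step of $E$ occurring before $C$ is matched, say matching some cycle $D$ out of a remaining set $U\supseteq V(C)$: since $C$ is also a cycle on $U$ and $D\ne C$ (as $C$ is matched later), $D$ is vertex-disjoint from $C$, hence $D\subseteq U\setminus V(C)$, and by persistence $D$ is still a cycle on $U\setminus V(C)$, which is exactly the remaining set at the corresponding point of $E'$. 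So $E'$ is a valid execution of $\TTC$ on $V\setminus V(C)$; moreover $V\setminus V(C)$ has strictly fewer applicants (any directed cycle here alternates applicants and institutions, so contains at least one of each), so the inductive hypothesis gives that the matching produced by $E'$ is independent of the choice of execution, and therefore so is the matching produced by $E$, being $C$'s matching together with $E'$'s. The base case (no applicant points) is trivial. I expect the only real care to be needed in the persistence lemma and in checking that the leftover steps of $E$ genuinely form an execution on the smaller market; the rest is bookkeeping.
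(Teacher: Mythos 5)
The paper does not actually prove \autoref{thrm:TTC-order-independent}: it is stated as a known fact with citations to \cite{ShapleyS74} and \cite{RothP77}, so there is no in-paper argument to compare against. Your proof is the standard self-contained argument for this classical result, and it is essentially correct: the persistence lemma (a cycle of the pointing graph survives any removal of vertices outside it), the vertex-disjointness of cycles in a graph of out-degree one, the conclusion that a fixed cycle $C$ of the initial graph is matched intact, with its original edges, in every execution, and the surgery turning an execution $E$ into an execution $E'$ of the market with $V(C)$ removed, followed by induction on the number of applicants, is exactly the decomposition one finds in the literature the paper cites. The decomposition $E \mapsto (C, E')$ together with the induction hypothesis then gives that every execution outputs the $C$-matching union the unique matching of the reduced market, which is what is needed.

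The one place that deserves an explicit sentence is the treatment of applicants whose (possibly partial) preference lists become exhausted during the run. You discard such applicants at the outset, but they can also arise mid-execution, and they must be treated as removed from the market (otherwise institutions may point to them and the claim ``there must be some cycle'' in \autoref{def:TTC} can fail). With that convention, your argument goes through, but two of your statements need slight care: an applicant on $C$ cannot leave by exhaustion while $V(C)$ remains (her favorite remaining institution is the next vertex of $C$), so ``a vertex of $C$ leaves only via a matched cycle'' is justified; and in the surgery, the remaining set of $E'$ at a point before $C$ is matched may be a \emph{proper} subset of $U\setminus V(C)$, because removing $C$'s institutions up front can exhaust additional applicants earlier in $E'$ than in $E$. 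Neither issue breaks anything: the prematurely exhausted applicants have all their acceptable institutions inside $V(C)$ among the remaining ones, so they are never on a matched cycle and end up unmatched in both $E$ and $E'$, and your persistence lemma, being stated for an arbitrary $U'\subseteq U$ containing the cycle, still certifies that each cycle $D$ matched by $E$ is a valid cycle of $E'$'s (smaller) pointing graph, while after the $C$-step the two remaining sets coincide. So the proof is sound; I would just add this bookkeeping about exhausted applicants explicitly.
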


\begin{lemma}[\cite{ShapleyS74, Roth82-TTC}]
  \label{claim:TTC-strategyproof}
  $\TTC$ is strategyproof and nonbossy.
\end{lemma}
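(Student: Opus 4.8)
The plan is to derive both properties from the order-independence of $\TTC$ (\autoref{thrm:TTC-order-independent}) by running $\TTC$ in a ``privileged order'' relative to the applicant $d$ whose report we vary. Fix $d$, the other reports $P_{-d}$, and the priorities $Q$. By \autoref{thrm:TTC-order-independent} we may compute $\TTC(P_d,P_{-d})$ using any order of clearing cycles, so I use: \textbf{(i)} repeatedly clear every cycle not containing $d$, until none remains; let $R\subseteq\Insts$ be the set of institutions still present at that moment ($d$ is still present). \textbf{(ii)} now $d$ points to her favorite acceptable institution in $R$, call it $h^{*}$ (or $d$ exits unmatched if she finds nothing in $R$ acceptable); since no $d$-free cycle remains and every remaining agent has out-degree $1$ in the pointing graph, the walk from $d$ following pointers cannot reach a $d$-free cycle and so must return to $d$, tracing a unique cycle $d\to h^{*}\to a_{1}\to h_{1}\to\dots\to a_{m}\to d$; clear it. \textbf{(iii)} finish clearing the remaining market in any order.

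The key lemma is that the state reached at the end of phase~(i) --- the set $R$ together with the partial matching of the removed agents --- depends only on $P_{-d}$ and $Q$, not on $P_d$. To see this, run $\TTC$ on the auxiliary market $M^{\emptyset}$ in which $d$ declares every institution unacceptable (permitted since preference lists may be partial). In $M^{\emptyset}$, $d$ never points and hence never lies on a cycle, so the $M^{\emptyset}$- and $M$-pointing graphs, at any corresponding state, differ only in $d$'s single out-edge, which can participate only in cycles through $d$; therefore the $d$-free cycles available are the same at every step, and clearing them greedily proceeds identically in both markets. Consequently the agents removed during phase~(i) on $M$, and the institutions forming $R$, coincide respectively with the agents matched by $\TTC(M^{\emptyset})$ and the institutions it leaves unmatched; by \autoref{thrm:TTC-order-independent} applied to $M^{\emptyset}$ this is order-independent and manifestly does not involve $P_d$. (This is the observation underlying the menu characterization of $\TTC$ used in \autoref{sec:gtc-apda}.) An immediate corollary: $\TTC_d(P_d,P_{-d})$ is $d$'s $\succ_d^{P_d}$-favorite acceptable element of $R$, while ranking any $h\in R$ first yields match $h$; hence $\Menu_d^{\TTC}(P_{-d})=R$ (modulo the empty outcome), independent of $P_d$.

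Strategyproofness then follows in one line: for any $P_d,P_d'$ we have $\TTC_d(P_d',P_{-d})\in R$ while $\TTC_d(P_d,P_{-d})$ is the $\succ_d^{P_d}$-maximal acceptable element of $R$, so $\TTC_d(P_d,P_{-d})\succeq_d^{P_d}\TTC_d(P_d',P_{-d})$; equivalently, invoke \autoref{thrm:TaxationPrinciple} with the menu characterization. For nonbossiness (\autoref{def:nonbossy}), suppose $\TTC_d(P_d,P_{-d})=\TTC_d(P_d',P_{-d})=h^{*}$. By the key lemma, phase~(i) produces the identical state (in particular the same $R$) under $P_d$ and $P_d'$. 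If $h^{*}=\emptyset$, then $d$ is removed unmatched at the end of phase~(i) under both reports and phases~(ii)--(iii) run on identical markets. If $h^{*}\neq\emptyset$, then under each report $d$ points to $h^{*}$ in phase~(ii), and the remainder of the cycle through $d$ --- the walk $h^{*}\to a_{1}\to h_{1}\to\dots\to d$ --- is determined entirely by $h^{*}$, $R$, the fixed $Q$, and $P_{-d}$ (every $a_\ell\neq d$, so no pointer on it depends on $P_d$); hence the same cycle is cleared, matching the same agents. Either way the market entering phase~(iii) is identical under $P_d$ and $P_d'$, so by \autoref{thrm:TTC-order-independent} the full matchings agree, giving $\TTC(P_d,P_{-d})=\TTC(P_d',P_{-d})$.

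The main obstacle is the key lemma --- simultaneously establishing that the phase-(i) state is order-independent and that it does not depend on $P_d$; the $M^{\emptyset}$ reduction handles both at once via \autoref{thrm:TTC-order-independent}. A secondary, routine point is the bookkeeping for partial preference lists (an applicant or institution with no acceptable remaining partner is removed unmatched), which does not affect any of the arguments above.
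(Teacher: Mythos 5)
Your overall architecture is the standard one (and is exactly what the paper itself imports from \cite{GonczarowskiHT22} in \autoref{sec:gtc-apda} and the proof of \autoref{thrm:type-to-menu-TTC}); note the paper offers no proof of this lemma, only citations, so you are reconstructing a classical argument. However, your justification of the key lemma via the auxiliary market $M^{\emptyset}$ has a genuine flaw. Under the only consistent reading of \autoref{def:TTC} with partial lists, an applicant whose list is exhausted (in particular, empty) is removed and institutions point to their highest-priority applicant among those still matchable. So in $M^{\emptyset}$, an institution that ranks $d$ highest points \emph{past} $d$, whereas in phase~(i) on the true market it points \emph{to} $d$ and is blocked. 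Hence the two pointing graphs do not ``differ only in $d$'s single out-edge,'' the $d$-free cycles are not the same, and $\TTC(M^{\emptyset})$ can clear strictly more than phase~(i) does. Concretely: applicants $d,a,b$ and institutions $x,y$ with priorities $x: d\succ a\succ b$, $y: b\succ a\succ d$ and preferences $a: x\succ y$, $b: y$. Phase~(i) ends with $R=\{x\}$ (which is indeed $d$'s menu), but $\TTC(M^{\emptyset})$ matches $a$ to $x$ and leaves no institution unmatched, so your identification ``$R=$ institutions left unmatched by $\TTC(M^{\emptyset})$'' fails; the empty-list characterization of the menu is the $\DA$ fact (\autoref{thrm:ght-main-positive-DA}), not the $\TTC$ one. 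If you instead insist that $d$ stays in the market and keeps blocking institutions, then the process you call $\TTC(M^{\emptyset})$ is not the $\TTC$ mechanism, so \autoref{thrm:TTC-order-independent} cannot be invoked for it as cited.

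The lemma you need is true, and the repair is cheap and makes the $M^{\emptyset}$ detour unnecessary: phase~(i) never consults $P_d$ at all, so fix any canonical rule for which $d$-free cycle to clear (and for dropping list-exhausted applicants other than $d$); the phase-(i) end state is then literally the same deterministic object under $P_d$ and $P_d'$, and \autoref{thrm:TTC-order-independent}, applied once to each full privileged-order run, certifies that these runs output $\TTC(P_d,P_{-d})$ and $\TTC(P_d',P_{-d})$. With that in place, your remaining steps (the unique cycle through $d$ closing in phase~(ii), $\Menu_d^{\TTC}(P_{-d})=R$, strategyproofness via \autoref{thrm:TaxationPrinciple}, and the nonbossiness case analysis) go through, with two small caveats you should state explicitly: list-exhausted applicants must be removed during phase~(i) (both so that every remaining node has out-degree one and because their removal can expose new $d$-free cycles), and when $P_d$ is itself the empty list the actual run removes $d$ at the outset, so equating it with ``phase~(i) with $d$ present, then drop $d$'' needs the one-line observation that the cycles cleared in phase~(i) avoid $d$ and hence are also valid cycles of the market without $d$.
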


\paragraph{Properties of stable matching mechanisms.}

\begin{lemma}[\cite{GaleS62}]
  \label{claimDaStable}
  The output of $\APDA$ (or $\IPDA$) is a stable matching.
\end{lemma}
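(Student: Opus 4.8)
The plan is to verify directly that the matching $\mu = \APDA_Q(P)$ is (i) feasible, (ii) individually rational, and (iii) free of blocking pairs, relying on the fact that $\APDA$ is well-defined, i.e. independent of the order in which proposals are made (\autoref{thrm:da-indep-execution}), so that we may fix any convenient proposal order. Termination is immediate: an applicant never proposes to an institution that has already rejected her, so each applicant proposes to each institution at most once and the process halts after at most $|\Appls|\cdot|\Insts|$ proposals. Feasibility holds because at every step each applicant is tentatively matched to at most one institution, and each institution $h$ retains at most $q_h$ applicants (rejecting all but its $q_h$ highest-priority proposers so far); individual rationality holds because an applicant only ever proposes to institutions she finds acceptable, and an institution never retains an applicant it finds unacceptable.

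The crux is the following monotonicity invariant: \emph{for each institution $h$, the applicants tentatively held by $h$ only improve over time}---precisely, once $h$ holds an applicant $d$ at some step, then at every later step either $h$ still holds $d$, or $h$ holds some $d'$ with $d' \succ_h d$; moreover once $h$ is full it stays full. This follows directly from the rejection rule, since a held applicant is displaced only by a strictly higher-priority proposer. Granting this, suppose toward a contradiction that $(d,h)$ blocks $\mu$, so $h \succ_d \mu(d)$ and either $h$ has a free seat or $d$ outranks some member of $\mu(h)$, with $d$ and $h$ finding each other acceptable. Since $d$ proposes in decreasing order of preference and ends matched to $\mu(d)$ (or unmatched while $h$ is above her last proposal), she must at some point have proposed to $h$; and since $(d,h)$ blocks, $h$ must have rejected $d$ (either immediately or later). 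At the moment of that rejection $h$ was full, holding $q_h$ applicants all of priority above $d$. By the monotonicity invariant, $h$ remains full thereafter and every member of $\mu(h)$ has priority above $d$. This contradicts the assumption that $(d,h)$ blocks $\mu$; hence $\mu$ is stable.

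The claim for $\IPDA$ follows by symmetry: by definition $\IPDA_Q(P)$ coincides with $\APDA_P(Q)$ after exchanging the roles of applicants and institutions, and stability is symmetric in these roles, so the argument above applies verbatim. The only mild subtlety throughout is keeping the many-to-one capacities straight; this is cleanest to handle by passing to the equivalent one-to-one market with $q_h$ unit-capacity copies of each $h$ (as in \autoref{sec:additional-prelims}) and observing that stability of the expanded one-to-one matching is equivalent to stability of $\mu$. No step of the argument presents a genuine obstacle---this is the classical argument of \cite{GaleS62}; the only thing requiring care is stating the monotonicity invariant precisely enough that the no-blocking-pair step goes through in the many-to-one setting.
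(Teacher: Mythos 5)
Your proof is correct and is just the classical Gale--Shapley argument (monotone improvement of institutions' tentative holdings, hence no blocking pair, with the many-to-one case reduced to unit-capacity copies), which is precisely what the paper relies on: it states this lemma with a citation to \cite{GaleS62} and gives no proof of its own. The only cosmetic imprecision is the parenthetical about an unmatched applicant---if $d$ ends unmatched she has exhausted her list of acceptable institutions, so she proposed to $h$ outright---but this does not affect the argument.
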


\begin{corollary}[\cite{DubinsMachiavelliGaleShapley81}]
  \label{thrm:da-indep-execution}
  The matching output by the $\APDA$ algorithm in \autoref{def:DA} is independent of the order in which applicants are selected to propose.
\end{corollary}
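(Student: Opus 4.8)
The plan is to prove the classical fact that \emph{every} execution of $\APDA$ produces the applicant-optimal stable matching, a matching that is pinned down by the priorities $Q$ and preferences $P$ alone, and is therefore execution-independent. Say an institution $h\in\Insts$ is \emph{achievable} for an applicant $d\in\Appls$ if some stable matching $\mu$ (with respect to the fixed priorities $Q$ and the given preference profile $P$) has $\mu(d)=h$. The heart of the argument is the following invariant, which I would establish by induction on the sequence of proposals within an arbitrary but fixed execution of the algorithm in \autoref{def:DA}:

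\emph{Invariant: whenever an institution $h$ rejects an applicant $d$ during the run, $h$ is not achievable for $d$.}

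For the inductive step, suppose $h$ rejects $d$ at some step --- either rejecting a fresh proposal from $d$, or displacing $d$ from a tentative match. In both cases there is at that moment an applicant $d'$ tentatively held at $h$ with $d'\succ_h d$. Since $d'$ proposes in decreasing order of her own preference, every institution she strictly prefers to $h$ has already rejected her, so by the inductive hypothesis each of those is unachievable for $d'$; hence $h$ is weakly preferred by $d'$ to every institution that is achievable for her. Now suppose toward a contradiction that $h$ is achievable for $d$, witnessed by a stable matching $\mu$ with $\mu(d)=h$. Then $\mu(d')\ne h$, and since $\mu(d')$ is achievable for $d'$ we get $h\succ_{d'}\mu(d')$; combined with $d'\succ_h d=\mu(h)$, the pair $(d',h)$ blocks $\mu$, contradicting stability. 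This proves the invariant. The one place to be careful is precisely this bookkeeping: making sure the notion of ``rejection'' covers both cases and that the named applicant $d'$ always has the stated properties --- this is the main (and only mildly delicate) obstacle.

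Given the invariant I would finish as follows. Fix any execution. Since each applicant proposes to each institution at most once, the process terminates, and by \autoref{claimDaStable} (whose standard proof does not rely on the execution order) the resulting matching $\nu$ is stable. For each applicant $d$, every institution she strictly prefers to $\nu(d)$ rejected her during the run, hence is unachievable for her by the invariant; and $\nu(d)$ itself is achievable for $d$ because $\nu$ is a stable matching. Therefore $\nu(d)$ is $d$'s unique most-preferred achievable institution, a quantity that depends only on $Q$ and $P$. Since this determines $\nu(d)$ for every $d$, any two executions of $\APDA$ yield the same matching. The analogous statement for $\IPDA$ is immediate from the symmetric definition in \autoref{def:DA}, exchanging the roles of applicants and institutions.
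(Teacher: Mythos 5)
Your proof is correct and is essentially the classical argument behind the result the paper cites without proof (attributed to \cite{DubinsMachiavelliGaleShapley81}): every execution outputs the applicant-optimal stable matching, via the invariant that rejections only ever discard unachievable institutions, so the outcome is pinned down by $Q$ and $P$ alone. The only cosmetic point is the inductive step when $\mu(d')=\emptyset$: there $h\succ_{d'}\mu(d')$ follows not from achievability of $\mu(d')$ but from the fact that $d'$ proposed to $h$, so $h$ is acceptable to her.
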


\begin{corollary}[\cite{GaleS62, McVitieW71}]
  \label{thrm:apda-apps-at-best}
  In the matching output by $\APDA$, every applicant is matched to her favorite stable partner.
  Moreover, each $h\in \Insts$ is paired to her worst stable match in $\Insts$.
  The same holds in reverse for $\IPDA$.
\end{corollary}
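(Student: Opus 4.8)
The plan is to establish the classical applicant-optimality of $\APDA$, derive the institution-pessimality half as a corollary of it, and then obtain the $\IPDA$ statements from the role-swapping symmetry built into \autoref{def:DA}.

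Say that an institution $h$ is \emph{stable-achievable} for an applicant $d$ if some stable matching pairs $d$ with $h$. The core claim is that, during any execution of $\APDA$, no applicant is ever rejected by a stable-achievable institution. I would prove this by induction over the sequence of rejections in a run of $\APDA$. Suppose the claim has held for all rejections so far, and consider the next one: institution $h$ rejects applicant $d$ because it now (tentatively) holds some $d'$ with $d' \succ_h d$. Assume for contradiction that $h$ is stable-achievable for $d$, witnessed by a stable matching $\mu$ with $\mu(d) = h$. Since $d'$ has proposed to $h$, she must previously have been rejected by every institution she ranks above $h$; by the inductive hypothesis each such institution is not stable-achievable for $d'$, so $\mu(d')$ is not ranked above $h$ by $d'$, i.e., $h \succeq_{d'} \mu(d')$. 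Together with $d' \succ_h d = \mu(h)$, this makes $(d', h)$ a blocking pair for $\mu$, contradicting stability, and proving the core claim. (Order-independence, \autoref{thrm:da-indep-execution}, lets us fix whichever proposal order is convenient, but any fixed order works.)

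Given the core claim, every institution an applicant $d$ prefers to her $\APDA$-match has rejected $d$ and is therefore not stable-achievable, while $d$'s $\APDA$-match is itself stable-achievable because the $\APDA$ outcome is stable (\autoref{claimDaStable}); hence $\APDA$ pairs each applicant with her most-preferred stable partner. For institution-pessimality, fix any stable $\mu$ and any institution $h$ with $d := \mu_A(h) \ne \emptyset$, where $\mu_A = \APDA_Q(P)$. By applicant-optimality, $h \succeq_d \mu(d)$; if $h = \mu(d)$ then $\mu(h) = d = \mu_A(h)$, and otherwise $h \succ_d \mu(d)$, so since $(d,h)$ cannot block the stable $\mu$ we must have $\mu(h) \succ_h d$ (in particular $h$ is matched in $\mu$). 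Thus $h$ weakly prefers its partner in every stable matching to $\mu_A(h)$, i.e., $\mu_A(h)$ is $h$'s least-preferred stable partner.

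It remains to handle $\IPDA$: by \autoref{def:DA}, $\IPDA_Q(P)$ is the outcome of $\APDA$ run on the market with applicants and institutions interchanged (preferences $P$ serving as priorities and priorities $Q$ as preferences), and a matching is stable in one market iff it is stable in this interchanged market, so applying the two facts just proved to the interchanged market gives that $\IPDA$ matches each institution to its favorite stable partner and each applicant to her worst stable partner. The one genuinely substantive step is the inductive blocking-pair claim; the only additional care needed is the bookkeeping for possibly-unmatched agents (an $\APDA$-unmatched applicant is rejected by every acceptable institution and hence unmatched in every stable matching; the symmetric statement for institutions, and thus the full unmatched case, follows from the standard fact that the set of matched applicants—and separately of matched institutions—is invariant across stable matchings), which I would either cite or dispatch in a sentence.
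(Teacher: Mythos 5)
The paper does not actually prove this statement---it is cited as a classical result of \cite{GaleS62, McVitieW71}---and your argument is precisely the standard one from those sources (the induction showing no applicant is ever rejected by a stable-achievable institution, institution-pessimality as a consequence, and the role-swap of \autoref{def:DA} for $\IPDA$); it is correct, including your handling of unmatched agents via the invariance of the matched set (\autoref{thrm:rural-hospitals}). The only point worth tightening is that when you conclude $(d',h)$ blocks $\mu$ you need the strict preference $h \succ_{d'} \mu(d')$ rather than $h \succeq_{d'} \mu(d')$, which follows immediately since $\mu(d)=h$ and $d'\ne d$ force $\mu(d')\ne h$.
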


We also need the following classical characterization relating the set of matched agents in each stable outcome.

\begin{theorem}[Lone Wolf / Rural Hospitals Theorem, \cite{RothRuralHospital86}]
  \label{thrm:rural-hospitals}
  The set of unmatched agents is the same in every stable matching.
  Moreover, in a many-to-one stable matching market, if there is some stable matching where an institution receives fewer matches than its capacity, then that institution receives precisely the same set of matches in every stable outcome.
\end{theorem}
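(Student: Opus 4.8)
The plan is to reduce everything to the associated one-to-one market $M'$ in which each institution $h$ is split into $q_h$ capacity-one copies $h^1 \succ h^2 \succ \cdots \succ h^{q_h}$ (the very construction used in the preliminaries to define the many-to-one mechanisms), whose stable matchings correspond in the obvious way to those of the original market. The first step is a \emph{prefix property}: in any stable matching of $M'$, the matched copies of a fixed $h$ are exactly $h^1,\dots,h^{k_h}$ for some $k_h$ --- for if $h^b$ were matched to some $d$ while $h^a$ were unmatched with $a<b$, then $d$ ranks $h^a$ above $h^b$ and $h^a$ (which has the same priority list as $h$) would accept $d$, so $(d,h^a)$ would block. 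Consequently $h$ is under-capacity in a stable matching of the original market exactly when the copy $h^{q_h}$ is unmatched in the corresponding stable matching of $M'$.

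Next I would establish the first sentence in the one-to-one market $M'$, from which the many-to-one version follows immediately (an applicant is matched in the original market iff it is matched in $M'$; an institution is matched iff one of its copies is, which by the prefix property is all-or-nothing across stable matchings). Let $\mu_A$ be the applicant-optimal stable matching (the output of $\APDA$, which exists by \autoref{claimDaStable}) and let $\mu$ be an arbitrary stable matching. By \autoref{thrm:apda-apps-at-best} every applicant weakly prefers $\mu_A$ to $\mu$, so any $d$ matched in $\mu$ (necessarily to an acceptable institution) has $\mu_A(d)$ at least as good, hence also acceptable, hence $d$ is matched in $\mu_A$; dually, every institution matched in $\mu_A$ (to an acceptable applicant) is matched in $\mu$. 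Since in a one-to-one matching the number of matched applicants equals the number of matched institutions, these two set inclusions, combined with a count, must both be equalities; thus $\mu$ and $\mu_A$ have the same matched applicants and the same matched institutions. As $\mu$ was arbitrary, the set of matched (equivalently, unmatched) agents is the same in every stable matching. Applied to $M'$, this further shows that for each $h$ the set of matched copies of $h$ --- and in particular the number $|\mu(h)|$ of applicants assigned to $h$ --- is the same in every stable matching, and that whether ``$h$ is under-capacity'' does not depend on which stable matching is chosen.

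The remaining step --- that an under-capacity $h$ receives the same \emph{set} of applicants --- is the one I expect to be the main obstacle, since the counting above is blind to it. Fix stable matchings $\mu,\nu$ with $h$ under-capacity and suppose for contradiction that $\mu(h)\ne\nu(h)$. Since $|\mu(h)|=|\nu(h)|$ by the previous paragraph, both $A:=\mu(h)\setminus\nu(h)$ and $B:=\nu(h)\setminus\mu(h)$ are nonempty. The key observation is that if $d\in A$, then because $h$ has an empty seat in $\nu$ and finds $d$ acceptable, stability of $\nu$ forces $d$ to strictly prefer $\nu(d)$ to $h$; symmetrically each $d\in B$ strictly prefers $\mu(d)$ to $h$. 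Now form $\mu\vee\nu$, the matching assigning every applicant her $\succ_d$-better of $\mu(d)$ and $\nu(d)$: by the classical lattice property of stable matchings (for one-to-one markets, cf.\ \cite{GusfieldStableStructureAlgs89}, applied to $M'$ and hence also to the original market), $\mu\vee\nu$ is itself stable. By the preceding observation, no applicant of $A\cup B$, and no applicant outside $\mu(h)\cup\nu(h)$, is matched to $h$ in $\mu\vee\nu$, while every applicant of $\mu(h)\cap\nu(h)$ is; so $(\mu\vee\nu)(h)=\mu(h)\cap\nu(h)$ has strictly fewer than $|\mu(h)|$ members, contradicting the invariance of $|\mu(h)|$ across stable matchings. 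Hence $\mu(h)=\nu(h)$, completing the proof. (The appeal to the lattice theorem could in principle be replaced by a direct alternating-cycle analysis of $\mu\triangle\nu$ --- using that its components are cycles, which is essentially the first sentence again --- but the lattice route is shorter.)
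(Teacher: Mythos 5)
Your proof is correct, but note that the paper does not prove this statement at all: it is imported as a classical result and attributed to \cite{RothRuralHospital86}, so there is no in-paper argument to compare against. Judged on its own, your write-up is a sound and essentially standard proof. The reduction to the split one-to-one market matches the construction the paper itself uses to define many-to-one mechanisms; your prefix property is argued correctly; the lone-wolf part via applicant-optimality of $\APDA$ (\autoref{claimDaStable}, \autoref{thrm:apda-apps-at-best}) plus the cardinality count is the textbook argument; and the final step for under-capacity institutions, using the observation that any $d\in\mu(h)\setminus\nu(h)$ must strictly prefer $\nu(d)$ to $h$ and then intersecting via the join $\mu\vee\nu$, is a valid route, with the lattice theorem of \cite{GusfieldStableStructureAlgs89} being the one genuinely external ingredient beyond the lemmas the paper states. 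Two small points you gloss over: the correspondence between stable matchings of the many-to-one market and of $M'$ is only ``obvious'' once you fix the convention that the applicants assigned to $h$ occupy its copies in decreasing priority order (otherwise spurious blocking pairs arise among the copies), and pushing the join of the split matchings back down to the pointwise-better assignment in the original market deserves the one-line check that copies of a better institution precede all copies of a worse one on each split preference list. Neither affects correctness; both are standard.
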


For $\DA$ and all stable matching mechanisms, we make the following observation about the menu:
\begin{lemma}\label{thrm:same-menu-all-stable}
  If $f$ and $g$ are any two stable matching mechanisms (with respect to priorities $Q$), then $\Menu_d^f(P_{-d}) = \Menu_d^g(P_{-d})$ for all $d, P_{-d}$.
\end{lemma}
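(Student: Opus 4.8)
The plan is to give a characterization of the menu that does not mention the particular mechanism $f$, only the fact that it always outputs a stable matching. Concretely, I will show that for any stable matching mechanism $f$, priorities $Q$, and $P_{-d}$, an institution $h$ lies in $\Menu_d^f(P_{-d})$ if and only if $d$ is matched in some (equivalently, by \autoref{thrm:rural-hospitals}, every) stable matching of the market with preference profile $(\{h\},P_{-d})$, where $\{h\}$ denotes the list that ranks only $h$ as acceptable. Since this condition refers only to the set of stable matchings of a fixed market and not to $f$, it immediately gives $\Menu_d^f(P_{-d}) = \Menu_d^g(P_{-d})$ for any two stable mechanisms $f,g$.

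One direction is trivial: if $d$ is matched in $f(\{h\},P_{-d})$, then since $h$ is $d$'s only acceptable option she is matched to $h$ there, so $P_d = \{h\}$ witnesses $h \in \Menu_d^f(P_{-d})$. For the converse, suppose $h \in \Menu_d^f(P_{-d})$, witnessed by some $P_d$ with $f_d(P_d,P_{-d}) = h$, and let $\mu = f(P_d,P_{-d})$, which is stable (in the market $(P_d,P_{-d})$). The key step is to check that $\mu$ is also stable in the market $(\{h\},P_{-d})$: only $d$'s report changed; individual rationality still holds since $\mu(d)=h$ is acceptable under $\{h\}$; $d$ is in no blocking pair because she is matched to her unique (hence top-ranked) acceptable institution; and every other potential blocking pair involves only agents whose preferences/priorities are unchanged and the same matching $\mu$, so it is not blocking either. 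Hence $(\{h\},P_{-d})$ admits a stable matching, namely $\mu$, in which $d$ is matched.

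To conclude, apply the Lone Wolf / Rural Hospitals theorem (\autoref{thrm:rural-hospitals}) to the market $(\{h\},P_{-d})$: since $d$ is matched in one stable matching of that market, $d$ is matched in every stable matching of it, in particular in $f(\{h\},P_{-d})$, which is stable because $f$ is a stable mechanism; as noted above, $d$ is then matched specifically to $h$. This establishes the $f$-independent characterization and hence the lemma. The only non-routine point is the stability-preservation claim for $\mu$ under the truncated list, which is a short case check; everything else is a direct invocation of \autoref{thrm:rural-hospitals}, and this is precisely the ingredient that makes the argument mechanism-agnostic, as the statement requires.
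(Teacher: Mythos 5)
Your proof is correct and follows essentially the same route as the paper: truncate $d$'s list to $\{h\}$, observe that the witnessing stable matching $\mu$ remains stable under the truncated profile, and invoke the Lone Wolf / Rural Hospitals theorem (\autoref{thrm:rural-hospitals}) to conclude $d$ is matched (necessarily to $h$) under the other stable mechanism. Your packaging of this as a mechanism-independent characterization of the menu, and your explicit case check of stability preservation, are just slightly more detailed presentations of the paper's argument.
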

\begin{proof}
  Consider any $h \in \Menu_d^f(P_{-d})$, and let $\mu = f(P_d, P_{-d})$ for any $P_d$ such that $\mu(d)=h$. Then $\mu$ will also be stable under preferences $P'$ that are the same as $P = (P_d,P_{-d})$, except that $d$ ranks only $\{h\}$, and thus by \autoref{thrm:rural-hospitals}, we must have $\mu'(d) = h = \mu''(d)$, where $\mu' = f(P')$ and $\mu'' = g(P')$. Thus, $h \in \Menu_d^g(P_{-d})$, and by symmetry $\Menu_d^f(P_{-d}) = \Menu_d^g(P_{-d})$.
\end{proof}
By this lemma, when we give bounds related to the complexity of the menu in stable matching mechanisms, the distinction between $\APDA$ and $\IPDA$ is not important.

We have the following strategyproofness result for $\APDA$:
\begin{theorem}[\cite{Roth82-DA,DubinsMachiavelliGaleShapley81}]\label{thrm:sp-apda}
  $\APDA$ is strategyproof (for the applicants).
\end{theorem}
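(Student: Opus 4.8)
The plan is to derive \autoref{thrm:sp-apda} from the classical \emph{Blocking Lemma} for the applicant-proposing algorithm, together with the Rural Hospitals Theorem (\autoref{thrm:rural-hospitals}) and stability of $\APDA$ (\autoref{claimDaStable}, \autoref{thrm:apda-apps-at-best}). The lemma I would establish is: fix priorities $Q$ and a profile $P$, let $\mu_A=\APDA_Q(P)$, and let $\mu$ be any matching that is individually rational for $P$; if $R:=\{d\in\Appls:\mu(d)\succ_d^{P_d}\mu_A(d)\}$ is nonempty, then $\mu$ is blocked (w.r.t.\ $P,Q$) by some pair $(d',h')$ with $d'\notin R$ (the standard statement also gives $h'\in\mu(R)$, but only $d'\notin R$ is needed below). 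I would prove this by splitting on whether $\mu(R)=\mu_A(R)$. If $\mu(R)\neq\mu_A(R)$, pick $h^*\in\mu(R)\setminus\mu_A(R)$ and $d:=\mu(h^*)\in R$; stability of $\mu_A$ together with $h^*=\mu(d)\succ_d\mu_A(d)$ forces $\mu_A(h^*)\succ_{h^*}d$ (and $\mu_A(h^*)\neq\emptyset$, else $(d,h^*)$ blocks $\mu_A$), while $\mu_A(h^*)=:d^*\notin R$ forces $\mu(d^*)\preceq_{d^*}\mu_A(d^*)=h^*$ with $\mu(d^*)\neq h^*$, so $(d^*,h^*)$ blocks $\mu$ with $d^*\notin R$. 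If instead $\mu(R)=\mu_A(R)$, then every $d\in R$ proposes to $\mu(d)$ and is rejected during a run of $\APDA_Q(P)$ (it ranks $\mu(d)\succ_d\mu_A(d)$ and ends at $\mu_A(d)\neq\mu(d)$); I would take the \emph{last} such rejection in the run, say $h^*=\mu(d^*)$ rejects $d^*\in R$, and chase the applicant $h^*$ is tentatively holding at that moment — using maximality of the chosen rejection — to produce the desired blocking pair. This last case is the delicate combinatorial core, and is the main obstacle in the whole argument.

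Granting the Blocking Lemma, strategyproofness follows cleanly. Suppose some $d$ could gain: with true profile $P=(P_d,P_{-d})$ and $\mu_A=\APDA_Q(P)$, there is a report $P'_d$ with $\mu':=\APDA_Q(P'_d,P_{-d})$ and $\mu'(d)\succ_d^{P_d}\mu_A(d)$. Put $h^*:=\mu'(d)$. Since $\mu'$ is stable for $(P'_d,P_{-d})$ and matches $d$ to $h^*$, it is also stable for $(\{h^*\},P_{-d})$: truncating $d$'s report to $\{h^*\}$ leaves all other applicants' preferences (and all priorities) unchanged, and $d$ cannot be in a blocking pair since her only acceptable institution is $h^*$, to which she is matched; so any blocking pair for $(\{h^*\},P_{-d})$ would already block $\mu'$ under $(P'_d,P_{-d})$. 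By the Rural Hospitals Theorem, $d$ — matched in the stable matching $\mu'$ — is matched in every stable matching for $(\{h^*\},P_{-d})$, hence to $h^*$; thus $\APDA_Q(\{h^*\},P_{-d})$ matches $d$ to $h^*$, and I may henceforth take $P'_d=\{h^*\}$. Also $\mu'$ is individually rational for the true $P$: applicants other than $d$ have the same preferences in $P$ and $(\{h^*\},P_{-d})$, and $\mu'(d)=h^*\succ_d^{P_d}\mu_A(d)\succeq_d^{P_d}\emptyset$ (as $\mu_A$ is stable, hence individually rational), so $h^*$ is acceptable to $d$ under $P_d$.

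Now apply the Blocking Lemma with $\mu'$ playing the role of $\mu$: the set $R=\{d'':\mu'(d'')\succ_{d''}^{P}\mu_A(d'')\}$ contains $d$, so it is nonempty, and the lemma yields a pair $(d',h')$ blocking $\mu'$ under $P$ with $d'\notin R$; in particular $d'\neq d$. I claim $h'\neq h^*$: otherwise $h^*\succ_{d'}^{P_{d'}}\mu'(d')$ and $d'\succ_{h^*}^{Q_{h^*}}\mu'(h^*)=d$, and since $d'\neq d$ has the same preferences in $P$ and $(\{h^*\},P_{-d})$, the pair $(d',h^*)$ would block $\mu'$ under $(\{h^*\},P_{-d})$, contradicting its stability there. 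Hence $h'\neq h^*$; and since $d'\neq d$ too, the pair $(d',h')$ sees identical preferences, priorities, and $\mu'$-partners under $P$ and under $(\{h^*\},P_{-d})$, so $(d',h')$ blocks $\mu'=\APDA_Q(\{h^*\},P_{-d})$ — again contradicting stability. This contradiction shows no applicant can gain, i.e.\ $\APDA$ is strategyproof for the applicants. The many-to-one case reduces to the one-to-one case via the unit-capacity-copies construction at the start of this appendix: an applicant's report over $\Insts$ corresponds under responsive preferences to a restricted class of reports over the copies, and restricting the deviation set only makes manipulation harder, so one-to-one strategyproofness of $\APDA$ implies it in the many-to-one (responsive) setting.
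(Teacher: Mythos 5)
The paper does not prove this theorem at all---it is stated with citations to Roth (1982) and Dubins--Freedman (1981)---so there is no in-paper argument to compare against; what matters is whether your argument stands on its own. The second half of your proposal is fine: reducing a profitable deviation to the truncated report $\{h^*\}$ via \autoref{thrm:rural-hospitals}, checking individual rationality of $\mu'$ under the true profile, and transferring the Blocking-Lemma pair $(d',h')$ with $d'\ne d_{}$ back to the market $(\{h^*\},P_{-d})$ to contradict \autoref{claimDaStable} is the standard Gale--Sotomayor/Dubins--Freedman route and is carried out correctly. The problem is that the entire argument rests on the Blocking Lemma, and you leave its hard case unproven while sketching a step that in fact fails.

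Concretely, in the case $\mu(R)=\mu_A(R)$ you propose to take the last rejection in the run of the form ``$h^*=\mu(d^*)$ rejects $d^*\in R$'' and to use the applicant $h^*$ is then holding; but maximality of that rejection does \emph{not} force this applicant to lie outside $R$---in particular it can be $h^*$'s final $\APDA$-partner, which lies in $R$ since $h^*\in\mu_A(R)$. Example: applicants $d_1,d_2,d_3$, institutions $h_1,h_2,h_3$, preferences $d_1\colon h_1\succ h_2$, $d_2\colon h_2\succ h_1$, $d_3\colon h_1\succ h_3$, priorities $h_1\colon d_2\succ d_3\succ d_1$, $h_2\colon d_1\succ d_2$, $h_3\colon d_3$. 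Then $\mu_A=\{(d_1,h_2),(d_2,h_1),(d_3,h_3)\}$; take $\mu=\{(d_1,h_1),(d_2,h_2),(d_3,h_3)\}$, so $R=\{d_1,d_2\}$ and $\mu(R)=\mu_A(R)=\{h_1,h_2\}$. In every run (the proposal order is forced here, cf.\ \autoref{thrm:da-indep-execution}) the last rejection of a member of $R$ by its $\mu$-partner is $h_2$ rejecting $d_2$, at which moment $h_2$ holds $d_1\in R$, and $(d_1,h_2)$ does not block $\mu$; the pair the lemma promises is $(d_3,h_1)$, involving a different institution and the applicant who displaced $d_1$ earlier. So the ``chase'' you allude to is not a cosmetic detail: it is exactly the delicate content of the Blocking Lemma (Gale--Sotomayor 1985; Hwang), and without it your proof of \autoref{thrm:sp-apda} is incomplete. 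Either reproduce that case analysis in full or cite the Blocking Lemma as a known result; everything downstream of it in your write-up, including the many-to-one reduction via unit-capacity copies, is sound.
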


Note that $\IPDA$ is not strategyproofness (and thus, while we sometimes discuss the menu in $\IPDA$, an applicant is not matched in $\IPDA$ to her top choice from her menu). However, while $\APDA$ is bossy, we observe that $\IPDA$ is not:

\begin{proposition}\label{thrm:nonbossy-ipda}
  $\IPDA$ is nonbossy.
\end{proposition}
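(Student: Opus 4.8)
The plan is to reduce nonbossiness of $\IPDA$ to a single identity and then establish that identity via the execution-independence of $\IPDA$ --- this is precisely the argument already used inside the proof of \autoref{lem:stabmatch-characterizes-ipda} (there with $S=\{d\}$). Fix an applicant $d$ and a profile $P_{-d}\in\T_{-d}$. For a report $P_d\in\T_d$ write $h(P_d)\defeq f_d(P_d,P_{-d})\in\Insts\cup\{\emptyset\}$, and let $\langle h\rangle$ denote the preference list ranking the single institution $h$ (or the empty list, if $h=\emptyset$). The key claim is:
\[
\IPDA(P_d,P_{-d}) \;=\; \IPDA\bigl(\langle h(P_d)\rangle,\,P_{-d}\bigr)\qquad\text{for every }P_d\in\T_d.
\]
Granting this, nonbossiness is immediate: if $f_d(P_d,P_{-d})=f_d(P_d',P_{-d})=h$, then both $\IPDA(P_d,P_{-d})$ and $\IPDA(P_d',P_{-d})$ equal $\IPDA(\langle h\rangle,P_{-d})$, hence are equal, so $f(P_d,P_{-d})=f(P_d',P_{-d})$.

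To prove the claim, recall that $\IPDA$ is $\APDA$ with the roles of applicants and institutions interchanged, so by \autoref{thrm:da-indep-execution} the outcome of $\IPDA$ does not depend on the order in which proposals are issued. Fix $P_d$, set $h\defeq h(P_d)$, and take any execution $E$ of $\IPDA(P_d,P_{-d})$; it terminates at $\IPDA(P_d,P_{-d})$ with $d$ matched to $h$. As in every deferred-acceptance run, $d$ never receives a proposal that she finds acceptable under $P_d$ and strictly prefers to $h$, and $d$ eventually rejects every institution that proposes to her other than $h$. I would then run $\IPDA(\langle h\rangle,P_{-d})$ by replaying the proposals of $E$, except that whenever an institution $h'\neq h$ proposes to $d$, applicant $d$ rejects it at once (rather than tentatively holding it and rejecting it later). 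Applicant $d$'s responses in this new run are exactly those dictated by $\langle h\rangle$ (accept $h$, reject everything else), so it is a valid execution of $\IPDA(\langle h\rangle,P_{-d})$; and the only change relative to $E$ is that certain institutions resume proposing slightly earlier after being turned down by $d$. Crucially, after being rejected by $d$ an institution always proposes next to the highest-priority applicant on its fixed list who has not yet rejected it, and --- since an institution proposes to $d$ only once every higher-priority applicant on its list has already rejected it --- that next target is the same whether $d$'s rejection was immediate or delayed. Hence the two runs make the same proposals and reach the same final matching, which by \autoref{thrm:da-indep-execution} equals both $\IPDA(P_d,P_{-d})$ and $\IPDA(\langle h\rangle,P_{-d})$. (When $h=\emptyset$ the claim reads $\IPDA(P_d,P_{-d})=\IPDA(\emptyset,P_{-d})$, and the same replay, with $d$ rejecting all proposals, works verbatim.)

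The main obstacle is the rigor of this last step: one must verify that the replayed run really is a legitimate $\IPDA$ execution reaching the same matching, rather than one that silently cascades into a different outcome once $d$'s behavior changes. The clean way around this is the ``next target is unchanged'' observation above combined with \autoref{thrm:da-indep-execution}: advancing when a rejected institution continues down its (fixed) priority list cannot change which proposals are ultimately made, and the outcome is order-independent. Equivalently, one may simply appeal to the proof of \autoref{lem:stabmatch-characterizes-ipda} with $S=\{d\}$, which already contains the assertion that an execution of $\IPDA(d:P_d)$ is also a valid execution of $\IPDA(d:\langle h(P_d)\rangle)$; everything else is bookkeeping.
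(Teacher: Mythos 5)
Your proof is correct, and it reaches the paper's key identity --- that $\IPDA(P_d,P_{-d})=\IPDA(\langle h\rangle,P_{-d})$ where $h$ is $d$'s match --- by a genuinely different route. The paper's proof in the appendix is stability-theoretic: it first truncates $d$'s list \emph{below} $h$, invoking the fact that $\IPDA$ matches each applicant to her worst stable partner (\autoref{thrm:apda-apps-at-best}) so that the set of stable matchings, and hence the $\IPDA$ outcome, is unchanged; it then truncates \emph{above} $h$, observing that no institution above $h$ ever proposes to $d$, so the run is literally unchanged. Your argument instead is purely operational: you replay an execution of $\IPDA(P_d,P_{-d})$ with $d$ rejecting every $h'\ne h$ immediately, note that a rejected institution's next target (the next applicant on its fixed list) is unaffected by whether $d$'s rejection was immediate or delayed, and then invoke execution-order independence (\autoref{thrm:da-indep-execution}). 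This is precisely the ``delayed-rejection'' argument the paper itself gestures at inside the proof of \autoref{lem:stabmatch-characterizes-ipda}, even though its written proof of the proposition takes the stability route. Each approach has its merits: the paper's is very short given the classical structural facts already imported in the appendix (worst stable partners, invariance under truncation), while yours is self-contained and more elementary, using nothing about stable matchings beyond the mechanics of the algorithm, at the cost of the careful bookkeeping you flag --- which you handle adequately by observing that the scheduler may delay a rejected institution's next proposal so that the two runs issue identical proposal sequences and terminate in the same matching.
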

\begin{proof}
  Consider any $Q, P$ where $\mu = \IPDA_Q(P)$, and consider any $\mu(d)=h$. Let $P'$ be identical to $P$, except that $d$ truncates her preference list after $h$ (i.e., all institutions strictly below $h$ are marked unacceptable). 
  By \autoref{thrm:apda-apps-at-best}, no stable partner of $d$ was below $h$ on $d$'s list, so the set of stable matchings is identical under $P$ and $P'$, and in particular $\mu = \IPDA_Q(P')$. Now let $P''$ be identical to $P'$, except $d$ truncates her preference list \emph{above} $h$ (i.e., now $d$'s list consists only of $\{h\}$). 
  Since $d$ never receives a proposal from any institution above $h$ during the run of $\IPDA_Q(P')$, this run of $\DA$ is identical to $\IPDA_Q(P'') = \mu$.
  But, $P''$ is identical for any initial value of $P_d$ such that $\mu(d)=h$, so $\mu$ must be identical for any such $P_d$, which finishes the proof.
\end{proof}

\section{Relation to Other Frameworks and Results}
\label{sec:related-simplicity}

\subsection{Verification of \texorpdfstring{$\DA$}{DA}}
\label{sec:related-cutoff-DA}

This section discusses how the results of \cite{AzevedoL16} and \cite{Segal07} relate to the results of \autoref{sec:compression-complexity} through \autoref{sec:bit-complexity-menus} for $\DA$.

\cite{Segal07} is concerned with the verification of social-choice functions and social-choice correspondences. Social-choice functions are the natural generalization of matching rules to scenarios other than matching, such as voting or auctions. Social-choice correspondences are generalizations of social-choice functions, where there can be multiple valid outcomes that correspond to the same inputs of the players. For example, ``the $\APDA$ outcome'' is a social-choice function, while ``a stable outcome'' is a social-choice correspondence. 
\cite{Segal07} shows that, for a large class of social-choice correspondences including the stable matching correspondence, the verification problem (which differs from \autoref{def:verification-protocol} only in that every agent should know the full outcome of the protocol) reduces to the task of communicating ``minimally informative prices.''

When \cite{AzevedoL16} relate their work---in particular the cutoff representation of the matching of $\DA$, which we discussed in the proof of \autoref{thrm:representing-DA}---to the work of \cite{Segal07}, they mention that their cutoffs coincide with \cite{Segal07}'s ``minimally informative prices.'' Since \cite{Segal07} shows that these prices verify that a matching is stable, this may seem to imply that the $\widetilde\Theta\bigl(|\Insts|\bigr)$-bit vector of cutoffs should suffice to verify that a matching is stable, contradicting our \autoref{thrm:verif-lb-both-mechs}, which says that $\Omega\bigl(|\Appls|\bigr)$ bits are needed (even to verify that a matching is stable). However, there is no actual contradiction: the model and characterization in \cite{Segal07} implicitly assume that each agent in the verification protocol knows the complete matching, ruling out the $\widetilde\Theta\bigl(|\Insts|\bigr)$-bit representation protocol implicitly discussed in \cite{AzevedoL16} and formalized by our \autoref{thrm:representing-DA}. Indeed, as we discussed in \autoref{sec:verif-complexity}, simply writing down the matching suffices to verify in our model that a matching is stable using $\widetilde\Theta\bigl(|\Appls|\bigr)$ bits (though verifying that the matching is the outcome of $\APDA$ or $\IPDA$, as we do in \autoref{thrm:verif-complexity-da}, is more involved). 

There is also a strong conceptual relation between our results in \autoref{sec:verif-complexity} for $\DA$ and the results of \cite[Section 7.5]{Segal07} and \cite{GonczarowskiNOR19}. Namely, all of these results bound the complexity of verifying stable matchings. However, there is a large technical difference between our model and prior work: we treat the priorities as fixed and known by all applicants. This difference turns out to have dramatic implications: both \cite{Segal07} and \cite{GonczarowskiNOR19} achieve $\Omega(n^2)$ lower bounds (in different models) for the problem of verifying a stable matching where $n = |\Appls| = |\Insts|$; we get an upper bound of $\widetilde O(n)$ (\autoref{thrm:verif-complexity-da}). 

There is also a conceptual connection between our work and \cite{HakimovR23}, who construct protocols achieving variants of both concurrent representation (as per our \autoref{sec:compression-complexity}, which they term ``verification'') and joint verification (as per theoretical computer science notions and our \autoref{sec:verif-complexity}, which they term ``transparency''). Their positive results for concurrent representation match the protocols constructed by \cite{AzevedoL16, LeshnoL21} in terms of communication. Their main positive result constructs a certain type of $\widetilde O(n^2)$-bit interactive verification protocol for $\DA$, under the additional assumption that the mechanism designer can never deviate in a way that might leave a seat unfilled at some demanded school (formally, they assume any mechanism to which the designer might deviate is non-wasteful; interaction is needed to make use of their non-wastefulness assumption).
Due to this additional assumption in \cite{HakimovR23}, our protocol in \autoref{thrm:verif-complexity-da} achieves a strictly stronger type of verification, as well as lower communication cost (which they do not attempt to minimize).
The highest-level distinction between \cite{HakimovR23} and our work is that \cite{HakimovR23} are focused on constructing (potentially practically-useful) protocols, but do not attempt to minimize communication complexity or provide any lower bounds. In contrast, our paper provides both new protocols and impossibility results getting tight bounds on the communication complexity of different protocols.
Within an auction environment, \cite{Woodward20} studies a related set of questions conceptually similar to \cite{HakimovR23} (and to our \autoref{sec:compression-complexity}), while factoring in the incentives of the auctioneer.

\subsection{Cutoff Structure of \texorpdfstring{$\TTC$}{TTC}}
\label{sec:related-cutoff-TTC}

This section discusses how the results of \cite{LeshnoL21} relate to our results for $\TTC$, especially in \autoref{sec:compression-complexity}. 

As mentioned in the proof of \autoref{thrm:representing-TTC}, \cite{LeshnoL21} prove that the outcome of $\TTC$ can be described in terms of $|\Insts|^2$ ``cutoffs,'' where in the language of our paper, a cutoff is simply an index on an institution's priority list. Based on examples and the intuition that their description provides, they state that ``the assignment cannot [in general] be described by fewer than $\frac 1 2 n^2$ cutoffs.'' However, they do not formalize what is precisely meant by this claim.

\cite{LeshnoL21} briefly discuss one formal result in the direction of an impossibility result. We attempt to capture the essence of their approach as follows:
\begin{proposition}[{Adapted from \cite[Footnote 8]{LeshnoL21}}]
  \label{prop:LL-TTC-LB-footnote}
  In $\TTC$, for any $n$, there exists a market with fixed priorities $Q$, institution $h$, and applicants $U = \{ d_1, \ldots, d_n \} \subseteq \Appls$, where $n = \Theta\bigl(|\Insts|\bigr) = \Theta\bigl(|\Appls|\bigr)$, with the following property:
  Let $P_U^*$ denote preferences of applicants in $U$ such that each $d \in U$ ranks only $\{h\}$. Then, for any $S \subseteq U$, there exist priorities $P_{-U}$ of applicants outside $U$ such that if $\mu = \TTC_Q(P_U^*, P_{-U})$, then for each $d \in S$ we have $\mu(d)=h$, and for each $d\in U\setminus S$, we have $\mu(d)\ne h$.

  On the other hand, in $\APDA$, suppose there is any $Q, h,$ and set $U = \{d_1,\ldots,d_k\}\subseteq \Appls$ with the following property: 
  For any $S \subseteq U$, there exist priorities $P_{-U}$ such that if $\mu = \APDA_Q(P_U^*, P_{-U})$, we have $\mu(d)=h$ for each $d\in S$ and $\mu(d)\ne h$ for each $d\notin S$.
  Then we have $k = |U| \le 1$.
\end{proposition}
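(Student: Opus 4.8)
I would exhibit an explicit family of markets. Fix $n$, let $U=\{d_1,\dots,d_n\}$ with each $d_i$ ranking only $\{h\}$ (this is $P^*_U$), and in addition introduce: institution $h$ of capacity $n$; auxiliary institutions $g_1,\dots,g_n$, each of capacity $1$ with $d_i$ as its top-priority applicant; and auxiliary applicants $e_1,\dots,e_n$ (``routers'') and $f_1,\dots,f_n$ (``blockers''), where each $f_i$ ranks only $\{h\}$. Fix $h$'s priority list to be $e_1\succ\cdots\succ e_n\succ f_1\succ\cdots\succ f_n\succ d_1\succ\cdots\succ d_n$. The only data that varies with the target set $S$ is each router's preference: if $d_i\in S$ let $e_i$ rank only $\{g_i\}$, and if $d_i\notin S$ let $e_i$ have an empty (all-unacceptable) list. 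I claim that $\mu=\TTC_Q(P^*_U,P_{-U})$ satisfies $\mu(d_i)=h$ exactly when $d_i\in S$. The sizes are $|\Appls|=3n$ and $|\Insts|=n+1$, so $n=\Theta(|\Appls|)=\Theta(|\Insts|)$ as required.

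To verify the claim I would invoke order-independence of $\TTC$ (\autoref{thrm:TTC-order-independent}) and analyze the run in a convenient order. First, delete every router $e_i$ with $d_i\notin S$ (its list is empty). Then, processing $i=1,\dots,n$ in turn: if $d_i\in S$, the directed cycle $h\to e_i\to g_i\to d_i\to h$ is present in the current pointing graph---at that moment $e_i$ is $h$'s highest-priority remaining applicant, $g_i$ is $e_i$'s only remaining option, $d_i$ is $g_i$'s top priority, and $d_i$ points to $h$---so matching this cycle puts $d_i$ in $h$ and consumes one seat of $h$ (while $e_i$ goes to $g_i$). After all routers are processed, $h$ has used $|S|$ of its $n$ seats; now the two-cycles $h\leftrightarrow f_1,\ h\leftrightarrow f_2,\dots$ are matched in turn, filling the remaining $n-|S|$ seats. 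At that point $h$ is saturated and leaves the market, so every still-present applicant---in particular every $d_i$ with $d_i\notin S$, which only ranks $h$---is exhausted and unmatched. I expect the main obstacle to be the careful bookkeeping that no unintended cycle ever forms and that this is genuinely the (unique) $\TTC$ outcome; the three-block structure of $h$'s priority list (routers, then blockers, then $U$) is engineered precisely so that $h$'s pointer reaches a member of $U$ directly only after $h$ has been completely filled, which is what rules out any $d_i\notin S$ slipping into $h$.

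\textbf{Plan for the second claim ($\APDA$).} This direction is short and uses only stability. Suppose toward a contradiction that $|U|\ge 2$ and fix distinct $d_1,d_2\in U$. Applying the hypothesis with $S=U$ gives preferences under which $\APDA$ matches both $d_1$ and $d_2$ to $h$; in particular $h$ ranks both of them (an $\APDA$ outcome never pairs an institution with an applicant it finds unacceptable), so without loss of generality $d_1\succ_h d_2$. Now apply the hypothesis with $S=\{d_2\}$: there are preferences $P_{-U}$ with $\mu=\APDA_Q(P^*_U,P_{-U})$ such that $\mu(d_2)=h$ and $\mu(d_1)\ne h$, and since $d_1$ ranks only $\{h\}$ this forces $\mu(d_1)=\emptyset$. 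But then $(d_1,h)$ is a blocking pair of $\mu$, since $d_1$ prefers $h$ to being unmatched and $h$ prefers $d_1$ to $d_2\in\mu(h)$. This contradicts the fact that $\APDA$ always produces a stable matching (\autoref{claimDaStable}), so we must have $|U|\le 1$.
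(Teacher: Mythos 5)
Your proposal is correct, and the two halves compare as follows. The $\APDA$ half is essentially the paper's own argument: use stability of the $\APDA$ outcome (\autoref{claimDaStable}) and exhibit the blocking pair $(d_1,h)$ when the lower-priority applicant $d_2$ is placed in $h$ while the higher-priority $d_1$, who ranks only $h$, is left unmatched; your write-up states the implication in the right direction and handles acceptability cleanly via the $S=U$ instance. The $\TTC$ half, however, uses a genuinely different gadget from the paper's. The paper pairs each $d_i\in U$ with a partner $d_i'$, gives the capacity-$n$ institution $h_0$ the priority list $d_1'\succ\cdots\succ d_n'\succ d_1\succ\cdots\succ d_n$ and each $h_i$ the list $d_i\succ d_i'$, and lets $d_i'$ report either $h_0\succ h_i$ or $h_i\succ h_0$; every trading cycle then consumes exactly one seat of $h_0$ and hands it to either $d_i$ or $d_i'$, so the capacity bookkeeping is automatic and only $2n$ applicants and $n+1$ institutions are needed. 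Your router/blocker construction instead encodes membership in $S$ by whether $e_i$'s list is $\{g_i\}$ or empty, and needs the $n$ blockers precisely to saturate $h$ before its pointer ever descends to the $d$-block; the cycle-by-cycle analysis you sketch (using order-independence, \autoref{thrm:TTC-order-independent}) does go through, including in the many-to-one copy formulation, at the cost of $3n$ applicants and the extra seat-counting step. One small point in your favor: your construction literally uses $P^*_U$ with each $d\in U$ ranking only $\{h\}$ as the statement demands, whereas the paper's proof, as written, gives $d_i$ the list $h_0\succ h_i$ (a harmless deviation, since the same run analysis works with the singleton list).
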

\begin{proof}
  For $\TTC$, consider a market with applicants $d_i, d_i'$ for $i=1,\ldots,n$ and institutions $h_i$ for $i=0,1,\ldots,n$. Now, let institution $h_0$ have capacity $n$ and priority list $d_1'\succ\ldots\succ d_n' \succ d_1 \succ d_n$, and for each $i=1,\ldots,n$ let $h_i$ have capacity $1$ and have priority list $d_i \succ d_i' \succ L$, where $L$ is arbitrary. Let each $d_i$ submit list $h_0 \succ h_i$, and consider the $2^n$ preference profiles where each $d_i'$ might submit list $h_0 \succ h_i$, or might submit list $h_i \succ h_0$. Then, by the way $\TTC$ matches applicants, $d_i$ will match to $h_0$ if and only if $d_i'$ submits list $h_i \succ h_0$, as desired.

  For $\APDA$, consider any fixed $Q$, $h$, and $\{d_1,d_2\}\subseteq U$ where (without loss of generality) $d_1\succ_h^{Q_h} d_2$. For any $P$ such that $d_1$ and $d_2$ both rank only $h$, if $\mu$ is stable and $\mu(d_1) = h$, then we must have $\mu(d_2) = h$, by stability.
  This concludes the proof (and in fact shows that the claim holds for any stable matching mechanism, not just $\APDA$).
\end{proof}

While \autoref{prop:LL-TTC-LB-footnote} provides an interesting formal sense in which $\TTC$ (but not $\DA$) is complex, this approach only considers admission to one institution $h$ at a time, and correspondingly to our understanding might only conceivably show an $\Omega(n)$ lower bound and not the desired $\Omega(n^2)$ lower bound that we formalize in \autoref{sec:compression-complexity}.

\begin{remark}
  In contrast to \autoref{prop:LL-TTC-LB-footnote}, \cite[Footnote 8]{LeshnoL21} from which it is adapted is phrased in terms of separately defined definitions of budget sets for each of $\TTC$ and $\APDA$. For $\TTC$, the budget sets are defined in \cite{LeshnoL21}; see \cite{LeshnoL21} for the precise definition (whose details are not required to follow this discussion). While neither \cite{LeshnoL21} nor the most closely related work \cite{AzevedoL16} seem to explicitly define the budget set in $\APDA$, we confirmed with the authors of \cite{LeshnoL21} (private communication, October 2022) that they intended the budget set of $d$ in $\APDA$ to be the sets that we denote as $\StabB_d\bigl(\APDA_Q(P)\bigr)$ (see \autoref{def:stable-budget-set}).
  Note that neither of these sets equals the menu in the corresponding mechanism. This was observed for $\TTC$ in \cite[Footnote 10]{LeshnoL21} and for $\APDA$ in the beginning of our \autoref{sec:bit-complexity-menus} (and in \cite{GonczarowskiHT22}).
  Our \autoref{thrm:all-menus-DA} directly implies that the result in \cite[Footnote 8]{LeshnoL21} for $\DA$ would no longer hold if the budget set $\StabB_d$ were replaced with the menu $\Menu_d^\APDA$.
  We rephrased the insights of \cite[Footnote 8]{LeshnoL21} to produce \autoref{prop:LL-TTC-LB-footnote}, which replaces the notion of an institution $h$ being in a budget set of applicant $d$ with the question of whether $d$ matches to $h$ when $d$ only ranks $h$. We did this in order to compare the mechanisms $\DA$ and $\TTC$ through a single complexity lens---which is perhaps more consistent with the rest of the current paper---rather than using budget sets, which despite their appealing properties, to our knowledge must be defined separately for each of the two mechanisms.
\end{remark}

We also remark that \cite{LeshnoL21} occasionally use the word ``verification,'' and speak of agents ``verifying their match.'' However (like \cite{HakimovR23} but unlike \cite{Segal07}), to our best understanding they seem to use this term in an informal sense of checking that an assignment is accurate, rather than as any sort of verification according to formal computer-science notions.

\end{document}